
\documentclass[a4paper,12pt]{amsart}
\usepackage{kantlipsum} 
\setlength{\textwidth}{\paperwidth}
\addtolength{\textwidth}{-2in}
\calclayout
\usepackage{setspace}
\usepackage[papersize={8.5in,11in}]{geometry}
%

%
\usepackage{enumerate}
\usepackage{enumitem}
\usepackage{amsmath}    
\usepackage{amssymb}    
\usepackage{bm}         
\usepackage{graphicx}   
\usepackage{color}      
\usepackage{hyperref}   
\usepackage{natbib}    
\usepackage{epsf}
\usepackage{lscape}
\bibpunct{(}{)}{;}{a}{,}{,}
\usepackage{caption}
\usepackage{colortbl}

\numberwithin{equation}{section}
\theoremstyle{plain}


\theoremstyle{plain}

\newtheorem{theorem}{Theorem}[section]

\usepackage[norelsize]{algorithm2e}
\usepackage{bigints}
\usepackage{amsmath} 
\doublespacing

\theoremstyle{remark}

\usepackage{enumitem}



\begin{document}



\title[A Bayesian approach for the completeness of death registration]{\textbf{A Bayesian approach  to estimate the completeness of death registration}}

%


\author[Jairo F\'uquene-Pati\~no  and  Tim Adair]{Jairo F\'uquene-Pati\~no$^{\dagger}$ and Tim Adair$^{\ddagger}$. \\
\scriptsize Assistant Professor. Department of Statistics. UC Davis, USA. (jafuquenep@ucdavis.edu.co, corresponding author). $^{\dagger}$.\\
\scriptsize The Nossal Institute for Global Health, Melbourne School of Population and Global Health, The University of Melbourne, Australia, timothy.adair@unimelb.edu.au. $^{\ddagger}.$}


\maketitle







\maketitle

\begin{center}
\textbf{Abstract}

\end{center}
\setstretch{2}

Civil registration and vital statistics (CRVS) systems should be the primary source of mortality data for governments. Accurate and timely measurement of the completeness of death registration helps inform interventions to improve CRVS systems and to generate reliable mortality indicators. In this work
we propose the use of hierarchical Bayesian linear mixed models with Global-Local (GL) priors to estimate
the completeness of death registration at global, national and subnational levels. 
The use of GL priors  in this paper is motivated for situations where demographic covariates can explain much of  the observed completeness but where unexplained within-country (i.e. by year) and between-country
variability also play an important role. The use of
our approach can allow institutions improve model parameter estimates
and more accurately predict completeness of death registration. 
Our models are based on a  dataset which uses Global Burden of Disease (GBD) death estimates based on the GBD 2019 and comprises 120 countries and 2,748 country-years from 1970-2019  \citep{collaborators2020global}. To illustrate the effectiveness of our
proposal  we consider the completeness of
death registration in the departments of Colombia in 2017.

\vspace*{.3in}

\noindent \textsc{Keywords}: Completeness of death registration, Unit-Level models, Global-Local priors, Mortality, Vital statistics, Data quality.

\section{Introduction}

The primary source of mortality data for national governments should be a complete civil registration and vital statistics (CRVS) \citep{mikkelsen2015global}. However, only  59\% of all deaths that occur globally are estimated to be registered, with almost all unregistered deaths occurring in low- and middle-income countries \citep{dicker2018global}, hence this severely limits the utility of this data source to provide evidence on mortality trends to inform government policymaking. Reliable measurement of the completeness of death registration – that is, the percentage of deaths that are registered – is important firstly to measure the extent of under-registration of deaths and to inform interventions that aim to attain complete registration. The completeness of registration can also be used to adjust data from a death registration system and, together with model life tables, produce estimates of key mortality indicators.

Various methods to estimate completeness of death registration have been developed over several decades. A group of demographic methods named death distribution methods (DDMs) estimate completeness at ages five years and above by assessing the internal consistency of the age pattern of population and registered deaths, and make assumptions of population dynamics  \citep{brass1975methods, preston1980estimating, bennett1984mortality, hill1987estimating, adair2018estimating, dorrington2013brass, dorrington2013generalized, dorrington2013synthetic}. DDMs either measure completeness of death registration assuming a stable population (i.e., a constant population growth rate and no migration) and using population data from a specific point in time, or assuming a closed population (i.e., no migration) and using population from two censuses to enable measurement of completeness of inter-censal death registration. The most reliable DDM has been identified as a hybrid approach using the Generalised Growth Balance and Bennett-Horiuchi methods to estimate inter-censal death registration, together with the use of age trims to minimize the impact of the assumption of no migration \citep{murray2010can}  on the accuracy of completeness.
However, DDMs have several limitations, including inaccuracy of estimates and limited applicability at the subnational level because of the incompatibility of the restrictive assumptions to contemporary population dynamics, and the lack of timeliness of estimates where they are made for the period between the most recent two censuses \citep{adair2018estimating}. Another set of methods to estimate completeness are capture-recapture methods, which involved linking of death registration data with another independent data source \citep{rao2017overview}. However, limitations of these methods are the availability of another independent data source, and the time required to link data sources where data quality is suboptimal. A more straightforward means of calculating completeness is to divide the number of registered deaths by the number of deaths in the population estimated by either the Global Burden of Disease (GBD) Study or United Nations (UN) World Population Prospects
\citep{wang2020global}. However, with the exception of very few countries where the GBD estimates deaths for subnational populations, these estimates are only available nationally.

More recently, the empirical completeness method has been proposed to estimate completeness of death registration \citep{adair2018estimating}. The method, which was developed from an extensive database of 110 countries from 1970 to 2015 reported in the Global Burden of Disease 2015 database, models completeness based on the key drivers of a population’s crude death rate (the number of deaths per 1,000 population). The empirical completeness method, compared with DDMs, uses only relatively limited data (and which are readily available at subnational levels), provides completeness estimates for the most recent year of death registration data, and is not reliant on assumptions about population dynamics such as closed migration that adversely affect subnational estimates. A limitation of the method, however, is that it is less reliable where the level of adult mortality is relatively  high given the level of   under-five mortality, such as where HIV/AIDS mortality is high or there has been a mortality shock such as a natural disaster. The method has been applied in several settings, including to estimate excess mortality from COVID-19 in Peru, to calculate subnational completeness for Indian states and 2,844 Chinese counties, to monitor completeness of community death reporting systems in Bangladesh, and to estimate cause-specific mortality rates to track attainment of Sustainable Development Goals in Myanmar \citep{zeng2020measuring, sempe2021estimation, adair2021monitoring, basu2021have, shawon2021routine}.

The probabilistic empirical method proposed by \citep{adair2018estimating} considers  two linear mixed models  with random effects at the country level. The covariates
in the model of \citep{adair2018estimating} include information typically available from multiple sources, e.g., surveys, censuses and administrative records. Specifically, the first model includes predictors with information about the registered crude death rate, fraction of the population aged 65 years and over, under-five mortality rate and calendar year of death, while
the second model also includes also covariate of the completeness of registration for children ages less than five years. \citep{adair2018estimating}  also consider models for males and females
to allow predictions of  completeness in each case.  

The modeling of demographic components using Bayesian methods is a growing and active research area with important recent  advances, for example in population projections  \citep{raftery2014bayesian, wheldon2016bayesian}, international migration flows \citep{azose2015bayesian},
projections of life expectancy \citep{godwin2017bayesian} and estimation of maternal mortality \citep{alkema2017bayesian}. Although the use of new Bayesian methods for demographic indicators is extensive, there have not been any attempts to use these models to estimate the completeness of death registration. Therefore, this paper adapts the empirical completeness method \citep{adair2018estimating} to a Bayesian framework, using first a common prior for the scale of the errors and random effects respectively and extending this to a more flexible Bayesian modeling  that considers Global-Local priors (GL) for the scales of the errors and random effects respectively.  The use of GL priors  in this paper is motivated for situations where demographic covariates can explain much of  the observed completeness but where unexplained within-country (i.e. by year) and between-country
variability also play an important role.

The new models are based on a dataset updated to 2019, which uses GBD death estimates based on the GBD 2019 and now comprises 120 countries and 2,748 country-years from 1970-2019  \citep{collaborators2020global}.  The dataset calculates observed completeness for a country-year as the number of registered deaths divided by the GBD’s estimate of total deaths. The GBD estimates total deaths by estimating under-five mortality and adult mortality rates and then using model life tables to develop complete life tables; further detail of their methods can be found in GBD 2019 Demographics Collaborators (2020). Consistent with \cite{adair2018estimating}, some country-years were excluded from the database because they either experienced mortality shocks (e.g. natural disaster, conflict) or high adult mortality compared with child mortality due to, for example, HIV/AIDS or alcohol (e.g. sub-Saharan Africa, Russia in the 1990s). In such countries, the under-five mortality rate is not a reliable proxy for mortality level because of the high level of adult mortality.

To assess the performance of the proposed models to national and subnational levels we consider the completeness of death registration for the  country of
Colombia and its 33 departments in 2017 for both sexes, males and females based on a question in the Colombian Population Census  2018 \citep{DANE2018} that asks if households  experienced a death during the
calendar year 2017 and whether that death was registered with the civil registration system. This comparator dataset is not included in GBD 2019 data used to develop the models.

We found the Census data can be regarded as a reliable independent estimate of completeness in departments because its estimate of
 national completeness of 90\% (i.e. the \% of reported household deaths that were reported as being registered with the civil registration system) is close to the completeness of 85\% calculated by using vital registration data as the numerator and UN estimated deaths for Colombia as the denominator \citep{desa2019united}. Also, an estimated 90\% of deaths in 2017 in Colombia were reported in the Census \citep{desa2019united}.

This paper is organized as follows. Section  \ref{Methodology} introduces the methodology implemented in this work. Next, we introduce
GL priors for the scales of the random effects and errors respectively in hierarchical linear mixed models and study their theoretical properties to allow
small and large random effects when it is required. Section \ref{Computational} contains the  Markov chain Monte Carlo (MCMC) schemes for posterior
inference of the parameters and the prior elicitation. In Section \ref{Results} we implement our proposal to estimate the completeness of the death registration at the global level and consider a prior sensitivity study. The application of the models  to national and subnational levels in Colombia are given in Section \ref{Results2}. Finally, the
concluding remarks and discussion are presented in Section \ref{Conclusions}.



\section{Methodology}
\label{Methodology}



\subsection{Linear mixed models with Global-Local priors for  random effects and errors}


Consider the proposed model

\begin{align}
y_{ij}&= \theta_{ij}    + \epsilon_{ij},  & j&=1,...,n_{i}, \notag \\
\theta_{ij}&= \boldsymbol{x}_{ij}^\textsf{T}\boldsymbol{\beta} + u_{i}, & i&=1,...,m,
\label{model}
\end{align}

where $n_{i}>p=\dim( \boldsymbol{x}_{ij})$ and the errors $\epsilon_{ij}$ and random effects  $u_{i}$ are independent and normally distributed   
and $\epsilon_{ij} \overset{\mathrm{ind}}{\sim} \text{Normal}(\epsilon_{ij};0,1/(\lambda_{i}\tau))$  and $u_{i} \overset{\mathrm{iid}}{\sim} \text{Normal}(u_{i};0,1/(\omega_{i}\phi))$ and where $\lambda_{i}$
and $\omega_{i}$ denote the Local scales and $\tau$ and $\phi$ denote the Global scales for the errors and random effects, respectively. In this work $y_{ij}=\log(c_{ij}/(1-c_{ij}))$ denote the  logit of the completeness of registration at all ages for country $i$ and  year $j$.  GL priors where originally proposed by \cite{carvalho2010horseshoe} and \cite{polson2012half}. These priors have been implemented in an important number of proposals 
for  variable selection,  estimation  and model selection (e.g., \cite{armagan2013generalized, ghosh2016asymptotic, tang2018bayesian, bhattacharya2015dirichlet, bhadra2017horseshoe+, bai2018high, tang2018modeling}. 
  The global parameter  in model (\ref{model}) controls the total shrinkage of the random effects (posterior mean) towards zero (country-mean) and the local parameter allows  random effects (posterior mean) to avoid this shrinkage.   In the context of model (\ref{model}) the local priors produces a conditional posterior distribution for the random effects (posterior mean) with a peak at zero to shrink small random effects (posterior mean) and heavy tails  to avoid this shrinkage when required.  In this work we consider, for the first time to the best of our knowledge, the use of GL priors for the random effects and errors in the linear mixed model given by (\ref{model}). As we will discuss, the use of GL priors is motivated for the applied site where demographic covariates can explain the completeness of death registration  well but variability of the data at the country and country-year levels play an important role.  The list of shrinkage priors in the literature is extensive. To evaluate the performance of  different GL priors in model (\ref{model}) to estimate the completeness of death registration we use
 local shrinkage behaviors with different  shrinkage behaviors around the origin and tails. Specifically, we consider the Horseshoe \citep{carvalho2010horseshoe},  Laplace  \citep{bernardo2011bayesian, brown2010inference}  and local Student-t  \citep{griffin2005alternative} priors for $\omega_{i}$. The priors are illustrated in Table \ref{tab:priors} where HS and Half-Cauchy priors belong to the family of polynomial heavy tailed priors while the LA priors correspond to the family of exponential heavy tailed priors. We also consider the local Student-t prior with a weakly informative prior for the degrees of freedom we present in Section \ref{Computational}. Although both the HS and local Student-t priors produce a conditional distributions for the random effects with polynomial tails,  the conditional distribution under the HS is more peaked around zero than the conditional distribution obtained with the local Student-t prior. As a result, 
more shrinkage towards zero for small random effects  is produced under the HS than under the local Student-t. For the local scale of the errors $\lambda_{i}$ we consider the Half-Cauchy prior proposed by  \cite{polson2012half}  for top-levels scale parameters in Bayesian hierarchical models.

{\renewcommand{\arraystretch}{1.2}
\begin{table}[h]
\scriptsize
\begin{center}
\begin{tabular}{lccc}
\hline \hline
Name & $\pi_{\omega}(x)$ & Reference & Tail\\
\hline
Laplace & $x^{-2}\exp(-1/x)$ & \cite{bernardo2011bayesian, brown2010inference}    & Exponential\\
Horseshoe & $x^{-1/2}(1+x)^{-1}$ &  \cite{carvalho2010horseshoe} & Polynomial\\
Local Student-t & $x^{\nu/2-1}\exp(-x\nu/2)$ & \cite{griffin2005alternative}  & Polynomial\\  \hline \hline
 & $\pi_{\lambda}(x)$ & Reference & \\ \hline
Half-Cauchy  & $(1+x)^{-2}$ & \cite{polson2012half} & Polynomial\\ \hline \hline
\end{tabular}
\caption{Priors distributions  $\pi_{\omega}(x)$ and  $\pi_{\lambda}(x)$ for the local scales of the random effects and errors respectively. Tails refer to the
tail behavior of the marginal distribution for $u_{i}$ and $\boldsymbol{\theta}_{i}$.}
\label{tab:priors}
\end{center}
\end{table}

We assume
non-informative Gamma priors for each of the global scales, $\tau$ and $\phi$.  To compare the performance of our proposal we also consider non-informative Gamma priors under the assumption of a common scale $\zeta_{\epsilon}$ for the errors,  $\epsilon_{ij} \overset{\mathrm{ind}}{\sim} \text{Normal}(\epsilon_{ij};0,1/\zeta_{\epsilon})$. In addition, in order to compare the frequentist version of model  (\ref{model}) we also consider a common scale  $\zeta_{u}$ for the random effects, $u_{i} \overset{\mathrm{iid}}{\sim} \text{Normal}(u_{i};0,1/\zeta_{u})$,  and a common scale $\zeta_{\epsilon}$ for the errors, respectively. We consider the same covariates given in \cite{adair2018estimating} in our proposed Bayesian models given in (\ref{model}) to estimate the completeness of death registration for both sexes (males and females) as follow

\begin{align}
\label{models}
\text{Model 1:}& \;\;\; x_{ij}=(1,\text{RegCDR}_{ij},  \text{RegCDR}_{ij}^{2}, \%65_{ij}^{2}, \text{ln(5q0)}_{ij},  \text{C5q0}_{ij}, \text{year}_{ij}), \notag \\
\text{Model 2:}& \;\;\; x_{ij}=(1,\text{RegCDR}_{ij},  \text{RegCDR}_{ij}^{2}, \%65_{ij}^{2}, \text{ln(5q0)}_{ij},  \text{year}_{ij}),
\end{align}

where RegCDR is the registered crude death rate (registered deaths divided by population multiplied by 1000), RegCDR$^{2}$ is the registered crude death rate squared, \%65 is the fraction of the population aged 65 years and over, ln(5q0) is the natural log of the estimate of the true under-five mortality rate, C5q0 is the is the completeness of registered under-five deaths (estimated as the under-five mortality rate from registration data divided by the estimate of the true under-five mortality rate), and year is calendar year of death.  As is pointed out by  \cite{adair2018estimating},  the  under-five mortality rate and the population aged 65 years and over are important drivers of the  registered crude death rates in populations. We also consider in  (\ref{models}) models for males and females independently with the same predictors but sharing the same information for C5q0 which are likely to be similar for males and females as is pointed out by  \cite{wang2014global}  and successfully implemented in \cite{adair2018estimating}.

The joint prior distribution for the parameters $\boldsymbol{\beta}$, $\boldsymbol{\omega}=(\omega_{1},...,\omega_{m})^\textsf{T}$,
$\boldsymbol{\lambda}=(\lambda_{1},...,\lambda_{m})^\textsf{T}$, $\tau$ and $\phi$ is given as follows:

\begin{align}
\pi(\boldsymbol{\beta},\boldsymbol{\lambda},\tau,\boldsymbol{\omega},\phi)  \propto \pi(\phi) \pi(\tau)  \prod_{i=1}^{m} \pi(\omega_{i}) \pi(\lambda_{i}),
\label{eq:LG2}
\end{align}
where the conditional posterior mean of  the random effects $u_{i}$  is given by
\begin{align}
E(u_{i}|\boldsymbol{\beta},\boldsymbol{\lambda},\tau,\boldsymbol{\omega},\phi,\boldsymbol{y}_{i})=\gamma_{i}(\bar{\boldsymbol{y}}_{i}-\bar{\boldsymbol{x}}_{i}^\textsf{T}\boldsymbol{\beta}_{i}),
\label{eq:shrink}
\end{align}
and $\gamma_i = (\lambda_{i}\tau)/(\lambda_{i}\tau  + \omega_{i}\phi/n_{i})$ is the shrinkage factor associated with the conditional posterior mean of the random effect $u_{i}$. According
to (\ref{eq:shrink}) the shrinkage factors $\gamma_i \in (0,1)$  control the values of the random effects and measure the unexplained between-country
variability $(\omega_{i}\phi)^{-1}$ and the within-country-year $(\lambda_{i}\tau)^{-1}/n_{i}$ variability. If the unexplained between-country variability is relative small  (i.e., $\phi$ large) then  the shrinkage factors $\gamma_i$ will be closer to zero and therefore the random effects will be smaller. On the other hand, given the scale of the random effects, if the within-country-year variability is relative small (i.e., $\tau$ large) then  the shrinkage factor will concentrate near one, leading to random effects obtained by the difference of the logit of the observed completeness and  the regression fit, i.e., $\bar{\boldsymbol{y}}_{i}-\bar{\boldsymbol{x}}_{i}^\textsf{T}\boldsymbol{\beta}_{i}$.
Then to produce smaller or larger random effects according  to the quality of the regression fit and also to the unexplained between country and within country-year variabilities we should consider GL priors for the scales of the random effects and errors, respectively.  This is   theoretically supported in the following Theorem. 


\begin{theorem}\label{th2}
Under  model (\ref{model}), the shrinkage factor $\gamma_i$ has the following properties
\begin{enumerate}[label=(\roman*)]         
\item Suppose $\pi(\omega_{i})$ is a proper probability density function with support $(0,\infty)$,  for any $\epsilon \in (0, 1)$ if $\phi \rightarrow \infty$
         $$P(\gamma_i  > \epsilon \mid \boldsymbol{\beta},\boldsymbol{\lambda},\tau,\boldsymbol{y}_{i}) \rightarrow 0.$$
         Specifically,  
\begin{itemize}    
\item  For the Beta-Prime distribution, $\pi(\omega_{i})  = \dfrac{\Gamma(a+b)}{\Gamma(b)\Gamma(a)} \dfrac{\omega_{i}^{b-1}}{(1+\omega_{i})^{a+b}}$ with hyperparameters $0<a\leq 1$ and $0<b\leq 1$
  $$P(\gamma_i  > \epsilon \mid \boldsymbol{\beta},\boldsymbol{\lambda},\tau,\boldsymbol{y}_{i}) \asymp (1/\phi)^{b}$$ 
as $\phi \rightarrow \infty,$ 
\item  For the LA prior,  $\pi(\omega_{i}) = (\omega_{i})^{-2}\exp(-1/\omega_{i})$
$$P(\gamma_i  > \epsilon \mid \boldsymbol{\beta},\boldsymbol{\lambda},\tau,\boldsymbol{y}_{i})\asymp \exp(-\phi/c_{1})$$
as $\phi \rightarrow \infty.$
\end{itemize}
    \item Suppose $\pi(\lambda_{i})$ is the either the LA or  Beta-Prime distribution with $a=1$ and $b \in (0,  \infty)$,  for any $\epsilon \in (0, 1)$
    $$P(\gamma_i < \epsilon \mid \boldsymbol{\beta},\boldsymbol{\omega},\phi,\boldsymbol{y}_{i}) \rightarrow 0$$ as $\tau \rightarrow \infty$.
       \item Suppose $\pi(\omega_{i})$ is a proper probability density function with support $(0,\infty)$,  for any $\epsilon > 0$,  $$P(\gamma_i  < \epsilon \mid \boldsymbol{\beta},\boldsymbol{\lambda},\tau,\boldsymbol{y}_{i}) \rightarrow 0,$$  as $|\bar{\boldsymbol{y}}_{i}-\bar{\boldsymbol{x}}_{i}^\textsf{T}\boldsymbol{\beta}_{i}| \rightarrow \infty$.
\end{enumerate}
\end{theorem}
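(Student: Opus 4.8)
The plan is to reduce each claim to a one-sided tail probability for whichever local scale is left random, and then to analyze that tail through the marginal likelihood obtained after integrating out $u_i$. Since $\gamma_i=(\lambda_i\tau)/(\lambda_i\tau+\omega_i\phi/n_i)$ is increasing in $\lambda_i$ and decreasing in $\omega_i$, the events split cleanly: $\{\gamma_i>\epsilon\}=\{\omega_i<\kappa\}$ and $\{\gamma_i<\epsilon\}=\{\omega_i>\kappa\}$ with $\kappa=K/\phi$, $K=n_i\lambda_i\tau(1-\epsilon)/\epsilon$, while $\{\gamma_i<\epsilon\}=\{\lambda_i<\kappa'\}$ with $\kappa'=\epsilon\omega_i\phi/(n_i\tau(1-\epsilon))$. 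In every case $u_i$ is absent from the conditioning set, so I first marginalize it out: writing $\sigma^2=1/(\lambda_i\tau)$, $\eta^2=1/(\omega_i\phi)$ and $\boldsymbol r_i$ for the residuals $y_{ij}-\boldsymbol x_{ij}^\textsf{T}\boldsymbol\beta$, the vector $\boldsymbol r_i$ is $\mathrm{Normal}(\boldsymbol 0,\sigma^2 I+\eta^2\mathbf 1\mathbf 1^\textsf{T})$, and the matrix determinant lemma together with Sherman--Morrison give the likelihood in closed form; up to a factor free of the scale being varied, its dependence on $d_i:=\bar{\boldsymbol y}_i-\bar{\boldsymbol x}_i^\textsf{T}\boldsymbol\beta_i$ enters only through $(\mathbf 1^\textsf{T}\boldsymbol r_i)^2=n_i^2 d_i^2$.

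For (i) I rescale to $t=\omega_i\phi$, which makes the likelihood $\phi$-free; call it $\tilde L(t)$. A direct check gives $\tilde L$ bounded, say by $M$, with $\tilde L(t)\to\sigma^{-1}>0$ as $t\to\infty$. Hence the normalizer $\int_0^\infty\pi(\omega_i)\tilde L(\omega_i\phi)\,d\omega_i$ stays bounded below by a positive constant: integrating over $\{\omega_i\phi\ge A\}$, where $\tilde L\ge c$, yields $c\,(1-F(A/\phi))\to c$, with $F$ the prior c.d.f. The numerator over $\{\omega_i<\kappa\}$ is at most $M\,F(\kappa)$, and since $\kappa=K/\phi\to0$ and $F(0)=0$ the ratio vanishes. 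The rates come from $F$ near the origin: for the Beta-Prime density $\pi(\omega_i)\sim\omega_i^{b-1}$ gives $F(\kappa)\asymp\kappa^b\asymp\phi^{-b}$, and a matching lower bound over $(\kappa/2,\kappa)$ gives $\asymp(1/\phi)^b$; for the LA prior the substitution $v=1/\omega_i$ gives $F(\kappa)=\exp(-1/\kappa)=\exp(-\phi/K)$, i.e. the rate $\exp(-\phi/c_1)$ with $c_1=K$.

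Part (iii) uses the same likelihood with $\phi,\lambda_i,\tau$ fixed and $|d_i|\to\infty$. Its $\omega_i$-varying exponent is $g(\omega_i)\,d_i^2$ with $g(\omega_i)=\eta^2 n_i^2/\{2\sigma^2(\sigma^2+n_i\eta^2)\}$ strictly decreasing in $\omega_i$ and $g(\omega_i)\uparrow n_i/(2\sigma^2)$ as $\omega_i\downarrow0$. Hence, fixing $\delta<\kappa$, the normalizer inherits a factor $\exp(g(\delta)d_i^2)$ from the interval $(0,\delta)$, while the numerator over $\{\omega_i>\kappa\}$ is controlled by $\exp(g(\kappa)d_i^2)$; since $g(\delta)>g(\kappa)$ the ratio is at most a constant times $\exp\{-(g(\delta)-g(\kappa))d_i^2\}\to0$, giving Gaussian-type decay in $|d_i|$.

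The delicate case, and the step I expect to be the main obstacle, is (ii). Rescaling to $s=\lambda_i\tau$ again removes $\tau$ from the likelihood, but now, in the error direction, the likelihood decays like $\exp(-sS_i/2)$ as $s\to\infty$, where $S_i=\sum_j(r_{ij}-\bar r_i)^2$ is the within-country residual sum of squares, rather than leveling off to a positive constant as in (i). The conclusion therefore cannot rest on boundedness of the likelihood alone; it must come from the prior suppressing small $\lambda_i$ strongly enough that the posterior of $s=\lambda_i\tau$ leaves the fixed window $(0,K')$, $K'=\tau\kappa'=\epsilon\omega_i\phi/(n_i(1-\epsilon))$, as $\tau\to\infty$. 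For the LA prior this is exactly the mechanism: $\pi(s/\tau)\propto s^{-2}\exp(-\tau/s)$ creates a Laplace maximum at $s^\ast\asymp\sqrt{\tau}\to\infty$, so $P(s<K')\to0$; I would then carry out the analogous Laplace estimate for the Beta-Prime family with $a=1$. Establishing that the prior's behavior at the origin dominates the exponential decay of the likelihood, uniformly in $\tau$, is the crux of the whole argument and is where I would concentrate the effort.
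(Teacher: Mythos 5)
Your reduction of each claim to a one-sided tail event for the remaining local scale, followed by marginalizing $u_i$ to get the $\mathrm{Normal}(\boldsymbol 0,\sigma^2I+\eta^2\mathbf 1\mathbf 1^\textsf{T})$ likelihood and analyzing the resulting ratio of integrals, is exactly the route the paper takes, and your parts (i) and (iii) are correct and complete: as in the paper, the denominator is bounded below on a window where the rescaled likelihood is bounded away from zero, the numerator is controlled by the prior c.d.f.\ near the origin, and the rates $\phi^{-b}$ and $e^{-\phi/c_1}$ follow from $F(\kappa)\asymp\kappa^{b}$ and $F(\kappa)=e^{-1/\kappa}$. Your (iii), which isolates the coefficient $g(\omega_i)$ of $d_i^2$ and uses its strict monotonicity, is if anything cleaner than the paper's version.

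The genuine gap is part (ii), which you flag but do not carry out, and the step you defer is not merely technical: for half of the stated hypothesis it fails. Your diagnosis is right that after rescaling to $s=\lambda_i\tau$ the likelihood behaves like $L(s)\asymp s^{(n_i-1)/2}e^{-sS_i/2}$, so everything rests on the prior near the origin. For the LA prior your mechanism closes: the numerator is $O(e^{-\tau/c_2})$ while the window $s\in[\sqrt{\tau},2\sqrt{\tau}]$ contributes at least $e^{-C\sqrt{\tau}}$ to the denominator, so the ratio vanishes. But there is no "analogous Laplace estimate" for the Beta-Prime prior with $a=1$: there $\pi(\lambda)\sim b\lambda^{b-1}$ has only a polynomial zero at the origin, so $\tau^{-1}\pi(s/\tau)=b\,s^{b-1}\tau^{-b}(1+s/\tau)^{-(1+b)}$ and the factor $\tau^{-b}$ cancels between numerator and denominator; by dominated convergence the conditional probability converges to the ratio of $\int_0^{c_2}s^{b-1}L(s)\,ds$ to $\int_0^{\infty}s^{b-1}L(s)\,ds$, a strictly positive constant rather than $0$. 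So the Beta-Prime half of (ii) cannot be proved by your plan, and indeed appears false as stated; notably, the paper's own estimates for this case give an upper bound for $N_2$ and a lower bound for $D_2$ that are both of order $\tau^{-b}$, so its claimed polynomial-rate convergence to zero does not follow from them either. To repair (ii) you would need to exploit an essential singularity of the prior at the origin (as LA has) or weaken the conclusion for the polynomially-vanishing priors.
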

\begin{proof}  The proof is in the supplementary material, Section \ref{proof2}.
\end{proof}

In Theorem \ref{th2}  for any real-valued functions $f$ and $g$ we write $f(\xi) \asymp g(\xi)$ for $\xi\rightarrow \infty$
when $K_{1}\leq \liminf_{\xi\rightarrow\infty} f(\xi)/g(\xi) \leq \limsup_{\xi\rightarrow\infty} f(\xi)/g(\xi)  \leq K_{2}$.  Theorem \ref{th2} part (i) shows that
given the scale  $1/(n_{i}\lambda_{i}\tau)$ if the global scale (variance) for the random effects is large  (small) then the posterior distribution of the shrinkage factor will concentrate near of zero. The rate of concentration of $\gamma_i$ near to one is polynomial for the HS prior and exponential for the LA prior. On the other hand, according to  (ii) in Theorem \ref{th2}, given the scale $1/(\omega_{i}\phi)$ if the global scale (variance) for the errors is large  (small) then the posterior distribution of the shrinkage factor will concentrate near of one. In addition, part (iii) of Theorem \ref{th2} shows
that the shrinkage can be offset for large differences of logit completeness and regression fit at the country-level, i.e., $|\bar{\boldsymbol{y}}_{i}-\bar{\boldsymbol{x}}_{i}^\textsf{T}\boldsymbol{\beta}_{i}| \rightarrow \infty$.  

The methodological proposal in this paper is fully general and can be applied not only to the important problem of estimating the completeness of death registration but also  to other relevant problems involving linear mixed models with GL priors as in (\ref{model}).


 \section{Computational Algorithms and prior formulation}

\label{Computational}

Consider $\boldsymbol{\xi}=(\boldsymbol{\beta} , \phi, \boldsymbol{\omega},\tau, \boldsymbol{\lambda},
\boldsymbol{u})$ and the full joint posterior  for  model (\ref{model}) using the prior formulation in (\ref{eq:LG2}) as follow

\begin{align}
\label{poster}
p(\boldsymbol{\xi}\mid \boldsymbol{y})&\propto \prod_{i=1}^{m}\prod_{j=1}^{n_{i}}  (\lambda_{i}\tau)^{1/2}
\exp\left\{-\frac{1}{2}(\lambda_{i}\tau)(y_{ij}-\boldsymbol{x_{ij}}^{T}\boldsymbol{\beta} - u_{i})^{2}\right\}  \notag  \\   &  \hspace{+3.5cm}  \times
(\omega_{i}\phi)^{1/2}\exp\left\{-\frac{1}{2}u_{i}^{2}\omega_{i}\phi\right\} \pi(\phi)\pi(\tau)\pi(\omega_{i})\pi(\lambda_{i}),
\end{align}

The full conditionals distributions for the parameters $\boldsymbol{\xi} = (\boldsymbol{\beta} , \phi, \boldsymbol{\omega},\tau, \boldsymbol{\lambda}, \boldsymbol{u})$  
are obtained in closed form for  and Gibbs Sample algorithms \citep{gelfand1990sampling} can be implemented.  Algorithms \ref{MCMC1} to \ref{MCMC3} display the MCMC schemes  to obtained samples from the full conditional distributions of the parameters in (\ref{poster}). Algorithm \ref{MCMC1}  (See supplementary material, Section \ref{MCMCsec}) contains the full conditionals when  a common Gamma prior is considered for the scales of the errors and GL priors are considered for the scales of the random effects.
Algorithm~\ref{MCMC2} (See supplementary material, Section \ref{MCMCsec}) shows the scheme when a Half-Cauchy prior  is considered for the local scale of the errors.  Algorithm \ref{MCMC3} (See supplementary material, Section \ref{MCMCsec}) contains the steps to sampling from the  full conditionals when HS, LA and Student-t local scale priors for the random effects are implemented.
We consider Gamma distributions for the global-scale parameters $\phi$ and $\tau$ where  
 $\phi \sim \text{Gamma}(\phi;  a_{\phi}, b_{\phi})$ and  $\tau \sim \text{Gamma}(\tau; a_{\tau}, b_{\tau})$.    For $\phi$ and $\tau$ we assign small values  for the hyperparameters following the results in  \cite{tang2018modeling}, then
$ a_{\phi} = b_{\phi} = a_{\tau} = b_{\tau} = 10^{-10}$. In the common Gamma priors for the scales of the random effects and errors denoted by  $\zeta_{\epsilon}$ and
$\zeta_{u}$ respectively we also consider $ a_{\zeta_{\epsilon}} = b_{\zeta_{\epsilon}} = a_{\zeta_{u}}= b_{\zeta_{u}}= 10^{-10}$.

The conditional distributions of  the local-scales of the random effects  are obtained in closed form given that the full conditional can be 
obtained in closed form, i.e., $p(\omega_{i} \mid  \boldsymbol{\beta} , \phi, \boldsymbol{\omega},\tau, \boldsymbol{\lambda}) \propto  (\omega_{i}\phi)^{1/2}\exp\{-\frac{1}{2}u_{i}^{2}\omega_{i}\phi\} \pi(\omega_{i})$. Specifically, the conditional distribution under the LA is  a Generalized Inverse Gaussian  (GIG) distribution and  posterior samples under the HS are obtained using Gamma distributions in two steps.  The local Student-t prior
leads to a Student-t conditional distribution for the random effects with $\nu_{i}$ degrees of freedom and scale $\phi$.  There is a large number of proposals
for the prior distribution of the degrees of freedom of the Student-t distribution in the literature (see for instance, \cite{villa2014objective, rubio2015bayesian}). However, we consider in this work the proposed gamma-gamma prior in \cite{juarez2010model} also implemented by \cite{rossell2019continuous}  where the prior for $\nu_{i}$ is  weakly informative and given by $\pi(\nu_{i}) \propto  \nu_{i}/(\nu_{i} + k_{\nu_{i}})$. We use  $k_{\nu_{i}}=2.84$ as in \cite{rossell2019continuous} in order $\pi(\nu_{i})$ assigns a prior probability of $P(\nu_{i} \leq 2) =0.25$ and a prior median equal to 5. We also consider
a discrete support with prior probabilities $\nu_{i} \in \{1,...,30\}$ as in \cite{rossell2019continuous}  where samples from $\nu_{i} $ are obtained using probabilities proportional to
$(\nu_{i} /(k_{\nu_{i}}+\nu_{i} ))^{3}\text{Student-t}_{\nu_{i}^{(t)}=k}(u^{(t)}_{i},0,\sqrt{1/\phi^{(t)}})$  as is illustrated in algorithm \ref{MCMC3} (See supplementary material, Section \ref{MCMCsec}).

\begin{theorem}\label{th1}
The posterior distribution resulting from the GL priors for the scales of the random effects and errors respectively in model (\ref{model}) is proper if
the priors $\pi(\phi)$, $\pi(\tau)$, $\pi(\omega_{i})$ and $\pi(\lambda_{i})$ are proper.
\end{theorem}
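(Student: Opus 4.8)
The objective is to show that the joint posterior in (\ref{poster}) integrates to a finite value over $\boldsymbol{\xi}=(\boldsymbol{\beta},\phi,\boldsymbol{\omega},\tau,\boldsymbol{\lambda},\boldsymbol{u})$, keeping in mind that (\ref{eq:LG2}) assigns $\boldsymbol{\beta}$ a flat (improper) prior while the scale priors $\pi(\phi),\pi(\tau),\pi(\omega_i),\pi(\lambda_i)$ are proper. The plan is to marginalize analytically the two blocks that enter (\ref{poster}) through Gaussian kernels, namely $\boldsymbol{u}$ and $\boldsymbol{\beta}$, and thereby reduce propriety to the boundedness of a function of the scale parameters, which can then be integrated against the proper scale priors. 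Since every factor in (\ref{poster}) is non-negative, Tonelli's theorem permits integrating in this order.

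First I would integrate out the random effects. For fixed $\boldsymbol{\beta}$ and scales, the factors of (\ref{poster}) involving a given $u_i$ form a Gaussian kernel; completing the square and integrating over $u_i\in\mathbb{R}$ is routine and yields the group marginal $\boldsymbol{y}_i\sim\text{Normal}(X_i\boldsymbol{\beta},\Sigma_i)$, where $X_i$ has rows $\boldsymbol{x}_{ij}^\textsf{T}$ and $\Sigma_i=(\lambda_i\tau)^{-1}I_{n_i}+(\omega_i\phi)^{-1}\mathbf{1}\mathbf{1}^\textsf{T}$. Stacking the $m$ groups I write $\boldsymbol{y}$, $X$ and the block-diagonal $\Sigma=\mathrm{diag}(\Sigma_1,\ldots,\Sigma_m)$, with $N=\sum_i n_i$.

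Second, I would integrate out $\boldsymbol{\beta}$ against its flat prior. The exponent being quadratic in $\boldsymbol{\beta}$, this is a Gaussian integral over $\mathbb{R}^p$ that is finite precisely when $X^\textsf{T}\Sigma^{-1}X$ is positive definite. This is where the standing assumption $n_i>p$ is used: it permits each $X_i$, hence the stacked $X$, to have full column rank $p$, and it makes $S=\boldsymbol{y}^\textsf{T}(\Sigma^{-1}-P)\boldsymbol{y}\ge 0$, with $P=\Sigma^{-1}X(X^\textsf{T}\Sigma^{-1}X)^{-1}X^\textsf{T}\Sigma^{-1}$, a genuine residual quadratic form whose defining matrix $\Sigma^{-1}-P$ is positive semidefinite of rank $N-p>0$. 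Up to multiplicative constants the posterior then collapses to the function of the scales $h(\phi,\tau,\boldsymbol{\omega},\boldsymbol{\lambda})=\bigl(\prod_{i=1}^{m}|\Sigma_i|^{-1/2}\bigr)\,|X^\textsf{T}\Sigma^{-1}X|^{-1/2}\exp\{-\tfrac{1}{2}S\}$, and it remains to integrate $h$ against $\pi(\phi)\pi(\tau)\prod_i\pi(\omega_i)\pi(\lambda_i)$.

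The crux, and the step I expect to be the main obstacle, is to prove that $h$ is bounded above on $(0,\infty)^{2m+2}$; granting this, propriety is immediate, since $\int h\,\pi(\phi)\pi(\tau)\prod_i\pi(\omega_i)\pi(\lambda_i)\le(\sup h)<\infty$ by propriety of the scale priors. On compact subsets of the scale space $h$ is continuous, so the entire difficulty lies at the boundary, where one or more scales tend to $0$ or $\infty$. The delicate regime is when an error precision $\lambda_i\tau\to\infty$: the determinant factor $|\Sigma_i|^{-1/2}$ then grows polynomially, of order $(\lambda_i\tau)^{(n_i-1)/2}$, and I would control this by showing that the positivity of the residual degrees of freedom guaranteed by $n_i>p$ forces $S$ to grow at least linearly in $\lambda_i\tau$, so that the exponential decay of $\exp\{-\tfrac{1}{2}S\}$ dominates the polynomial growth. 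The complementary limits ($\lambda_i\tau\to0$, $\phi\to0$, $\phi\to\infty$, and likewise for the local scales) are milder: there $\exp\{-\tfrac{1}{2}S\}\le 1$ and the condition $N-p>0$ makes the product of determinant factors vanish or stay bounded through cancellation between $\prod_i|\Sigma_i|^{-1/2}$ and $|X^\textsf{T}\Sigma^{-1}X|^{-1/2}$. Assembling these boundary estimates gives $\sup h<\infty$, and hence the posterior is proper.
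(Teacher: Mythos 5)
Your overall strategy is sound but follows a genuinely different, and considerably harder, route than the paper's, and as written it leaves the decisive step unproven. The paper integrates in the opposite order: it completes the square in $\boldsymbol{\beta}$ first, immediately bounds the residual exponential $\exp\{-\frac{1}{2}(\boldsymbol{z}-\boldsymbol{X}\hat{\boldsymbol{\beta}})^\textsf{T}\boldsymbol{V}(\boldsymbol{z}-\boldsymbol{X}\hat{\boldsymbol{\beta}})\}$ by $1$, and is then left with exactly normalized Gaussian kernels in the $u_i$ (each factor $(\omega_i\phi)^{1/2}\exp\{-u_i^2\omega_i\phi/2\}$ integrates to $\sqrt{2\pi}$), so that only the proper scale priors remain. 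You instead marginalize $\boldsymbol{u}$ and $\boldsymbol{\beta}$ exactly, keep the residual term $\exp\{-S/2\}$, and propose to use it to beat the polynomial growth of $\bigl(\prod_i|\Sigma_i|^{-1/2}\bigr)\,|X^\textsf{T}\Sigma^{-1}X|^{-1/2}$. That is a legitimate plan, and to your credit it confronts the scale-dependent determinant factors head-on rather than absorbing them into a generic constant; but it converts a short argument into a boundary analysis over the whole orthant $(0,\infty)^{2m+2}$ that you only sketch.

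The concrete gap is in the very case you identify as the crux. You assert that $n_i>p$ ``forces $S$ to grow at least linearly in $\lambda_i\tau$.'' It does not: $n_i>p$ guarantees that the residual matrix $\Sigma^{-1}-P$ has positive rank, i.e.\ that $S$ is not identically zero as a function of $\boldsymbol{y}$, but it does not guarantee that $S(\boldsymbol{y})$ grows for the particular observed $\boldsymbol{y}$. What you actually need is that for some group the data $\boldsymbol{y}_i$ do not lie exactly in the span of the columns of $X_i$ together with $\boldsymbol{1}_{n_i}$; when they do, $S$ stays bounded as $\lambda_i\tau\to\infty$ while $|\Sigma_i|^{-1/2}=(\lambda_i\tau)^{n_i/2}\bigl(1+n_i\lambda_i\tau/(\omega_i\phi)\bigr)^{-1/2}$ grows like $(\lambda_i\tau)^{(n_i-1)/2}$, so boundedness of $h$ fails and, for polynomially tailed local priors such as the Half-Cauchy, the integral can genuinely diverge. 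This is an almost-sure condition on the data, but it is an additional hypothesis your argument requires and the theorem statement does not supply. The regimes you dismiss as ``milder'' also need real work: for instance, as all $\omega_i\phi\to0$ the within-group centering annihilates the intercept column, $X^\textsf{T}\Sigma^{-1}X$ becomes singular and $|X^\textsf{T}\Sigma^{-1}X|^{-1/2}$ blows up, and one must verify that the $m$ vanishing factors contributed by $\prod_i|\Sigma_i|^{-1/2}$ dominate the single blowing-up eigenvalue. Until the uniform bound $\sup h<\infty$ is actually established, with the nondegeneracy assumption made explicit, the proposal remains a plan rather than a proof.
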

\begin{proof}
The proof is in the supplementary material, Section \ref{proof1}.
\end{proof}

Because we consider an improper uniform prior for $\boldsymbol{\beta}$ we need to verify if the resulting posterior in (\ref{poster}) is proper. Theorem \ref{th1}
shows that if $\pi(\phi)$, $\pi(\tau)$, $\pi(\omega_{i})$ and $\pi(\lambda_{i})$ are proper then the resulting posterior is proper. Then our proposal is not limited
to the suitable priors considered in Table \ref{tab:priors}
 and can implemented to other shrinkage  GL priors with the only requirement of having  local and  global proper prior distributions. 

The posterior estimates of the completeness for all country-years are obtained considering the inverse of the logit proportions,  $\delta_{ij}=e^{\theta_{ij}}/(1+e^{\theta_{ij}})$.  Then, for each fit we estimate $\delta_{ij}$ using its posterior mean   $\hat{\delta}_{ij}$  obtained from the MCMC
samples in Algorithms \ref{MCMC1}-\ref{MCMC3} (See supplementary material, Section \ref{MCMCsec}).   To evaluate the performance of the models and GL priors we consider two deviance measures, the Mean Absolute Error (MAE) and the Root Mean Square Error (RMSE) given by,

\begin{align}
\text{MAE} &= \dfrac{1}{n}\sum_{i=1}^{m} \sum_{j=1}^{n_{i}} \mid \hat{\delta}_{ij} - c_{ij} \mid, & \text{RMSE} &= \sqrt{\dfrac{1}{n}\sum_{i=1}^{m} \sum_{j=1}^{n_{i}}  (\hat{\delta}_{ij} - c_{ij} )^{2}}.
\end{align}

We also evaluate the goodness of fit of the models under GL priors using an estimate of the R-square  as follow,

\begin{equation}
\text{R-square}=1-\dfrac{\sum_{i=1}^{m}\sum_{j=1}^{n_{i}} (c_{ij}-  \bar{c}_{ij})^{2}}{\sum_{i=1}^{m}\sum_{j=1}^{n_{i}} (c_{ij}-  \hat{\delta}_{ij,-u} )},
\end{equation}

where the posterior predictive mean $\hat{\delta}_{ij,-u}$ is computed by excluding the country level-random effects and $\bar{c}_{ij}$ is the mean of the observed completeness, $c_{ij}$. To asses
the posterior estimates by different levels of observed completeness  we stratified the MAE and RMSE  results as follow

\begin{align}
\text{MAE}_{k} &= \dfrac{1}{n_{j}} \sum_{\{(i,j):\varrho_{i}=k\}} \mid \hat{\delta}_{ij} - c_{ij} \mid, & \text{RMSE}_{k} &= \sqrt{\dfrac{1}{n_{j}} \sum_{\{(i,j):\varrho_{i}=k\}} (\hat{\delta}_{ij} - c_{ij} )^{2}},
\end{align}

where $n_{j}$ is the cardinality of  $\{(i,j):\varrho_{i}=k\}$ and  $k$ indicates the observed level of completeness, where $k \in \{(0,30\%)$, $[30\%, 60\%)$,   $[60\%,80\%)$,  $[80\%,90\%)$ and $[90\%,100\%]\}$.

\section{Results: implementing the models at the Global level}
\label{Results}

Tables  \ref{tab:tabglobal1} to \ref{tab:tabglobal6} in the supplementary material in Section \ref{sup_post} show the posterior mean estimates for the regression parameters in the models and the corresponding credible intervals. According to the credible intervals, the covariates are good predictors of the logit of completeness of the death registration. Positive and negative values of the parameters are as expected and consistent with those of \cite{adair2018estimating}. For instance, the under-five mortality rates and percentage of the population over the age of 65 have a negative relationship with the observed logit of the completeness (i.e., as these variables are higher, expected deaths increase and therefore completeness declines).
\begin{figure}[ht]
\small
\begin{center}
\begin{tabular}{ccc}
Both sexes (Half Cauchy) & Females (Half Cauchy) & Males (Half Cauchy)\\
\includegraphics[width=0.3\textwidth]{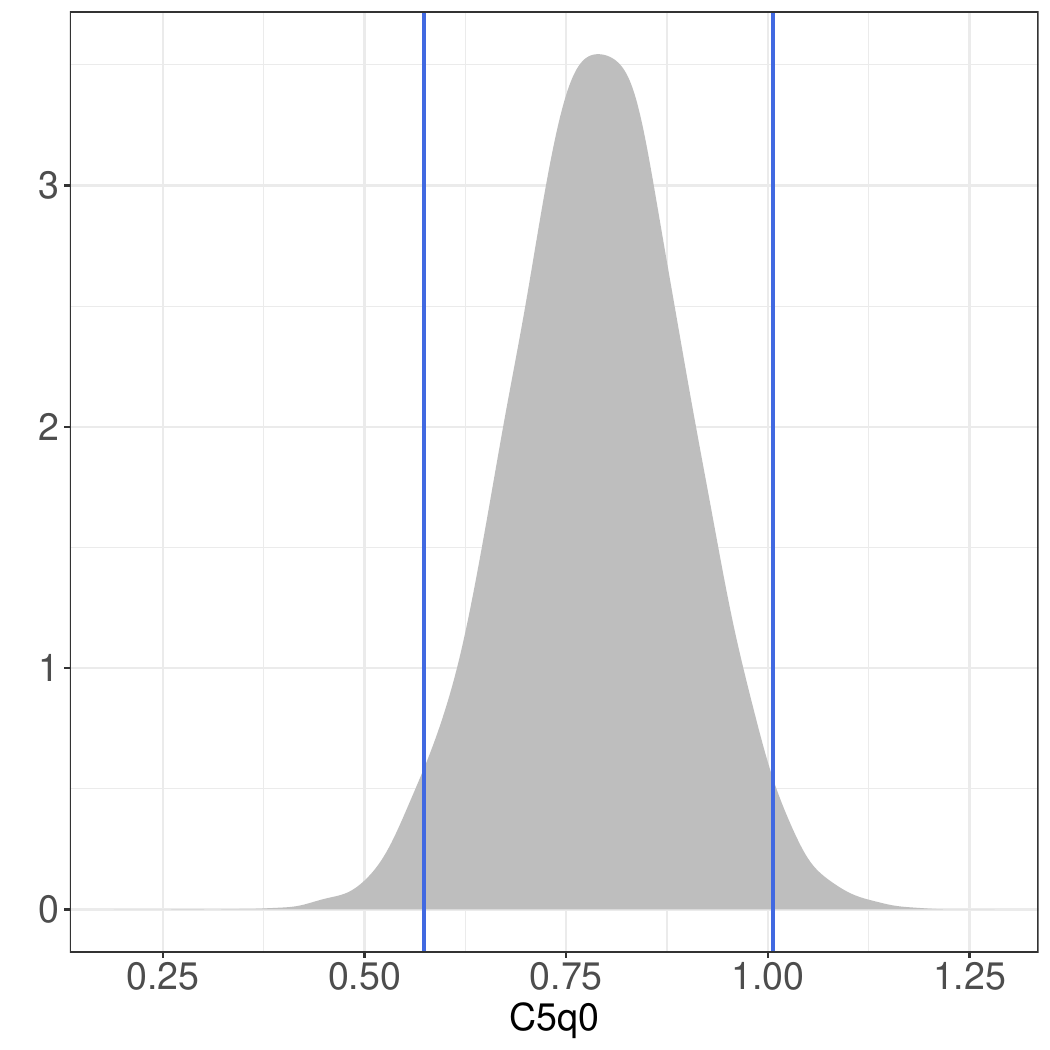} &
\includegraphics[width=0.3\textwidth]{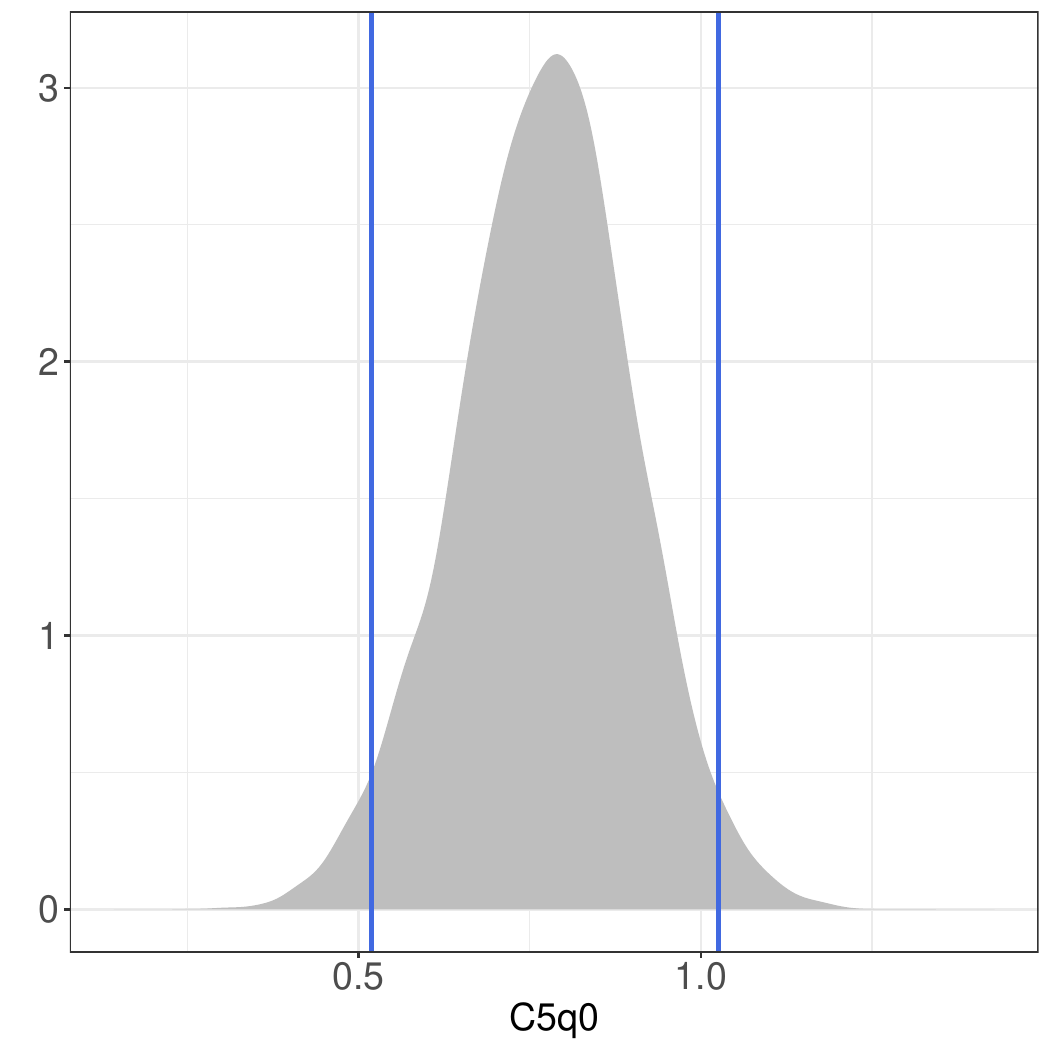} &
\includegraphics[width=0.3\textwidth]{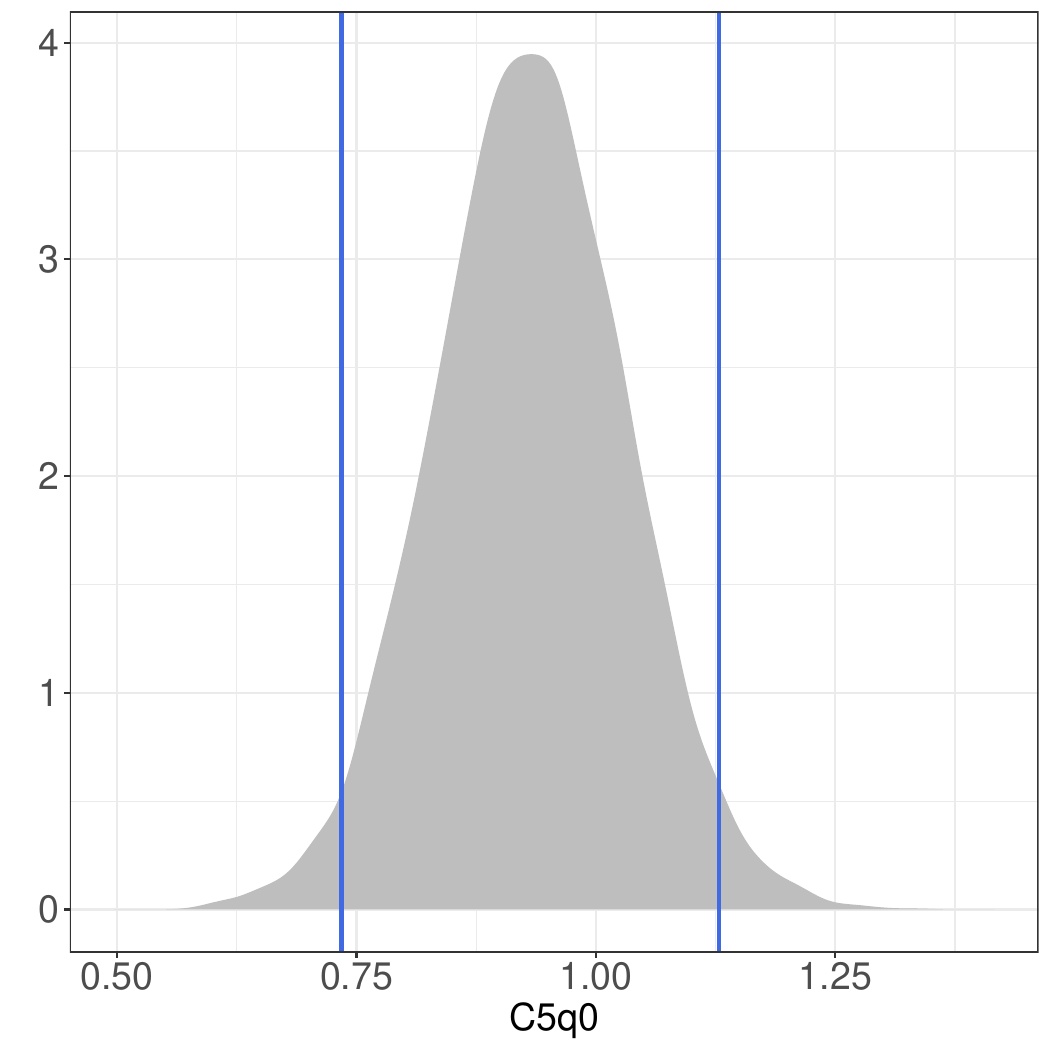}\\
Both sexes (Gamma) & Females (Gamma) & Males (Gamma)\\ \vspace{1cm}
\includegraphics[width=0.3\textwidth]{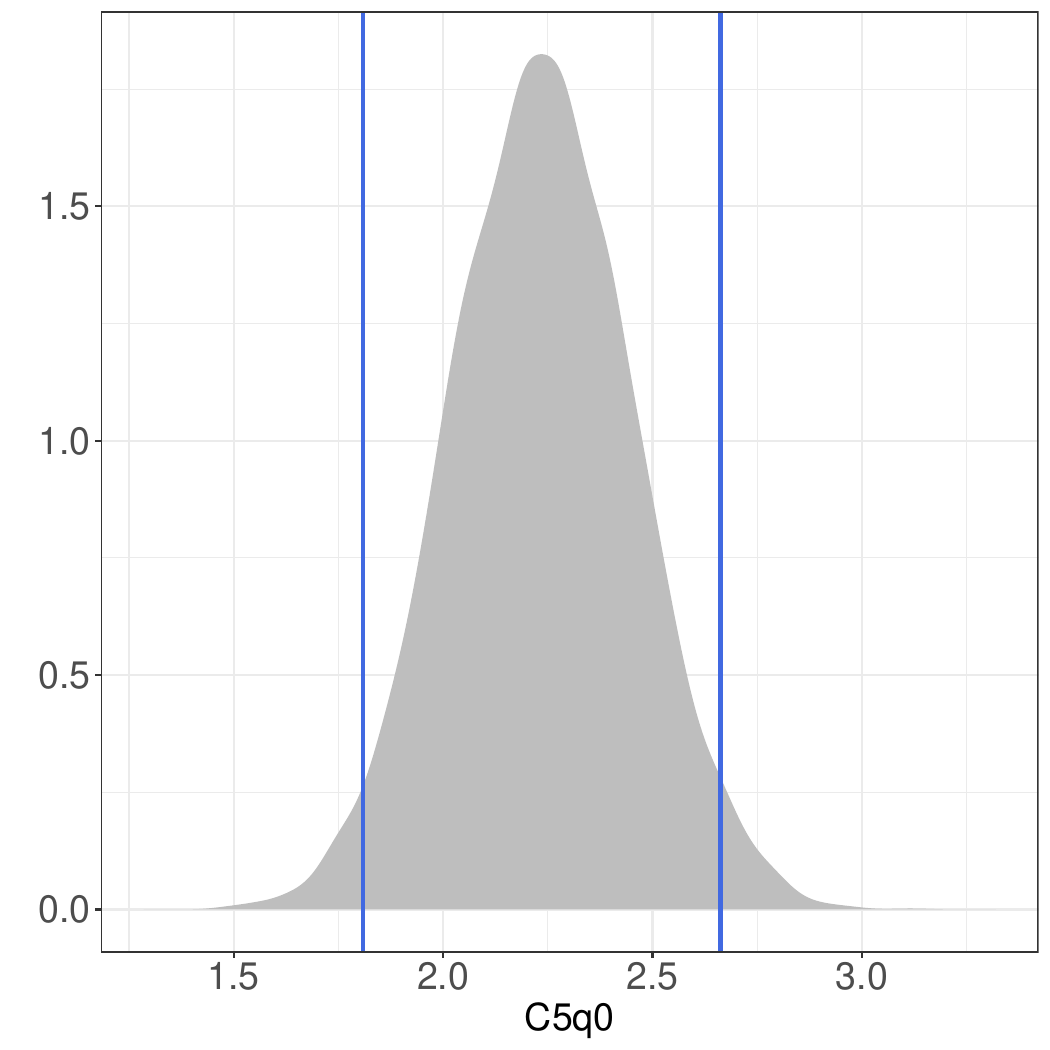} &
\includegraphics[width=0.3\textwidth]{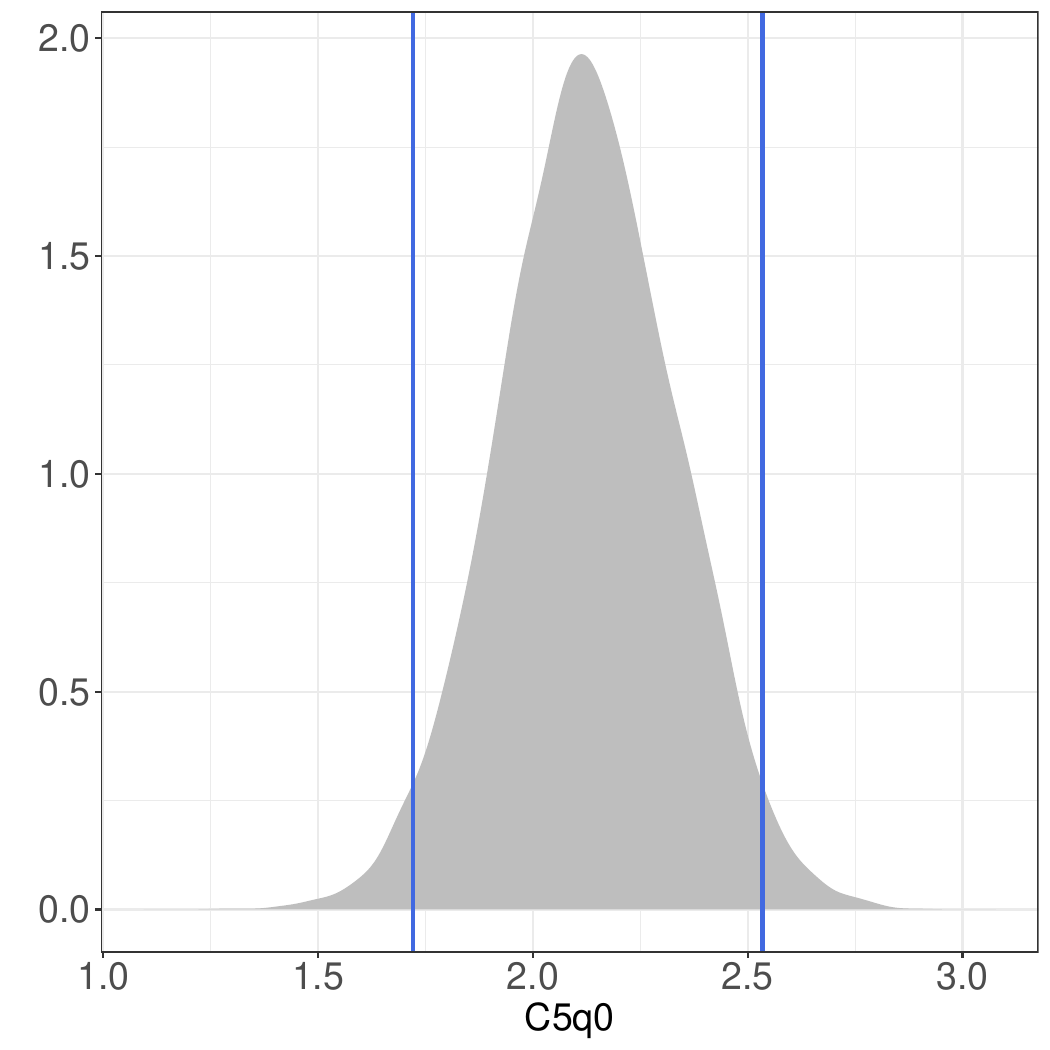} &
\includegraphics[width=0.3\textwidth]{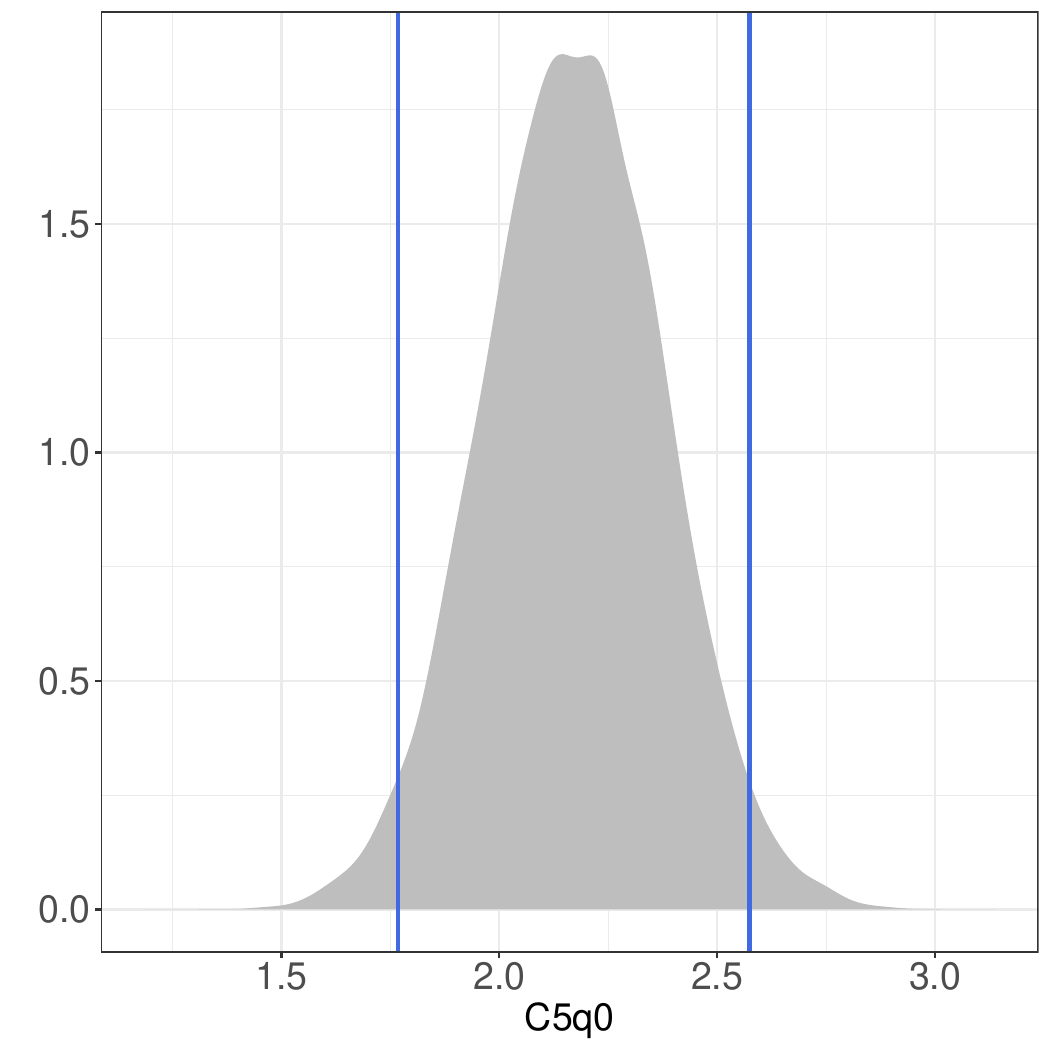}\\
\end{tabular}
\end{center}
\vspace{-0.5cm}
\caption{\small
Marginal posterior  of C5q0 with quantile-based credible intervals in blue lines for model 1 considering both sexes, females and males with
LA prior for the local scales of the random effects and local Half-Cauchy and a common Gamma prior for the scale of the errors respectively.}
\label{C5qO}
\end{figure}

\begin{table*}
\scriptsize
\caption{\small
R-squared, RMSE and MAE for the considered models using different HS, LA and Student-t priors for the local scales of the random effects and
a common Gamma prior for  the scale of the random effects and errors  respectively (without Global-Local structure).  Frequentist refers to the original model proposed in \cite{adair2018estimating} fitted using restricted maximum likelihood estimation. }
\label{tab_measures1}
\footnotesize
\renewcommand{\arraystretch}{1.5}
\scalebox{0.75}{
\begin{tabular}{rr|rrrrr|rrrrrr} \\
   \hline    \hline
 \multicolumn{12}{c}{Prior for the errors: Gamma} \\
 \hline
   &  &  &     & Model 1 &  &   &    &    & Model 2 &&\\ \hline \hline
 & & Frequentist  & Gamma & Student-t & HS & LA  &  Frequentist & Gamma & Student-t & HS & LA \\ \hline \hline
Both sexes & R-square & 0.8360 & 0.8362 & 0.8348 & 0.8370 & 0.8386 & 0.7894 & 0.7895 & 0.7930 & 0.7947 & 0.7982 \\
   & RMSE & 2.6172 & 2.6171 & 2.6276 & 2.6319 & 2.6612 & 2.6153 & 2.6154 & 2.6210 & 2.6225 & 2.6449 \\
   & MAE & 4.2804 & 4.2795 & 4.2971 & 4.3016 & 4.3384 & 4.4552 & 4.4558 & 4.4701 & 4.4738 & 4.5268 \\
  Females & R-square & 0.8477 & 0.8474 & 0.8454 & 0.8461 & 0.8432 & 0.8108 & 0.8104 & 0.8145 & 0.8151 & 0.8206 \\
   & RMSE & 2.6380 & 2.6389 & 2.6505 & 2.6564 & 2.6974 & 2.6459 & 2.6466 & 2.6546 & 2.6589 & 2.7088 \\
   & MAE & 4.2391 & 4.2398 & 4.2564 & 4.2636 & 4.3160 & 4.3755 & 4.3778 & 4.4007 & 4.4100 & 4.5087 \\
  Males  & R-square & 0.8227 & 0.8229 & 0.8222 & 0.8243 & 0.8261 & 0.7795 & 0.7795 & 0.7851 & 0.7883 & 0.7971 \\
   & RMSE & 2.7053 & 2.7041 & 2.7132 & 2.7180 & 2.7483 & 2.7841 & 2.7831 & 2.7896 & 2.7938 & 2.8163 \\
   & MAE & 4.3582 & 4.3566 & 4.3691 & 4.3759 & 4.4143 & 4.6852 & 4.6842 & 4.6895 & 4.6999 & 4.7435 \\
      \hline    \hline
\end{tabular}}
\vspace{1.5cm}
\caption{\small
R-squared, RMSE and MAE for the considered models using different Gamma, HS, LA and Student-t  priors for the local scales of the random effects and
a Half-Cauchy  for the local scale of the errors and a common Gamma for the scale of the random effects, respectively. Frequentist refers to the original model proposed in \cite{adair2018estimating} fitted using  restricted maximum likelihood estimation.}
\label{tab_measures2}
\renewcommand{\arraystretch}{1.5}
\scalebox{0.75}{
\begin{tabular}{rr|rrrr>{\columncolor[gray]{0.7}}r|rrrr>{\columncolor[gray]{0.7}}rr} \\
   \hline    \hline
 \multicolumn{12}{c}{Prior for the errors: Local prior - Half-Cauchy and Global prior - Gamma } \\
 \hline
   &  &  &     & Model 1 &  &   &      &   & Model 2  &&\\ \hline \hline
 &   & Frequentist & Gamma & Student-t & HS & LA  & Frequentist & Gamma & Student-t & HS & LA \\ \hline \hline
Both sexes & R-square & 0.836 & 0.795 & 0.799 & 0.805 & 0.827 & 0.789 & 0.802 & 0.807 & 0.808 & 0.815 \\
   & RMSE & 2.617 & 2.329 & 2.332 & 2.334 & 2.349 & 2.615 & 2.477 & 2.480 & 2.482 & 2.494 \\
   & MAE & 4.280 & 3.863 & 3.866 & 3.872 & 3.900 & 4.455 & 4.199 & 4.206 & 4.209 & 4.230 \\
 Females  & R-square & 0.848 & 0.826 & 0.828 & 0.829 & 0.830 & 0.811 & 0.825 & 0.830 & 0.829 & 0.832 \\
   & RMSE & 2.638 & 2.399 & 2.404 & 2.409 & 2.450 & 2.646 & 2.507 & 2.510 & 2.515 & 2.542 \\
   & MAE & 4.239 & 3.873 & 3.877 & 3.889 & 3.964 & 4.376 & 4.144 & 4.157 & 4.163 & 4.215 \\
  Males & R-square & 0.823 & 0.783 & 0.793 & 0.803 & 0.822 & 0.780 & 0.790 & 0.793 & 0.797 & 0.798 \\
   & RMSE & 2.705 & 2.385 & 2.389 & 2.392 & 2.412 & 2.784 & 2.585 & 2.586 & 2.587 & 2.593 \\
   & MAE & 4.358 & 3.901 & 3.904 & 3.911 & 3.934 & 4.685 & 4.382 & 4.383 & 4.387 & 4.386 \\
   \hline    \hline
\end{tabular}}
\end{table*}

 However, the posterior mean of the constant parameter and C5q0 are slightly different when Gamma and Half-Cauchy priors are considered for the local scale of the errors. Figure  \ref{C5qO} illustrates that the posterior mean of C5q0 is around 0.75 and 2.2 when a local Half-Cauchy and a common Gamma prior is considered for scales of the errors, respectively. This suggests that models using the Gamma prior are less sensitive to the completeness of under-five death registration when compared to those using the Half-Cauchy prior. Considering  GL priors for
the errors and random effects in both models 1 and 2 improves the estimation of 
 the completeness of death registration. This is illustrated in Tables \ref{tab_measures1}-\ref{tab_measures2}  where there are important reductions
 in terms of RMSE and MAE when a Half-Cauchy prior is considered for the scale of the errors in model (\ref{model})
and implemented in models 1 and 2   and for both sexes, males and females.  The differences  of RMSE and MAE  are practically negligible when
 Gamma, Student-t, HS or  LA priors are considered for the local scales of the random effects.  Interesting, similar values of RMSE, MAE and R-square
 are obtained under the HS and the local Student-t priors likely due to the similar tail behavior  in the resulting marginal posterior distribution of the random effects.

\begin{figure}[ht]
\small
\begin{center}
\begin{tabular}{ccc}
\hspace{-0.7cm} Model 1  (Both sexes) & Model 2  (Both sexes)\\
\hspace{-1.2cm} \includegraphics[width=0.55\textwidth]{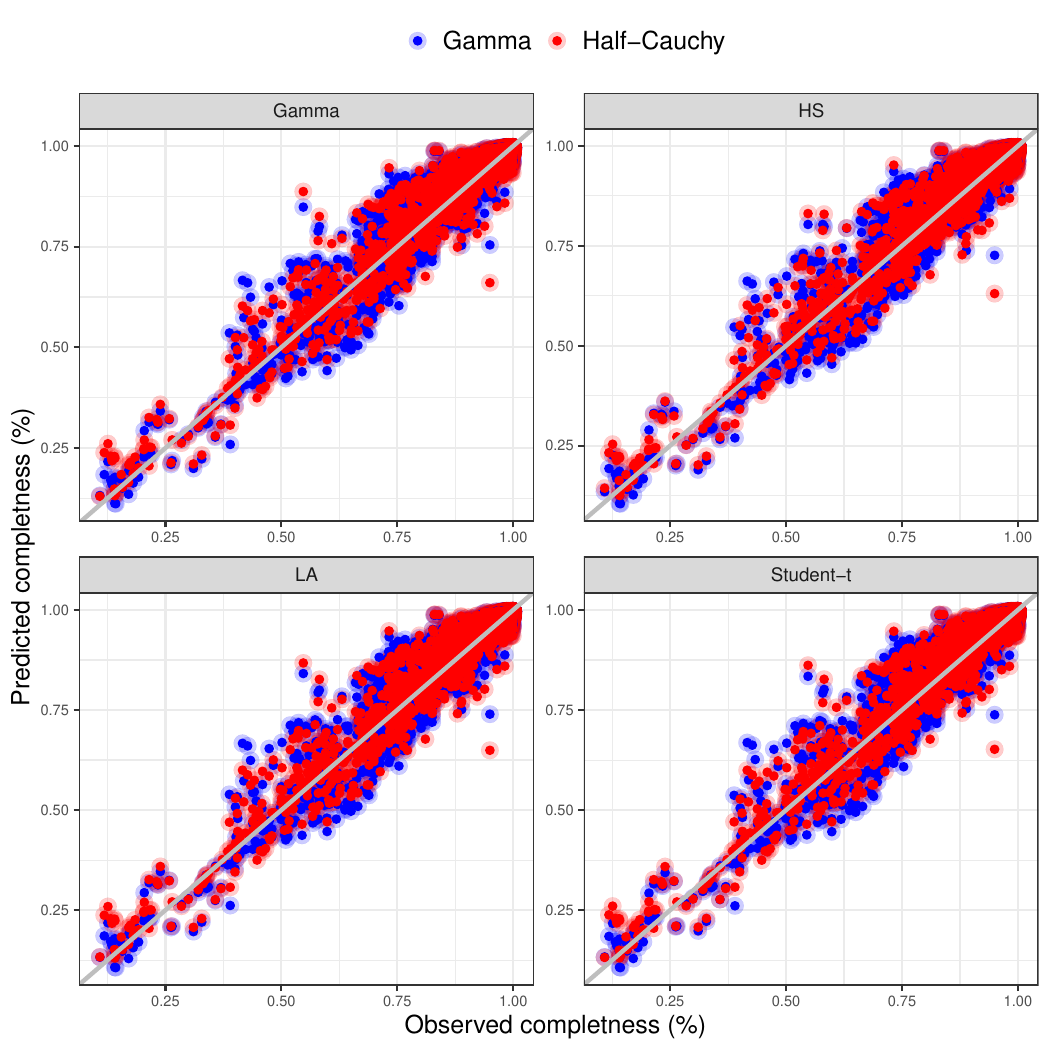}&
\includegraphics[width=0.55\textwidth]{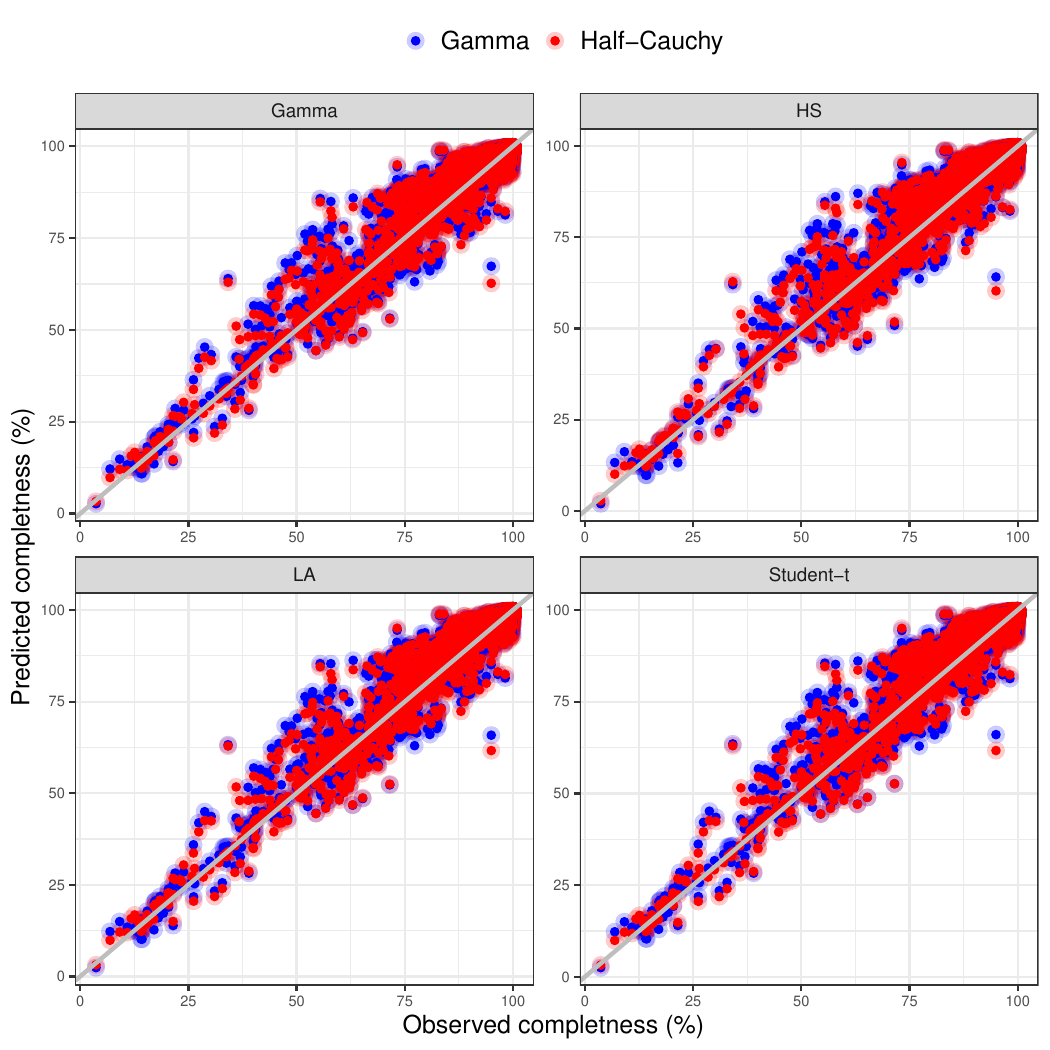}\\
\end{tabular}
\end{center}
\vspace{-0.5cm}
\caption{ \small Posterior  predicted mean versus observed death registration completeness, and using
HS LA and Student-t priors for the local scales of the random effects and a Half Cauchy prior
for the local scale of the errors and common Gamma distributions for the random effects and errors respectively.  The  models are based on an  dataset updated to 2019, which uses GBD death estimates based on the GBD 2019 and  comprises 120 countries and 2,748 country-years from 1970-2019  \citep{collaborators2020global}. Both sexes with model 1 and model 2.}
\label{fig:fig1}
\end{figure}

As we expected, similar but larger values of the RMSE and MAE  are obtained by considering the Frequentist model and the Bayesian model under a common Gamma prior for the scales of the random effects and errors, respectively. However,  when  LA priors are considered for the random effects the R-square values are higher and around 80\%  of those obtained with Gamma, Student-t and HS local priors and similar of those obtained with the frequentist (and the Bayesian approach) model.  This result shows that the use of GL priors rather than a common Gamma priors is crucial because observed completeness and covariates may exhibit considerable inter-country variability. This result also supports that the use of Half-Cauchy priors for the local scales of the errors rather than a common Gamma prior is particularly important because observed completeness may exhibit considerable within-country variability, particularly
for the 30\%-90\% completeness range where the data is concentrated.  This is illustrated in Figure ~\ref{fig:fig1}
where a large amount of the observed completeness is between around 30\% to 90\%.

\begin{figure}[h!]
\begin{tabular}{ccc}
Model 1 - Half-Cauchy (Both Sexes) & Model 1 - Gamma (Both Sexes) \\
\hspace{-1.0cm}\includegraphics[width=0.55\textwidth]{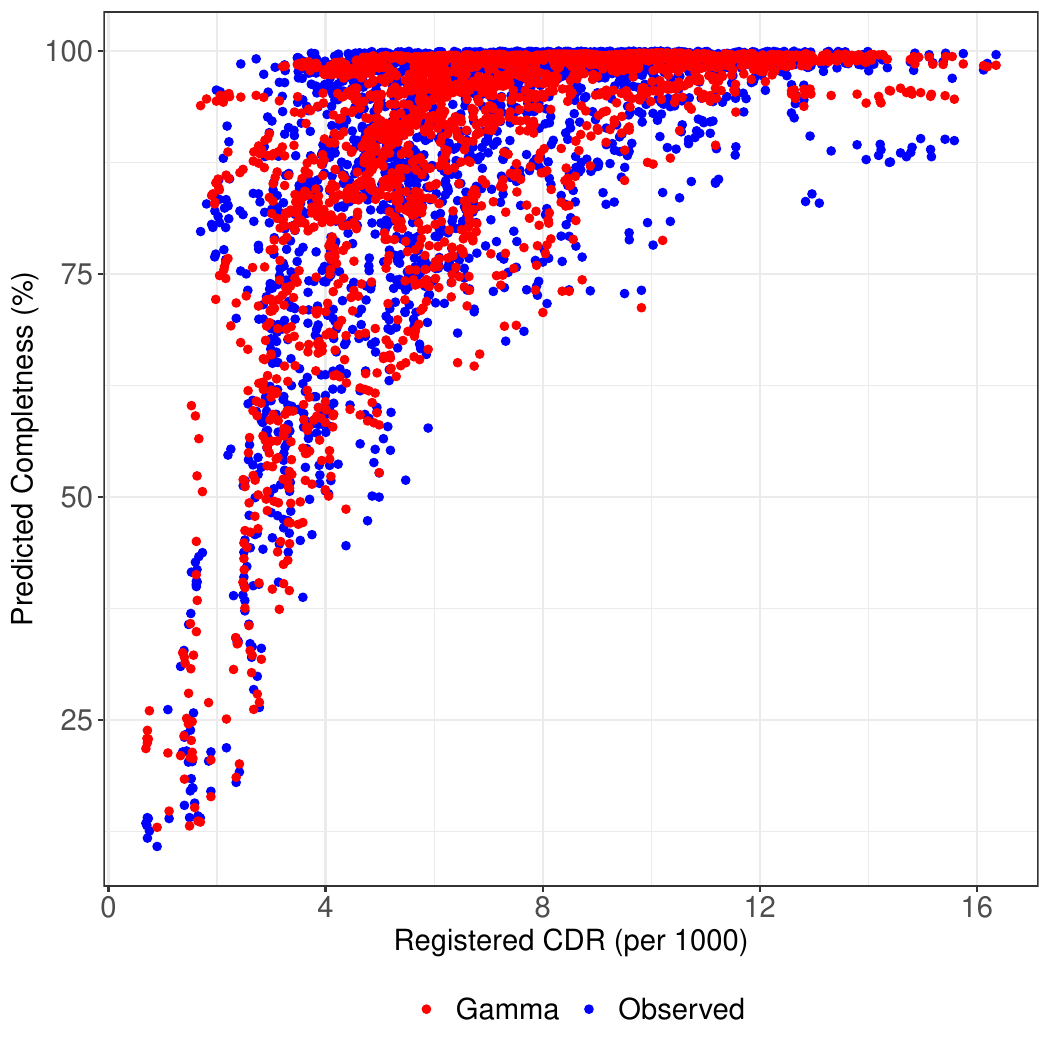} &
\includegraphics[width=0.55\textwidth]{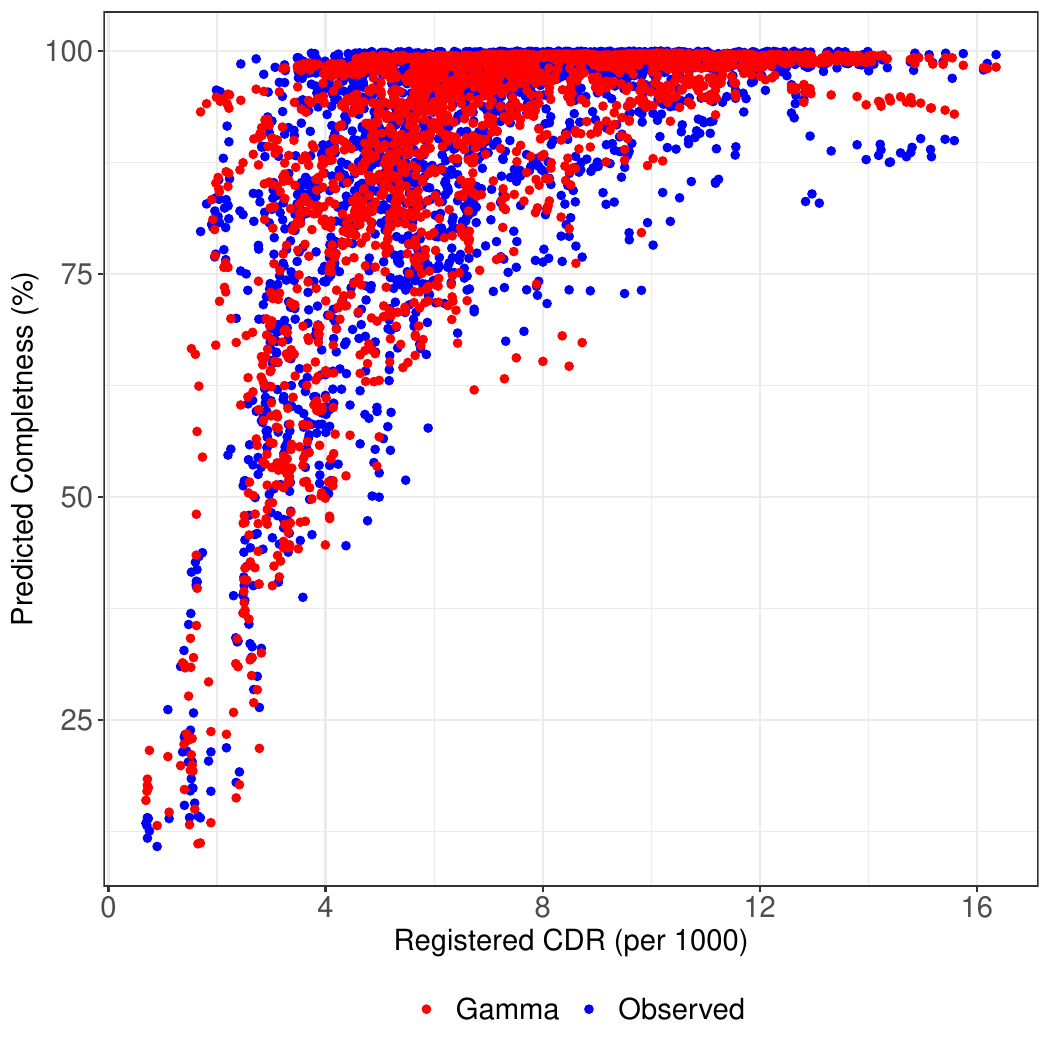}\\
\end{tabular}
\vspace{-0.5cm}
\caption{ \small Predicted versus observed death registration completeness by registered CDR
Student-t priors for the local scales of the random effects.The  models are based on an  dataset updated to 2019, which uses GBD death estimates based on the GBD 2019 and  comprises 120 countries and 2,748 country-years from 1970-2019  \citep{collaborators2020global}. Model 1, both sexes with
Half-Cauchy and Gamma priors for the scales, respectively.}
\label{fig:CDRa}
\end{figure}

Figure ~\ref{fig:fig1} also displays how GL priors and the completeness of registered under-five deaths as a predictor  dramatically improves
the estimation over this range of completeness. This is also confirmed in Tables ~\ref{tab:levels1} and ~\ref{tab:levels2} where the suitable posterior predictive estimates of completeness
are obtained according to the RMSE and MAE. Taking a closer look at the results for ranges of observed completeness between 0\% to 30\% for model
1 and for both sexes and males, the MAE values are lower when a Gamma prior is considered for the scale of the errors of the random effects. However, for other models and also for higher completeness levels the reduction in terms of RMSE and MAE is large when a Half-Cauchy prior is considered for the scale of the errors for each country $i=1,...,m$. The predictor RegCDR shows a curvilinear relationship with observed completeness, as was first pointed out by \citep{adair2018estimating}, and  in this paper the posterior predictive mean estimates also show this relationship. 

Importantly, Figure ~\ref{fig:CDRa} (left) for Half-Cauchy priors illustrates how the observed variability of the registered CDR in the completeness range 30\% to 90\% is captured for the proposed models, with the posterior predictive mean estimates and observed completeness close to each other. This range of completeness is particularly important because for these populations the method has greatest utility (i.e. where completeness is above 90\% there is more certainty about the true level of completeness, while where completeness is less than 30\% it is not recommended to use estimated completeness to adjust mortality rates). This makes Half-Cauchy priors suitable choices for the scale of the errors. 

\clearpage

\begin{table}[ht]
\renewcommand{\arraystretch}{1.4}
\scalebox{0.75}{
\begin{tabular}{r|rrrr|>{\columncolor[gray]{0.8}}r>{\columncolor[gray]{0.8}}r>{\columncolor[gray]{0.8}}r>{\columncolor[gray]{0.8}}rrrrrrr}
\hline    \hline
Global-Prior &  &     Gamma & &   &  &     Half-Cauchy & &&  &\\ \hline
Local-Prior  & Gamma & Student-t & HS & LA  & Gamma & Student-t & HS & LA \\
  \hline \hline
 \multicolumn{11}{c}{Model 1 (both sexes)} \\ \hline \hline
    90$<$100 & 1.87 & 1.88 & 1.88 & 1.90 & 1.96 & 1.97 & 1.97 & 2.00 \\
   80$<$90 & 5.36 & 5.37 & 5.37 & 5.36 & 4.83 & 4.83 & 4.82 & 4.76 \\
   60$<$80 & 7.92 & 7.94 & 7.93 & 7.95 & 6.85 & 6.88 & 6.89 & 6.96 \\
   30$<$60 & 9.14 & 9.21 & 9.26 & 9.48 & 7.71 & 7.66 & 7.71 & 7.91 \\
   $<$30 & 4.67 & 4.77 & 4.86 & 5.34 & 6.12 & 6.12 & 6.08 & 5.96 \\
   \hline \hline
 \multicolumn{11}{c}{Model 2 (both sexes)} \\ \hline \hline
   90$<$100 & 2.08 & 2.09 & 2.09 & 2.10 & 2.11 & 2.13 & 2.13 & 2.15 \\
   80$<$90 & 5.10 & 5.09 & 5.08 & 5.00 & 4.98 & 4.97 & 4.96 & 4.89 \\
   60$<$80 & 7.33 & 7.35 & 7.37 & 7.49 & 6.88 & 6.88 & 6.89 & 6.92 \\
   30$<$60 & 10.10 & 10.15 & 10.19 & 10.48 & 9.00 & 9.04 & 9.06 & 9.22 \\
   $<$30 & 4.84 & 4.84 & 4.80 & 4.71 & 4.13 & 4.14 & 4.12 & 4.15 \\
   \hline \hline
 \multicolumn{11}{c}{Model 1 (females)} \\ \hline  \hline
   90$<$100 & 1.89 & 1.91 & 1.91 & 1.95 & 1.95 & 1.97 & 1.97 & 1.99 \\
   80$<$90 & 5.42 & 5.43 & 5.43 & 5.47 & 4.96 & 4.97 & 4.98 & 5.04 \\
   60$<$80 & 7.80 & 7.82 & 7.82 & 7.81 & 6.96 & 6.97 & 6.99 & 7.18 \\
   30$<$60 & 8.53 & 8.57 & 8.62 & 8.83 & 7.20 & 7.13 & 7.19 & 7.40 \\
   $<$30 & 5.15 & 5.33 & 5.38 & 5.99 & 5.91 & 5.94 & 5.92 & 5.66 \\
  \hline \hline
 \multicolumn{11}{c}{Model 2 (females)} \\
  \hline \hline
   90$<$100 & 2.06 & 2.08 & 2.07 & 2.10 & 2.07 & 2.09 & 2.09 & 2.11 \\
   80$<$90 & 5.26 & 5.27 & 5.28 & 5.30 & 5.13 & 5.14 & 5.14 & 5.15 \\
   60$<$80 & 7.26 & 7.30 & 7.31 & 7.44 & 6.92 & 6.92 & 6.93 & 7.01 \\
   30$<$60 & 9.70 & 9.78 & 9.83 & 10.26 & 8.65 & 8.68 & 8.71 & 8.89 \\
   $<$30 & 4.34 & 4.29 & 4.32 & 4.44 & 3.80 & 3.83 & 3.84 & 4.04 \\
 \hline \hline
 \multicolumn{11}{c}{Model 1 (males)} \\ \hline \hline
   90$<$100 & 1.90 & 1.92 & 1.92 & 1.96 & 1.97 & 1.98 & 1.99 & 2.04 \\
   80$<$90 & 5.24 & 5.24 & 5.24 & 5.24 & 4.69 & 4.70 & 4.70 & 4.67 \\
   60$<$80 & 7.94 & 7.96 & 7.95 & 7.93 & 6.91 & 6.93 & 6.94 & 6.99 \\
   30$<$60 & 8.69 & 8.73 & 8.80 & 9.06 & 7.29 & 7.22 & 7.25 & 7.25 \\
   $<$30 & 5.39 & 5.41 & 5.46 & 5.71 & 5.62 & 5.61 & 5.56 & 5.66 \\
 \hline \hline
 \multicolumn{11}{c}{Model 2 (males)} \\ \hline \hline
   90$<$100 & 2.12 & 2.13 & 2.14 & 2.18 & 2.15 & 2.16 & 2.17 & 2.23 \\
   80$<$90 & 5.05 & 5.05 & 5.04 & 4.98 & 4.82 & 4.82 & 4.80 & 4.73 \\
   60$<$80 & 7.72 & 7.73 & 7.73 & 7.77 & 7.39 & 7.39 & 7.37 & 7.33 \\
   30$<$60 & 10.27 & 10.27 & 10.36 & 10.64 & 8.73 & 8.71 & 8.76 & 8.82 \\
   $<$30 & 6.24 & 6.22 & 6.19 & 5.94 & 5.91 & 5.88 & 5.88 & 5.82 \\
   \hline
\end{tabular}}
\caption{\small RMSE measures and
sensitivity analysis of the Half-Cauchy local scale prior and a common Gamma prior for the errors in model 1 and 2 considering different levels of completeness.}
\label{tab:levels1}
\end{table}

\begin{table}[ht]
\renewcommand{\arraystretch}{1.4}
\scalebox{0.75}{
\begin{tabular}{r|rrrr|>{\columncolor[gray]{0.8}}r>{\columncolor[gray]{0.8}}r>{\columncolor[gray]{0.8}}r>{\columncolor[gray]{0.8}}rrrrrrr}
\hline    \hline
Global-Prior &  &     Gamma & &   &  &     Half-Cauchy & &&  &\\ \hline
Local-Prior  & Gamma & Student-t & HS & LA  & Gamma & Student-t & HS & LA \\
  \hline \hline
 \multicolumn{11}{c}{Model 1 (both sexes)} \\ \hline \hline
   90$<$100 & 1.18 & 1.18 & 1.18 & 1.18 & 1.18 & 1.19 & 1.19 & 1.20 \\
   80$<$90 & 4.49 & 4.48 & 4.49 & 4.47 & 3.79 & 3.79 & 3.78 & 3.72 \\
   60$<$80 & 6.63 & 6.67 & 6.67 & 6.72 & 5.47 & 5.49 & 5.51 & 5.59 \\
  30$<$60 & 6.90 & 7.01 & 7.07 & 7.45 & 5.53 & 5.53 & 5.56 & 5.70 \\
  $<$30 & 3.78 & 3.87 & 3.91 & 4.11 & 4.74 & 4.76 & 4.74 & 4.68 \\
  \hline \hline
 \multicolumn{11}{c}{Model 2 (both sexes)} \\ \hline \hline
  90$<$100 & 1.25 & 1.25 & 1.25 & 1.25 & 1.24 & 1.24 & 1.24 & 1.25 \\
  80$<$90 & 3.99 & 3.98 & 3.98 & 3.94 & 3.85 & 3.85 & 3.84 & 3.79 \\
  60$<$80 & 5.66 & 5.68 & 5.69 & 5.80 & 5.22 & 5.22 & 5.23 & 5.27 \\
  30$<$60 & 7.47 & 7.52 & 7.53 & 7.73 & 6.64 & 6.67 & 6.69 & 6.83 \\
  $<$30 & 3.33 & 3.40 & 3.41 & 3.35 & 2.95 & 2.98 & 2.98 & 3.03 \\
  \hline \hline
 \multicolumn{11}{c}{Model 1 (females)} \\ \hline  \hline
  90$<$100 & 1.24 & 1.24 & 1.25 & 1.26 & 1.26 & 1.26 & 1.26 & 1.27 \\
 80$<$90 & 4.50 & 4.51 & 4.51 & 4.55 & 3.94 & 3.94 & 3.95 & 4.01 \\
  60$<$80 & 6.39 & 6.41 & 6.41 & 6.44 & 5.39 & 5.40 & 5.42 & 5.53 \\
  30$<$60 & 6.59 & 6.68 & 6.75 & 7.07 & 5.50 & 5.51 & 5.55 & 5.84 \\
  $<$30 & 4.34 & 4.48 & 4.51 & 4.89 & 4.73 & 4.77 & 4.76 & 4.66 \\
  \hline \hline
 \multicolumn{11}{c}{Model 2 (females)} \\
  \hline \hline
 90$<$100 & 1.32 & 1.32 & 1.32 & 1.33 & 1.30 & 1.30 & 1.30 & 1.31 \\
 80$<$90 & 4.15 & 4.16 & 4.18 & 4.27 & 3.98 & 3.99 & 3.99 & 4.03 \\
 60$<$80 & 5.60 & 5.63 & 5.63 & 5.74 & 5.20 & 5.19 & 5.20 & 5.24 \\
 30$<$60 & 7.36 & 7.39 & 7.41 & 7.63 & 6.53 & 6.55 & 6.59 & 6.75 \\
 $<$30 & 3.58 & 3.53 & 3.54 & 3.54 & 3.20 & 3.21 & 3.22 & 3.33 \\    \hline \hline
 \multicolumn{11}{c}{Model 1 (males)} \\ \hline \hline
 90$<$100 & 1.20 & 1.20 & 1.20 & 1.21 & 1.19 & 1.20 & 1.20 & 1.22 \\
 80$<$90 & 4.38 & 4.38 & 4.39 & 4.40 & 3.68 & 3.69 & 3.69 & 3.68 \\
 60$<$80 & 6.61 & 6.65 & 6.64 & 6.66 & 5.51 & 5.53 & 5.54 & 5.58 \\
 30$<$60 & 6.29 & 6.35 & 6.43 & 6.80 & 5.16 & 5.16 & 5.17 & 5.21 \\
 $<$30 & 4.59 & 4.64 & 4.66 & 4.73 & 4.24 & 4.23 & 4.21 & 4.27 \\ \\ \hline \hline
 \multicolumn{11}{c}{Model 2 (males)} \\ \hline \hline
 90$<$100 & 1.29 & 1.29 & 1.30 & 1.31 & 1.26 & 1.26 & 1.27 & 1.29 \\
 80$<$90 & 3.89 & 3.89 & 3.88 & 3.87 & 3.69 & 3.69 & 3.68 & 3.64 \\
 60$<$80 & 5.96 & 5.98 & 5.98 & 6.03 & 5.66 & 5.66 & 5.66 & 5.61 \\
 30$<$60 & 7.85 & 7.87 & 7.91 & 8.06 & 6.25 & 6.26 & 6.28 & 6.38 \\
 $<$30 & 3.95 & 4.03 & 4.06 & 3.94 & 3.02 & 3.02 & 3.02 & 3.05 \\
   \hline
\end{tabular}}
\caption{\small MAE measures and
sensitivity analysis of the Half-Cauchy local scale prior and a common Gamma prior for the errors in model 1 and 2 considering different levels of completeness..}
\label{tab:levels2}
\end{table}

\clearpage

To illustrate that the inclusion of GL priors  affect the predictions
of the logit of completeness we consider two deviances at the country and at the country-year levels  with

\begin{align*}
\text{Deviance}^{\text{countr- level}} &= (\bar{y}_{i}-\boldsymbol{\bar{x}_{i}^\textsf{T}} \hat{\boldsymbol{\beta}})^2, &
\text{Deviance}^{\text{country-year level}} &= \sum \limits_{j=1}^{n_{i}}(y_{ij}-\boldsymbol{x_{ij}}^{T}\hat{\boldsymbol{\beta}})^{2}/n_{i}
\end{align*}

where $\hat{\boldsymbol{\beta}}$ denotes the ordinary least squares estimates from a multiple regression model. Notice that these two deviances  in some sense measure the within country and between country-year variabilities. According to Theorem  \ref{th2}  when the difference $|\bar{\boldsymbol{y}}_{i}-\bar{\boldsymbol{x}}_{i}^\textsf{T}\boldsymbol{\beta}_{i}|$ increases then the shrinkage under the posterior mean of the random effects towards zero is offset and the random effects depend on the observed completeness and  the regression fit. This is illustrated in Figure \ref{3D} where the random effects increases under GL priors with the deviance at the country-level. However as is pointed out  in Theorem  \ref{th2}  the posterior concentration of the shrinkage factors are also affected
for the GL priors of the errors. Notice that the shape of the conditional Gamma  posterior distribution of the global parameter $\tau$
 in algorithm \ref{MCMC2} depends on the country-year level variability. This behavior is also illustrated in Figure \ref{3D}  where the values of the random effects under GL priors are very sensible even to  smaller values of within country-year variability.

\begin{figure}[ht]
\small
\begin{center}
\begin{tabular}{ccc}
\hspace{-2.5cm} Both sexes  & Females & Males \\
\hspace{-2.5cm} \includegraphics[width=0.45\textwidth]{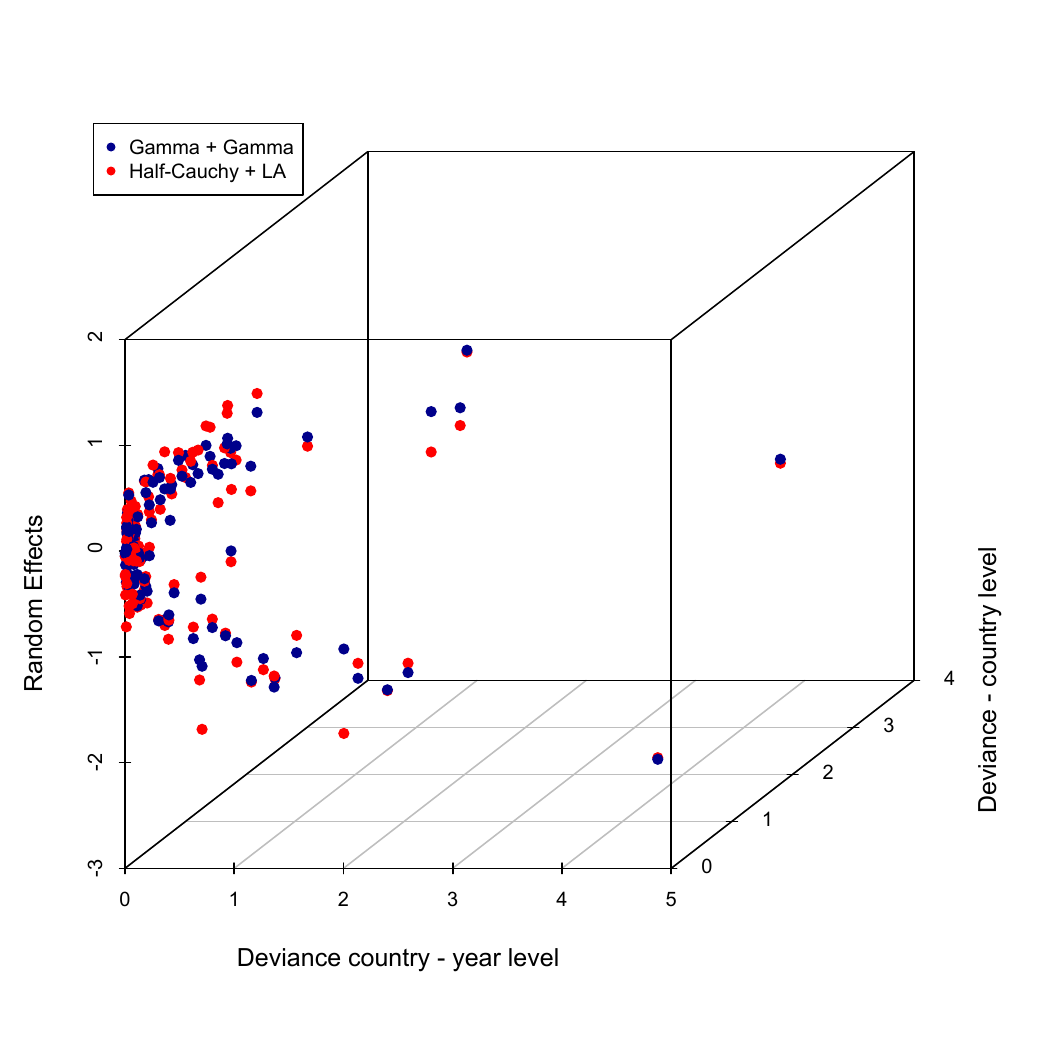} & \includegraphics[width=0.45\textwidth]{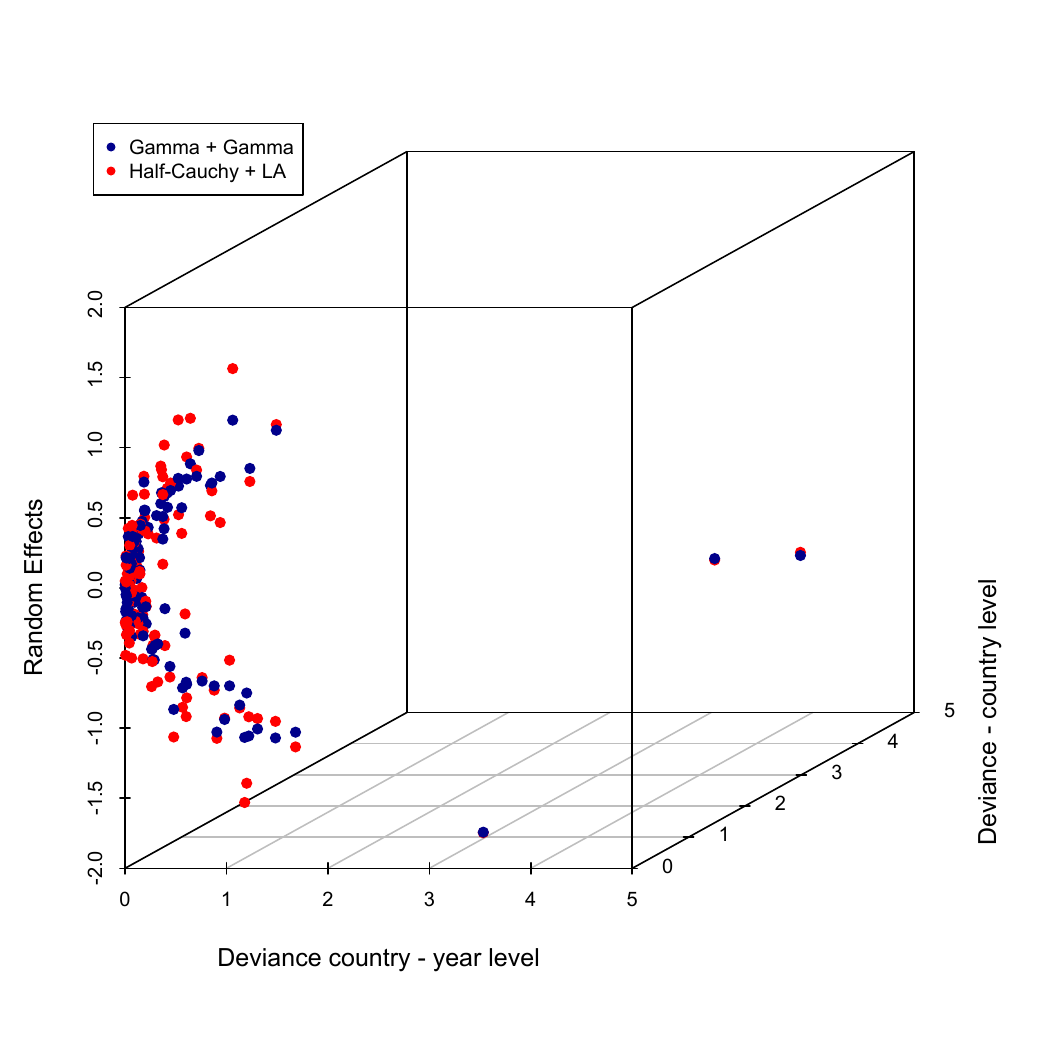} &
\includegraphics[width=0.45\textwidth]{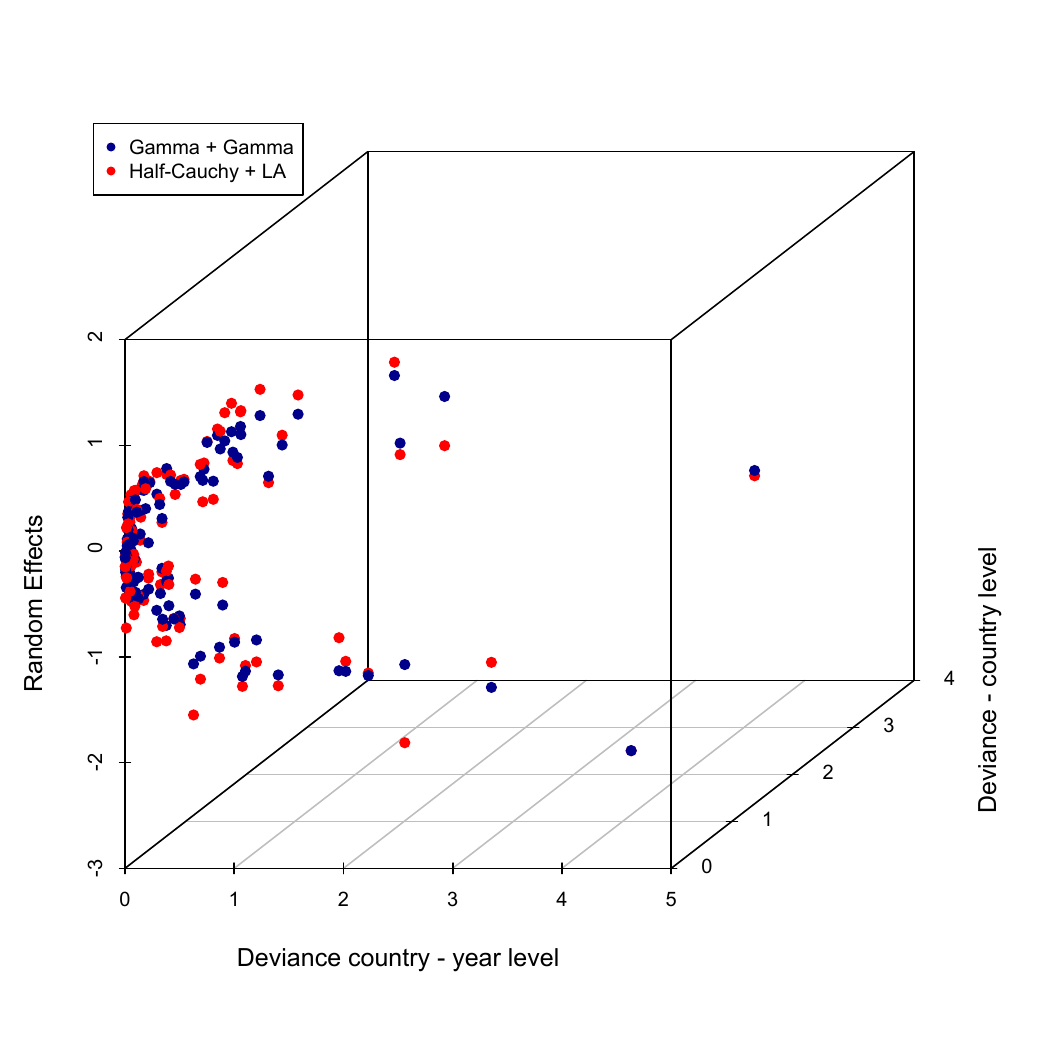}\\
\end{tabular}
\end{center}
\vspace{-0.5cm}
\caption{\small
 Posterior means of random effects from model 1 using the  LA local prior for the scales of the random effects 
 and a common Gamma or a Half-Cauchy prior for the scales of the errors versus
the deviances at country and country - year levels.}
\label{3D}
\end{figure}

\section{National and subnational implementation: completeness of death registration in Colombia and its  Departments in 2017}

\label{Results2}

To evaluate the performance and demonstrate the utility of the models in Section  \ref{Results},  in this section we use them to estimate the completeness of death registration $c_{k}$ for both sexes, males and females for the Country of Colombia and the corresponding regional division by 33 departments in 2017, $k = 1,...,33$. The results are compared to data from  the Colombian population census described earlier. We call the estimates from the Census the observed estimates $c_{k}$ during for the purpose of analysis of the results. The models estimated completeness of deaths in 2017 from vital registration, the same data source for which Census respondents report whether a household death was registered \citep{DANE2017}. The 5q0 was estimated considering the same procedure implemented in \cite{adair2018estimating}; we considered the average from the 2010 and 2015 Demographic and Health Surveys \citep{Profamilia, Profamilia2}  and scaled those estimates to the GBD of 5q0 for Colombia in 2017. We consider the projected values of population aged 65 years and over in 2017 using the official population projections in Colombia  \citep{DANE20172}. To estimate C5q0 we computed the ratio between the 5q0 obtained from standard life tables using the registration data \citep{DANE2017}  divided by the estimate of 5q0 as in \cite{adair2018estimating}.

\begin{table}[ht]

\caption{\small
MAE, MSE, and the number of departments including national level with absolute deviations of the predictive 
completeness $\hat{\delta}_{k}$ and the comparator  $c_{k}$ less than10 percentage points.
The table contains the results for  HS, LA and Student-t local scale priors for the random effects and
the Half-Cauchy for the local scale of the errors and common Gamma  prior for the scale of the random effects and errors, respectively. }
 \label{tab:tab_measuressub}
\renewcommand{\arraystretch}{1.4}
\scalebox{0.85}{
\begin{tabular}{rr|rrrr|rrrrrr} \\
   \hline    \hline
 \multicolumn{12}{c}{A common prior for the scale of the errors: Gamma} \\
 \hline
   &  &  &    Model 1 & &  &   &     Model 2 & & &&\\ \hline \hline
 &  & Gamma & Student-t & HS & LA  & Gamma & Student-t & HS & LA \\ \hline \hline
   &  &  &    Model 1 & &  &   &     Model 2 & & &&\\ \hline \hline
   & MAE & 0.060 & 0.054 & 0.061 & 0.060 & 0.054 & 0.054 & 0.053 & 0.054 \\
Both sexes & MSE & 0.009 & 0.006 & 0.010 & 0.009 & 0.006 & 0.006 & 0.006 & 0.006 \\
  &   \# diff<10\% & 29 & 28 & 28 & 29 & 29 & 28 & 29 & 28 \\
  & MAE & 0.062 & 0.062 & 0.062 & 0.062 & 0.054 & 0.055 & 0.052 & 0.054 \\
Males  & MSE & 0.009 & 0.009 & 0.009 & 0.009 & 0.006 & 0.006 & 0.005 & 0.006 \\
  & \# diff<10\% & 29 & 29 & 28 & 29.000 & 30 & 30 & 30 & 30 \\
 &  MAE & 0.059 & 0.059 & 0.061 & 0.060 & 0.052 & 0.052 & 0.054 & 0.052 \\
Females    & MSE & 0.010 & 0.010 & 0.010 & 0.010 & 0.005 & 0.005 & 0.006 & 0.005 \\
  & \# diff<10\% & 29 & 29 & 29 & 30 & 28 & 28 & 29 & 28 \\
        \hline    \hline
\end{tabular}}
\renewcommand{\arraystretch}{1.4}
\scalebox{0.85}{
\begin{tabular}{rr|>{\columncolor[gray]{0.8}}r>{\columncolor[gray]{0.8}}r>{\columncolor[gray]{0.8}}r>{\columncolor[gray]{0.8}}r|
|>{\columncolor[gray]{0.8}}r>{\columncolor[gray]{0.8}}r>{\columncolor[gray]{0.8}}r>{\columncolor[gray]{0.8}}r|rrrrrr} \\
   \hline    \hline
 \multicolumn{12}{c}{Local scale prior for the errors: Half-Cauchy} \\
 \hline
   &  &  &    Model 1 & &  &   &     Model 2 & & &&\\ \hline \hline
 &  & Gamma & Student-t & HS & LA  & Gamma & Student-t & HS & LA \\ \hline \hline
 & MAE & 0.055 & 0.045 & 0.053 & 0.054 & 0.045 & 0.045 & 0.046 & 0.045 \\
Both sexes  & MSE & 0.008 & 0.004 & 0.008 & 0.008 & 0.004 & 0.004 & 0.004 & 0.004 \\
  & \# diff<10\% & 29 & 30 & 30 & 29 & 30 & 30 & 30 & 30 \\
 &  MAE & 0.056 & 0.056 & 0.055 & 0.055 & 0.047 & 0.047 & 0.048 & 0.047 \\
Males &  MSE & 0.008 & 0.007 & 0.007 & 0.007 & 0.005 & 0.005 & 0.005 & 0.005 \\
 &  \# diff<10\% & 30 & 30 & 30 & 30 & 29 & 29 & 30 & 30 \\
 &  MAE & 0.053 & 0.052 & 0.051 & 0.052 & 0.048 & 0.048 & 0.048 & 0.048 \\
Females   & MSE & 0.008 & 0.008 & 0.007 & 0.008 & 0.005 & 0.005 & 0.005 & 0.005 \\
  & \# diff<10\% & 30 & 30 & 30 & 30 & 30 & 29 & 29 & 29 \\
   \hline    \hline
\end{tabular}}
\end{table}

To evaluate the fit of the models at the national and subnational level (the national level and the 33 departments) we consider two deviance measures, the MAE and the Mean Square Error (MSE) given by,

\begin{align}
\text{MAE} &= \dfrac{1}{34}\sum_{k=1}^{34} \mid \hat{\delta}_{k} - c_{k} \mid, & \text{MSE} &= \dfrac{1}{34}\sum_{k=1}^{34}  ( \hat{\delta}_{k} - c_{k} )^2,
\end{align}

where  $\hat{\delta}_{k}$  is the posterior mean of the  posterior predictive distribution.  We also compute the number of departments
with absolute deviations  $\mid \hat{\delta}_{k} - c_{k} \mid$ smaller than 10 percentage points. Table ~\ref{tab:tab_measuressub} shows that in 29 or 30 departments and at the national level the 
absolute deviances between the observed completeness $c_{k}$ and the posterior predictive mean estimate is less than 10 percentage points for the models considering
a Half-Cauchy prior for the local scale of the errors (depending on the local prior).

\begin{figure}[h!]
\begin{tabular}{ccc}
\hspace{1.2cm} Model 1 (Both Sexes) & \hspace{2.5cm}  Model 2 (Both Sexes) \\
\hspace{-2.5cm}\includegraphics[width=0.65\textwidth]{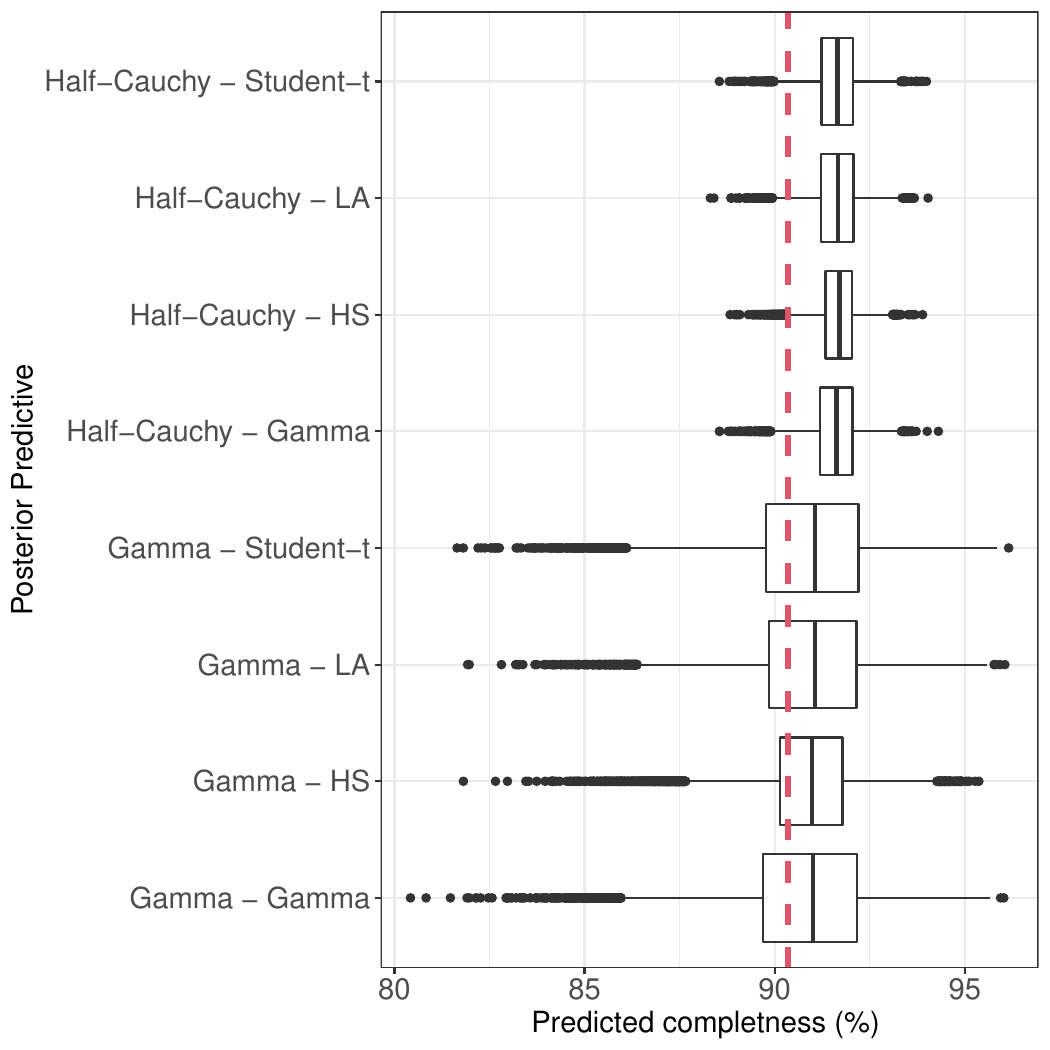} &
\includegraphics[width=0.65\textwidth]{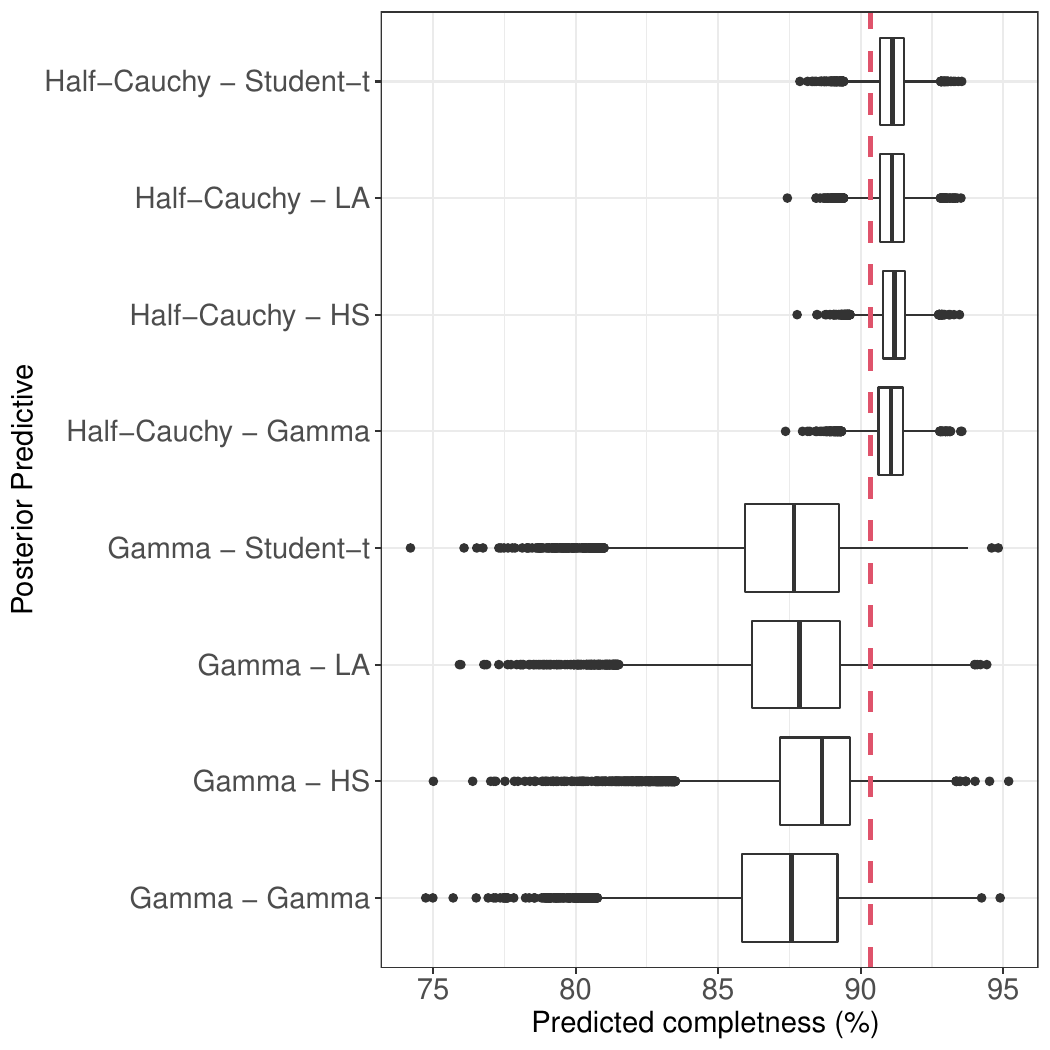}\\
\end{tabular}
\vspace{-0.5cm}
\caption{ \small Marginal posterior predictive distributions of the completeness at national level under Gamma, Student-t, HS and LA local scales for the random effects and
Half-Cauchy for the local scales of the errors, Gamma-Gamma refers to the model with a common Gamma prior distribution for the random effects and errors
respectively,  Model 1 and 2, both sexes. The red  lines illustrates the observed values obtained from Census 2018.}
\label{fig:CDR}
\end{figure}

\begin{figure}[h!]
\begin{tabular}{ccc}
\hspace{1.2cm} Model 1 (Males) & \hspace{2.5cm}  Model 2 (Males) \\
\hspace{-2.5cm}\includegraphics[width=0.65\textwidth]{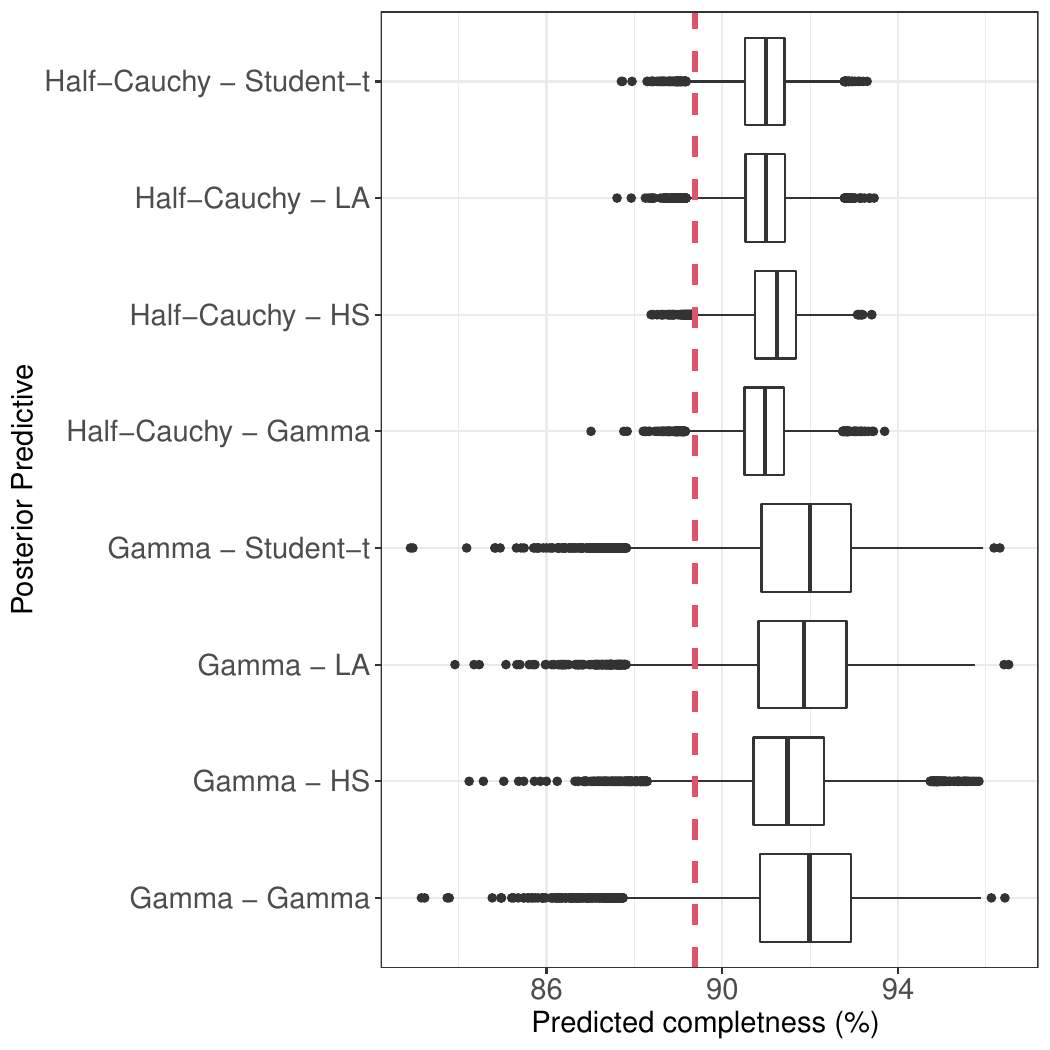} &
\includegraphics[width=0.65\textwidth]{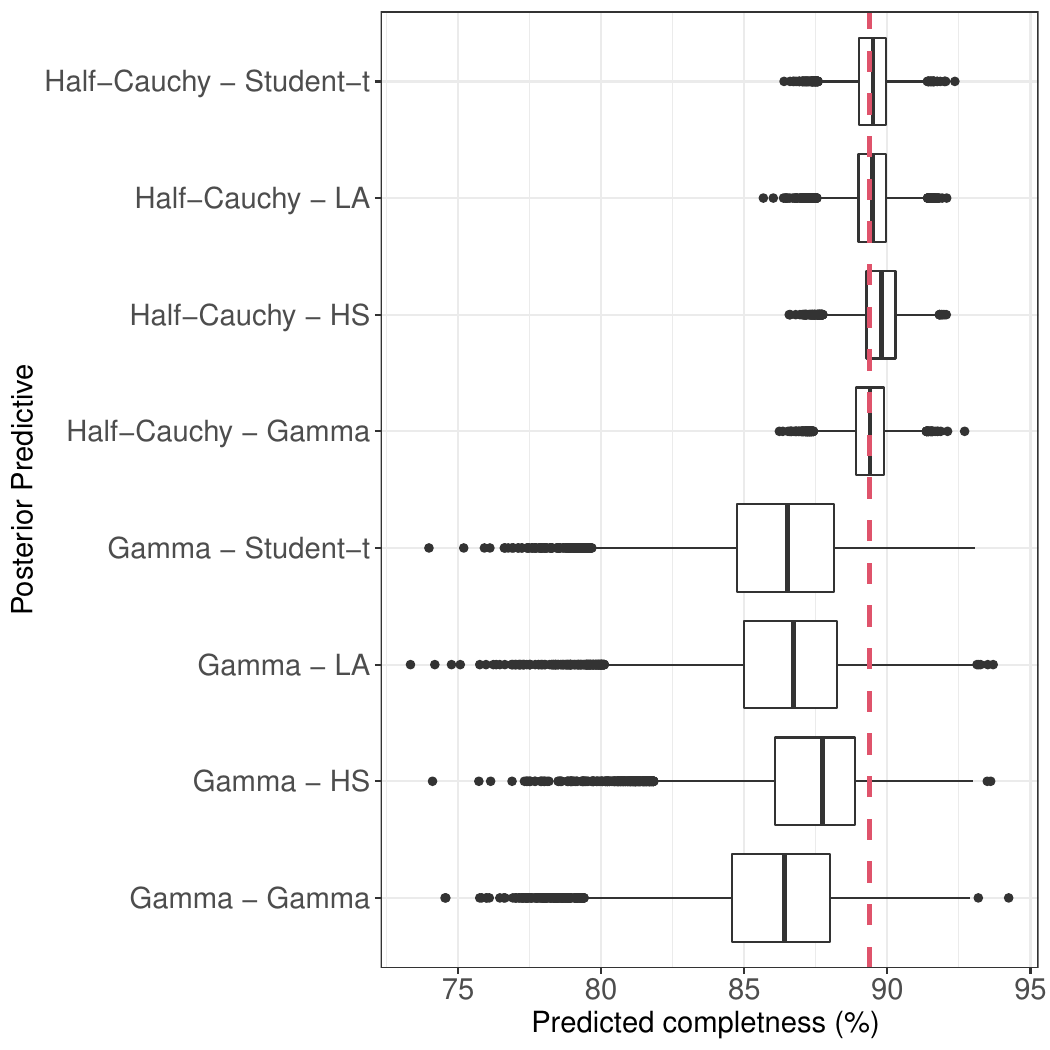}\\
\end{tabular}
\vspace{-0.5cm}
\caption{ \small Marginal posterior predictive distributions of the completeness at national level under Gamma, Student-t, HS and LA local scales for the random effects and
Half-Cauchy for the local scales of the errors, Gamma-Gamma refers to the model with a common Gamma prior distribution for the random effects and errors
respectively, Model 1 and 2, males. The red lines illustrates the observed values obtained from Census 2018.}
\label{fig:CDR2}
\end{figure}

In addition, there are  important reductions
in terms of MAE and MSE and slightly better posterior predictive estimates are obtained  when the HS is considered as a local scale for the random effects
compared with the Gamma prior distribution. We also illustrate
the posterior predictive distributions in Figures ~\ref{fig:CDR}-~\ref{fig:CDR3} at the national level. The inclusion of Half-Cauchy as a local-scale prior under
the errors that control the shrinkage under the posterior mean of the random effects makes a substantial difference. The posterior predictive distributions
obtained at the national level are more precise in these Figures. In addition the posterior predictive mean is closer
to the observed completeness at the National level in Colombia for 2017. These results are also observed for both sexes, males and females and models 1 and 2.
Further, the HS prior for the local scales of random effects produces posterior predictive estimates closer to
the observed completeness  with higher precision than the other local scale priors, as is displayed in Figures  ~\ref{fig:CDR}-~\ref{fig:CDR3}.

\begin{figure}[h!]
\begin{tabular}{ccc}
\hspace{1.2cm} Model 1 (Females) & \hspace{2.5cm}  Model 2 (Females) \\
\hspace{-2.5cm}\includegraphics[width=0.65\textwidth]{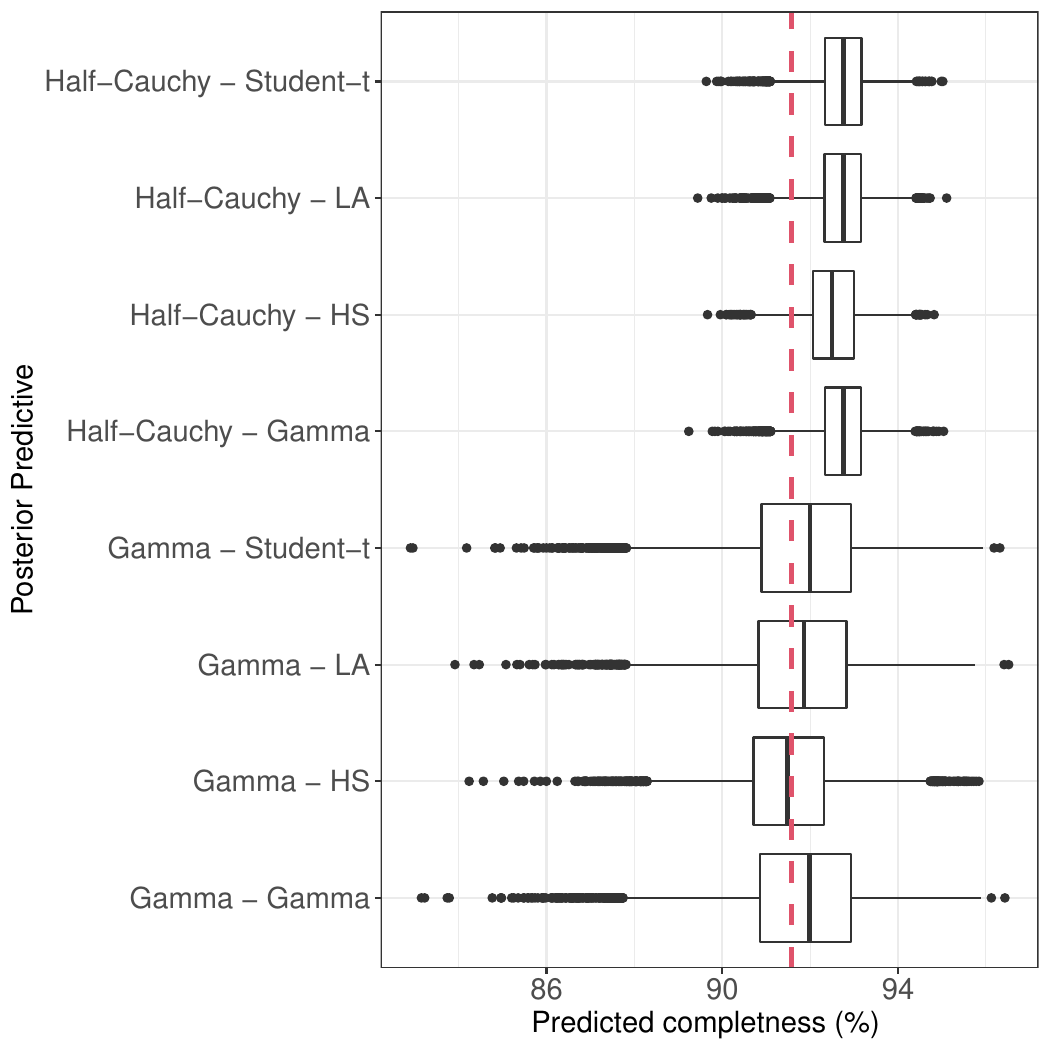} &
\includegraphics[width=0.65\textwidth]{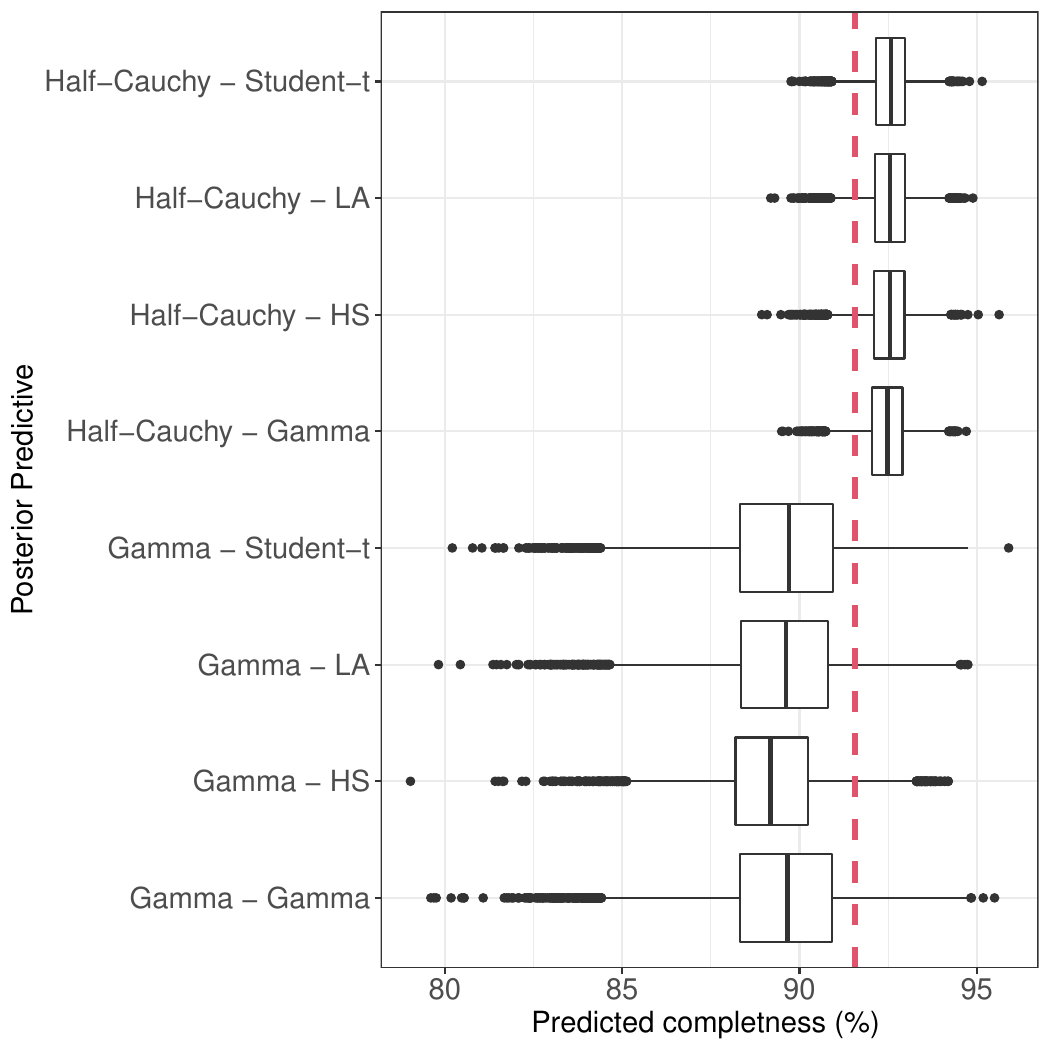}\\
\end{tabular}
\vspace{-0.5cm}
\caption{ \small Marginal posterior predictive distributions of the completeness at national level under Gamma, Student-t, HS and LA local scales for the random effects and
Half-Cauchy for the local scales of the errors, Gamma-Gamma refers to the model with a common Gamma prior distribution for the random effects and errors
respectively,  Model 1 and 2, females. The red lines illustrates the truth values obtained from Census 2018.}
\label{fig:CDR3}
\end{figure}


Table \ref{tab:tab_measures2} and  Tables \ref{tab:tab_measures3}-\ref{tab:tab_measures6}  in  the supplementary material in Section \ref{sup_post2} show the posterior predictive mean estimates and the corresponding credible intervals for models
1 and 2 both sexes, females and males, respectively.  As is presented in Tables \ref{tab:tab_measures2} and \ref{tab:tab_measures3}-\ref{tab:tab_measures6} most of the posterior mean estimates
are closer to the observed value of completeness in the departments of Colombia in 2017 and for almost all departments the credible intervals under GL priors contain  the observed completeness. These important results are also obtained when models 1 and 2 are considered for females and males as shown in Tables
\ref{tab:tab_measures3}-\ref{tab:tab_measures6}. Departments such as  Choc\'o,  Amazonas,  Vaup\'es and  La Guajira  with some of the higher  mortality rates and historically affected by problems in the 
civil registration system present important lower values of completeness according to Census.  Considering
GL priors for the random effects and errors in models 1 and 2 allow us to obtain posterior predictive estimates closer to the observed completeness for those departments.
Figure \ref{fig:post1} displays the spatial patterns of the observed completeness according to the Census 2018 in Colombia and posterior predictive estimates obtained
by considering a HS prior for the scales of the random effects and a Half-Cauchy prior for scale of the errors.  As is illustrated in the Figures,
Vaup\'es,  Guain\'ia,  Vichada,   Amazonas,  La Guajira  and  Choc\'o have  lower completeness values of death registration observed
in the Census. Those completeness patterns in the departments of Colombia  are also effectively estimated by considering the proposed models.
Model 2 with respect to Model 1 for both sexes, females and males produce posterior predictive completeness  estimates slightly closer  (for those departments) to the obtained considering the comparator from Census.


\clearpage

\begin{table}[ht]
\tiny
\renewcommand{\arraystretch}{1.2}
\scalebox{0.60}{
\begin{tabular}{r>{\columncolor[gray]{0.8}}rrrrrrrrrrrrrr}
\hline \hline
  &&&&& Model 1 (Both Sexes) &&&&& \\  \hline
\hline
  Department & Census  & Gamma &  $2.5\%$ &  $97.5\%$ & Student-t &  $2.5\%$ &  $97.5\%$ & LA &  $2.5\%$ &  $97.5\%$ & HS &  $2.5\%$ &  $97.5\%$  \\
 \hline
 Antioquia & 92 & 95 & 94 & 96 & 95 & 94 & 96 & 95 & 94 & 96 & 95 & 94 & 96 \\
Atl\'antico  & 96 & 94 & 92 & 95 & 94 & 92 & 94 & 94 & 92 & 94 & 94 & 92 & 94 \\
Bogot\'a, D.C. & 96 & 92 & 91 & 93 & 92 & 91 & 93 & 92 & 91 & 93 & 92 & 91 & 93 \\
 Bol\'ivar & 88 & 90 & 89 & 92 & 90 & 89 & 92 & 90 & 89 & 92 & 90 & 89 & 91 \\
Boyac\'a & 93 & 93 & 91 & 94 & 93 & 91 & 94 & 93 & 91 & 94 & 93 & 92 & 94 \\
 Caldas & 93 & 97 & 97 & 98 & 98 & 97 & 98 & 98 & 97 & 98 & 98 & 97 & 98 \\
 Caquet\'a & 87 & 93 & 92 & 94 & 93 & 92 & 94 & 93 & 92 & 94 & 93 & 92 & 94 \\
 Cauca & 83 & 86 & 84 & 88 & 86 & 84 & 88 & 86 & 84 & 88 & 86 & 84 & 87 \\
 Cesar & 88 & 84 & 81 & 86 & 83 & 81 & 86 & 83 & 81 & 86 & 83 & 81 & 85 \\
C\'ordoba & 88 & 86 & 84 & 88 & 86 & 84 & 88 & 86 & 84 & 88 & 86 & 84 & 88 \\
 Cundinamarca & 93 & 96 & 95 & 96 & 96 & 95 & 96 & 96 & 95 & 96 & 96 & 95 & 96 \\
  \rowcolor[gray]{0.8}
 Choc\'o & 65 & 61 & 56 & 66 & 61 & 56 & 66 & 61 & 56 & 65 & 58 & 54 & 63 \\
 Huila & 92 & 91 & 89 & 92 & 91 & 89 & 92 & 91 & 89 & 92 & 91 & 90 & 92 \\
  \rowcolor[gray]{0.8}
 La Guajira & 39 & 52 & 46 & 57 & 51 & 46 & 56 & 51 & 45 & 56 & 48 & 43 & 52 \\
 Magdalena & 88 & 90 & 88 & 91 & 90 & 88 & 91 & 90 & 88 & 91 & 89 & 88 & 91 \\
 Meta & 91 & 96 & 95 & 97 & 96 & 95 & 97 & 96 & 95 & 97 & 96 & 96 & 97 \\
 Nari\~no & 82 & 85 & 82 & 87 & 85 & 83 & 87 & 85 & 83 & 87 & 85 & 83 & 87 \\
 Norte de Santander & 90 & 93 & 91 & 94 & 93 & 91 & 94 & 93 & 91 & 94 & 93 & 92 & 93 \\
 Quindio & 94 & 96 & 95 & 96 & 96 & 95 & 96 & 96 & 95 & 96 & 96 & 95 & 96 \\
 Risaralda & 93 & 96 & 96 & 97 & 96 & 96 & 97 & 96 & 96 & 97 & 97 & 96 & 97 \\
 Santander & 94 & 92 & 91 & 93 & 92 & 91 & 93 & 92 & 91 & 93 & 92 & 91 & 93 \\
 Sucre & 86 & 93 & 92 & 94 & 93 & 92 & 94 & 93 & 92 & 94 & 93 & 92 & 94 \\
 Tolima & 91 & 91 & 89 & 92 & 91 & 89 & 92 & 91 & 89 & 92 & 91 & 89 & 92 \\
 Valle del Cauca & 95 & 96 & 95 & 96 & 96 & 95 & 96 & 96 & 95 & 96 & 96 & 95 & 97 \\
 Arauca & 85 & 96 & 95 & 97 & 96 & 95 & 97 & 96 & 95 & 97 & 96 & 96 & 97 \\
 Casanare & 87 & 94 & 93 & 95 & 94 & 93 & 95 & 94 & 93 & 95 & 94 & 93 & 95 \\
 Putumayo & 79 & 84 & 81 & 86 & 83 & 81 & 86 & 83 & 81 & 85 & 83 & 81 & 85 \\
San Andr\'es & 93 & 95 & 94 & 96 & 95 & 94 & 96 & 95 & 94 & 96 & 95 & 94 & 96 \\
  \rowcolor[gray]{0.75}
 Amazonas & 67 & 70 & 66 & 74 & 70 & 66 & 74 & 70 & 66 & 73 & 68 & 65 & 71 \\
 Guain\'ia & 51 & 87 & 84 & 89 & 86 & 83 & 89 & 86 & 83 & 88 & 86 & 83 & 88 \\
 Guaviare & 81 & 94 & 93 & 95 & 94 & 93 & 95 & 94 & 93 & 95 & 94 & 93 & 95 \\
  \rowcolor[gray]{0.8}
 Vaup\'es & 57 & 60 & 56 & 65 & 60 & 55 & 64 & 59 & 55 & 64 & 57 & 54 & 61 \\
 Vichada & 60 & 86 & 83 & 88 & 86 & 82 & 88 & 85 & 82 & 88 & 85 & 82 & 87 \\
 \rowcolor[gray]{0.8}
 National & 90 & 92 & 90 & 93 & 92 & 90 & 93 & 92 & 90 & 93 & 92 & 90 & 93 \\   \hline
\end{tabular}}
\tiny
\renewcommand{\arraystretch}{1.2}
\scalebox{0.60}{
\begin{tabular}{r>{\columncolor[gray]{0.8}}rrrrrrrrrrrrrr}
\hline \hline
  &&&&& Model 2 (Both Sexes) &&&&& \\  \hline
\hline
  Department & Census & Gamma &  $2.5\%$ &  $97.5\%$ & Student-t &  $2.5\%$ &  $97.5\%$ & LA &  $2.5\%$ &  $97.5\%$ & HS &  $2.5\%$ &  $97.5\%$  \\
 \hline
Antioquia & 92 & 95 & 94 & 95 & 95 & 94 & 95 & 95 & 94 & 95 & 95 & 94 & 95 \\
Atl\'antico & 96 & 92 & 91 & 93 & 92 & 91 & 93 & 92 & 91 & 93 & 92 & 91 & 93 \\
Bogot\'a, D.C. & 96 & 91 & 90 & 92 & 91 & 90 & 92 & 91 & 90 & 92 & 91 & 90 & 92 \\
 Bol\'ivar & 88 & 88 & 86 & 89 & 88 & 86 & 89 & 88 & 86 & 89 & 88 & 86 & 89 \\
 Boyac\'a & 93 & 93 & 92 & 94 & 93 & 92 & 94 & 93 & 92 & 94 & 93 & 92 & 94 \\
Caldas & 93 & 97 & 97 & 98 & 97 & 97 & 98 & 97 & 97 & 98 & 98 & 97 & 98 \\
 Caquet\'a & 87 & 92 & 91 & 93 & 92 & 91 & 93 & 92 & 91 & 93 & 92 & 91 & 93 \\
 Cauca & 83 & 84 & 82 & 86 & 84 & 82 & 86 & 84 & 82 & 86 & 84 & 82 & 86 \\
 Cesar & 88 & 80 & 78 & 82 & 80 & 77 & 82 & 80 & 77 & 82 & 80 & 77 & 82 \\
C\'ordoba & 88 & 83 & 80 & 85 & 83 & 80 & 85 & 83 & 80 & 85 & 83 & 81 & 85 \\
 Cundinamarca & 93 & 94 & 94 & 95 & 95 & 94 & 95 & 95 & 94 & 95 & 95 & 94 & 95 \\
  \rowcolor[gray]{0.8}
  Choc\'o & 65 & 53 & 48 & 57 & 53 & 48 & 57 & 52 & 48 & 57 & 52 & 47 & 56 \\
 Huila & 92 & 92 & 90 & 93 & 92 & 90 & 93 & 92 & 90 & 93 & 92 & 90 & 93 \\
  \rowcolor[gray]{0.8}
 La Guajira & 39 & 46 & 41 & 51 & 45 & 41 & 50 & 45 & 40 & 50 & 44 & 40 & 48 \\
 Magdalena & 88 & 86 & 85 & 88 & 86 & 85 & 88 & 86 & 85 & 88 & 86 & 85 & 88 \\
 Meta & 91 & 95 & 94 & 95 & 95 & 94 & 95 & 95 & 94 & 95 & 95 & 94 & 95 \\
 Nari\~no & 82 & 86 & 83 & 88 & 86 & 83 & 88 & 86 & 83 & 88 & 86 & 84 & 88 \\
 Norte de Santander & 90 & 93 & 92 & 94 & 93 & 92 & 94 & 93 & 92 & 94 & 93 & 92 & 94 \\
 Quindio & 94 & 96 & 95 & 97 & 96 & 95 & 97 & 96 & 95 & 97 & 96 & 96 & 97 \\
 Risaralda & 93 & 96 & 96 & 97 & 96 & 96 & 97 & 96 & 96 & 97 & 96 & 96 & 97 \\
 Santander & 94 & 93 & 92 & 94 & 93 & 92 & 94 & 93 & 92 & 94 & 93 & 92 & 94 \\
 Sucre & 86 & 91 & 90 & 92 & 91 & 90 & 92 & 91 & 90 & 92 & 91 & 90 & 92 \\
 Tolima & 91 & 93 & 91 & 94 & 93 & 91 & 94 & 93 & 92 & 94 & 93 & 92 & 94 \\
 Valle del Cauca & 95 & 96 & 95 & 97 & 96 & 95 & 97 & 96 & 95 & 97 & 96 & 96 & 97 \\
 Arauca & 85 & 95 & 94 & 95 & 95 & 94 & 95 & 95 & 94 & 95 & 95 & 94 & 95 \\
 Casanare & 87 & 91 & 90 & 92 & 91 & 90 & 92 & 91 & 90 & 92 & 91 & 90 & 92 \\
 Putumayo & 79 & 82 & 80 & 85 & 82 & 80 & 84 & 82 & 80 & 84 & 82 & 80 & 84 \\
San Andr\'es & 93 & 94 & 93 & 94 & 94 & 93 & 94 & 94 & 93 & 94 & 94 & 93 & 94 \\
  \rowcolor[gray]{0.8}
 Amazonas & 67 & 63 & 59 & 67 & 63 & 59 & 67 & 63 & 59 & 67 & 62 & 59 & 66 \\
 Guain\'ia & 51 & 75 & 72 & 78 & 75 & 72 & 78 & 75 & 72 & 78 & 75 & 72 & 78 \\
 Guaviare & 81 & 92 & 90 & 93 & 92 & 90 & 93 & 92 & 90 & 93 & 92 & 91 & 93 \\
  \rowcolor[gray]{0.8}
 Vaup\'es & 57 & 55 & 51 & 59 & 55 & 50 & 59 & 54 & 50 & 59 & 54 & 50 & 58 \\
 Vichada & 60 & 73 & 69 & 76 & 73 & 69 & 76 & 73 & 69 & 76 & 72 & 69 & 76 \\
 \rowcolor[gray]{0.8}
 National & 90 & 91 & 90 & 92 & 91 & 90 & 92 & 91 & 90 & 92 & 91 & 90 & 92 \\
   \hline
\end{tabular}}
\caption{\small
Posterior predictive estimates and 95\% credible intervals under Gamma, HS, LA and Student-t local priors for the random effects and  a Half-Cauchy local prior for the scale of the errors and  the observed estimates from
the census. Inf
and Sup represent the 2.5\%  and 97.5\% quantiles from the  marginal posterior distribution
of the completeness in each Department and at the national level. Both Sexes and models 1 and 2. }
\label{tab:tab_measures2}
\end{table}

\clearpage

\begin{figure}[ht]
\begin{center}
\scriptsize
\begin{tabular}{ccc}
Census 2018 (Both-sexes) & Model 1 (Both Sexes) \hspace{-2cm}  & Model 2 (Both Sexes) \hspace{-2cm}  \\
\hspace{-3cm} & HS + Half-Cauchy  \hspace{-2cm}  & HS + Half-Cauchy \hspace{-2cm}   \vspace{-0.5cm} \\
\hspace{-3cm}  \includegraphics[width=0.43\textwidth]{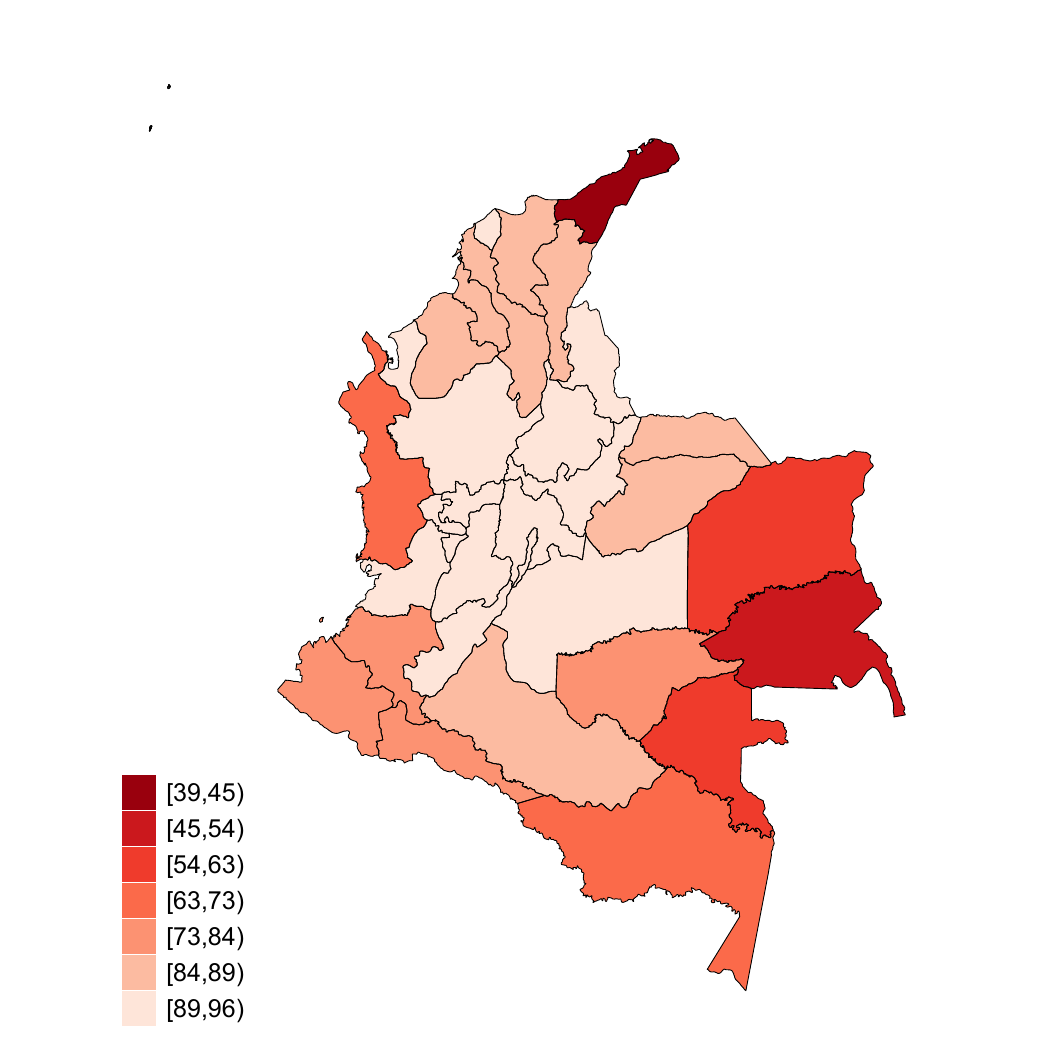} \hspace{-2cm} & \includegraphics[width=0.43\textwidth]{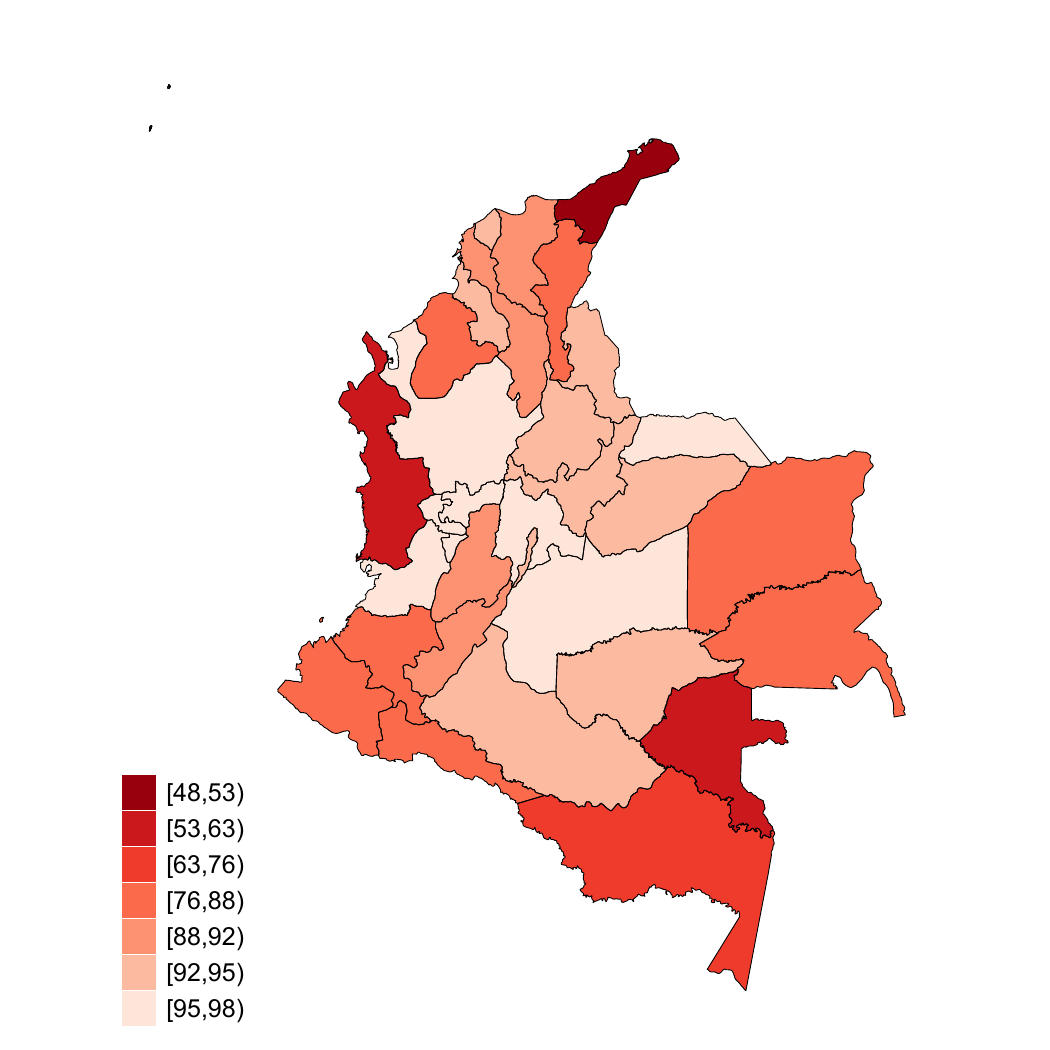}  \hspace{-2cm}  & \includegraphics[width=0.43\textwidth]{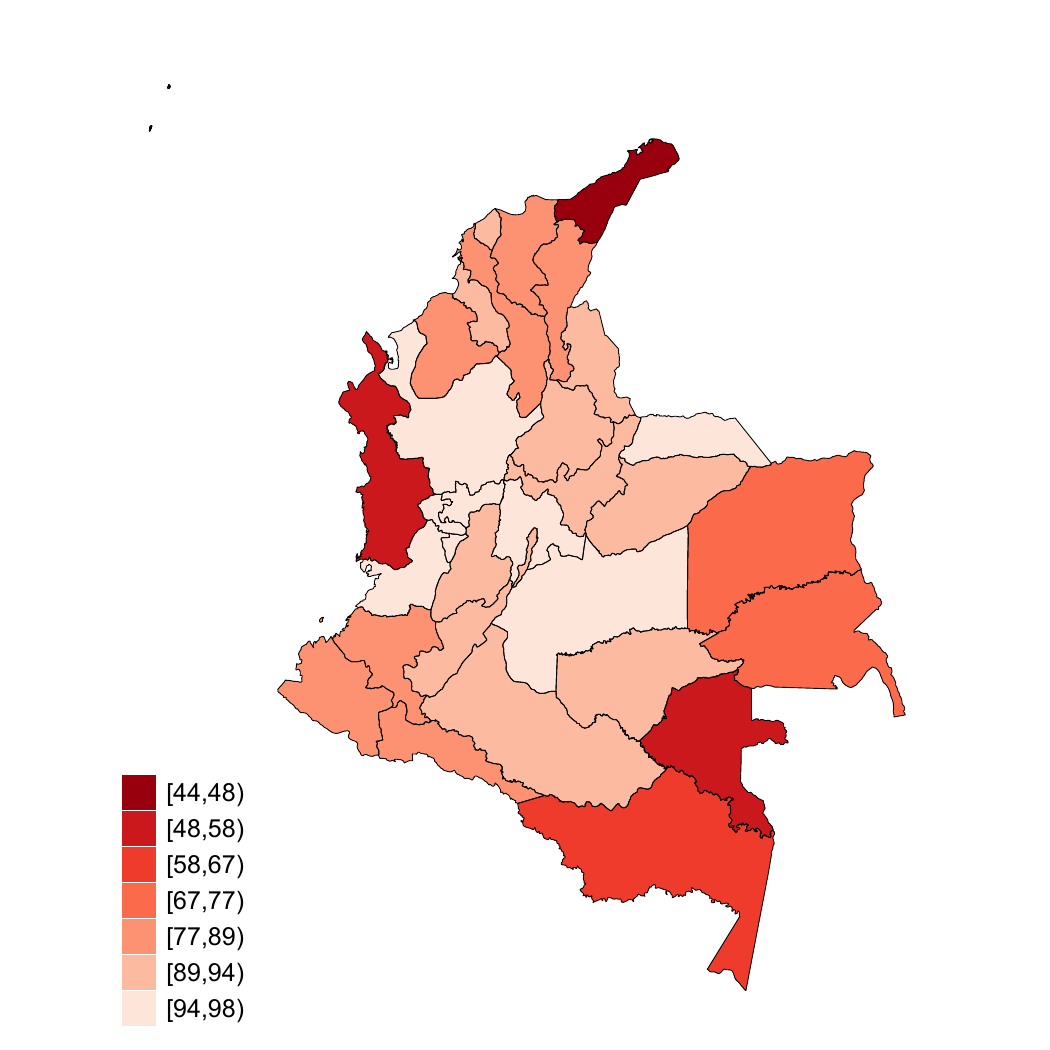}  \\
Census 2018 (Female) & Model 1 (Female) \hspace{-2cm}  & Model 2 (Female) \hspace{-2cm}  \\
\hspace{-3cm} & HS + Half-Cauchy  \hspace{-2cm}  & HS + Half-Cauchy \hspace{-2cm}   \vspace{-0.5cm} \\
\hspace{-3cm}  \includegraphics[width=0.43\textwidth]{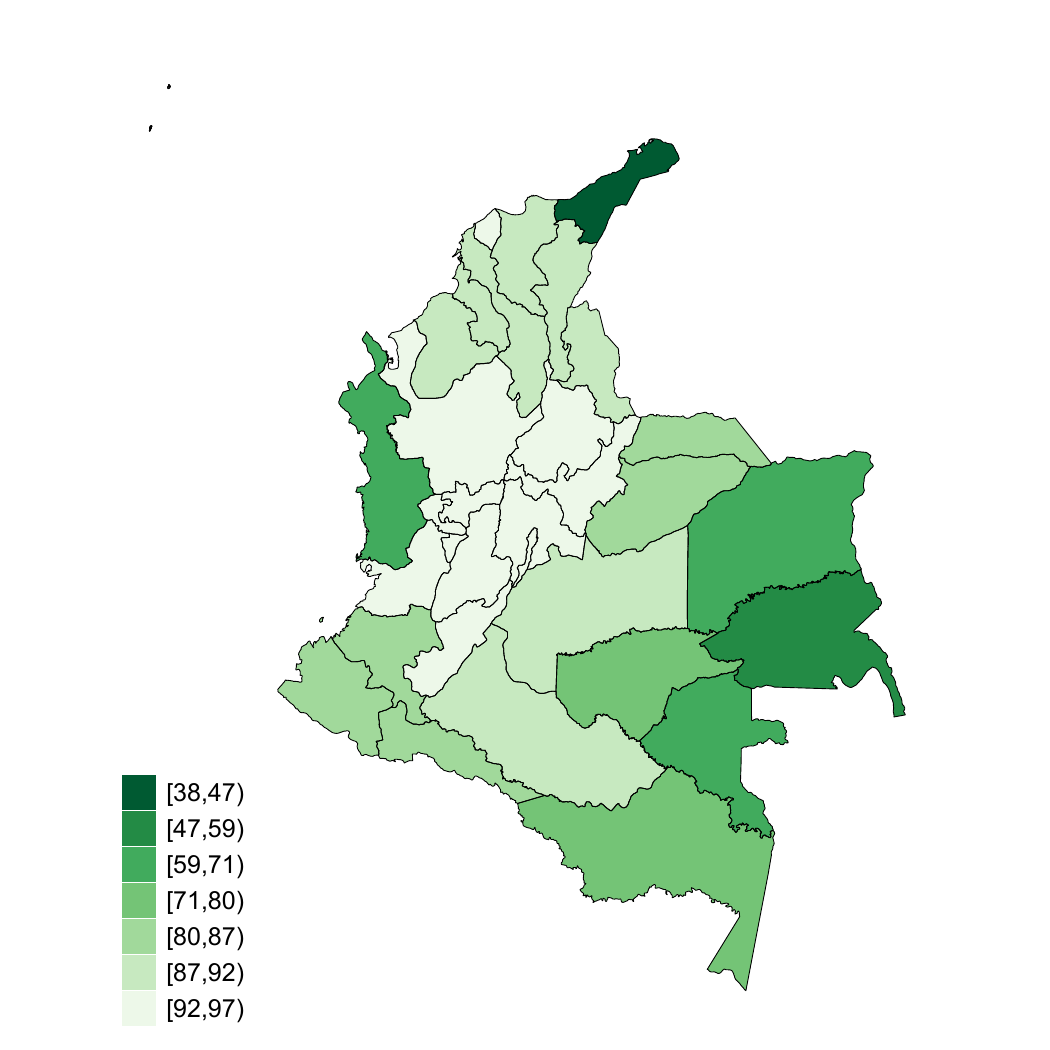} \hspace{-2cm} & \includegraphics[width=0.43\textwidth]{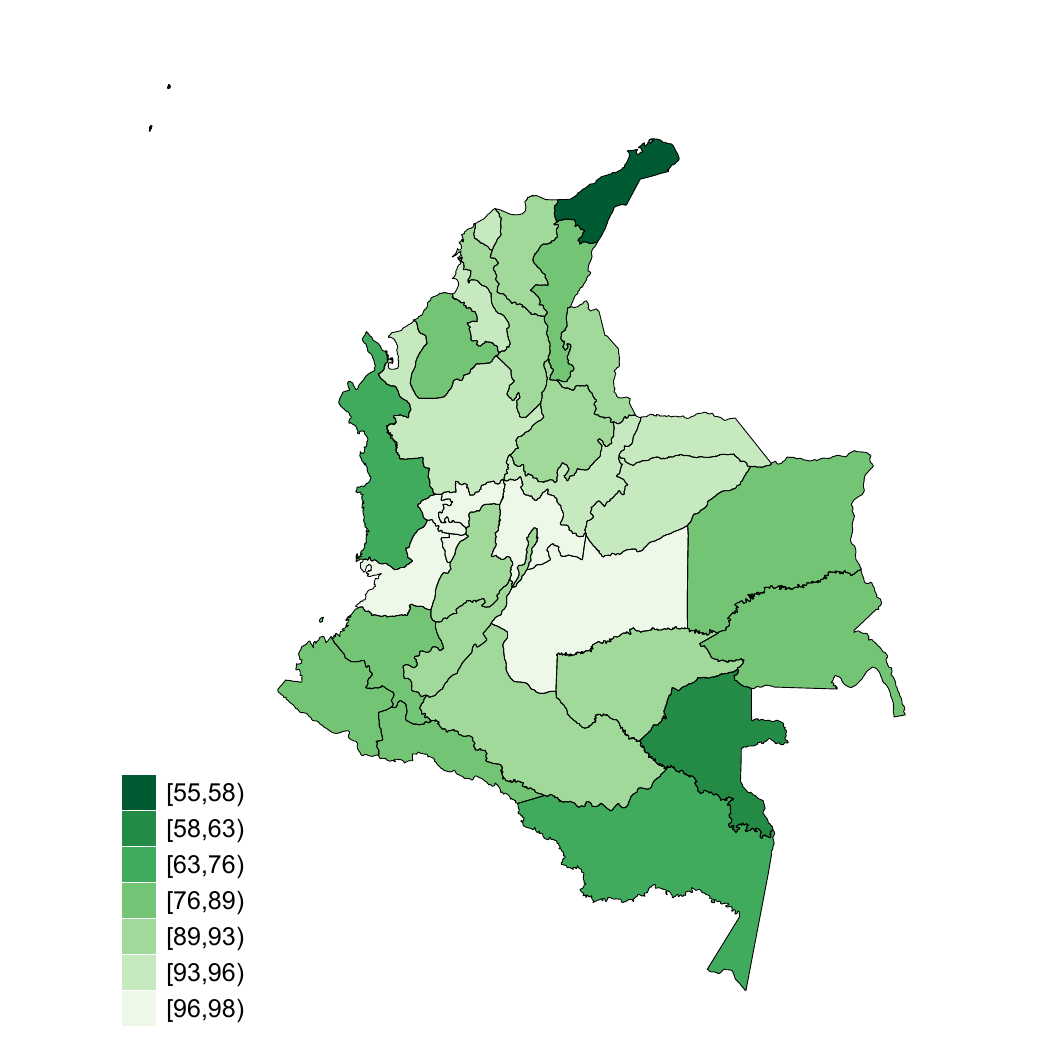}  \hspace{-2cm}  & \includegraphics[width=0.43\textwidth]{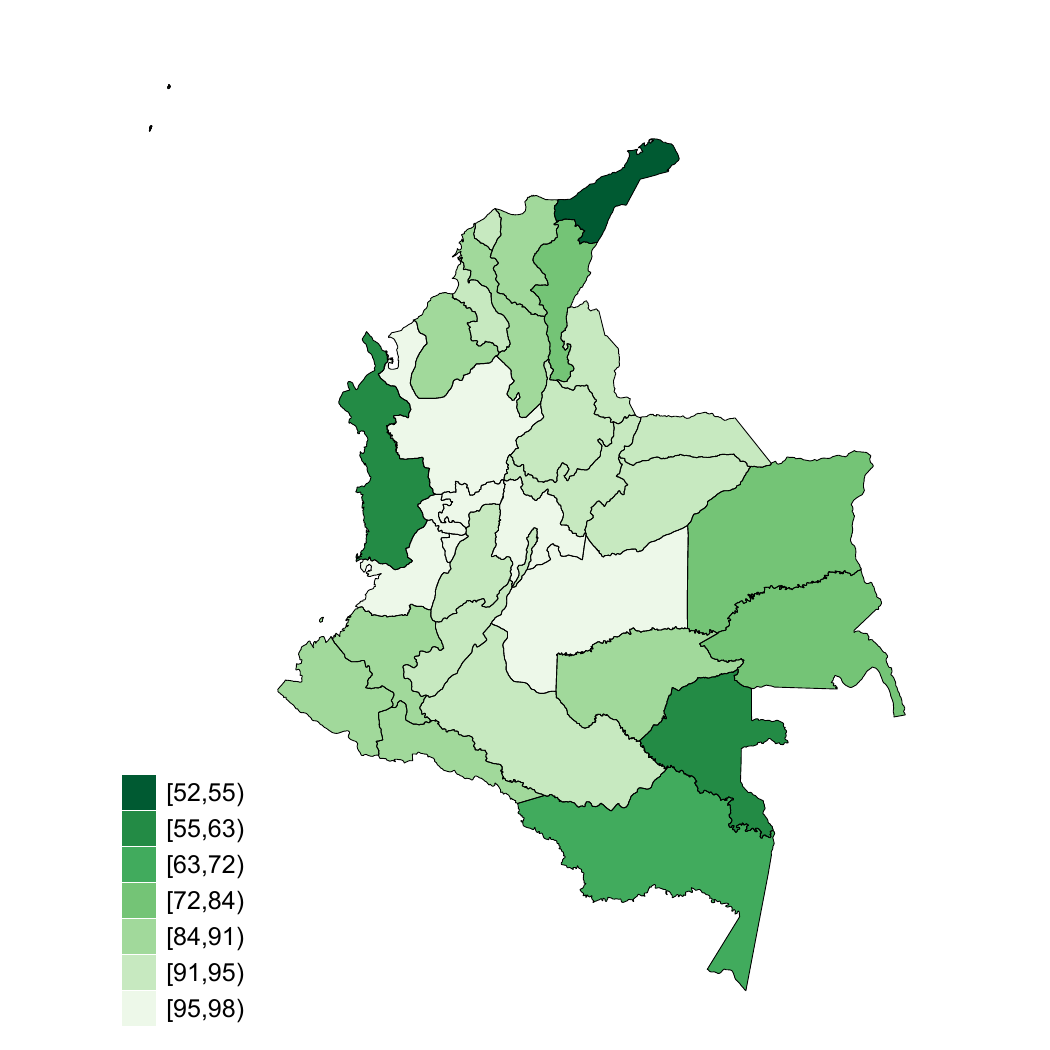}  \\
Census 2018 (Male) & Model 1 (Male) \hspace{-2cm}  & Model 2 (Male) \hspace{-2cm}  \\
\hspace{-3cm}  & HS + Half-Cauchy  \hspace{-2cm}  & HS + Half-Cauchy \hspace{-2cm}   \vspace{-0.5cm} \\
\hspace{-3cm}  \includegraphics[width=0.43\textwidth]{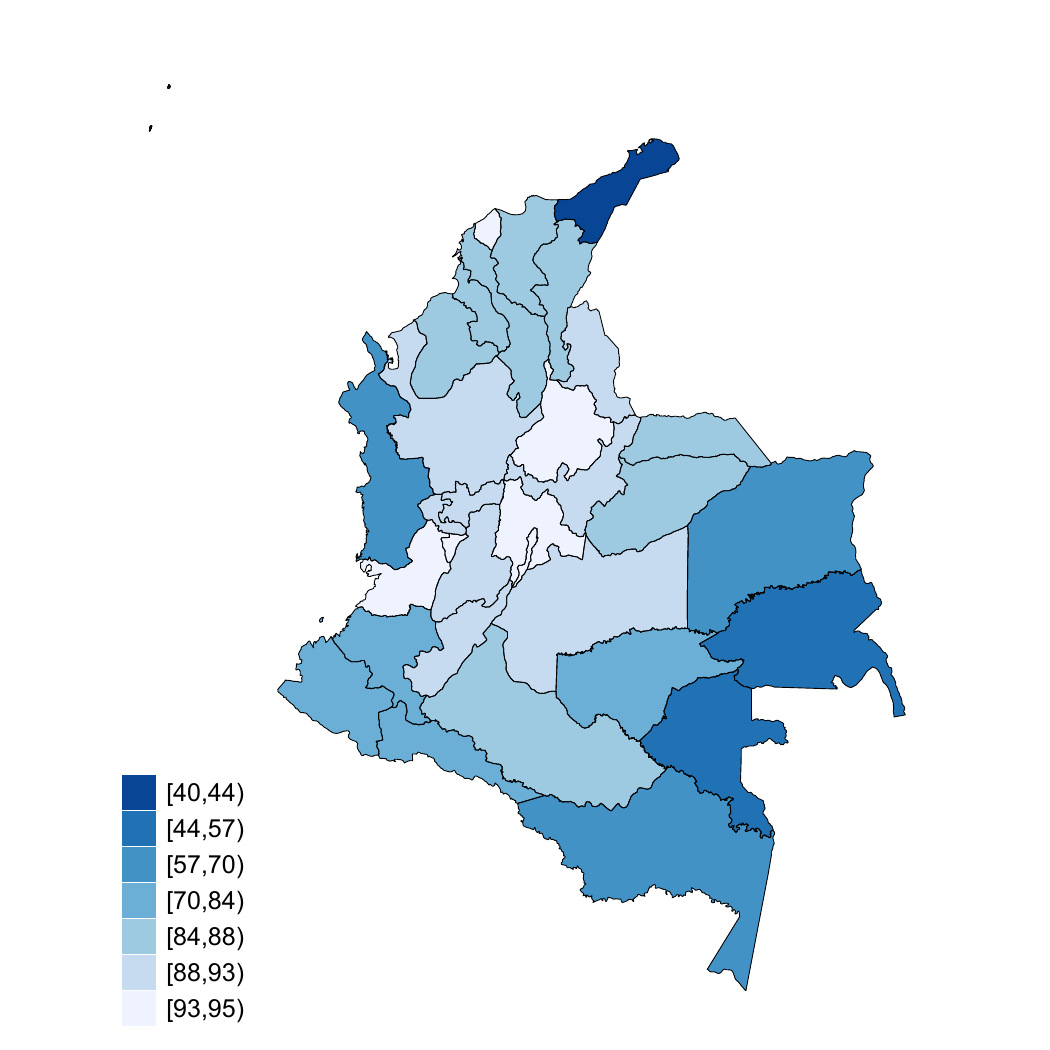} \hspace{-2cm} & \includegraphics[width=0.43\textwidth]{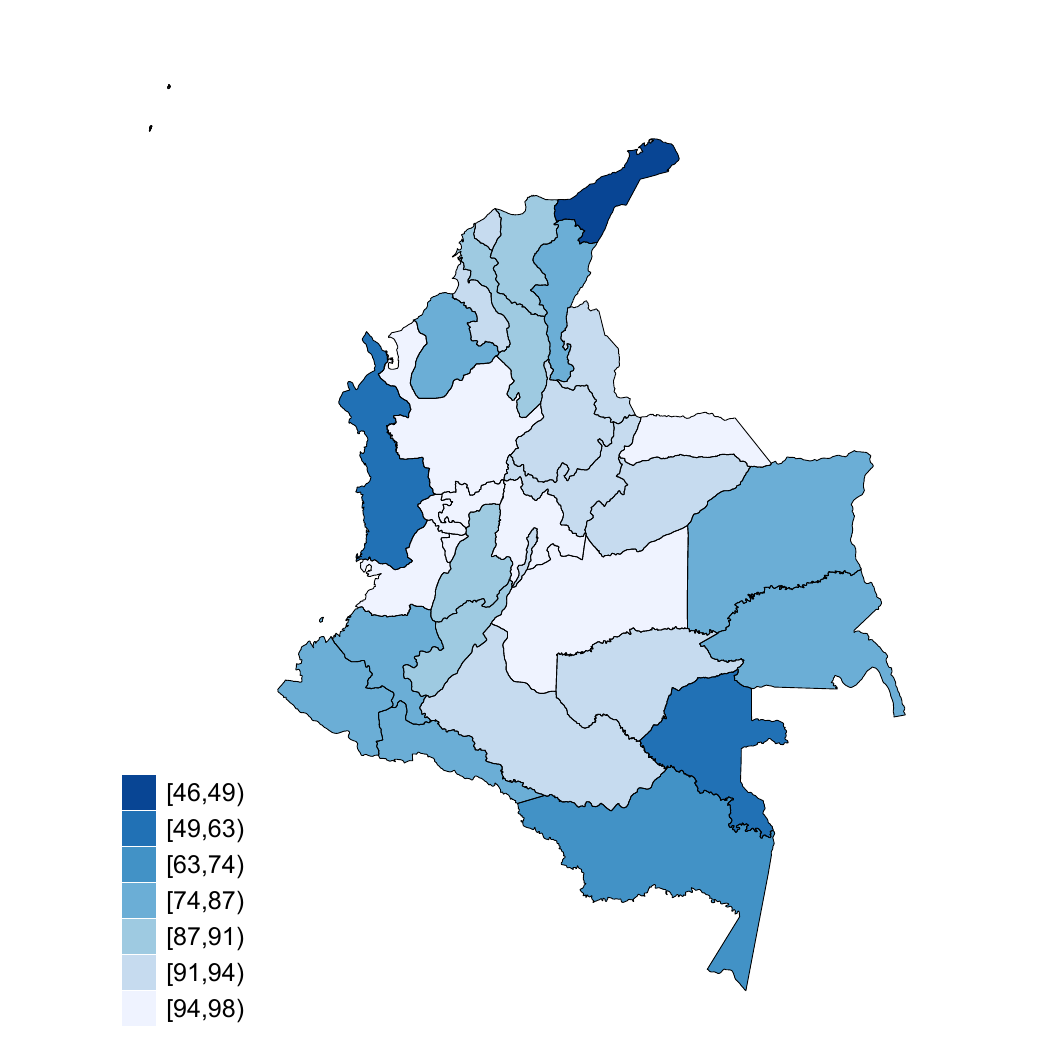}  \hspace{-2cm}  & \includegraphics[width=0.43\textwidth]{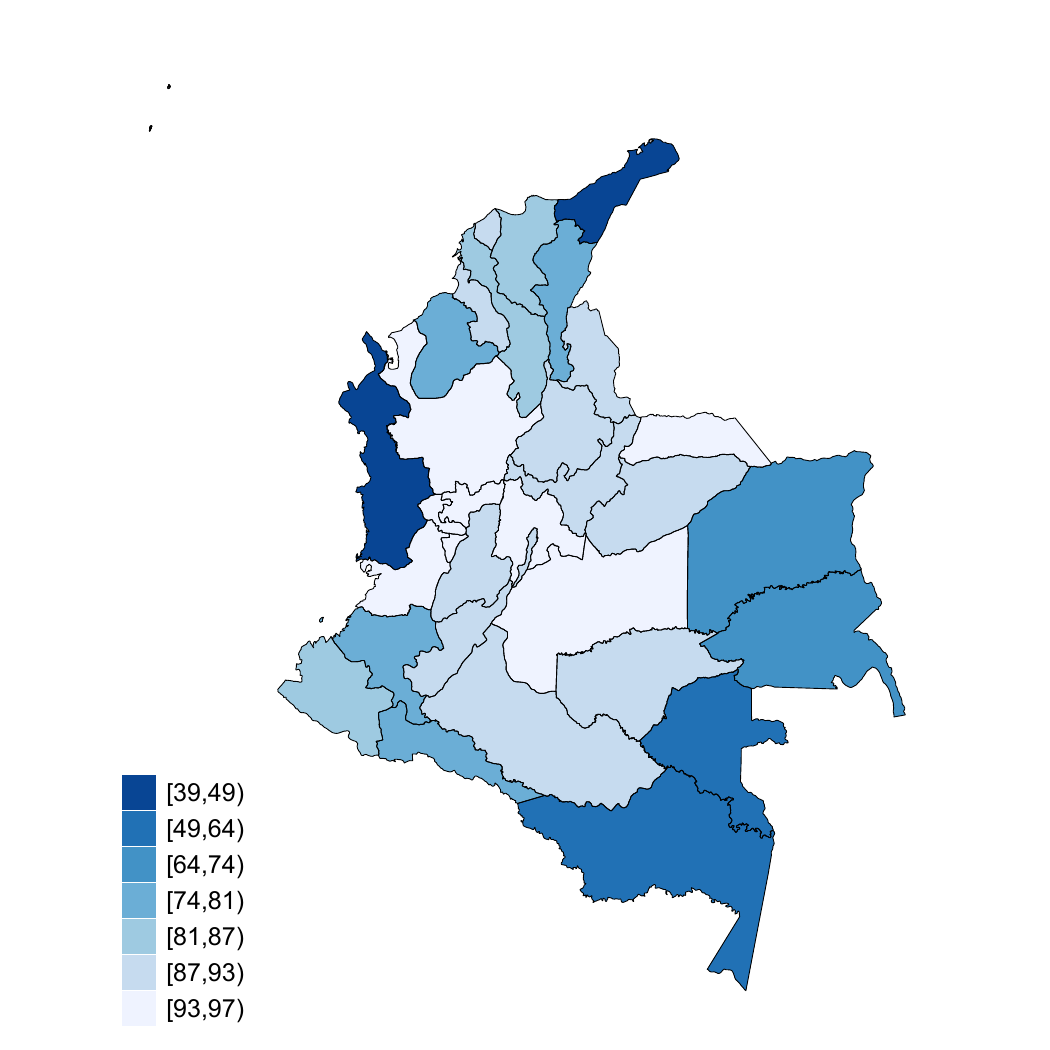}  \\
\end{tabular}
\end{center}
\vspace{-0.5cm}
\caption{\small
Completeness of death registration at  departmental level in Colombia according to  Census 2018 (left), Bayesian Model 1 (middle) and Bayesian model 2 (right) with HS priors for the local scales of the random effects and a
Half-Cauchy for the local scales of the errors by both sexes, females and males.
\label{fig:post1}}
\end{figure}

\clearpage

\section{Concluding remarks and discussion}
\label{Conclusions}

In this work we propose the use of GL priors  in this paper for situations where demographic covariates can explain much of  the observed completeness but where unexplained within-country (i.e. by year) and between-country variability also plays an important role. Therefore, we introduce GL priors for the scales of the random effects and errors respectively in hierarchical linear
mixed models to estimate the completeness of death registration and study their theoretical properties to allow small and large random effects
when it is required. Our proposal is inspired by the original model proposed by  \cite{adair2018estimating} which  includes information typically available from multiple sources, e.g., surveys, censuses and administrative records. 

The use of a Bayesian framework with GL priors to extend the existing \cite{adair2018estimating}  models further strengthens the models effectiveness at estimating completeness of death registration. These models are immensely useful to enable estimation of completeness of death registration in a timely manner using available data, as demonstrated by their wide use in several different settings \citep{zeng2020measuring, sempe2021estimation, adair2021monitoring}. In particular, they overcome the limitations of existing methods which rely on inaccurate assumptions of population dynamics and often provide estimates that lack timeliness. Furthermore, this paper uses an updated GBD dataset using data to 2019 from 120 countries and 2,748 country-years and also improves the model's ability to predict completeness in contemporary populations. 
 
A limitation of the use of the Colombia Census as a comparator is that the data are based on households responding to a question about whether reported deaths, which comprise about 90\% of total deaths in the country, were registered. These figures may differ from actual death registration because of incorrect recollection by the household, whether intentional or otherwise, and also if death registration status of the 10\% of unreported deaths differ substantially from the 90\% of deaths that are reported. However the similarity in results for the predicted registration completeness and the completeness reported in the Census suggests that any such issues are minimal.

We also proposed new Markov chain Monte Carlo (MCMC) algorithms for hierarchical linear mixed models under GL priors to estimate the uncertainty of
the estimated completeness of the death registration  at the global, national and subnational levels. Finally, the methodological results in this paper
are complementary to existing alternatives under GL priors  and can be implemented in a general framework to estimate other demographic indicators at subnational levels.


\section*{Acknowledgments}
Jairo F\'uquene-Pati\~no was supported for the CAMPOS scholar initiative, UC Davis.
\bibliographystyle{plainnat}
\bibliography{bibliosmallb}




\section{Proof of Theorems}
We consider in this section  the proofs of Theorems  \ref{th1} and \ref{th2}.

\subsection{Proof of Theorem \ref{th1}}\label{proof1}

\begin{proof}

Consider model (\ref{model})  in matrix form as follow

\begin{align}
\boldsymbol{y_{i}}&=\boldsymbol{X_{i}}\boldsymbol{\beta} + u_{i}\boldsymbol{1}_{n_{i}} + \boldsymbol{\epsilon_{i}}, & i=1,...,m,
\label{model2}
\end{align}

where $\boldsymbol{y_{i}}=(y_{i1},...,y_{in_{i}})^\textsf{T}$,  $\boldsymbol{X_{i}}=(\boldsymbol{x}_{i1},...,\boldsymbol{x}_{in_{i}})^\textsf{T}$, $\boldsymbol{\epsilon_{i}}=(\boldsymbol{\epsilon}_{i1},...,\boldsymbol{\epsilon}_{in_{i}})^\textsf{T}$, $\boldsymbol{u}=(u_{1},...,u_{m})^\textsf{T}$. The joint posterior density
can be written as follow

\begin{align}
\label{eq:beta}
p(\boldsymbol{u},\phi,\boldsymbol{\omega}, \boldsymbol{\beta} ,\tau,\boldsymbol{\lambda} \mid \boldsymbol{y})&\propto\prod_{i=1}^{m}|(\lambda_{i}\tau)^{-1}I_{m})|^{-1/2}
\exp\left\{-\frac{1}{2}(\boldsymbol{z_{i}} - \boldsymbol{X_{i}}\boldsymbol{\beta})^\textsf{T}(\lambda_{i}\tau)I_{m}(\boldsymbol{z_{i}} - \boldsymbol{X_{i}}\boldsymbol{\beta})\right\}  \notag  \\  & \hspace{1cm}  \times
(\omega_{i}\phi)^{1/2}\exp\left\{-\frac{1}{2}u_{i}^{2}\omega_{i}\phi\right\}  \pi(\tau) \pi(\lambda_{i}) \pi(\phi) \pi(\omega_{i}).
\end{align}

Consider now the block-diagonal structure for model (\ref{model2}) given by

\begin{align*}
\boldsymbol{z}&=\text{col}_{1\leq i \leq m}(\boldsymbol{z_{i}}) = (\boldsymbol{z_{1}}^\textsf{T},...,\boldsymbol{z_{m}})^\textsf{T}, & \boldsymbol{X}&=\text{col}_{1\leq i \leq m}(\boldsymbol{X_{i}^\textsf{T}}),\\
\boldsymbol{V}&=\text{blkdiag}(\{\boldsymbol{V_{i}}\}_{i=1}^{m}),  &  \boldsymbol{\epsilon}&=\text{col}_{1\leq i \leq m}(\boldsymbol{\epsilon_{i}}),
\end{align*}

where $\boldsymbol{z_{i}}=\boldsymbol{y_{i}}- u_{i}\boldsymbol{1}_{n_{i}}$ and $\boldsymbol{V_{i}}=(\lambda_{i}\tau)I_{m}$. Since $\boldsymbol{X}$ is of full range then  $\boldsymbol{X}^\textsf{T}\boldsymbol{V}\boldsymbol{X}$ is nonsingular. Notice that

\begin{align*}
&\exp\left\{-\frac{1}{2}(\boldsymbol{z} - \boldsymbol{X}\boldsymbol{\beta})^\textsf{T}\boldsymbol{V}(\boldsymbol{z} - \boldsymbol{X}\boldsymbol{\beta})\right\} \\
 &=\exp\left\{-\frac{1}{2}(\boldsymbol{\beta}-\hat{\boldsymbol{\beta}})^\textsf{T} \boldsymbol{X}
^\textsf{T}\boldsymbol{V} \boldsymbol{X}(\boldsymbol{\beta}-\hat{\boldsymbol{\beta}}) -\frac{1}{2} (\boldsymbol{z} - \boldsymbol{X}\hat{\boldsymbol{\beta}})^\textsf{T}\boldsymbol{V}
(\boldsymbol{z} - \boldsymbol{X}\hat{\boldsymbol{\beta}})
\right\}\notag
\end{align*}

where $\hat{\boldsymbol{\beta}}=(\boldsymbol{X}^\textsf{T}\boldsymbol{V}\boldsymbol{X})^{-1}\boldsymbol{X}
^\textsf{T}\boldsymbol{V}\boldsymbol{z_{i}}$. Computing the integral  of the right-hand site in (\ref{eq:beta}) with respect to $\boldsymbol{\beta}$ and having that
$\exp\left\{-\frac{1}{2} (\boldsymbol{z} -\hat{\boldsymbol{\beta}})^\textsf{T}\boldsymbol{V}
(\boldsymbol{z} - \hat{\boldsymbol{\beta}})
 \right\} \leq 1$ then
\begin{align}
p(\boldsymbol{u},\phi,\boldsymbol{\omega}, \tau,\boldsymbol{\lambda} \mid \boldsymbol{y})&\leq K \prod_{i=1}^{m}(\omega_{i}\phi)^{1/2}\exp\left\{-\frac{1}{2}u_{i}^{2}\omega_{i}\phi\right\}\pi(\tau) \pi(\lambda_{i}) \pi(\phi) \pi(\omega_{i}),
 \label{eq:posterior2}
\end{align}
where $K$ is a generic positive constant. Integrating (\ref{eq:posterior2}) with respect to $\boldsymbol{u}$, we have
\begin{align}
p(\phi,\boldsymbol{\omega}, \boldsymbol{\beta} ,\tau,\boldsymbol{\lambda} \mid \boldsymbol{y})&\leq K \prod_{i=1}^{m}\pi(\tau) \pi(\lambda_{i}) \pi(\phi) \pi(\omega_{i}),
 \label{eq:posterior3}
\end{align}
moreover if the priors distributions of $\pi(\tau)$, $\pi(\lambda_{i})$, $\pi(\phi)$  and $\pi(\omega_{i})$ are proper then the posterior distribution is proper.
\end{proof}

\subsection{Proof of Theorem \ref{th2}}\label{proof2}

\begin{proof}  

For (i) consider

\begin{align} \small
P(\gamma_i  > \epsilon \mid \boldsymbol{\beta},\boldsymbol{\lambda},\tau,\boldsymbol{y}_{i})=\dfrac{\bigintss_{0}^{c_{1}/\phi} N (\boldsymbol{y_{i}}; \boldsymbol{X_{i}}\boldsymbol{\beta},
I_{m}(\lambda_{i}\tau)^{-1} + \boldsymbol{1}_{n_{i}} \boldsymbol{1}_{n_{i}}^\textsf{T} (\omega_{i}\phi)^{-1}) \pi(\omega_{i}) d \omega_{i}}{\bigintss_{0}^{\infty} N (\boldsymbol{y_{i}};  \boldsymbol{X_{i}}\boldsymbol{\beta},
I_{m}(\lambda_{i}\tau)^{-1} + \boldsymbol{1}_{n_{i}} \boldsymbol{1}_{n_{i}}^\textsf{T} (\omega_{i}\phi)^{-1}) \pi(\omega_{i}) d \omega_{i}} = \dfrac{N_{1}}{D_{1}},
\label{density1}
\end{align}

where $c_{1}=((1-\epsilon)/\epsilon)(n_{i}\lambda_{i}\tau)$ and

\begin{align*}
N (\boldsymbol{y_{i}};  \boldsymbol{X_{i}}\boldsymbol{\beta},
I_{m}(\lambda_{i}\tau)^{-1} + \boldsymbol{1}_{n_{i}} \boldsymbol{1}_{n_{i}}^\textsf{T} (\omega_{i}\phi)^{-1}) &= (2\pi)^{-1/2} |
I_{m}(\lambda_{i}\tau)^{-1} + \boldsymbol{1}_{n_{i}} \boldsymbol{1}_{n_{i}}^\textsf{T} (\omega_{i}\phi)^{-1}
|^{-1/2}\\ &\hspace{-2cm}\times \exp\left\{-\dfrac{1}{2}(\boldsymbol{y_{i}} - \boldsymbol{X_{i}}\boldsymbol{\beta})^\textsf{T}(I_{m}(\lambda_{i}\tau)^{-1} + \boldsymbol{1}_{n_{i}} \boldsymbol{1}_{n_{i}}^\textsf{T}(\omega_{i}\phi)^{-1})(\boldsymbol{y_{i}} - \boldsymbol{X_{i}}\boldsymbol{\beta})\right\}\\
&=(2\pi)^{-1/2} \left(1+ n_{i}(\lambda_{i}\tau)/(\omega_{i}\phi) \right)^{-1/2}(\lambda_{i}\tau)^{n_{i}/2}
\\ &\hspace{-2cm}\times 
\exp\left\{-\dfrac{1}{2}\left(
\sum_{j=1}^{n_{i}}  \dfrac{y^{*2}_{ij}}{
1/(\lambda_{i}\tau) + n_{i}/(\omega_{i}\phi)} + 
\sum_{j=1}^{n_{i}}  \dfrac{n_{i} (y^{*}_{ij} - \bar{y}^{*}_{i} )^{2}}{n_{i}/(\lambda_{i}\tau) + (\omega_{i}\phi)/(\lambda_{i}\tau)^2}
\right)\right\}\\ 
&=(2\pi)^{-1/2} \left(1+ n_{i}(\lambda_{i}\tau)/(\omega_{i}\phi) \right)^{-1/2}(\lambda_{i}\tau)^{n_{i}/2} \exp\left\{-\frac{1}{2} l_{i}^2 \right\}, 
\label{density2}
\end{align*}

with $y^{*}_{ij} = y_{ij} -  \boldsymbol{x_{ij}}^\textsf{T}\boldsymbol{\beta}$ and  $\bar{y}^{*}_{i}  = (1/n_{i})\sum_{j=1}^{n_{i}} y_{ij}^{*}$. Since $$\left(1+ n_{i}(\lambda_{i}\tau)/(\omega_{i}\phi) \right)^{-1/2} \exp\left\{-\frac{1}{2} l_{i}^2 \right\}  \pi(\omega_{i}) \leq  \pi(\omega_{i})$$ 
and $\pi(\omega_{i})$ is proper using the Lebesgue dominated convergence theorem the denominator $D_{1}$
in (\ref{density1}) converge to  $\exp\left\{-\dfrac{\lambda_{i}\tau \sum_{j=1}^{n_{i}} y^{*2}_{ij} }{2}
\right\}$ as $\phi \rightarrow \infty$. For the the numerator $N_{1}$ in (\ref{density1})  we have that

\begin{align*}
k_{1} \int_{0}^{c_{1}/\phi} 
 \left(1+ n_{i}(\lambda_{i}\tau)/(\omega_{i}\phi) \right)^{-1/2} \pi(\omega_{i}) d \omega_{i}
\leq 
N_{1} \leq  (1-\epsilon)^{1/2} \int_{0}^{c_{1}/\phi} \pi(\omega_{i}) d \omega_{i},
\end{align*}

with $k_{1}=\exp\left\{-\dfrac{\lambda_{i}\tau}{2}\left( \sum_{j=1}^{n_{i}}  (y^{*2}_{ij} +  (y^{*}_{ij} - \bar{y}^{*}_{i} )^{2}
\right)\right\}$. If we consider the Laplace prior $\pi(\omega_{i}) = (\omega_{i})^{-2}\exp(-1/\omega_{i})$ then $N_{1} \leq  (1-\epsilon)^{1/2}  \exp(-\phi/c_{1})$. For the upper bound of $N_{1}$ notice that

\begin{align*}
N_{1} &\geq k_{1} \int_{0}^{c_{1}/\phi} 
 \left(1+ n_{i}(\lambda_{i}\tau)/(\omega_{i}\phi) \right)^{-1/2}  (\omega_{i})^{-2}\exp\left\{-\dfrac{1}{\omega_{i}}\right\} d \omega_{i} \\
&\geq  k_{1} \phi \int_{c_{1}/2}^{c_{1}} 
 \left(1+ n_{i}(\lambda_{i}\tau)/x \right)^{-1/2}  (x)^{-2}\exp\left\{-\dfrac{\phi}{x}\right\} dx\\
&\geq  k_{1} \phi \left(1+ 2n_{i}(\lambda_{i}\tau)/c_{1} \right)^{-1/2} \int_{c_{1}/2}^{c_{1}} 
   (x)^{-2}\exp\{-\dfrac{\phi}{x}\} dx\\
&=   k_{1}  \left(1+ 2n_{i}(\lambda_{i}\tau)/c_{1} \right)^{-1/2}  \exp\left\{-\dfrac{\phi}{c_{1}}\right\}\left(1-\exp\left\{-\dfrac{\phi}{c_{1}}\right\}\right),
\end{align*}

hence  $P(\gamma_i  > \epsilon \mid \boldsymbol{\beta},\boldsymbol{\lambda},\tau,\boldsymbol{y}_{i})\asymp \exp(-\phi/c_{1})$
as $\phi \rightarrow \infty$. For the Beta-Prime distribution $\pi(\omega_{i})  = \dfrac{\Gamma(a+b)}{\Gamma(b)\Gamma(a)} \dfrac{\omega_{i}^{b-1}}{(1+\omega_{i})^{a+b}}$ with $0<a\leq 1$ and $0<b\leq 1$ we have

\begin{align*}
N_{1} &\leq  \dfrac{\Gamma(a+b)}{\Gamma(b)\Gamma(a)}  (1-\epsilon)^{1/2} \int_{0}^{c_{1}/\phi}   \dfrac{\omega_{i}^{b-1}}{(1+\omega_{i})^{a+b}}
d \omega_{i}\\
&= \dfrac{\Gamma(a+b)}{\Gamma(b)\Gamma(a)}  (1-\epsilon)^{1/2} \int_{0}^{c_{1}/(c_{1}+\phi)} 
x^{b-1} (1-x)^{a-1} d x \\
&\leq
\dfrac{\Gamma(a+b)}{\Gamma(b)\Gamma(a)}  (1-\epsilon)^{1/2}  \left( \dfrac{\phi}{c_{1}+\phi} \right)^{a-1} \int_{0}^{c_{1}/(c_{1}+\phi)} 
x^{b-1}  d x \\
&=
\dfrac{\Gamma(a+b)}{\Gamma(b)\Gamma(a)} \dfrac{1}{b}  (1-\epsilon)^{1/2}  \left( \dfrac{\phi}{c_{1}+\phi} \right)^{a-1}    \left( \dfrac{\phi c_{1}}{c_{1}+\phi} \right)^{b}  \left(  \dfrac{1}{\phi}  \right)^{b}
\end{align*}

and for the under bound of $N_{1}$ we have that 

\begin{align*}
N_{1} &\geq k_{1}  \dfrac{\Gamma(a+b)}{\Gamma(b)\Gamma(a)}   \int_{0}^{c_{1}/\phi}  \left(1+ n_{i}(\lambda_{i}\tau)/(\omega_{i}\phi) \right)^{-1/2}  \dfrac{\omega_{i}^{b-1}}{(1+\omega_{i})^{a+b}}
d \omega_{i}\\
&=  k_{1}  \dfrac{\Gamma(a+b)}{\Gamma(b)\Gamma(a)}   \int_{0}^{c_{1}/(c_{1}+\phi)} \left(1+ n_{i}(\lambda_{i}\tau)(1-x)/(x\phi) \right)^{-1/2}  
x^{b-1} (1-x)^{a-1} d x \\
&\geq  \left( \dfrac{\phi}{c_{1}+\phi} \right)^{b-1} k_{1}  \dfrac{\Gamma(a+b)}{\Gamma(b)\Gamma(a)}   \int_{0}^{c_{1}/(c_{1}+\phi)} \left(1+ n_{i}(\lambda_{i}\tau)(1-x)/(x\phi) \right)^{-1/2}  
d x \\
&\geq   \left( \dfrac{\phi}{c_{1}+\phi} \right)^{b-1} k_{1} \left(c_{1}/(c_{1}+\phi) + (n_{i}\lambda_{i}\tau)/\phi \right)^{-1/2}  \dfrac{\Gamma(a+b)}{\Gamma(b)\Gamma(a)}   \int_{0}^{c_{1}/(c_{1}+\phi)} x^{1/2}d x \\
&=\dfrac{1}{3/2} \left( \dfrac{\phi}{c_{1}+\phi} \right)^{b-1} k_{1} \left(1 + (n_{i}\lambda_{i}\tau)(c_{1}+\phi)/(c_{1}\phi) \right)^{-1/2}  \dfrac{\Gamma(a+b)}{\Gamma(b)\Gamma(a)}    \left( \dfrac{\phi c_{1}}{c_{1}+\phi} \right)^{b}  \left(  \dfrac{1}{\phi}  \right)^{b},
\end{align*}

therefore $P(\gamma_i  > \epsilon \mid \boldsymbol{\beta},\boldsymbol{\lambda},\tau,\boldsymbol{y}_{i}) \asymp (  1/\phi)^{b}$
as $\phi \rightarrow \infty$. In order to show (ii) consider

\begin{align}
\small
P(\gamma_i < \epsilon \mid \boldsymbol{\beta},\boldsymbol{\omega},\phi,\boldsymbol{y}_{i})=\dfrac{\bigintss_{0}^{c_{2}/\tau} N (\boldsymbol{y_{i}}; \boldsymbol{X_{i}}\boldsymbol{\beta},
I_{m}(\lambda_{i}\tau)^{-1} + \boldsymbol{1}_{n_{i}} \boldsymbol{1}_{n_{i}}^\textsf{T} (\omega_{i}\phi)^{-1}) \pi(\lambda_{i}) d \lambda_{i}}{\bigintss_{0}^{\infty} N (\boldsymbol{y_{i}};  \boldsymbol{X_{i}}\boldsymbol{\beta},
I_{m}(\lambda_{i}\tau)^{-1} + \boldsymbol{1}_{n_{i}} \boldsymbol{1}_{n_{i}}^\textsf{T} (\omega_{i}\phi)^{-1}) \pi(\lambda_{i}) d \lambda_{i}} = \dfrac{N_{2}}{D_{2}},
\end{align}

where $c_{2}=(\epsilon/(1-\epsilon))(\omega_{i}\phi/n_{i})$ and, 

\begin{align*}
\dfrac{N_{2}}{D_{2}} \leq 
\dfrac{(1-\epsilon)^{1/2} (\epsilon/(1-\epsilon))^{n_{i}/2}\bigintss_{0}^{c_{2}/\tau}  \pi(\lambda_{i}) d \lambda_{i}}{k_{i}(1-\epsilon_{1})^{1/2}(\epsilon_{1}/(1-\epsilon_{1}))^{n_{i}/2} \bigintss_{c_{3}/\tau}^{c_{4}/\tau}  \pi(\lambda_{i}) d \lambda_{i}}, 
\end{align*}

with 

$$k_{i}=\exp\left\{-\dfrac{1}{2}\left(
\sum_{j=1}^{n_{i}}  \dfrac{y^{*2}_{ij}}{n_{i}(1-\epsilon_{2}) + \epsilon_{2}} + 
\sum_{j=1}^{n_{i}}  \dfrac{ (y^{*}_{ij} - \bar{y}^{*}_{i} )^{2}}{ (n_{i} + \omega_{i}\phi)(1-\epsilon_{2})}
\right)\right\}, $$

where  $c_{3}=(\epsilon_{1}/(1-\epsilon_{1}))(\omega_{i}\phi/n_{i})$ and $c_{4}=(\epsilon_{2}/(1-\epsilon_{2}))(\omega_{i}\phi/n_{i})$ for any $\epsilon_{1}$ with $\epsilon_{2}$ with  $0<\epsilon_{1}<\epsilon_{2}<\epsilon<1$. If we consider the Laplace prior $\pi(\lambda_{i}) = (\lambda_{i})^{-2}\exp\left(-1/\lambda_{i}\right)$  then $N_{2} \leq  (1-\epsilon)^{1/2} (\epsilon/(1-\epsilon))^{n_{i}/2} \exp\left\{-\dfrac{\tau n_{i} (1-\epsilon)}{\omega_{i}\phi \epsilon}\right\}$ and 

\begin{align*}
D_{2} \geq  k_{i}(1-\epsilon_{1})^{1/2}(\epsilon_{1}/(1-\epsilon_{1}))^{n_{i}/2}   \exp\left\{-\dfrac{\tau n_{i} (1-\epsilon_{1})}{\omega_{i}\phi \epsilon_{1}}\right\}\left(1-\exp\left\{-\dfrac{\epsilon_{2}(1-\epsilon_{1})}{\epsilon_{1}(1-\epsilon_{2}) }\right\}\right), 
\end{align*}

therefore

\begin{align*}
\dfrac{N_{2}}{D_{2}} \leq \dfrac{(1-\epsilon)^{1/2} (\epsilon/(1-\epsilon))^{n_{i}/2}}{k_{i}(1-\epsilon_{1})^{1/2}(\epsilon_{1}/(1-\epsilon_{1}))^{n_{i}/2}}\left(1-\exp\left\{-\dfrac{\epsilon_{2}(1-\epsilon_{1})}{\epsilon_{1}(1-\epsilon_{2}) }\right\}\right)^{-1} \exp\left\{-\dfrac{\tau n_{i} }{\omega_{i}\phi}\left(\dfrac{\epsilon}{1-\epsilon} - \dfrac{\epsilon_{1}}{1-\epsilon_{1}}\right\}\right),
\end{align*}

which converges to zero at an exponential rate when $\tau \rightarrow \infty$. For the Beta prime prior $\pi(\lambda_{i})  = \dfrac{\Gamma(a+b)}{\Gamma(b)\Gamma(a)} \dfrac{\lambda_{i}^{b-1}}{(1+\lambda_{i})^{a+b}}$ with $a=1$ and $b \in (0,  \infty)$ we have

\begin{align*}
\dfrac{N_{2}}{D_{2}} &\leq 
\dfrac{(1-\epsilon)^{1/2} (\epsilon/(1-\epsilon))^{n_{i}/2}\bigintss_{0}^{c_{2}/\tau}  \dfrac{1}{(1+\lambda_{i})^{b+1}} d \lambda_{i}}{k_{i}(1-\epsilon_{1})^{1/2}(\epsilon_{1}/(1-\epsilon_{1}))^{n_{i}/2}\bigintss_{c_{3}/\tau}^{c_{4}/\tau}  \dfrac{1}{(1+\lambda_{i})^{b+1}} d \lambda_{i}} \\
&= \dfrac{(1-\epsilon)^{1/2} (\epsilon/(1-\epsilon))^{n_{i}/2}\bigintss_{0}^{c_{2}/(c_{2}+\tau)} x^{b-1} dx }{k_{i}(1-\epsilon_{1})^{1/2}(\epsilon_{1}/(1-\epsilon_{1}))^{n_{i}/2}\bigintss_{c_{3}/(c_{3}+\tau)}^{c_{4}/(c_{4}+\tau)} x^{b-1} dx }\\
&= \dfrac{(1-\epsilon)^{1/2} (\epsilon/(1-\epsilon))^{n_{i}/2}\dfrac{1}{b}\left(\dfrac{c_{2}}{c_{2}+\tau}\right)^{b}
}{k_{i}(1-\epsilon_{1})^{1/2}(\epsilon_{1}/(1-\epsilon_{1}))^{n_{i}/2}\bigintss_{c_{3}/(c_{3}+\tau)}^{c_{4}/(c_{4}+\tau)} x^{b-1} dx },
\end{align*}
where $$D_{2} \geq k_{i}(1-\epsilon_{1})^{1/2}(\epsilon_{1}/(1-\epsilon_{1}))^{n_{i}/2} \left(\dfrac{c_{4}}{c_{4}+\tau}\right)^{b-1},$$ 
when  $b \leq 1$ and 
$$D_{2} \geq k_{i}(1-\epsilon_{1})^{1/2}(\epsilon_{1}/(1-\epsilon_{1}))^{n_{i}/2}  \left(\dfrac{c_{3}}{c_{3}+\tau}\right)^{b-1},$$ 
when $b >1$ therefore  $$P(\gamma_{i} < \epsilon \mid \boldsymbol{\beta}, \boldsymbol{\omega},\phi,\boldsymbol{y}_{i}) \rightarrow 0$$ at a polynomial rate when $\tau \rightarrow \infty$. 

For part (iii) consider

 \begin{align*} \small
P(\gamma_i  < \epsilon \mid \boldsymbol{\beta},\boldsymbol{\lambda},\tau,\boldsymbol{y}_{i})=\dfrac{\bigintss_{c_{1}/\phi}^{\infty}  N (\boldsymbol{y_{i}}; \boldsymbol{X_{i}}\boldsymbol{\beta},
I_{m}(\lambda_{i}\tau)^{-1} + \boldsymbol{1}_{n_{i}} \boldsymbol{1}_{n_{i}}^\textsf{T} (\omega_{i}\phi)^{-1}) \pi(\omega_{i}) d \omega_{i}}{\bigintss_{0}^{\infty} N (\boldsymbol{y_{i}};  \boldsymbol{X_{i}}\boldsymbol{\beta},
I_{m}(\lambda_{i}\tau)^{-1} + \boldsymbol{1}_{n_{i}} \boldsymbol{1}_{n_{i}}^\textsf{T} (\omega_{i}\phi)^{-1}) \pi(\omega_{i}) d \omega_{i}} = \dfrac{N_{3}}{D_{3}},
\end{align*}

where where $c_{1}=((1-\epsilon)/\epsilon)(n_{i}\lambda_{i}\tau)$ and for any $0<\epsilon_{5} <1$ we have

\begin{align*} 
\dfrac{N_{3}}{D_{3}} &\leq 
\dfrac{\bigintss_{c_{1}/\phi}^{\infty}  N (\boldsymbol{y_{i}}; \boldsymbol{X_{i}}\boldsymbol{\beta},
I_{m}(\lambda_{i}\tau)^{-1} + \boldsymbol{1}_{n_{i}} \boldsymbol{1}_{n_{i}}^\textsf{T} (\omega_{i}\phi)^{-1}) \pi(\omega_{i}) d \omega_{i}}{\bigintss_{((1-\epsilon)/\epsilon)(n_{i}\lambda_{i}\tau)/\phi}^{\infty} N (\boldsymbol{y_{i}};  \boldsymbol{X_{i}}\boldsymbol{\beta},
I_{m}(\lambda_{i}\tau)^{-1} + \boldsymbol{1}_{n_{i}} \boldsymbol{1}_{n_{i}}^\textsf{T} (\omega_{i}\phi)^{-1}) \pi(\omega_{i}) d \omega_{i}} \\
&\leq \exp\left\{-\dfrac{(n_{i}\lambda_{i}\tau)^2}{2(1-\epsilon_{5})}  (\bar{\boldsymbol{y}}_{i}-\bar{\boldsymbol{x}}_{i}^\textsf{T}\boldsymbol{\beta}_{i})^2 \right\}
\dfrac{\bigintss_{c_{1}/\phi}^{\infty} \left(1+ n_{i}(\lambda_{i}\tau)/(\omega_{i}\phi) \right)^{-1/2} \pi(\omega_{i}) d \omega_{i}}{
(1- \epsilon_{5})^{1/2} \bigintss_{((1-\epsilon_{5})/\epsilon_{5})(n_{i}\lambda_{i}\tau)/\phi}^{\infty} \pi(\omega_{i}) d \omega_{i}},
\end{align*}

using that $\pi(\omega_{i})$ is proper then $P(\gamma_i  < \epsilon \mid \boldsymbol{\beta},\boldsymbol{\lambda},\tau,\boldsymbol{y}_{i}) \leq C^{*} \exp\left\{-\dfrac{(n_{i}\lambda_{i}\tau)^2}{2(1-\epsilon_{5})}  (\bar{\boldsymbol{y}}_{i}-\bar{\boldsymbol{x}}_{i}^\textsf{T}\boldsymbol{\beta}_{i})^2 \right\} $ where $C^{*}$ is a constant. Therefore $P(\gamma_i  < \epsilon \mid \boldsymbol{\beta},\boldsymbol{\lambda},\tau,\boldsymbol{y}_{i}) \rightarrow 0$  as $|\bar{\boldsymbol{y}}_{i}-\bar{\boldsymbol{x}}_{i}^\textsf{T}\boldsymbol{\beta}_{i}| \rightarrow \infty$.

\end{proof}

\clearpage

\section{Markov chain Monte Carlo algorithms}
\label{MCMCsec}

We consider in this section  the Markov chain Monte Carlo algorithms for implementing our propose models.

\RestyleAlgo{boxruled}
\vspace{1mm}
\begin{algorithm}
\small
Initialize $\boldsymbol{\xi}^{(0)}$.
\For{$t=1,...,T$}{
Draw
\begin{equation*}
u_{i}^{(t)}\sim \text{N}(u_{i};\gamma_{i}^{(t-1)}(\bar{y}_{i}-\boldsymbol{\bar{x}_{i}^\textsf{T}} \boldsymbol{\beta}^{(t-1)}),\gamma_{i}^{(t-1)}/(\zeta_{\epsilon}^{(t-1)} n_{i})).
\end{equation*}
with $\gamma_{i}^{(t-1)}=\zeta_{\epsilon}^{(t-1)}/(\zeta_{\epsilon}^{(t-1)} + (\phi^{(t-1)}\omega_{i}^{(t-1)})/n_{i})$. Draw
$$\boldsymbol{\beta}^{(t)}\sim
\text{N}_{p}(\boldsymbol{\beta}^{(t)};(\sum\limits_{i=1}^{m}\sum\limits_{j=1}^{n_{i}}\boldsymbol{x_{ij}}\boldsymbol{x_{ij}^\textsf{T}})^{-1}
\sum\limits_{i=1}^{m}\sum\limits_{j=1}^{n_{i}}\boldsymbol{x_{i}}(y_{ij}-u_{i}^{(t)}),\zeta^{(t-1)}_{\epsilon}(\sum\limits_{i=1}^{m}\sum\limits_{j=1}^{n_{i}}\boldsymbol{x_{ij}}\boldsymbol{x_{ij}^\textsf{T}})^{-1}).
$$
Draw
\begin{equation*}
\zeta^{(t)}_{\epsilon}\sim \text{Gamma}(\zeta^{(t)}; \frac{m}{2} + a_{\zeta_{\epsilon}},
\frac{1}{2}\sum\limits_{i=1}^{m}\sum\limits_{j=1}^{n_{i}}(y_{ij}-\boldsymbol{x_{ij}}^{T}\boldsymbol{\beta}^{(t)} - u_{i})^{2}+ b_{\zeta_{\epsilon}}).
\end{equation*}
Draw
\begin{align*}
\phi^{(t)}&\sim \text{Gamma}(\phi^{(t)}; \frac{n}{2} + a_{\phi},
\frac{1}{2}\sum\limits_{i=1}^{m}\omega_{i}^{(t-1)}u_{i}^{2}+b_{\phi}),  & n=&\sum_{i=1}^{m} n_{i}.
\end{align*}
Draw $\omega_{i}^{(t)}$ using Algorithm \ref{MCMC3} and update $\gamma_{i}^{(t)}$.
}
\caption{Gibbs sampling using common Gamma priors for the scales of the  errors given by $\zeta^{(t)}_{\epsilon}$ and GL priors (HS, LA and Student-t)
for the scales of the random effects in model (\ref{model}).}
\label{MCMC1}
\end{algorithm}

\RestyleAlgo{boxruled}
\vspace{3mm}
\begin{algorithm}
\small
Initialize $(\boldsymbol{\xi}^{(0)},\rho_{i}^{(0)})$.
\For{$t=1,...,T$}{
Draw
\begin{equation*}
u_{i}^{(t)}\sim \text{N}(u_{i};\gamma_{i}^{(t-1)}(\bar{y}_{i}-\boldsymbol{\bar{x}_{i}^\textsf{T}} \boldsymbol{\beta}^{(t-1)}),\gamma_{i}^{(t-1)}/(\tau^{(t-1)}\lambda_{i}^{(t-1)}) n_{i})).
\end{equation*}
with $\gamma_{i}^{(t-1)}=(\tau^{(t-1)}\lambda_{i}^{(t-1)})/(\tau^{(t-1)}\lambda_{i}^{(t-1)} + (\phi^{(t-1)}\omega_{i}^{(t-1)})/n_{i})$.  Draw 
$$\boldsymbol{\beta}^{(t)}\sim
\text{N}_{p}(\boldsymbol{\beta}^{(t)};a_{\boldsymbol{\beta}^{(t)}},b_{\boldsymbol{\beta}^{(t)}}),
$$
where
$$a_{\boldsymbol{\beta}^{(t)}}=(\sum\limits_{i=1}^{m}\sum\limits_{j=1}^{n_{i}}\boldsymbol{x_{ij}}\lambda_{i}^{(t-1)}\boldsymbol{x_{ij}^\textsf{T}})^{-1}
\sum\limits_{i=1}^{m}\sum\limits_{j=1}^{n_{i}}\lambda_{i}^{(t-1)}\boldsymbol{x_{i}}(y_{ij}-u_{i}^{(t)}),$$
$$b_{\boldsymbol{\beta}^{(t)}}=\tau^{(t-1)}(\sum\limits_{i=1}^{m}\sum\limits_{j=1}^{n_{i}}\boldsymbol{x_{ij}}\lambda_{i}^{(t-1)}\boldsymbol{x_{ij}^\textsf{T}})^{-1}.$$
Draw
\begin{equation*}
\tau^{(t)}\sim \text{Gamma}(\tau^{(t)}; \frac{m}{2} + a_{\tau},
\frac{1}{2}\sum\limits_{i=1}^{m}\sum\limits_{j=1}^{n_{i}}\lambda_{i}^{(t-1)}(y_{ij}-\boldsymbol{x_{ij}}^{T}\boldsymbol{\beta}^{(t)} - u_{i})^{2}+b_{\tau}).
\end{equation*}
Draw
\begin{equation*}
\lambda_{i}^{(t)}\sim \text{Gamma}(\lambda_{i}^{(t)}; \frac{n_{i}}{2} + 1,
\frac{1}{2}\tau^{(t)}\sum\limits_{j=1}^{n_{i}}(y_{ij}-\boldsymbol{x_{ij}}^{T}\boldsymbol{\beta}^{(t)} - u_{i})^{2}+\rho_{i}^{(t-1)}).
\end{equation*}
Draw
\begin{equation*}
\rho_{i}^{(t)}\sim \text{Gamma}(\rho_{i}^{(t)}; 2,\lambda_{i}^{(t)}+1).
\end{equation*}
Draw
\begin{align*}
\phi^{(t)}&\sim \text{Gamma}(\phi^{(t)}; \frac{n}{2} + a_{\phi},
\frac{1}{2}\sum\limits_{i=1}^{m}\omega_{i}^{(t-1)}u_{i}^{2}+b_{\phi}),  & n=&\sum_{i=1}^{m} n_{i}.
\end{align*}
Draw $\omega_{i}^{(t)}$ using Algorithm \ref{MCMC3} and update $\gamma_{i}^{(t)}$.
}
\caption{Gibbs sampling using a Half-Cauchy prior for the local  scales of the errors and GL priors (HS, LA and Student-t)
for the scales of the random effects in model (\ref{model}).}
\label{MCMC2}
\end{algorithm}

\RestyleAlgo{boxruled}
\vspace{3mm}
\begin{algorithm}
\small
\For{$t=1,...,T$}{
For the HS prior draw
\begin{equation*}
\omega_{i}^{(t)}\sim \text{Gamma}(\omega_{i}^{(t)}; 1,
\frac{1}{2}\phi^{(t)}u_{i}^{2}+\varrho_{i}^{(t-1)}).
\end{equation*}
Draw
\begin{equation*}
\varrho_{i}^{(t)}\sim \text{Gamma}(\varrho_{i}^{(t)};1,\omega_{i}^{(t)}+1).
\end{equation*}
For the LA prior draw
\begin{equation*}
\omega_{i}^{(t)}\sim \text{GIG}(\omega_{i}^{(t)}; -\frac{1}{2},2,\phi^{(t)}u_{i}^{2}).
\end{equation*}
For the Student-t draw
\begin{equation*}
\omega_{i}^{(t)}\sim \text{Gamma}(\omega_{i}^{(t)}; \frac{1}{2}\nu_{i}^{(t-1)} + \frac{1}{2},
\frac{1}{2}\nu_{i}^{(t-1)}u_{i}^{2}+\frac{\nu_{i}^{(t-1)}}{2}).
\end{equation*}
Set $\nu_{i}^{(t)}=l$ with probability proportional to
$$\frac{l}{(l+k_{\nu_{i}})^{3}}\text{Student-t}_{\nu_{i}^{(t)}=k}(u^{(t)}_{i},0,\sqrt{1/\phi^{(t)}}).$$
}
\caption{Gibbs sampling for the local scales of the random effects in model  (\ref{model}).}
\label{MCMC3}
\end{algorithm}

\clearpage

\enlargethispage{10cm}

\section{Posterior mean and credible intervals of the regression parameters  for models 1 and 2}\label{sup_post}
This section contains the posterior estimates and the corresponding credible intervals for the regression parameters of the considered models in Section \ref{Results}.  Tables \ref{tab:tabglobal1} and \ref{tab:tabglobal2} consider models 1 and 2 respectively for both sexes, Tables \ref{tab:tabglobal3} and \ref{tab:tabglobal4} contain the results for models 1 and 2 respectively for females and  Tables \ref{tab:tabglobal5} and \ref{tab:tabglobal6} show the results for models 1 and 2 respectively for males.

\begin{table}[ht]
\scriptsize
\renewcommand{\arraystretch}{1.05}
\begin{tabular}{rrrr|rrr|rrrrrrrr}\\
\hline    \hline
Prior for the errors &    & Gamma  &   &  &     Half-Cauchy (local) &  & &  &\\ \hline \hline
Parameter & Mean  & Lower CI  &  Upper CI &   Mean  & Lower
CI  &  Upper CI & Local-Prior \\ \hline    \hline
Constant & 43.01 & 29.94 & 56.07 & 31.65 & 22.67 & 40.64 & Gamma \\
RegCDR & 0.73 & 0.61 & 0.85 & 0.77 & 0.69 & 0.84 & Gamma\\
 RegCDR squared & -0.03 & -0.04 & -0.02 & -0.03 & -0.03 & -0.02 & Gamma \\
\%65+ & -12.68 & -16.19 & -9.18 & -14.33 & -17.06 & -11.60 &Gamma\\
ln(5q0) & -1.22 & -1.40 & -1.04 & -1.42 & -1.56 & -1.29 & Gamma\\
 C5q0 & 2.22 & 1.79 & 2.65 & 0.80 & 0.59 & 1.02 & Gamma \\
 year & -0.02 & -0.03 & -0.02 & -0.02 & -0.02 & -0.01 & Gamma \\
Constant & 44.96 & 32.03 & 57.89 & 33.40 & 24.64 & 42.15 & Student-t \\
 RegCDR & 0.74 & 0.61 & 0.86 & 0.77 & 0.69 & 0.85 & Student-t \\
 RegCDR squared & -0.03 & -0.04 & -0.02 & -0.03 & -0.03 & -0.02 & Student-t \\
 \%65+ & -12.51 & -16.02 & -8.99 & -14.14 & -16.94 & -11.34 & Student-t \\
 ln(5q0) & -1.25 & -1.42 & -1.07 & -1.45 & -1.58 & -1.32 & Student-t \\
  C5q0 & 2.24 & 1.81 & 2.66 & 0.79 & 0.57 & 1.01 & Student-t \\
  year & -0.03 & -0.03 & -0.02 & -0.02 & -0.02 & -0.01 & Student-t \\
  Constant & 46.14 & 33.45 & 58.84 & 34.52 & 25.75 & 43.29 & Exponential \\
  RegCDR & 0.73 & 0.61 & 0.86 & 0.77 & 0.70 & 0.85 & Exponential \\
  RegCDR squared & -0.03 & -0.04 & -0.02 & -0.03 & -0.03 & -0.02 & Exponential \\
  \%65+ & -12.36 & -15.91 & -8.81 & -14.02 & -16.72 & -11.33 & Exponential \\
  ln(5q0) & -1.26 & -1.44 & -1.08 & -1.46 & -1.59 & -1.33 & Exponential \\
  C5q0 & 2.24 & 1.81 & 2.67 & 0.78 & 0.57 & 1.00 & Exponential \\
  year & -0.03 & -0.03 & -0.02 & -0.02 & -0.02 & -0.02 & Exponential \\
  Constant & 51.81 & 39.94 & 63.67 & 41.21 & 32.68 & 49.75 & Horseshoe \\
  RegCDR & 0.72 & 0.60 & 0.84 & 0.78 & 0.71 & 0.86 & Horseshoe \\
  RegCDR squared & -0.03 & -0.03 & -0.02 & -0.03 & -0.03 & -0.02 & Horseshoe \\
  \%65+ & -11.93 & -15.49 & -8.36 & -13.79 & -16.27 & -11.32 & Horseshoe \\
   ln(5q0) & -1.32 & -1.50 & -1.14 & -1.56 & -1.68 & -1.43 & Horseshoe \\
   C5q0 & 2.24 & 1.81 & 2.67 & 0.76 & 0.54 & 0.99 & Horseshoe \\
   year & -0.03 & -0.04 & -0.02 & -0.02 & -0.03 & -0.02 & Horseshoe \\ \hline   \hline
  \end{tabular}
\caption{
Posterior mean and credible intervals of the regression parameters  for
Model 1 (both sexes).}
\label{tab:tabglobal1}
\end{table}

\clearpage

\begin{table}[ht]
\scriptsize
\renewcommand{\arraystretch}{1.4}
\begin{tabular}{rrrr|rrr|rrrrrrrr}\\
\hline    \hline
Prior for the errors &    & Gamma  &   &  &     Half-Cauchy (local) &  & &  &\\ \hline \hline
Parameter & Mean  & Lower CI  &  Upper CI &   Mean  & Lower
CI  &  Upper CI & Local-Prior \\
  \hline   \hline
 Constant & 52.65 & 39.59 & 65.71 & 33.70 & 25.62 & 41.79  & Gamma  \\
 RegCDR & 1.06 & 0.95 & 1.16 & 1.03 & 0.98 & 1.09 &  Gamma  \\
 RegCDR squared & -0.04 & -0.05 & -0.04 & -0.04 & -0.05 & -0.04 &  Gamma  \\
 \%65+ & -17.11 & -20.47 & -13.74 & -12.81 & -15.42 & -10.21 &  Gamma  \\
 ln(5q0) & -1.56 & -1.73 & -1.39 & -1.46 & -1.58 & -1.33 &  Gamma \\
 year & -0.03 & -0.04 & -0.02 & -0.02 & -0.02 & -0.02 & Gamma \\
Constant & 54.95 & 42.22 & 67.68 & 34.55 & 26.86 & 42.24 & Student-t \\
RegCDR & 1.06 & 0.96 & 1.17 & 1.03 & 0.97 & 1.09 & Student-t \\
RegCDR squared & -0.04 & -0.05 & -0.04 & -0.04 & -0.05 & -0.04 & Student-t \\
\%65+ & -16.91 & -20.29 & -13.54 & -12.55 & -15.20 & -9.91 & Student-t \\
ln(5q0) & -1.59 & -1.75 & -1.43 & -1.46 & -1.59 & -1.34 & Student-t \\
year & -0.03 & -0.04 & -0.02 & -0.02 & -0.02 & -0.02 & Student-t \\
Constant & 56.09 & 43.41 & 68.78 & 35.02 & 27.15 & 42.89 & LA \\
RegCDR & 1.07 & 0.96 & 1.17 & 1.03 & 0.97 & 1.09 & LA \\
RegCDR squared & -0.04 & -0.05 & -0.04 & -0.04 & -0.05 & -0.04 & LA \\
\%65+ & -16.84 & -20.25 & -13.43 & -12.43 & -15.04 & -9.82 & LA \\
ln(5q0) & -1.60 & -1.76 & -1.44 & -1.47 & -1.60 & -1.34 & LA \\
year & -0.03 & -0.04 & -0.02 & -0.02 & -0.02 & -0.02 & LA \\
Constant & 60.73 & 48.58 & 72.88 & 38.18 & 30.19 & 46.16 & HS \\
RegCDR & 1.08 & 0.98 & 1.19 & 1.02 & 0.97 & 1.08 & HS \\
RegCDR squared & -0.04 & -0.05 & -0.04 & -0.04 & -0.05 & -0.04 & HS \\
\%65+ & -16.71 & -20.13 & -13.30 & -12.04 & -14.76 & -9.32 & HS \\
ln(5q0) & -1.67 & -1.83 & -1.50 & -1.51 & -1.64 & -1.37 & HS \\
year & -0.03 & -0.04 & -0.03 & -0.02 & -0.03 & -0.02 & HS \\  \hline   \hline
\end{tabular}
\caption{
Posterior mean and credible intervals of the regression parameters  for
Model 2 (both sexes).}
\label{tab:tabglobal2}
\end{table}

\clearpage

\begin{table}[ht]
\scriptsize
\renewcommand{\arraystretch}{1.4}
\begin{tabular}{rrrr|rrr|rrrrrrrr}\\
\hline    \hline
Prior for the errors &    & Gamma  &   &  &     Half-Cauchy (local) &  & &  &\\ \hline \hline
Parameter & Mean  & Lower CI  &  Upper CI &   Mean  & Lower
CI  &  Upper CI & Local-Prior \\
Constant & 40.43 & 28.60 & 52.26 & 28.22 & 18.61 & 37.83 & Gamma  \\
 RegCDR & 0.79 & 0.67 & 0.91 & 0.86 & 0.75 & 0.96 & Gamma \\
RegCDR squared & -0.03 & -0.04 & -0.03 & -0.03 & -0.04 & -0.03 & Gamma  \\
 \%65+ & -12.26 & -15.36 & -9.16 & -14.20 & -16.62 & -11.79 & Gamma \\
 ln(5q0) & -1.18 & -1.35 & -1.01 & -1.37 & -1.50 & -1.24 & Gamma  \\
 C5q0 & 2.10 & 1.69 & 2.51 & 0.78 & 0.54 & 1.03 & Gamma \\
 year & -0.02 & -0.03 & -0.02 & -0.02 & -0.02 & -0.01 & Gamma  \\
Constant & 41.40 & 30.05 & 52.74 & 29.50 & 19.91 & 39.10 & Student-t \\
RegCDR & 0.79 & 0.67 & 0.91 & 0.86 & 0.76 & 0.97 & Student-t \\
RegCDR squared & -0.03 & -0.04 & -0.03 & -0.03 & -0.04 & -0.03 & Student-t \\
\%65+ & -12.17 & -15.26 & -9.08 & -14.07 & -16.48 & -11.65 & Student-t \\
ln(5q0) & -1.20 & -1.36 & -1.03 & -1.39 & -1.51 & -1.26 & Student-t \\
C5q0 & 2.13 & 1.72 & 2.53 & 0.77 & 0.51 & 1.02 & Student-t \\
year & -0.02 & -0.03 & -0.02 & -0.02 & -0.02 & -0.01 & Student-t \\
Constant & 41.88 & 30.73 & 53.04 & 30.12 & 20.72 & 39.51 & LA \\
RegCDR & 0.79 & 0.67 & 0.91 & 0.86 & 0.76 & 0.97 & LA \\
RegCDR squared & -0.03 & -0.04 & -0.03 & -0.03 & -0.04 & -0.03 & LA \\
\%65+ & -12.13 & -15.25 & -9.01 & -14.01 & -16.38 & -11.64 & LA \\
ln(5q0) & -1.20 & -1.37 & -1.04 & -1.39 & -1.52 & -1.27 & LA \\
C5q0 & 2.12 & 1.72 & 2.53 & 0.76 & 0.50 & 1.01 & LA \\
year & -0.02 & -0.03 & -0.02 & -0.02 & -0.02 & -0.01 & LA \\
Constant & 43.96 & 33.21 & 54.70 & 30.11 & 20.20 & 40.02 & HS \\
RegCDR & 0.78 & 0.65 & 0.91 & 0.87 & 0.76 & 0.99 & HS \\
RegCDR squared & -0.03 & -0.04 & -0.03 & -0.03 & -0.04 & -0.03 & HS \\
\%65+ & -12.06 & -15.20 & -8.91 & -14.01 & -16.46 & -11.56 & HS \\
ln(5q0) & -1.22 & -1.39 & -1.06 & -1.40 & -1.53 & -1.27 & HS \\
C5q0 & 2.18 & 1.76 & 2.59 & 0.71 & 0.44 & 0.99 & HS \\
year & -0.02 & -0.03 & -0.02 & -0.02 & -0.02 & -0.01 & HS \\
 \hline   \hline
\end{tabular}
\caption{
Posterior mean and credible intervals of the regression parameters  for
Model 1 (females).}
\label{tab:tabglobal3}
\end{table}

\clearpage

\begin{table}[ht]
\scriptsize
\renewcommand{\arraystretch}{1.4}
\begin{tabular}{rrrr|rrr|rrrrrrrr}\\
\hline    \hline
Prior for the errors &    & Gamma  &   &  &     Half-Cauchy (local) &  & &  &\\ \hline \hline
Parameter & Mean  & Lower CI  &  Upper CI &   Mean  & Lower
CI  &  Upper CI & Local-Prior \\
Constant & 50.83 & 39.19 & 62.47 & 26.80 & 18.69 & 34.92 &  Gamma\\
RegCDR & 1.11 & 1.01 & 1.22 & 1.13 & 1.07 & 1.19 &  Gamma\\
RegCDR squared & -0.05 & -0.05 & -0.04 & -0.05 & -0.05 & -0.04 &  Gamma\\
 \%65+ & -16.72 & -19.71 & -13.74 & -14.60 & -16.93 & -12.27 &  Gamma \\
ln(5q0) & -1.54 & -1.69 & -1.39 & -1.41 & -1.52 & -1.30 &  Gamma\\
 year & -0.03 & -0.03 & -0.02 & -0.02 & -0.02 & -0.01 &  Gamma\\
  Constant & 51.36 & 39.80 & 62.92 & 27.45 & 19.45 & 35.44 & Student-t \\
RegCDR & 1.12 & 1.02 & 1.23 & 1.13 & 1.07 & 1.19 & Student-t \\
RegCDR squared & -0.05 & -0.05 & -0.04 & -0.05 & -0.05 & -0.04 & Student-t \\
\%65+ & -16.47 & -19.48 & -13.47 & -14.19 & -16.46 & -11.92 & Student-t \\
ln(5q0) & -1.54 & -1.69 & -1.39 & -1.42 & -1.53 & -1.31 & Student-t \\
year & -0.03 & -0.03 & -0.02 & -0.02 & -0.02 & -0.01 & Student-t \\
Constant & 51.62 & 40.29 & 62.95 & 27.42 & 19.49 & 35.34 & LA \\
RegCDR & 1.12 & 1.02 & 1.23 & 1.13 & 1.07 & 1.19 & LA \\
RegCDR squared & -0.05 & -0.05 & -0.04 & -0.05 & -0.05 & -0.04 & LA \\
\%65+ & -16.43 & -19.45 & -13.42 & -14.17 & -16.47 & -11.88 & LA \\
ln(5q0) & -1.54 & -1.69 & -1.39 & -1.42 & -1.53 & -1.31 & LA \\
year & -0.03 & -0.03 & -0.02 & -0.02 & -0.02 & -0.01 & LA \\
Constant & 51.59 & 40.47 & 62.71 & 27.07 & 19.96 & 34.17 & HS \\
RegCDR & 1.14 & 1.03 & 1.25 & 1.13 & 1.07 & 1.18 & HS \\
RegCDR squared & -0.05 & -0.06 & -0.04 & -0.05 & -0.05 & -0.04 & HS \\
\%65+ & -15.88 & -18.98 & -12.77 & -13.99 & -16.28 & -11.69 & HS \\
ln(5q0) & -1.53 & -1.68 & -1.37 & -1.41 & -1.51 & -1.31 & HS \\
year & -0.03 & -0.03 & -0.02 & -0.02 & -0.02 & -0.01 & HS \\
 \hline   \hline
\end{tabular}
\caption{
Posterior mean and credible intervals of the regression parameters  for
Model 2 (females).}
\label{tab:tabglobal4}
\end{table}

\clearpage

\begin{table}[ht]
\scriptsize
\renewcommand{\arraystretch}{1.4}
\begin{tabular}{rrrr|rrr|rrrrrrrr}\\
\hline    \hline
Prior for the errors &    & Gamma  &   &  &     Half-Cauchy (local) &  & &  &\\ \hline \hline
Parameter & Mean  & Lower CI  &  Upper CI &   Mean  & Lower
CI  &  Upper CI & Local-Prior \\
 Constant & 45.14 & 31.92 & 58.36 & 29.73 & 20.16 & 39.30  & Gamma\\
 RegCDR & 0.62 & 0.51 & 0.72 & 0.72 & 0.66 & 0.77  & Gamma \\
 RegCDR squared & -0.02 & -0.03 & -0.02 & -0.03 & -0.03 & -0.02  & Gamma \\
 \%65+ & -9.63 & -13.07 & -6.19 & -13.94 & -16.77 & -11.10  & Gamma \\
 ln(5q0) & -1.12 & -1.29 & -0.95 & -1.35 & -1.48 & -1.22  & Gamma \\
 C5q0 & 2.16 & 1.75 & 2.56 & 0.94 & 0.73 & 1.14  & Gamma \\
 year & -0.03 & -0.03 & -0.02 & -0.02 & -0.02 & -0.01  & Gamma \\
Constant & 47.08 & 34.23 & 59.92 & 29.50 & 19.91 & 39.10 & Student-t \\
RegCDR & 0.62 & 0.52 & 0.72 & 0.86 & 0.76 & 0.97 & Student-t \\
RegCDR squared & -0.02 & -0.03 & -0.02 & -0.03 & -0.04 & -0.03 & Student-t \\
\%65+ & -9.32 & -12.74 & -5.90 & -14.07 & -16.48 & -11.65 & Student-t \\
ln(5q0) & -1.14 & -1.30 & -0.97 & -1.39 & -1.51 & -1.26 & Student-t \\
C5q0 & 2.17 & 1.77 & 2.57 & 0.77 & 0.51 & 1.02 & Student-t \\
year & -0.03 & -0.03 & -0.02 & -0.02 & -0.02 & -0.01 & Student-t \\
Constant & 49.05 & 36.58 & 61.51 & 30.12 & 20.72 & 39.51 & LA \\
RegCDR & 0.62 & 0.51 & 0.72 & 0.86 & 0.76 & 0.97 & LA \\
RegCDR squared & -0.02 & -0.03 & -0.02 & -0.03 & -0.04 & -0.03 & LA \\
\%65+ & -9.11 & -12.55 & -5.68 & -14.01 & -16.38 & -11.64 & LA \\
ln(5q0) & -1.16 & -1.32 & -0.99 & -1.39 & -1.52 & -1.27 & LA \\
C5q0 & 2.16 & 1.75 & 2.57 & 0.76 & 0.50 & 1.01 & LA \\
year & -0.03 & -0.03 & -0.02 & -0.02 & -0.02 & -0.01 & LA \\
Constant & 55.19 & 44.12 & 66.26 & 30.11 & 20.20 & 40.02 & HS \\
RegCDR & 0.61 & 0.51 & 0.72 & 0.87 & 0.76 & 0.99 & HS \\
RegCDR squared & -0.02 & -0.03 & -0.02 & -0.03 & -0.04 & -0.03 & HS \\
\%65+ & -8.42 & -11.81 & -5.03 & -14.01 & -16.46 & -11.56 & HS \\
ln(5q0) & -1.22 & -1.38 & -1.06 & -1.40 & -1.53 & -1.27 & HS \\
C5q0 & 2.16 & 1.75 & 2.56 & 0.71 & 0.44 & 0.99 & HS \\
year & -0.03 & -0.04 & -0.02 & -0.02 & -0.02 & -0.01 & HS \\
 \hline   \hline
\end{tabular}
\caption{
Posterior mean and credible intervals of the regression parameters  for
Model 1 (males).}
\label{tab:tabglobal5}
\end{table}

\clearpage

\begin{table}[ht]
\scriptsize
\renewcommand{\arraystretch}{1.4}
\begin{tabular}{rrrr|rrr|rrrrrrrr}\\
\hline    \hline
Prior for the errors &    & Gamma  &   &  &     Half-Cauchy (local) &  & &  &\\ \hline \hline
Parameter & Mean  & Lower CI  &  Upper CI &   Mean  & Lower
CI  &  Upper CI & Local-Prior \\
 Constant & 57.43 & 44.32 & 70.53 & 26.80 & 18.69 & 34.92 & Gamma \\
 RegCDR & 0.89 & 0.79 & 0.98 & 1.13 & 1.07 & 1.19 & Gamma\\
 RegCDR squared & -0.03 & -0.04 & -0.03 & -0.05 & -0.05 & -0.04 &  Gamma \\
 \%65+ & -13.33 & -16.74 & -9.92 & -14.60 & -16.93 & -12.27 &  Gamma\\
 ln(5q0) & -1.45 & -1.61 & -1.29 & -1.41 & -1.52 & -1.30 &  Gamma \\
 year & -0.03 & -0.04 & -0.03 & -0.02 & -0.02 & -0.01 & Gamma \\
Constant & 59.57 & 46.73 & 72.41 & 27.45 & 19.45 & 35.44 & Student-t \\
RegCDR & 0.89 & 0.80 & 0.98 & 1.13 & 1.07 & 1.19 & Student-t \\
RegCDR squared & -0.03 & -0.04 & -0.03 & -0.05 & -0.05 & -0.04 & Student-t \\
\%65+ & -12.86 & -16.27 & -9.45 & -14.19 & -16.46 & -11.92 & Student-t \\
ln(5q0) & -1.47 & -1.63 & -1.31 & -1.42 & -1.53 & -1.31 & Student-t \\
year & -0.03 & -0.04 & -0.03 & -0.02 & -0.02 & -0.01 & Student-t \\
Constant & 61.10 & 48.65 & 73.55 & 27.42 & 19.49 & 35.34 & LA \\
RegCDR & 0.89 & 0.80 & 0.99 & 1.13 & 1.07 & 1.19 & LA \\
RegCDR squared & -0.03 & -0.04 & -0.03 & -0.05 & -0.05 & -0.04 & LA \\
\%65+ & -12.69 & -16.06 & -9.33 & -14.17 & -16.47 & -11.88 & LA \\
ln(5q0) & -1.49 & -1.64 & -1.33 & -1.42 & -1.53 & -1.31 & LA \\
year & -0.03 & -0.04 & -0.03 & -0.02 & -0.02 & -0.01 & LA \\
Constant & 64.99 & 53.71 & 76.27 & 27.07 & 19.96 & 34.17 & HS \\
RegCDR & 0.91 & 0.82 & 1.00 & 1.13 & 1.07 & 1.18 & HS \\
RegCDR squared & -0.03 & -0.04 & -0.03 & -0.05 & -0.05 & -0.04 & HS \\
\%65+ & -11.73 & -14.96 & -8.51 & -13.99 & -16.28 & -11.69 & HS \\
ln(5q0) & -1.52 & -1.67 & -1.38 & -1.41 & -1.51 & -1.31 & HS \\
year & -0.04 & -0.04 & -0.03 & -0.02 & -0.02 & -0.01 & HS \\
 \hline   \hline
\end{tabular}
\caption{
Posterior mean and credible intervals of the regression parameters  for
Model 2 (males).}
\label{tab:tabglobal6}
\end{table}

\clearpage

\section{Posterior densities and additional results of the regression parameters  for the models in Section \ref{Results}}
\label{sup_post2}

\enlargethispage{10cm}
This section contains the posterior densities of the regression parameters  of the models in Section \ref{Results}.  Figures \ref{fig:fig_both1}-\ref{fig:fig_both2} and
\ref{fig:fig_both3}-\ref{fig:fig_both4} consider models 1 and 2 respectively for both sexes, Figures \ref{fig:fig_both5}-\ref{fig:fig_both6} and  \ref{fig:fig_both7}-\ref{fig:fig_both8} display the results for models 1 and 2 respectively for females and  Figures \ref{fig:fig_both9}-\ref{fig:fig_both10} and \ref{fig:fig_both11}-\ref{fig:fig_both12} illustrate the results for models 1 and 2 respectively for males.

\begin{figure}[ht]

\begin{center}
\begin{tabular}{ccc}
\includegraphics[width=0.3\textwidth]{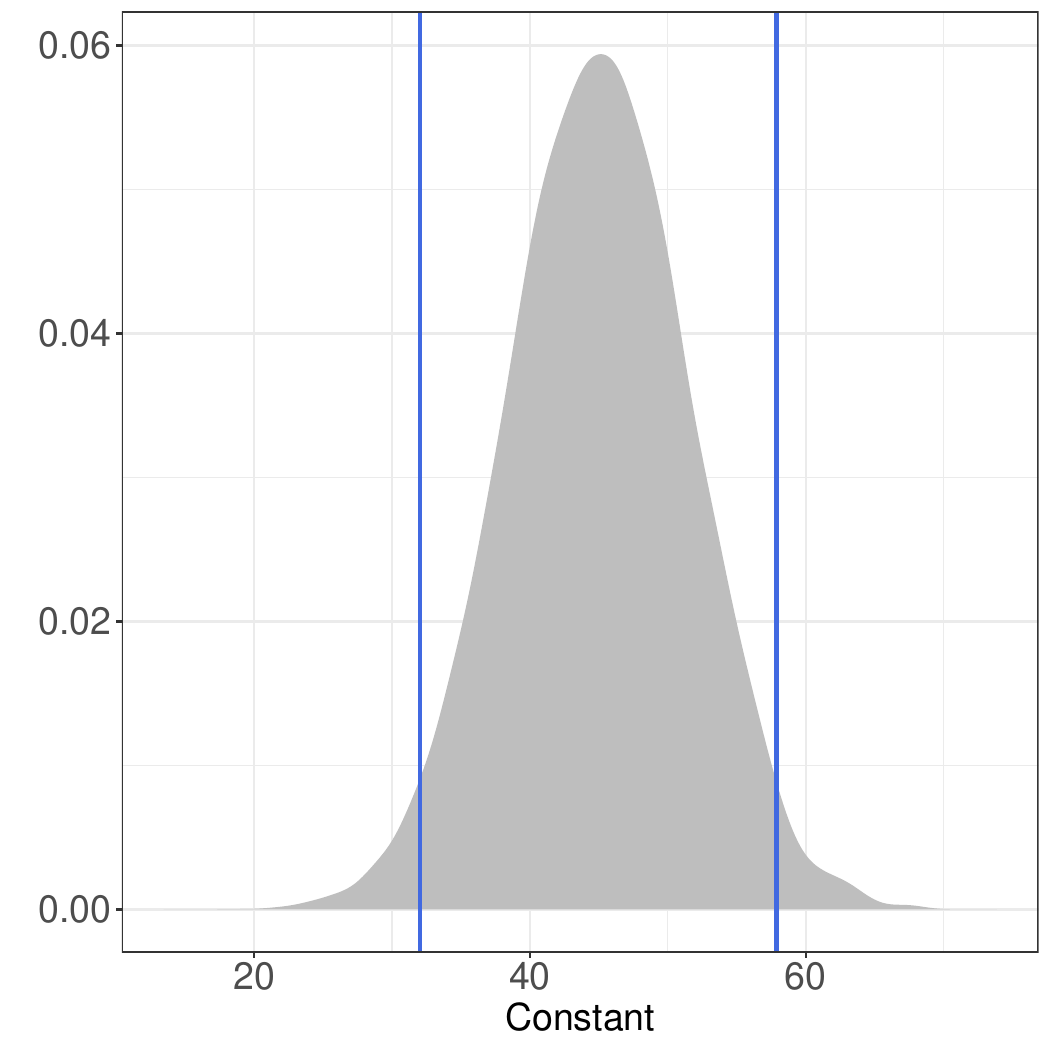} &
\includegraphics[width=0.3\textwidth]{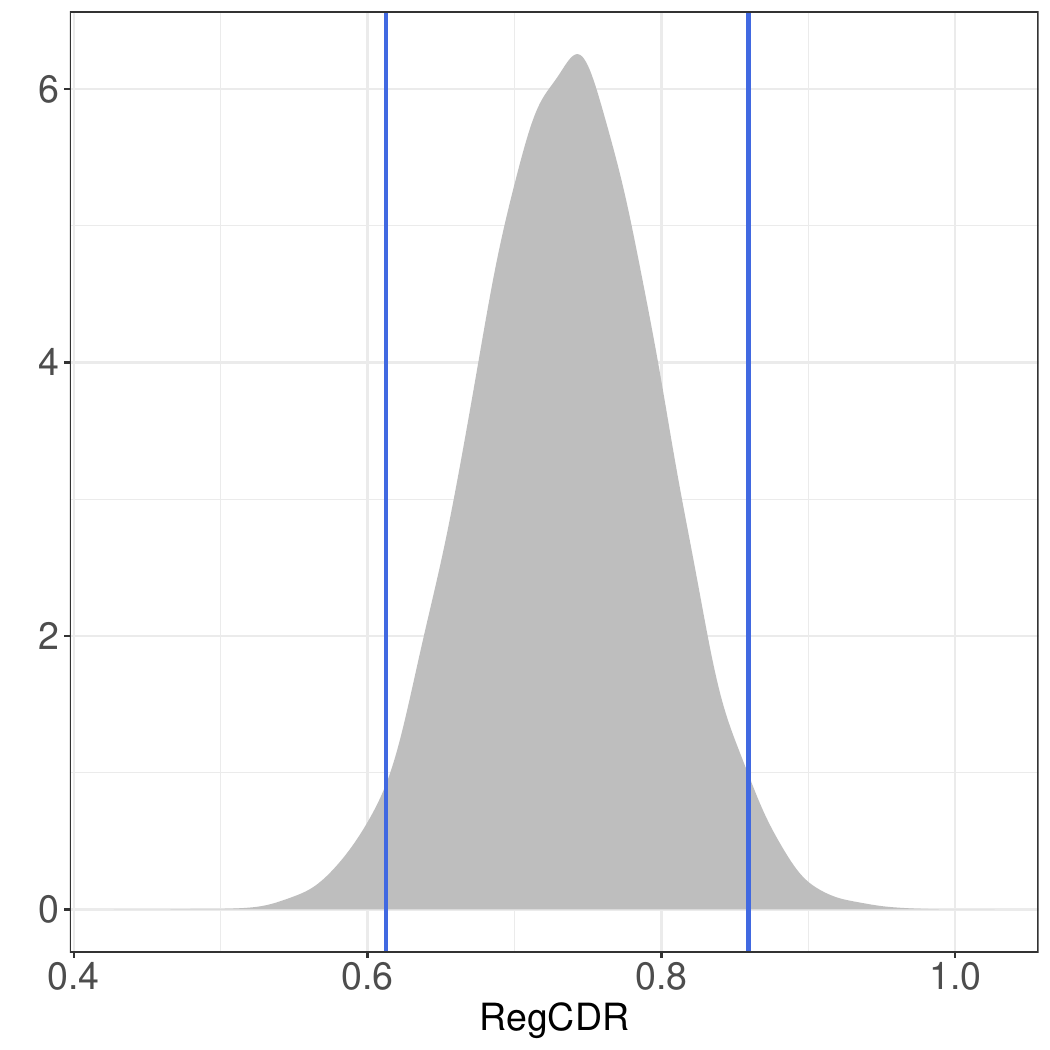}\\
\includegraphics[width=0.3\textwidth]{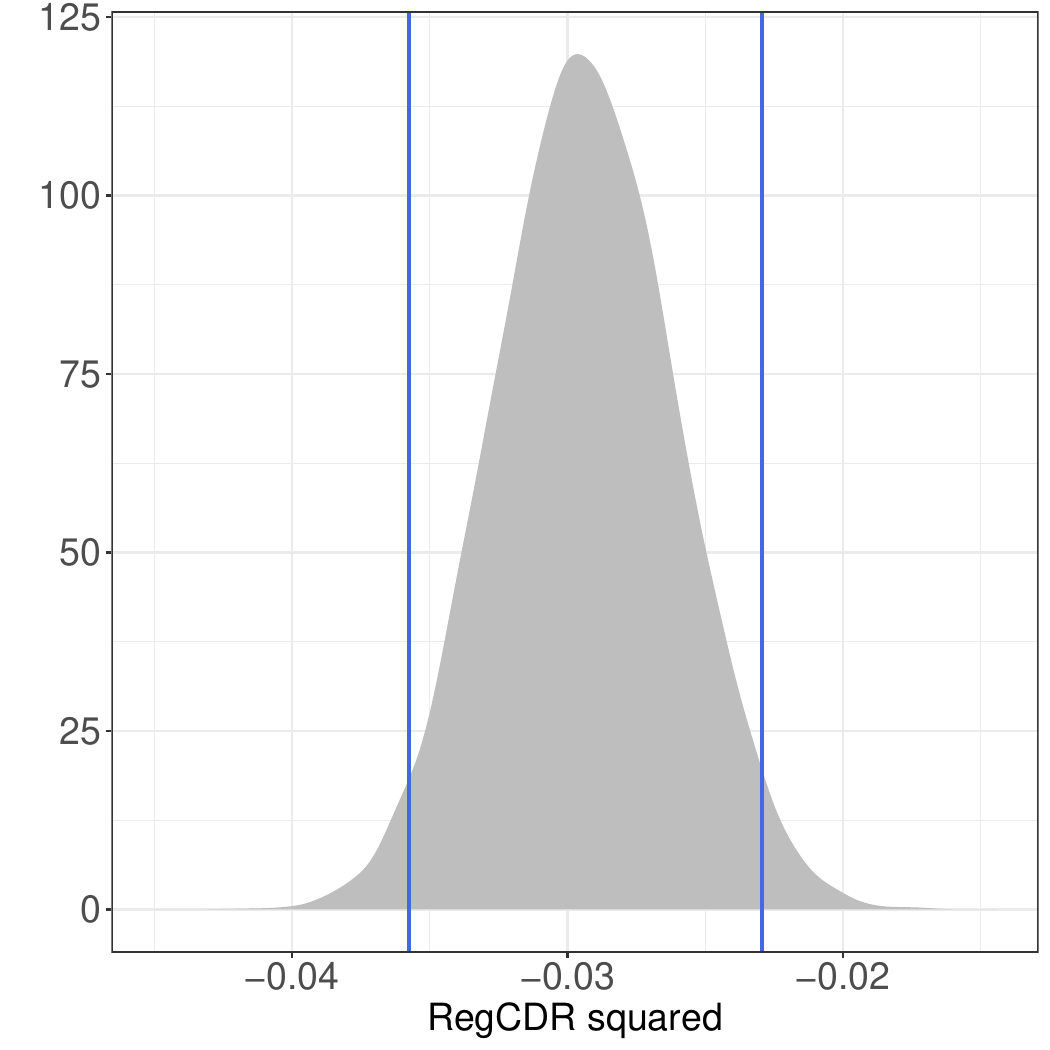} &
\includegraphics[width=0.3\textwidth]{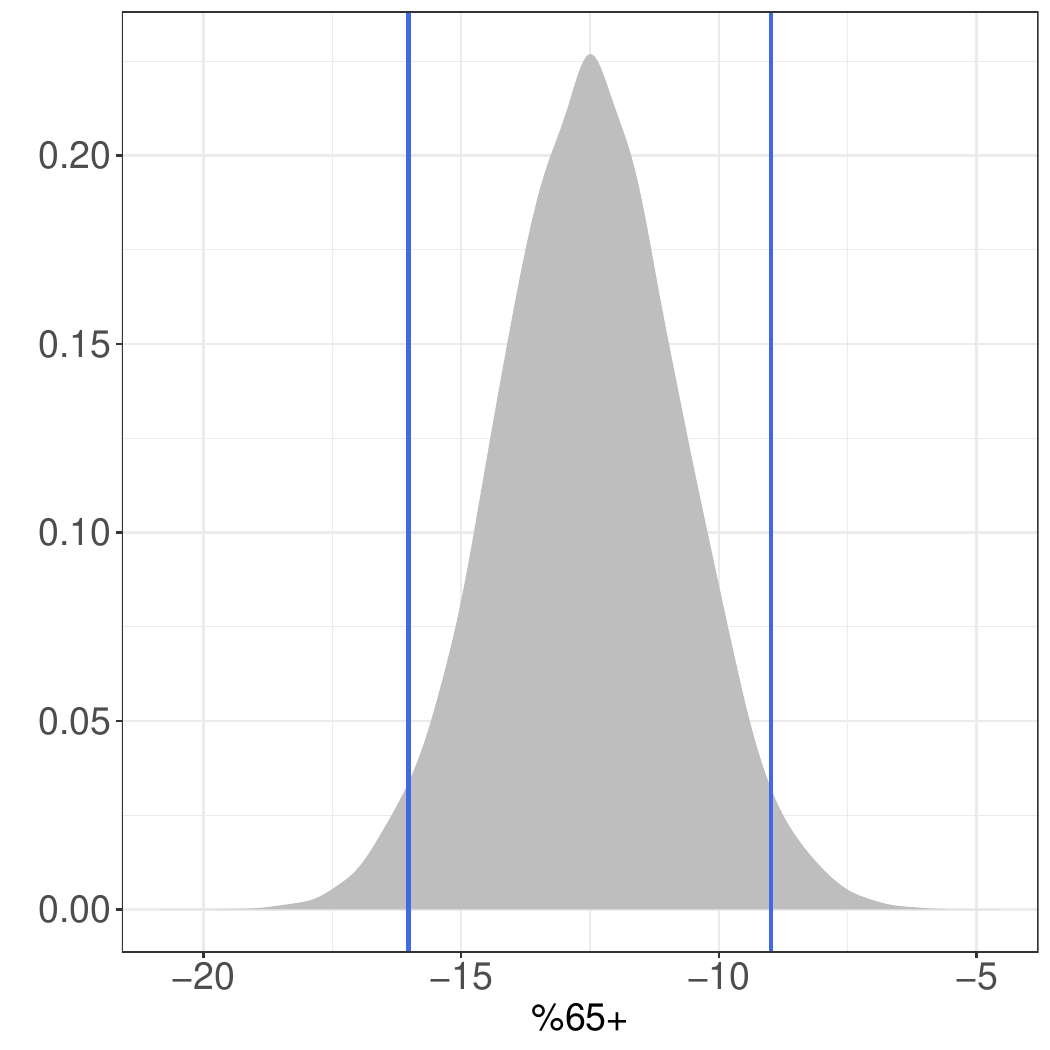}\\
\includegraphics[width=0.3\textwidth]{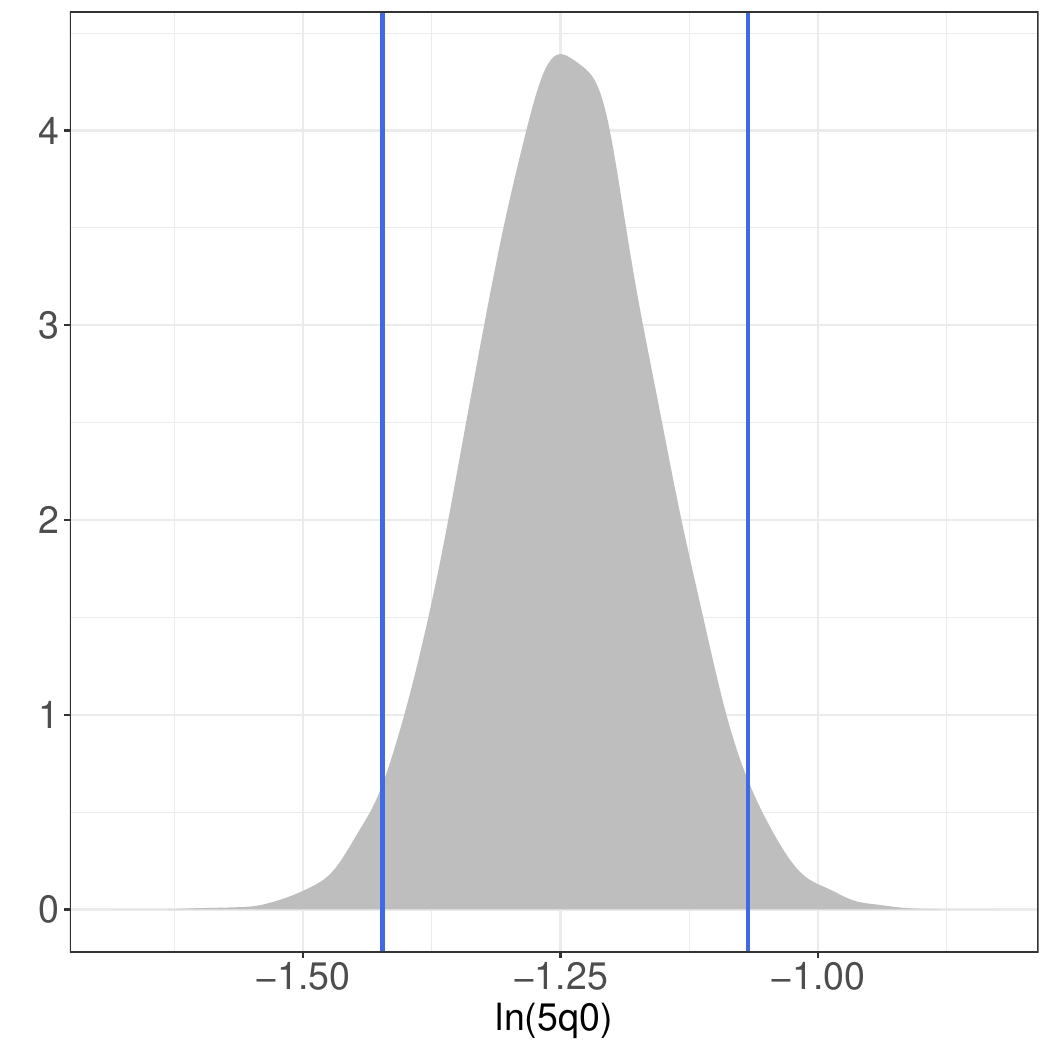} &
\includegraphics[width=0.3\textwidth]{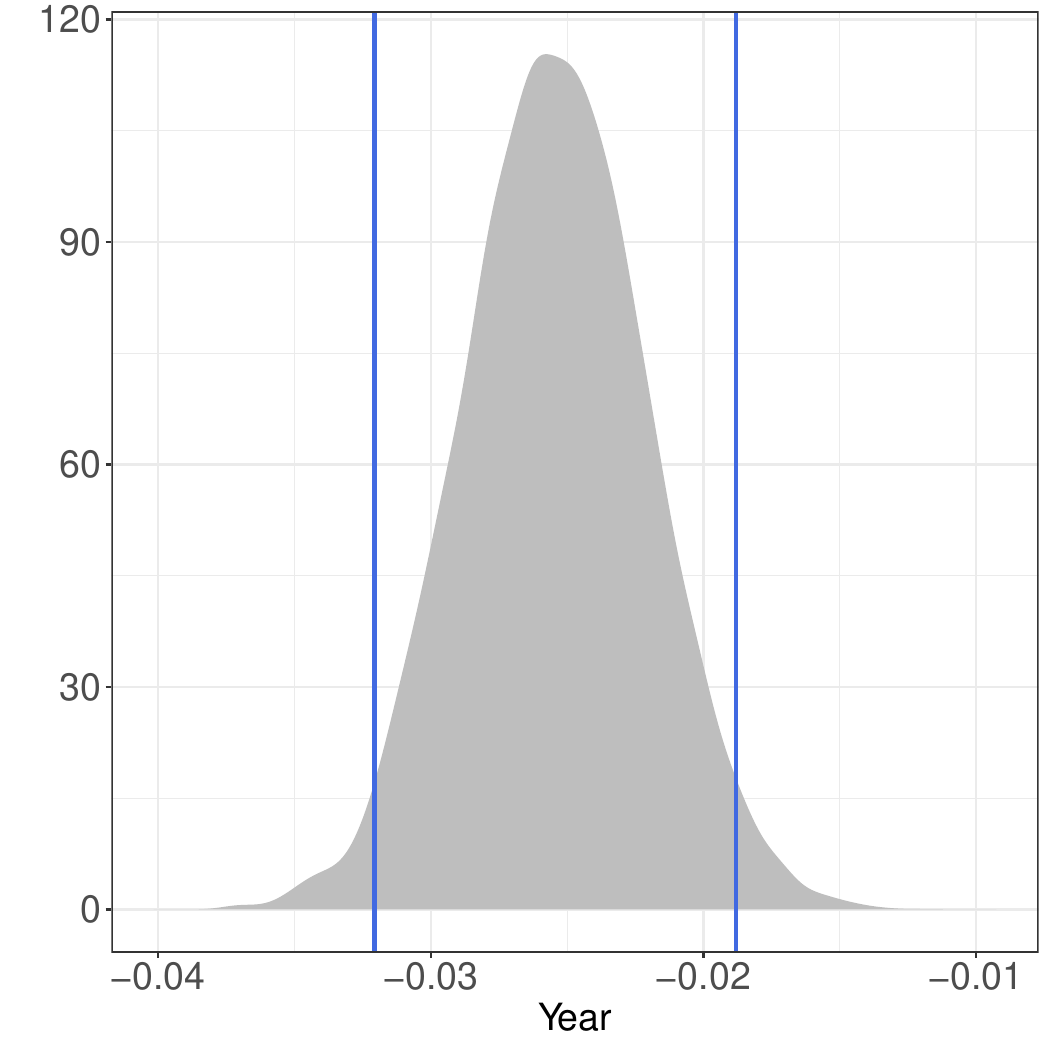}
\end{tabular}
\vspace{-0.5cm}
\caption{
Densities with credible intervals (blue lines).  The  models are based on an  dataset updated to 2019, which uses GBD death estimates based on the GBD 2019 and  comprises 120 countries and 2,748 country-years from 1970-2019  \citep{collaborators2020global}, both sexes and model 1 using a common Gamma prior
for the scale of the errors.}
\label{fig:fig_both1}
\end{center}
\end{figure}

\clearpage

\begin{figure}[ht]
\begin{center}
\begin{tabular}{ccc}
\includegraphics[width=0.4\textwidth]{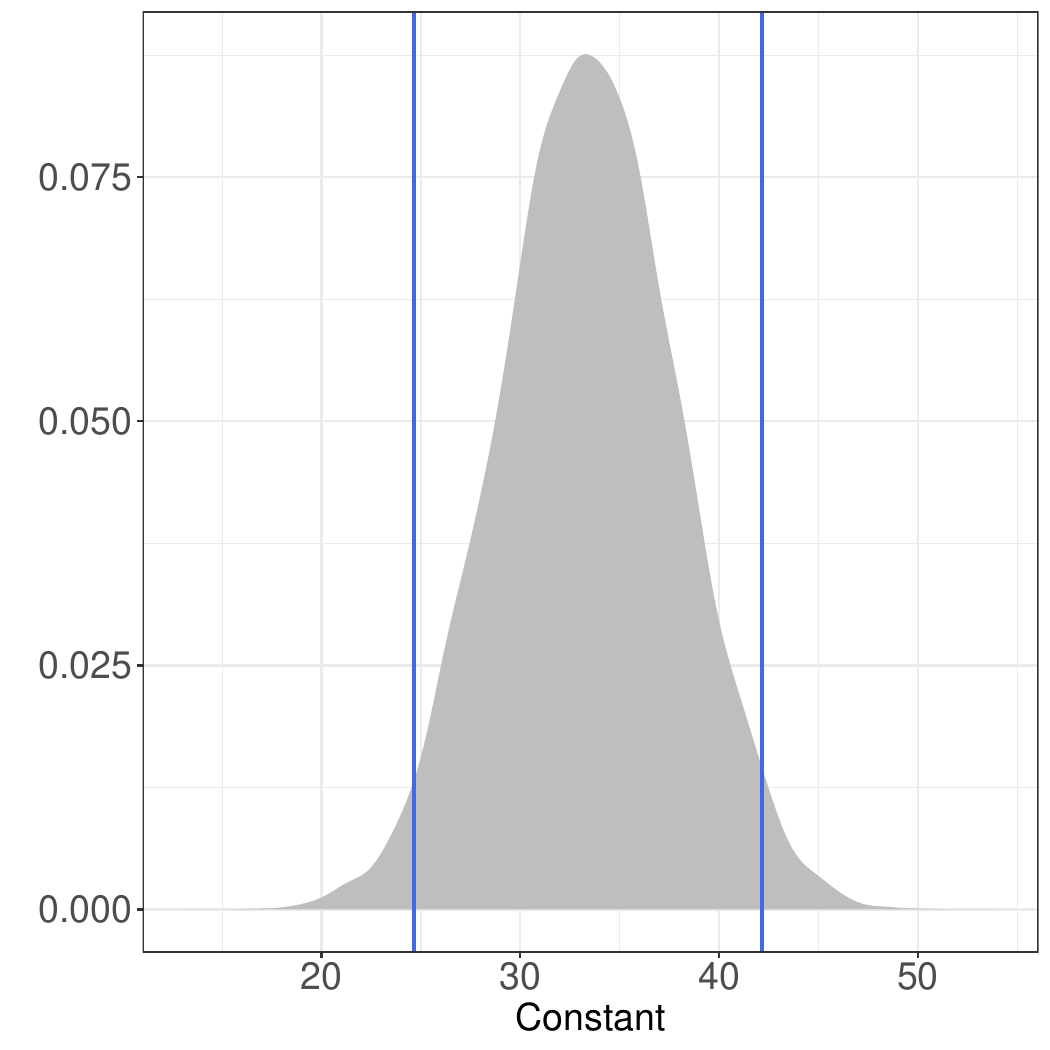} &
\includegraphics[width=0.4\textwidth]{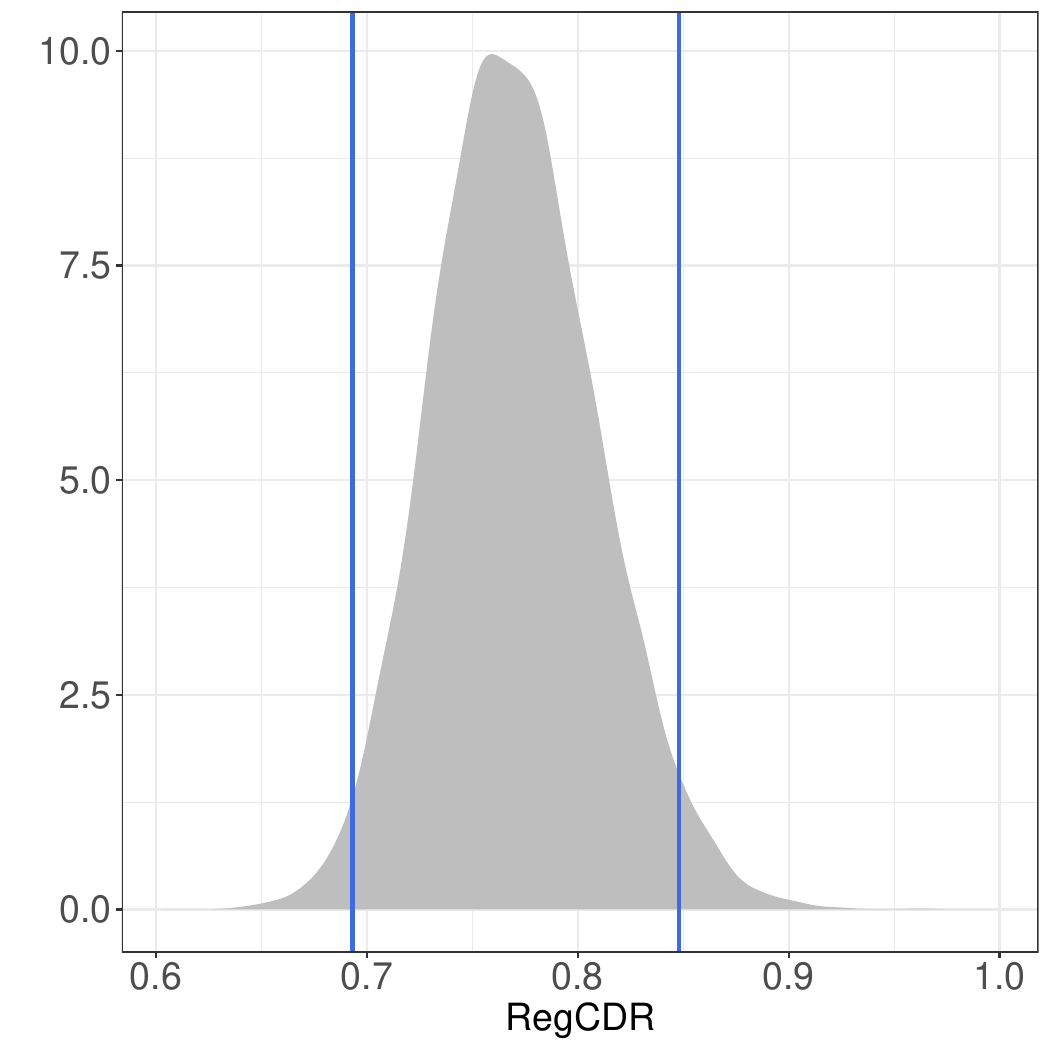}\\
\includegraphics[width=0.4\textwidth]{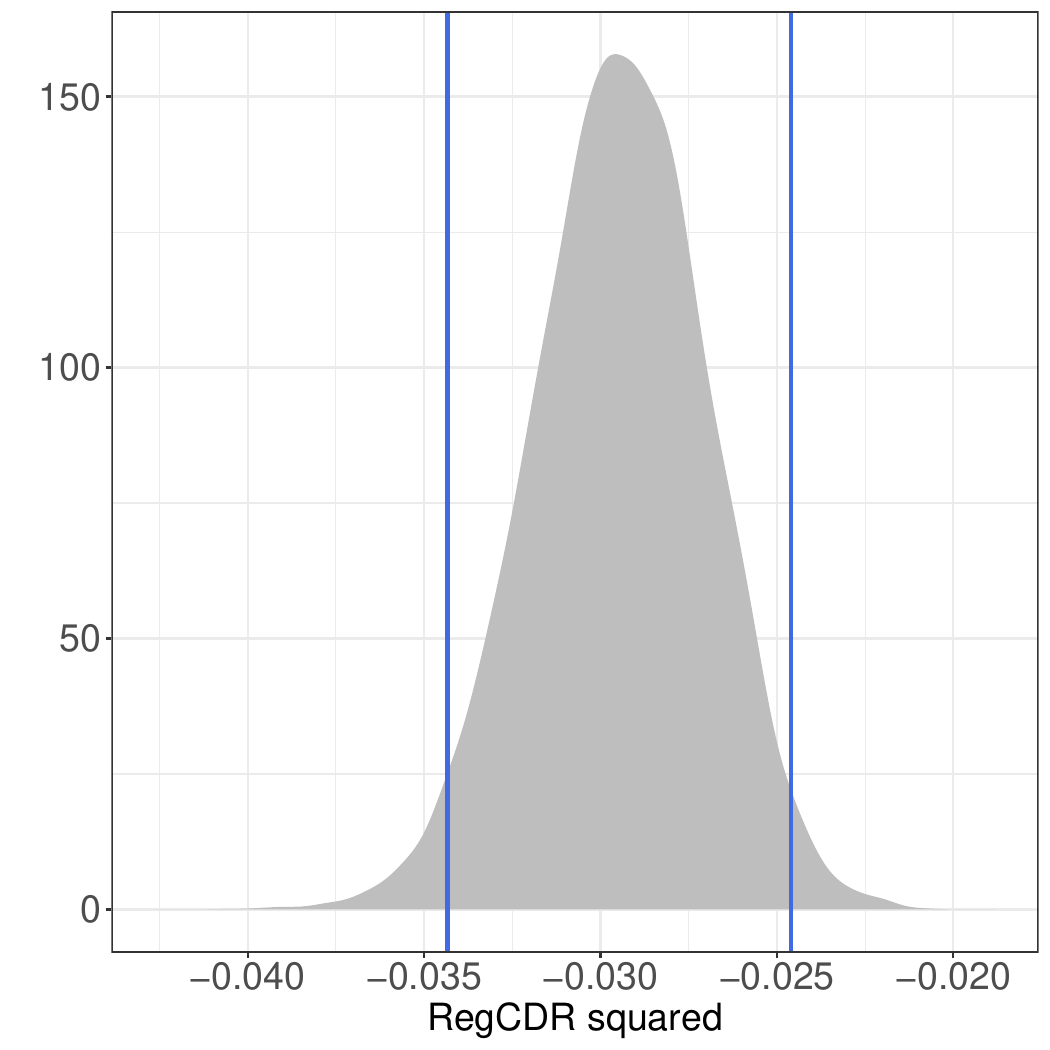} &
\includegraphics[width=0.4\textwidth]{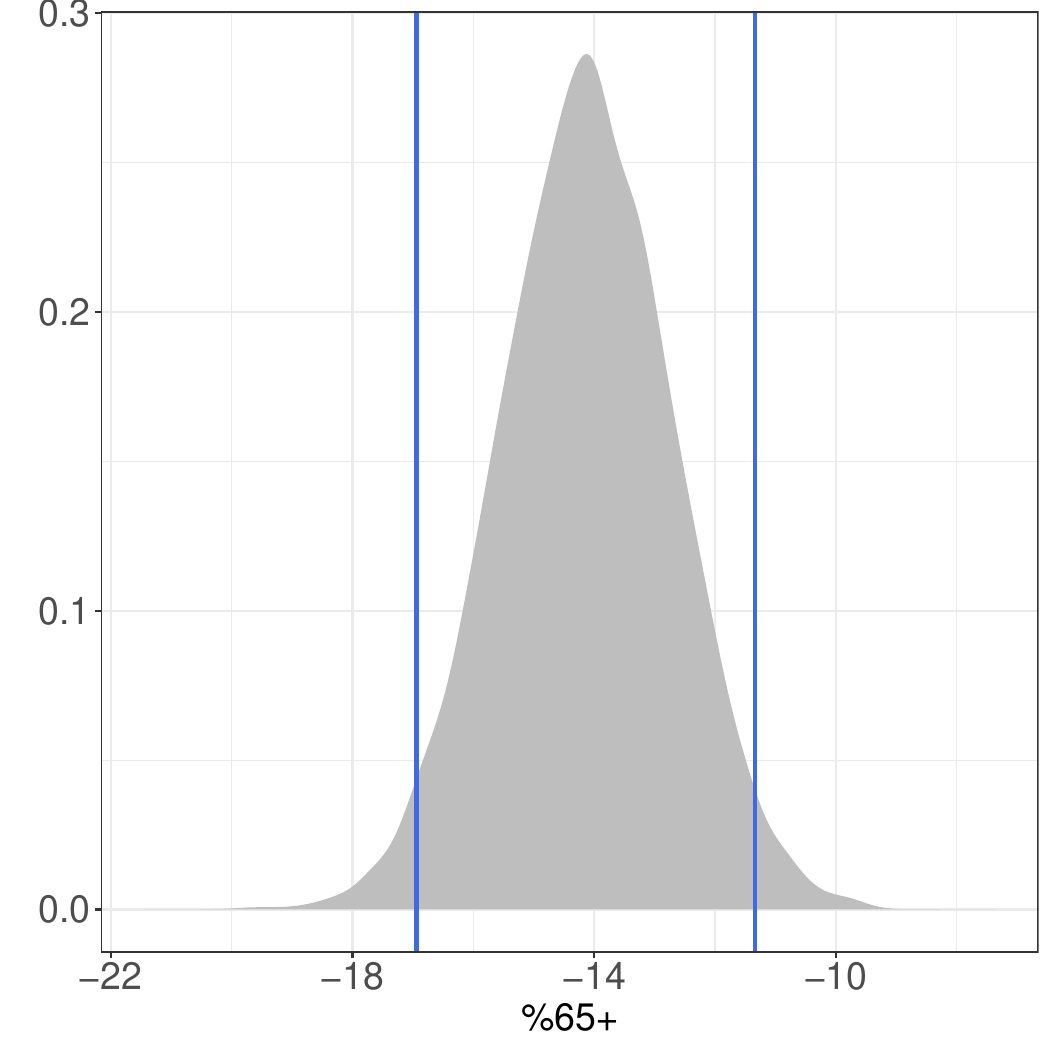}\\
\includegraphics[width=0.4\textwidth]{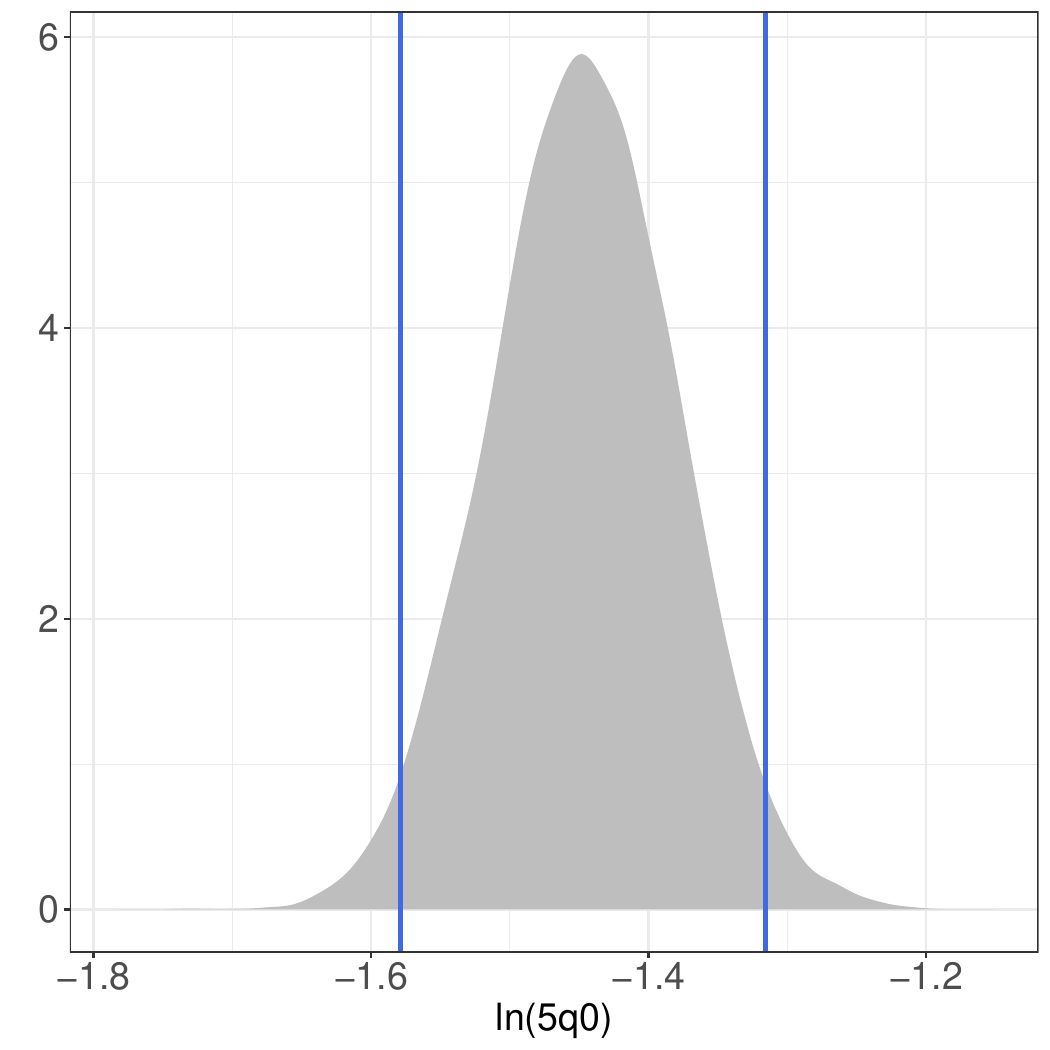} &
\includegraphics[width=0.4\textwidth]{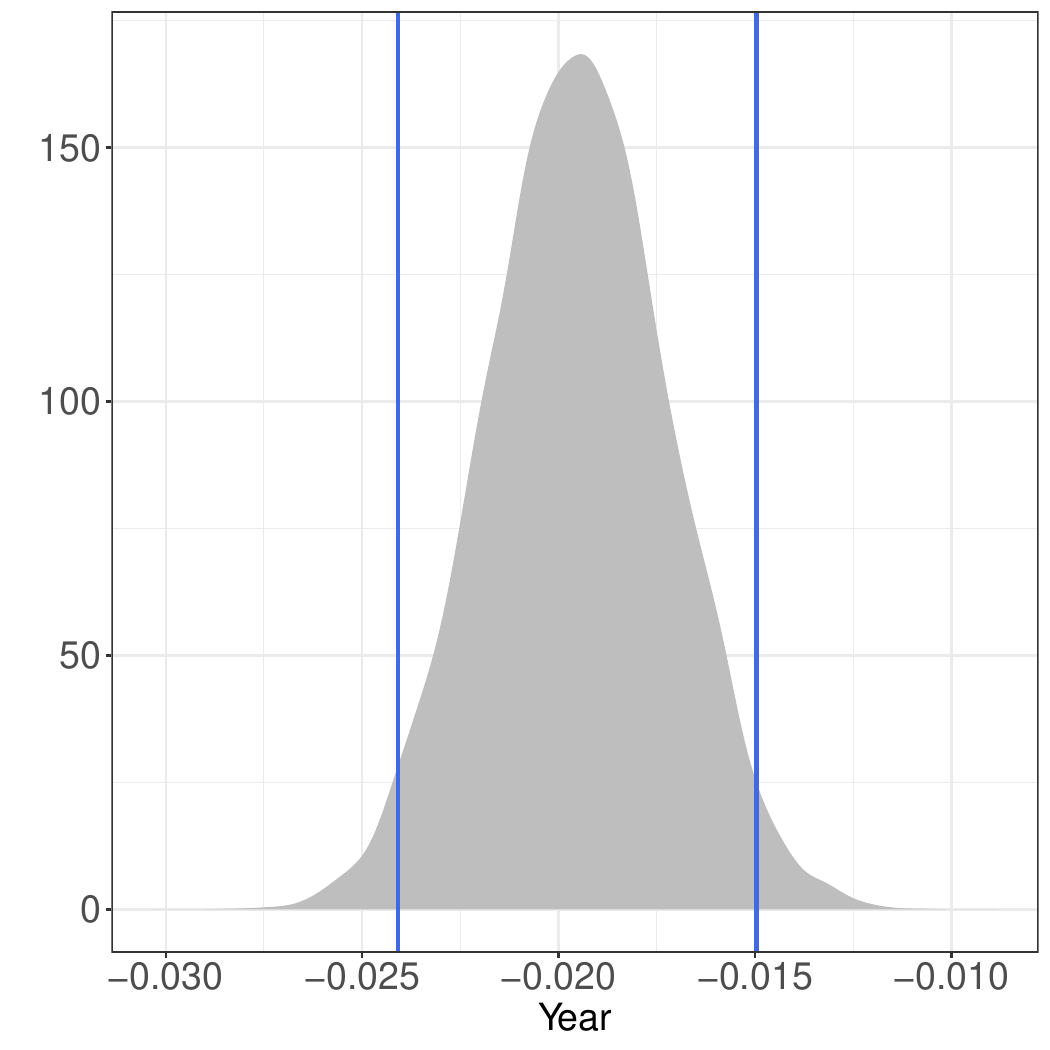}
\end{tabular}
\end{center}
\vspace{-0.5cm}
\caption{
Densities with credible intervals (blue lines). The  models are based on an  dataset updated to 2019, which uses GBD death estimates based on the GBD 2019 and  comprises 120 countries and 2,748 country-years from 1970-2019  \citep{collaborators2020global}, both sexes and model 1 using a Half-Cauchy prior
for the local scale of the errors.}
\label{fig:fig_both2}
\end{figure}

\clearpage

\begin{figure}[ht]

\begin{center}
\begin{tabular}{ccc}
\includegraphics[width=0.4\textwidth]{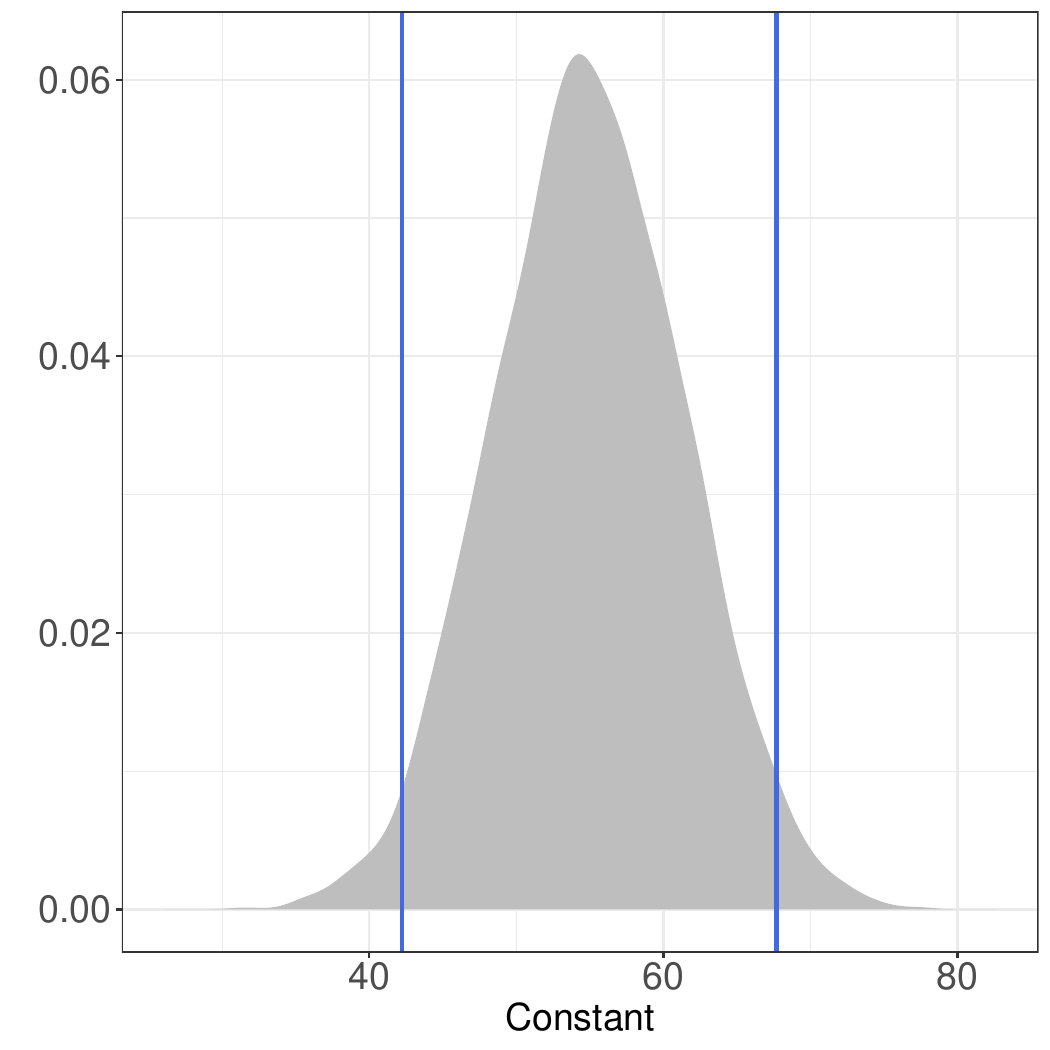} &
\includegraphics[width=0.4\textwidth]{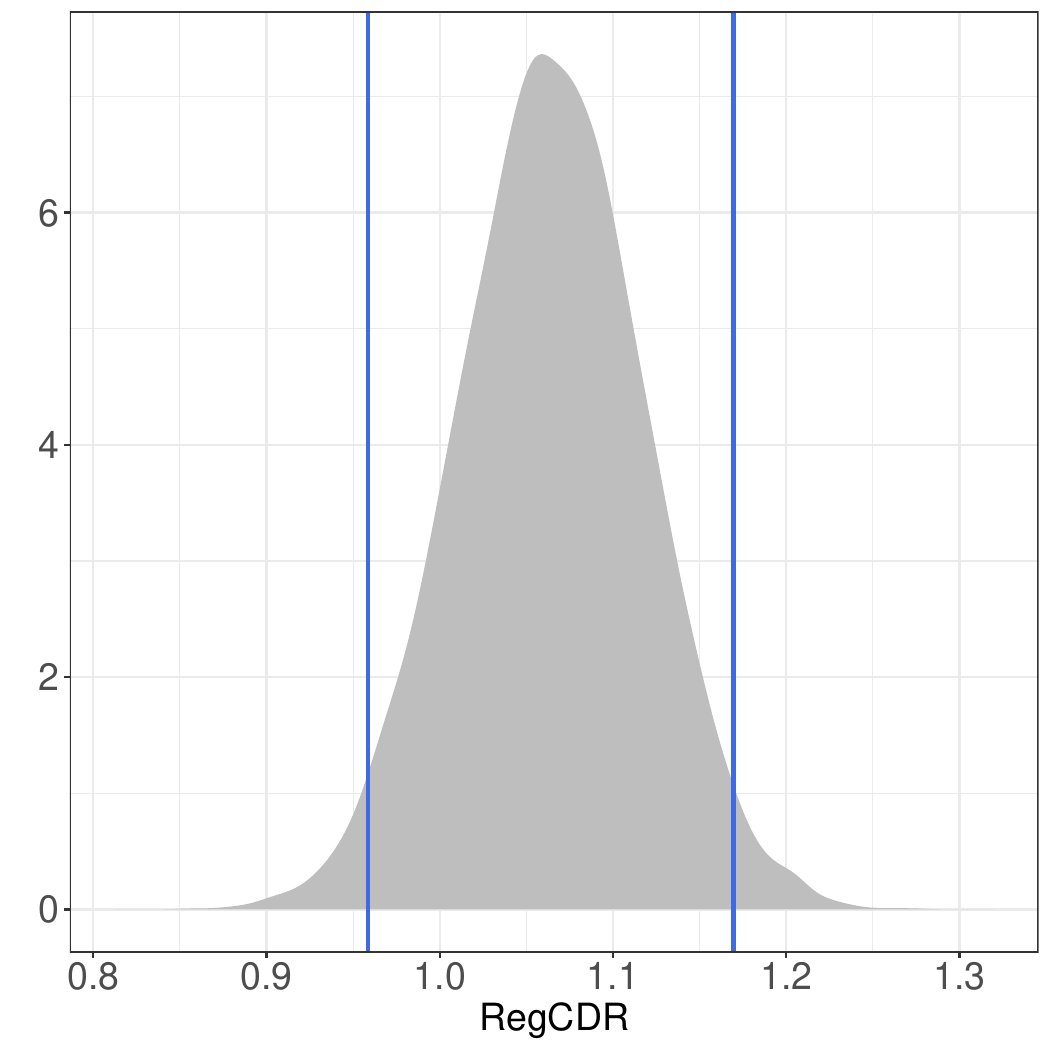}\\
\includegraphics[width=0.4\textwidth]{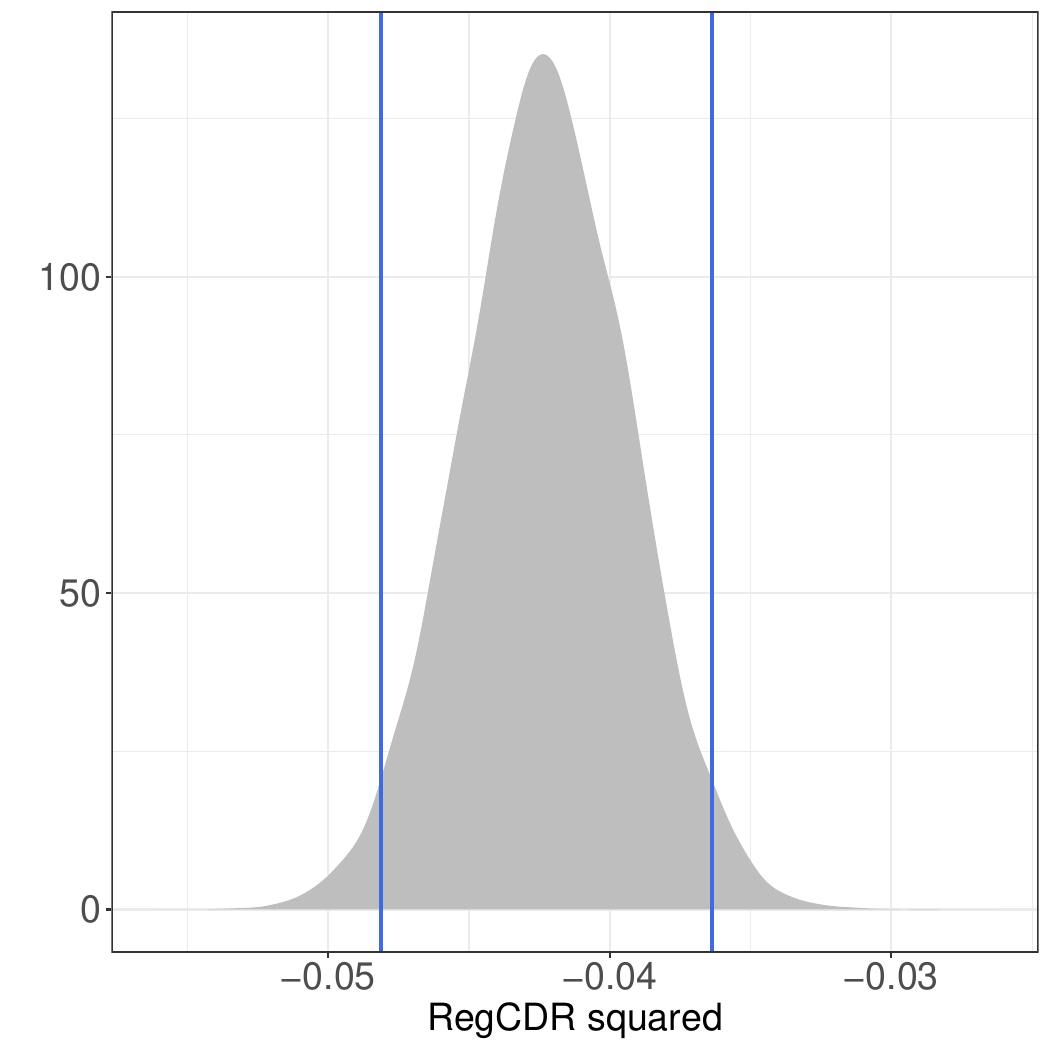} &
\includegraphics[width=0.4\textwidth]{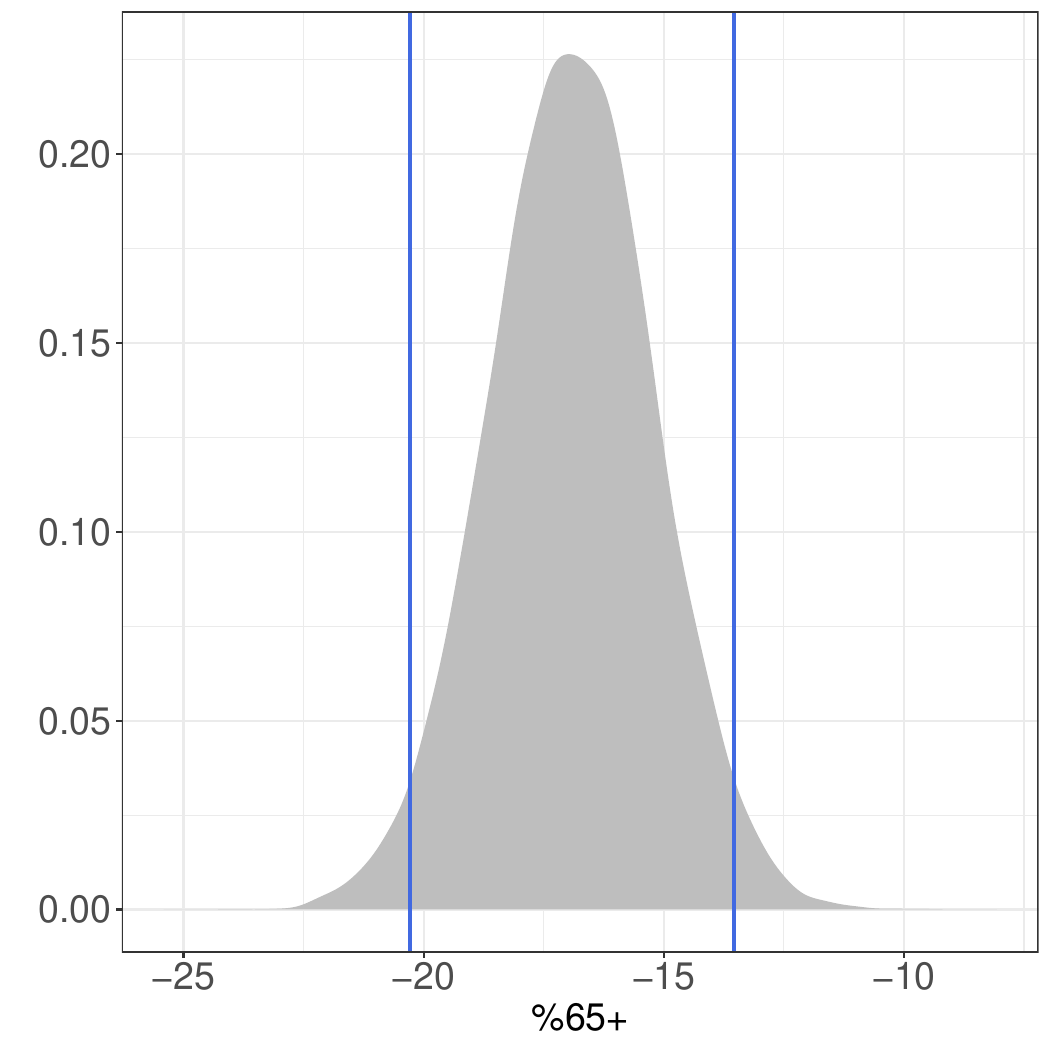}\\
\includegraphics[width=0.4\textwidth]{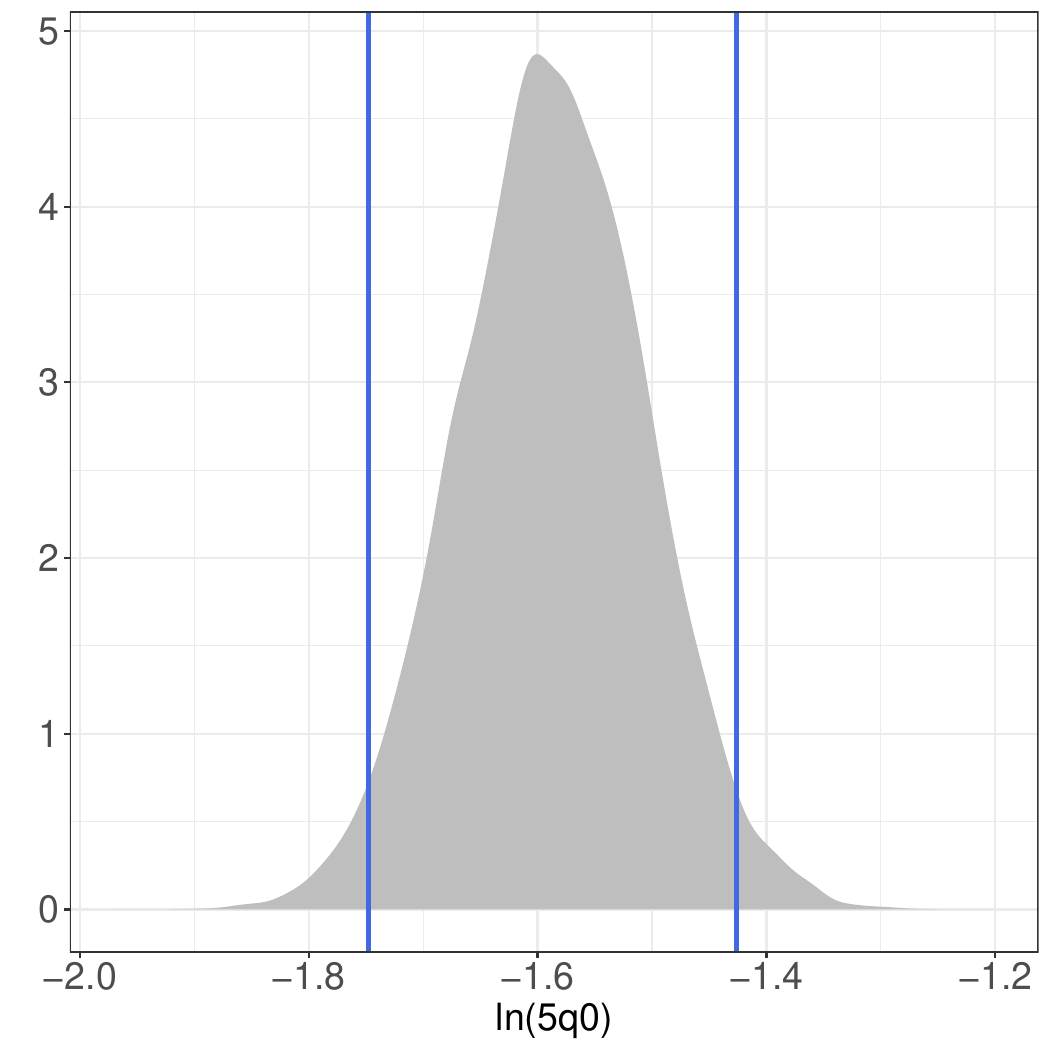} &
\includegraphics[width=0.4\textwidth]{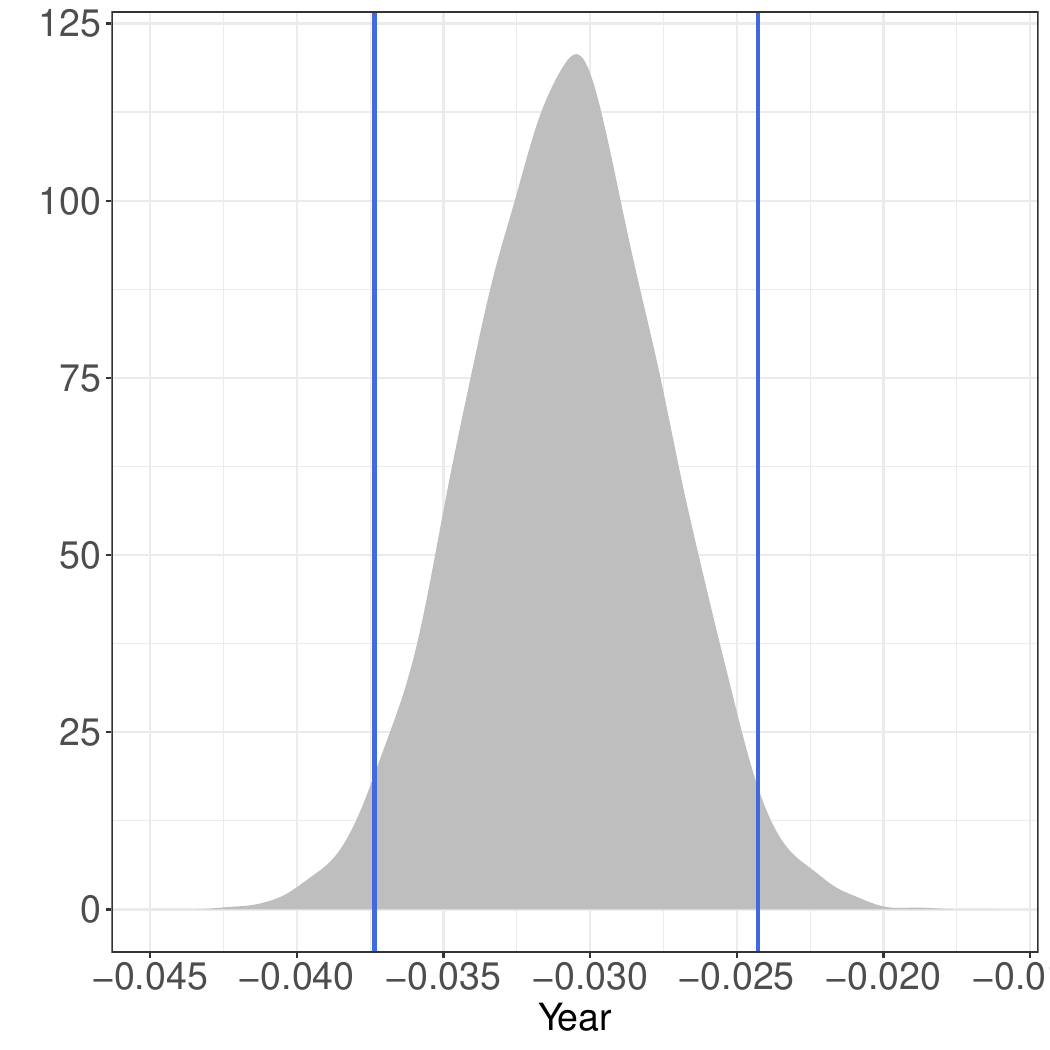}
\end{tabular}
\end{center}
\vspace{-0.5cm}
\caption{Densities with credible intervals (blue lines).  The  models are based on an  dataset updated to 2019, which uses GBD death estimates based on the GBD 2019 and  comprises 120 countries and 2,748 country-years from 1970-2019  \citep{collaborators2020global}, both sexes and model 2 using a common Gamma prior
for the scale of the errors.}
\label{fig:fig_both3}
\end{figure}

\begin{figure}[ht]
\begin{center}
\begin{tabular}{ccc}
\includegraphics[width=0.4\textwidth]{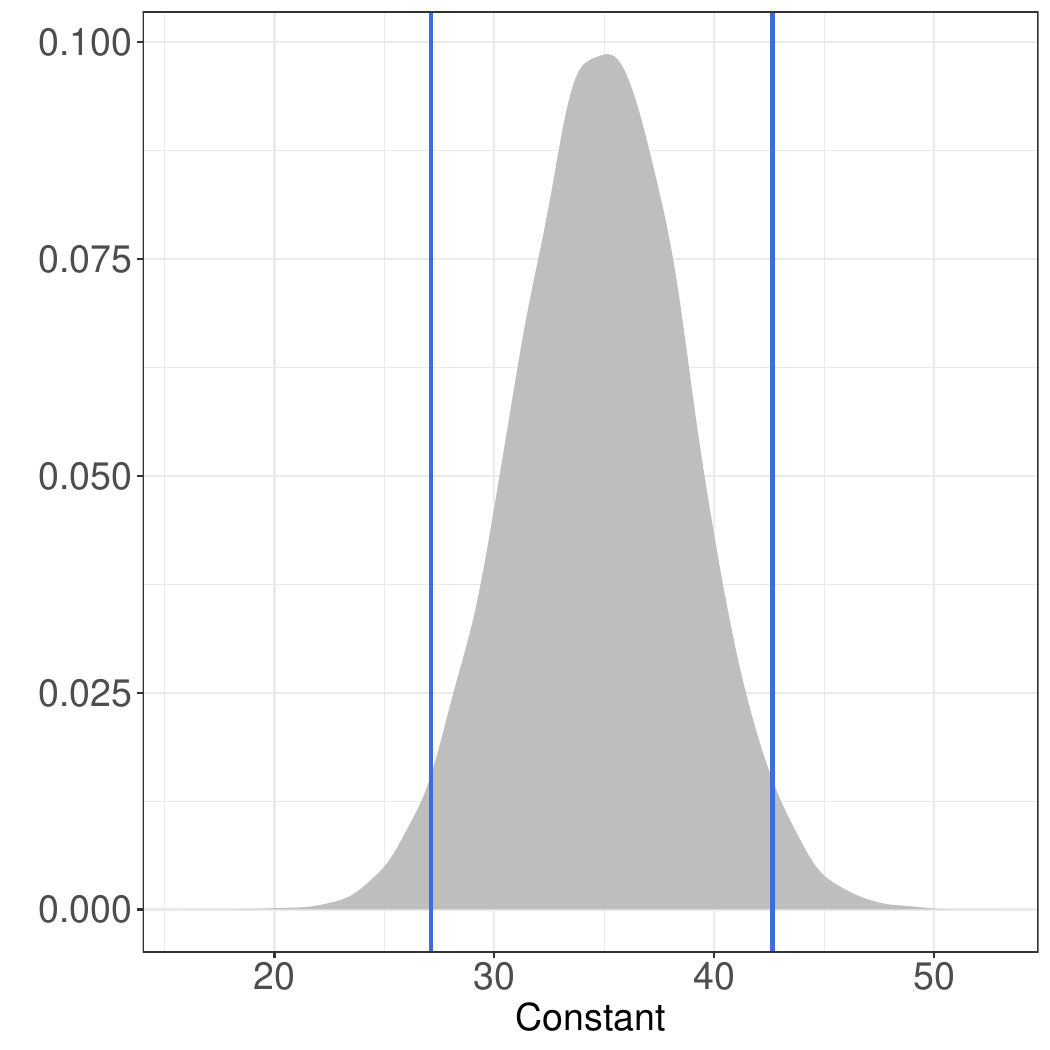} &
\includegraphics[width=0.4\textwidth]{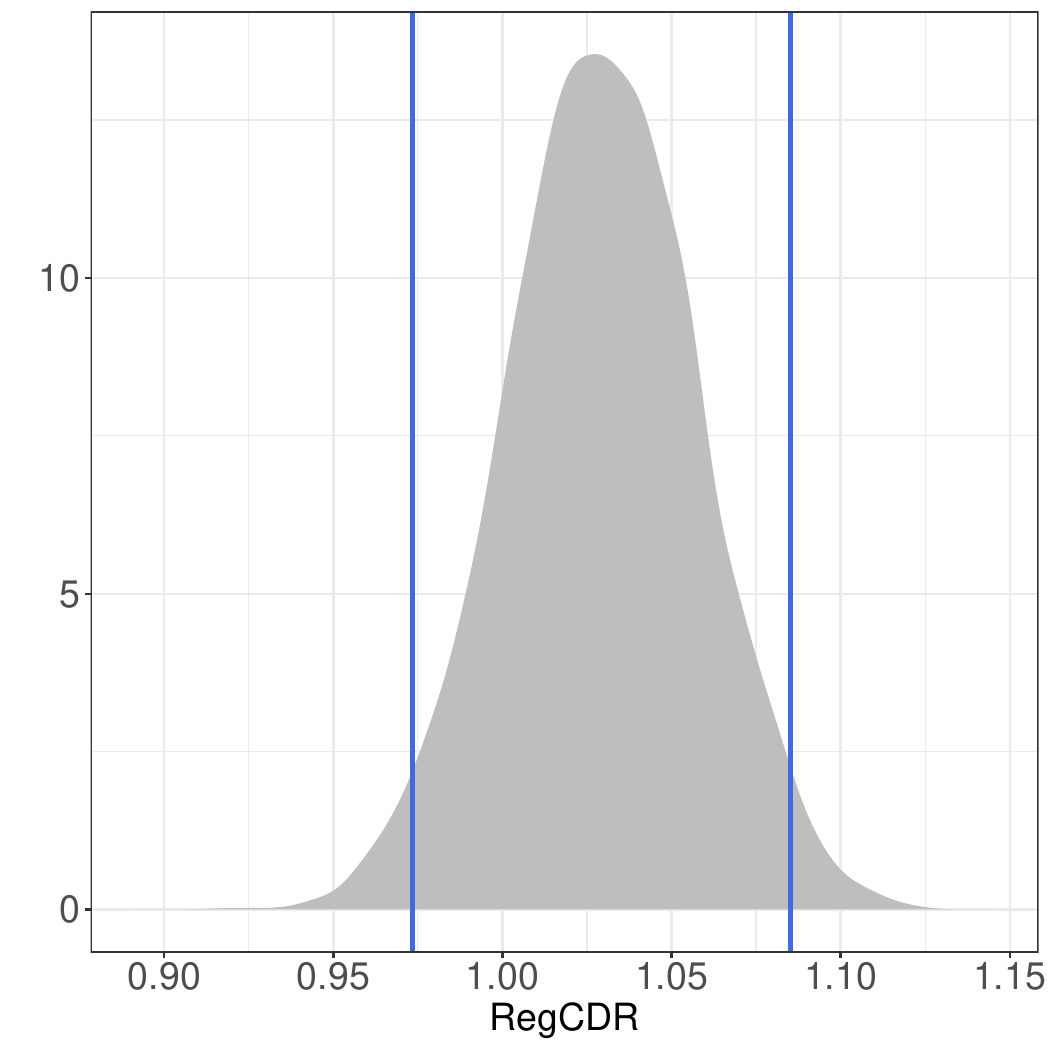}\\
\includegraphics[width=0.4\textwidth]{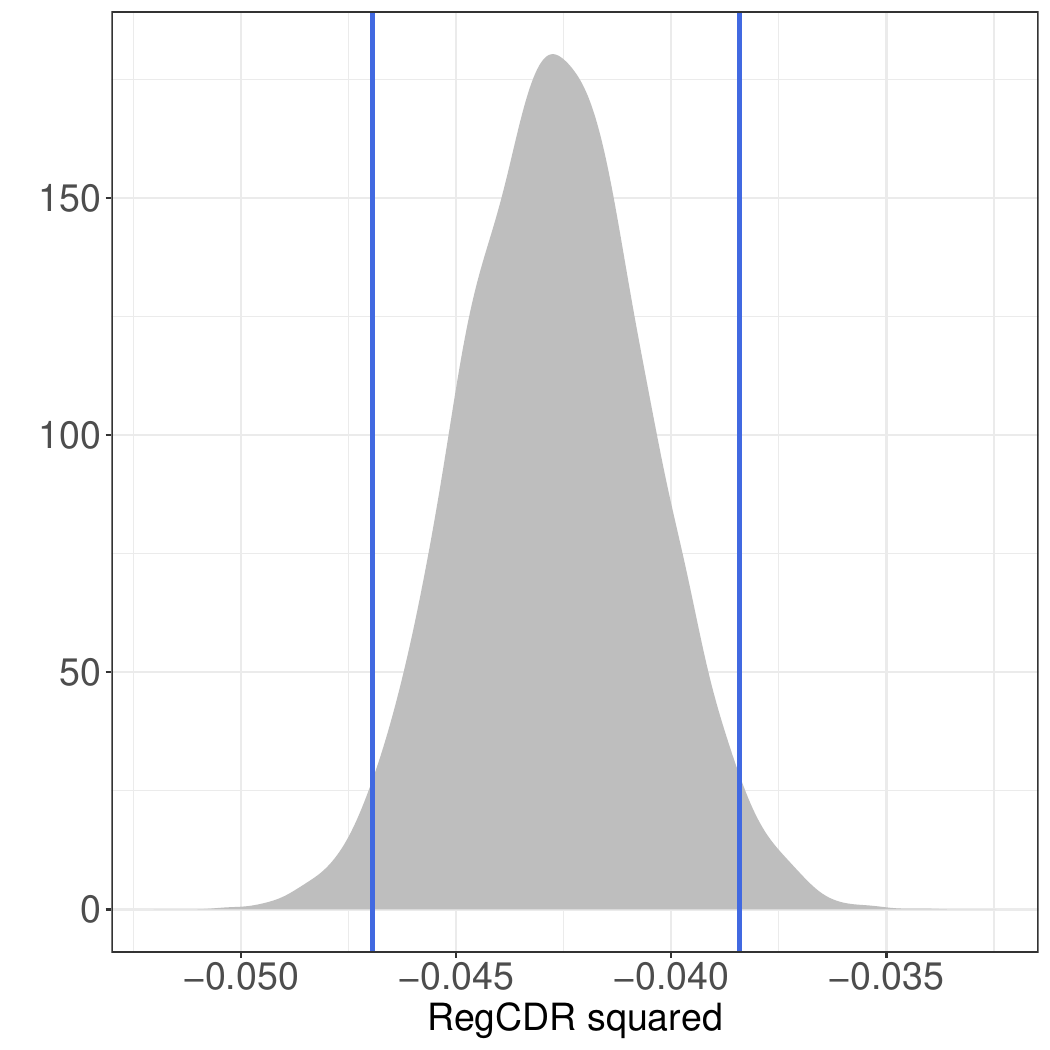} &
\includegraphics[width=0.4\textwidth]{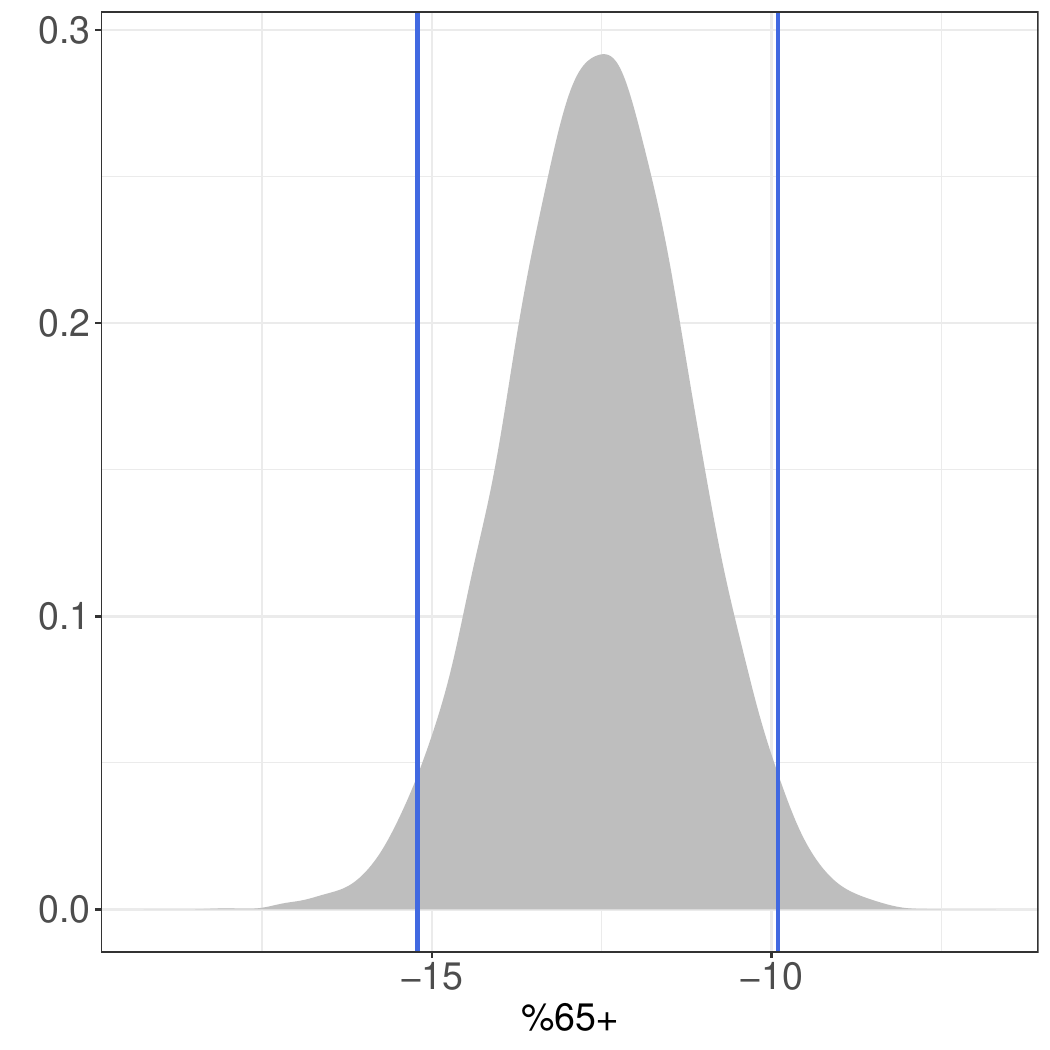}\\
\includegraphics[width=0.4\textwidth]{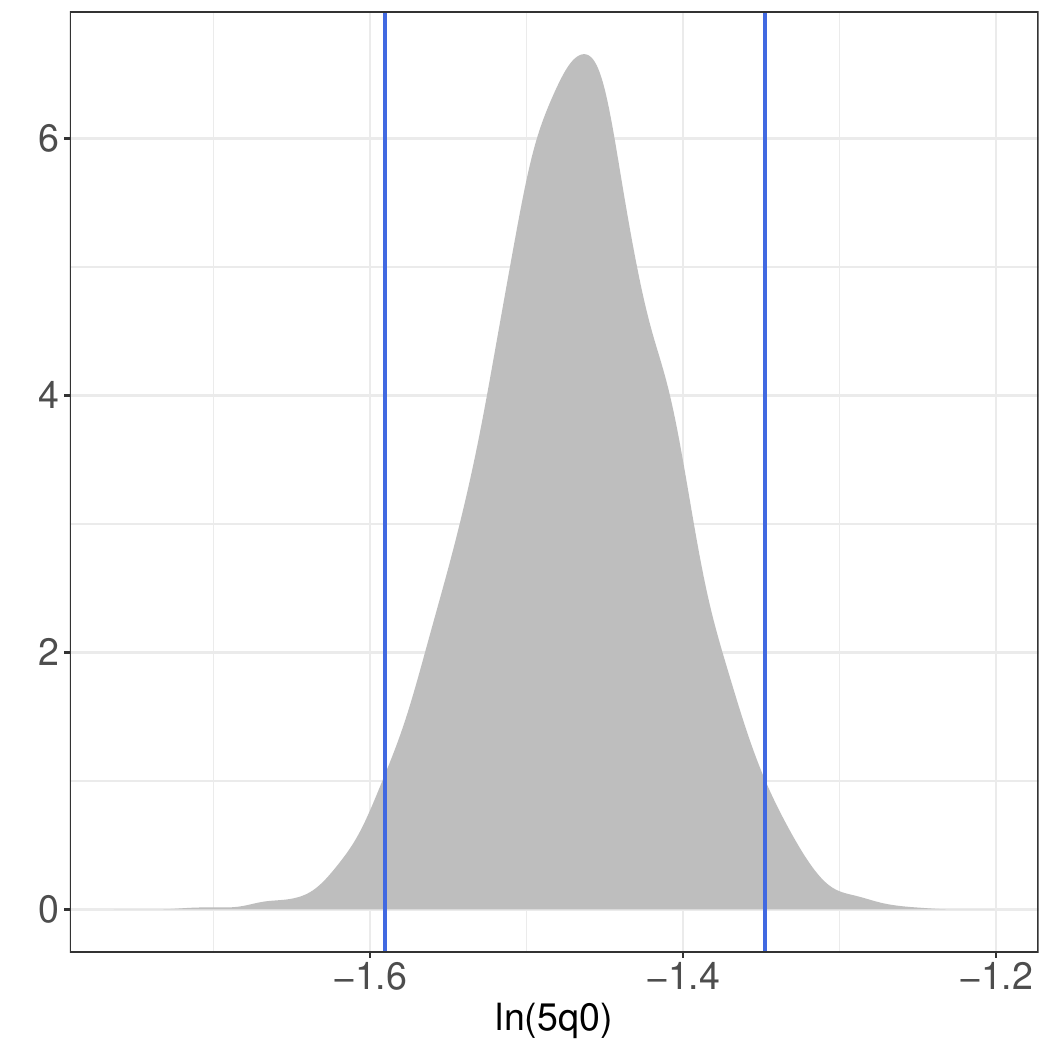} &
\includegraphics[width=0.4\textwidth]{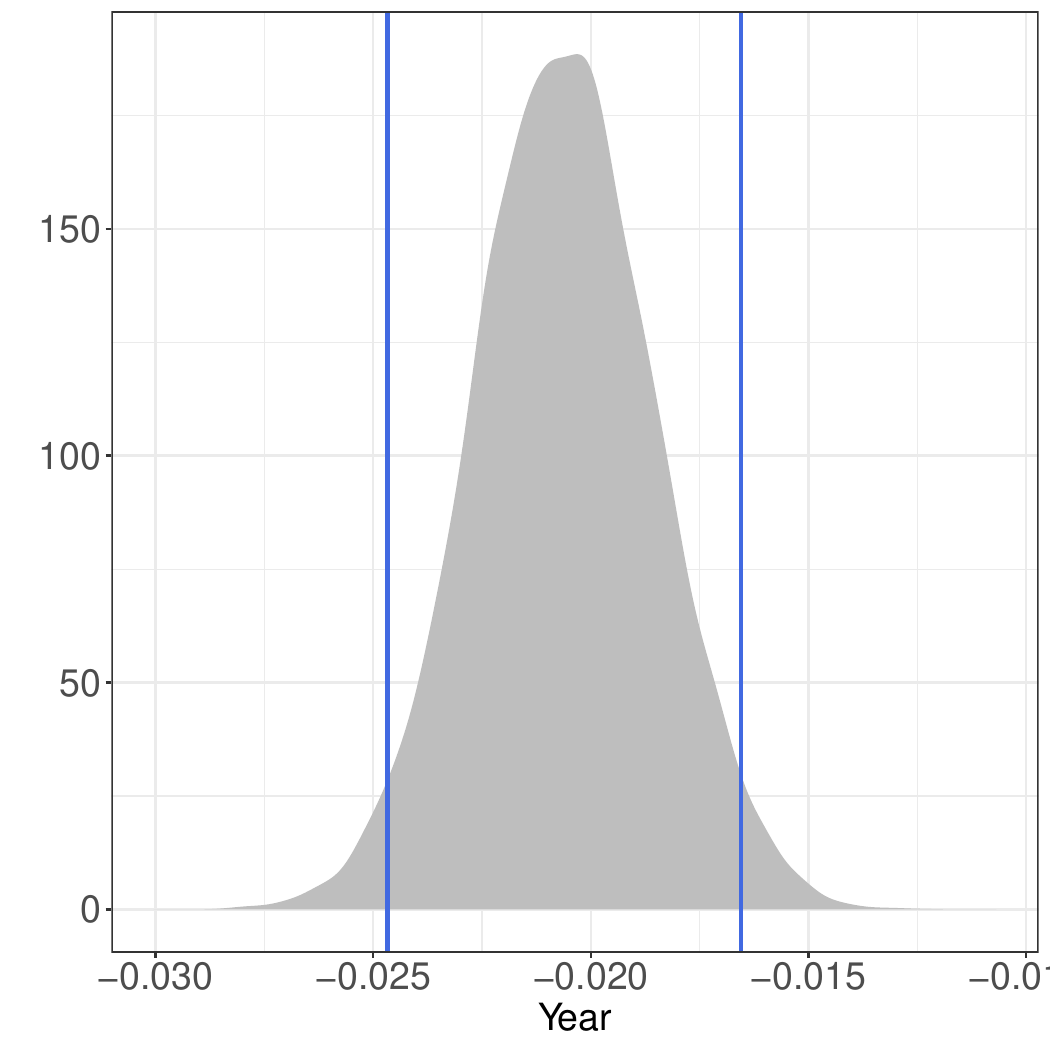}
\end{tabular}
\end{center}
\vspace{-0.5cm}
\caption{
Densities with credible intervals (blue lines). The  models are based on an  dataset updated to 2019, which uses GBD death estimates based on the GBD 2019 and  comprises 120 countries and 2,748 country-years from 1970-2019  \citep{collaborators2020global}, both sexes and model 2 using a Half-Cauchy prior
for the local scale of the errors.}
\label{fig:fig_both4}
\end{figure}

\clearpage

\begin{figure}[ht]

\begin{center}
\begin{tabular}{ccc}
\includegraphics[width=0.4\textwidth]{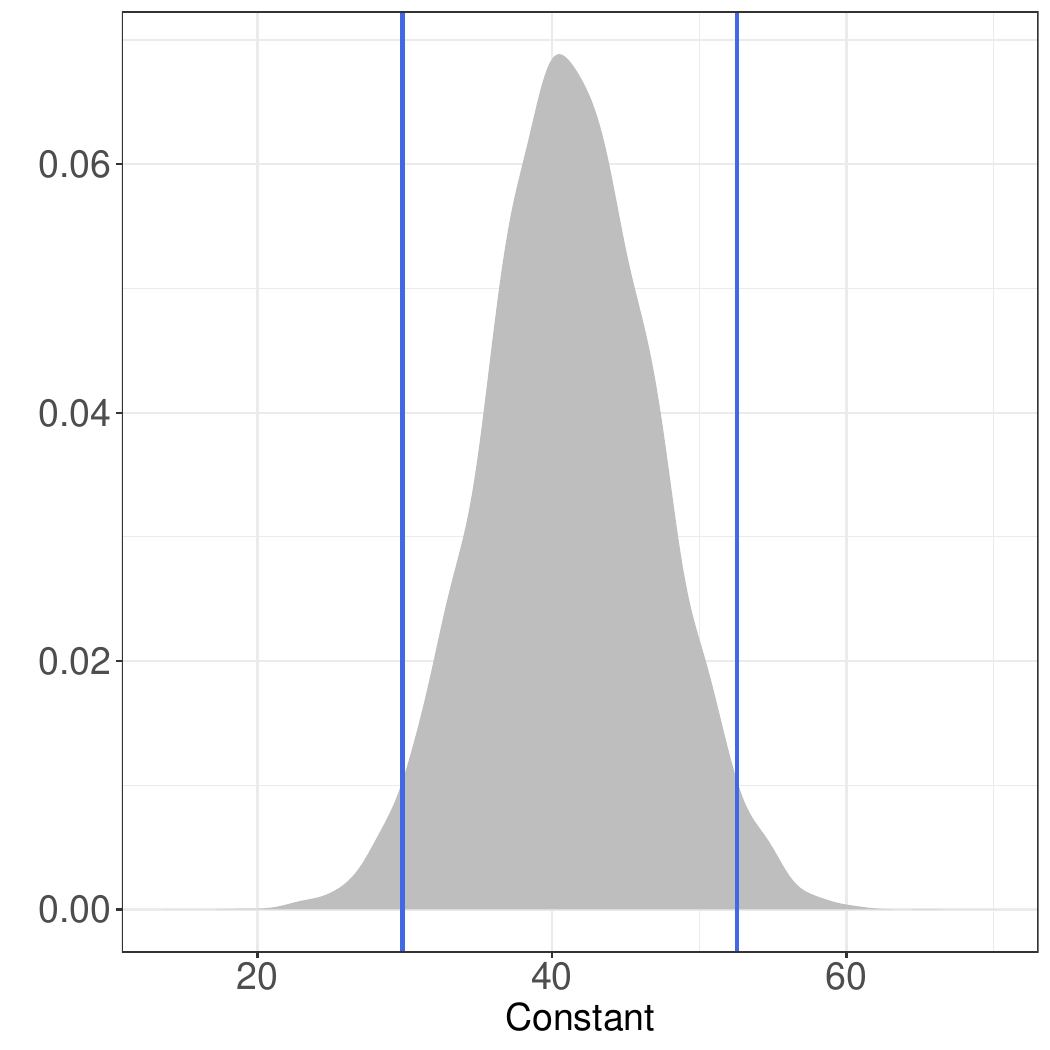} &
\includegraphics[width=0.4\textwidth]{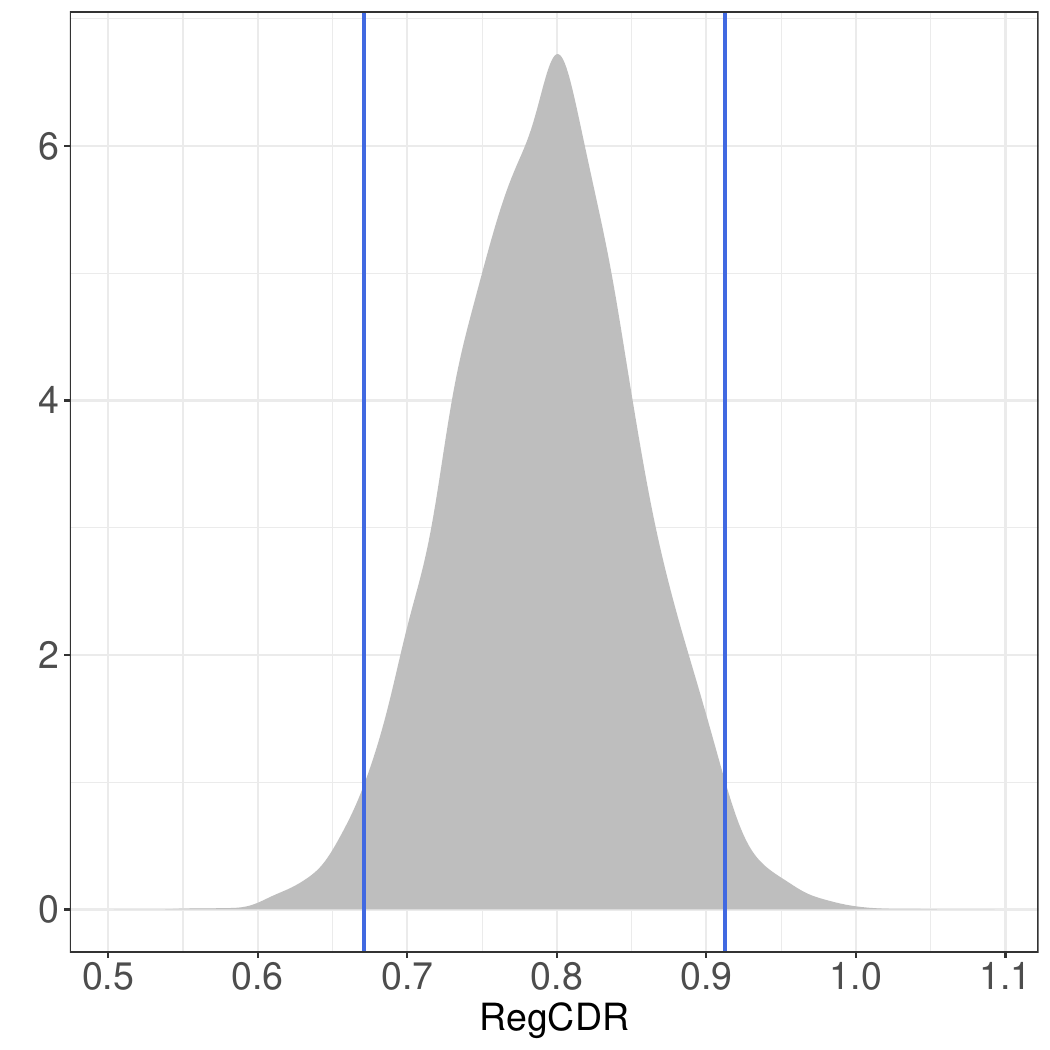}\\
\includegraphics[width=0.4\textwidth]{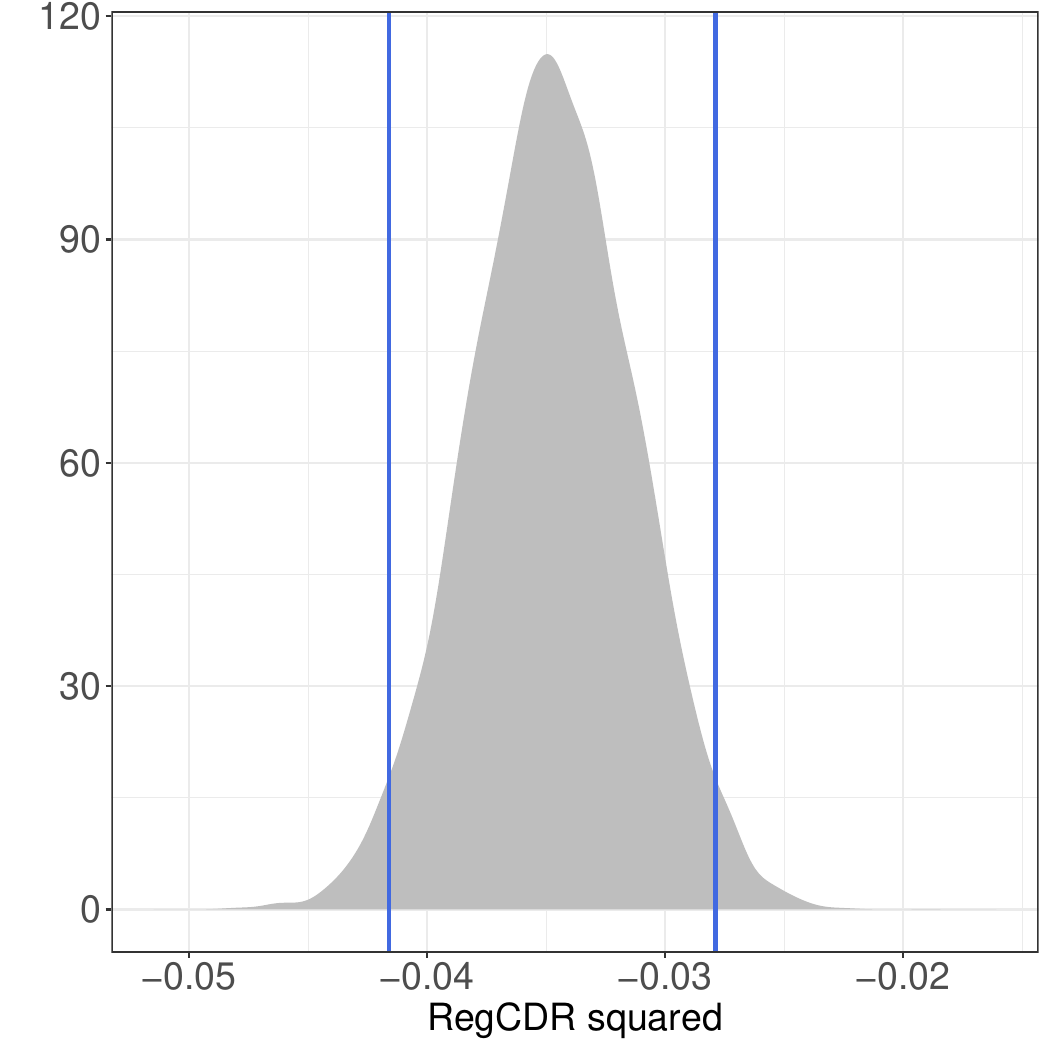} &
\includegraphics[width=0.4\textwidth]{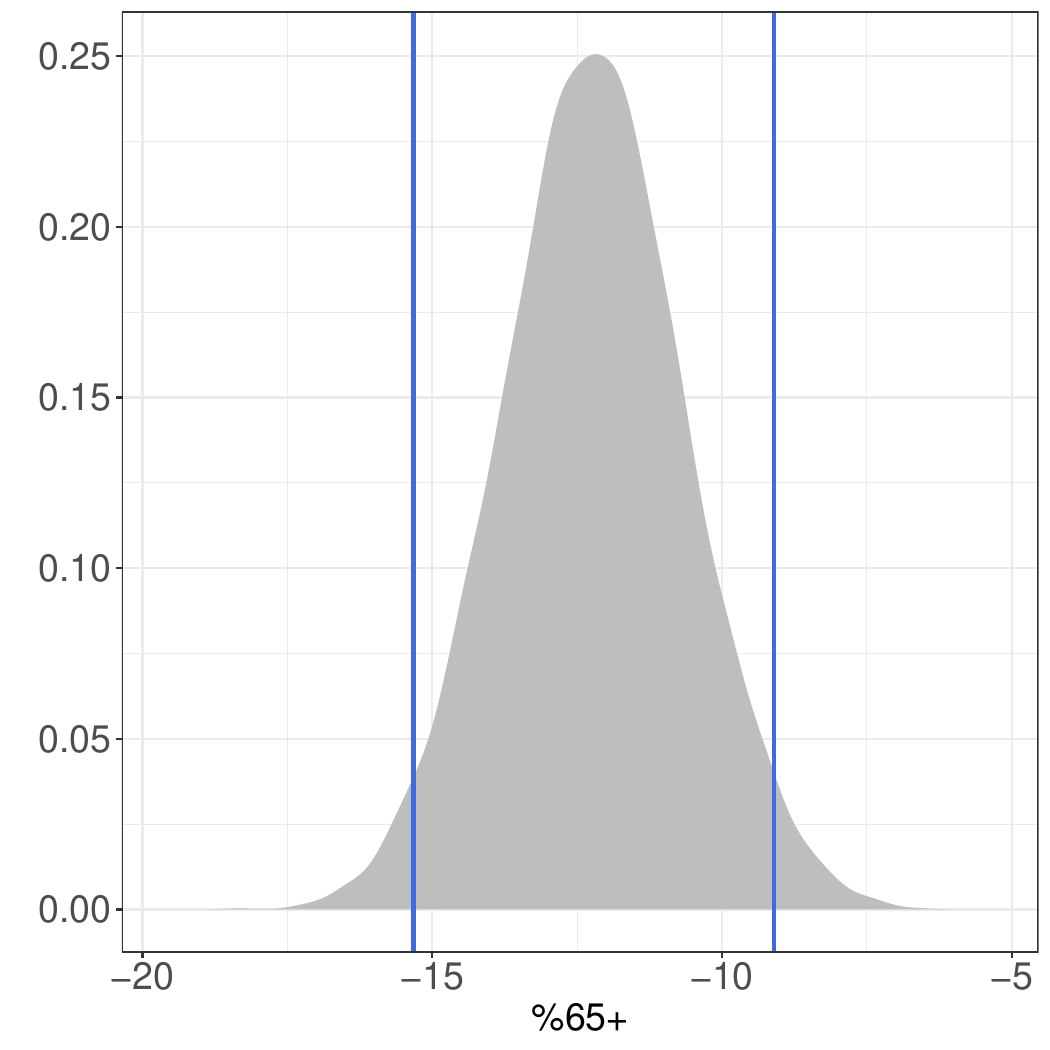}\\
\includegraphics[width=0.4\textwidth]{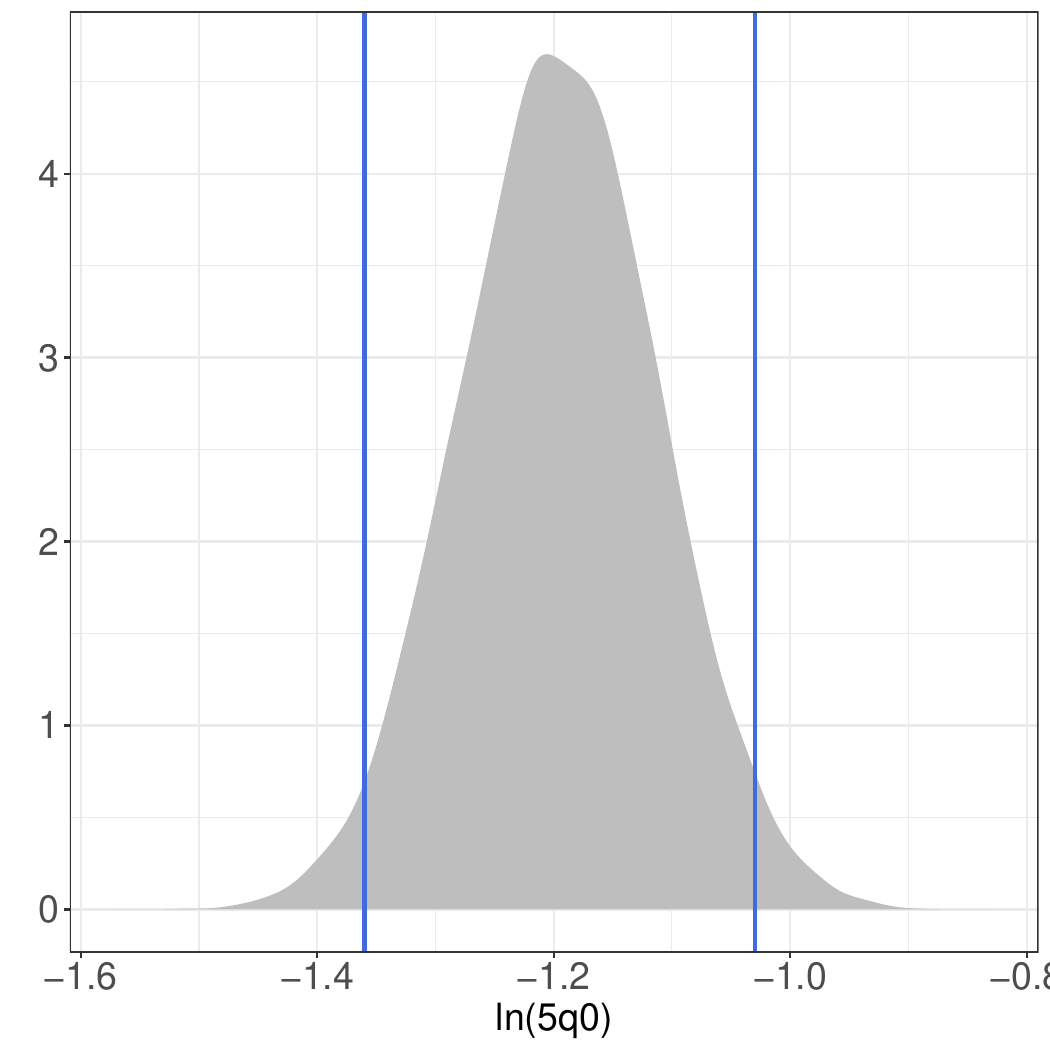} &
\includegraphics[width=0.4\textwidth]{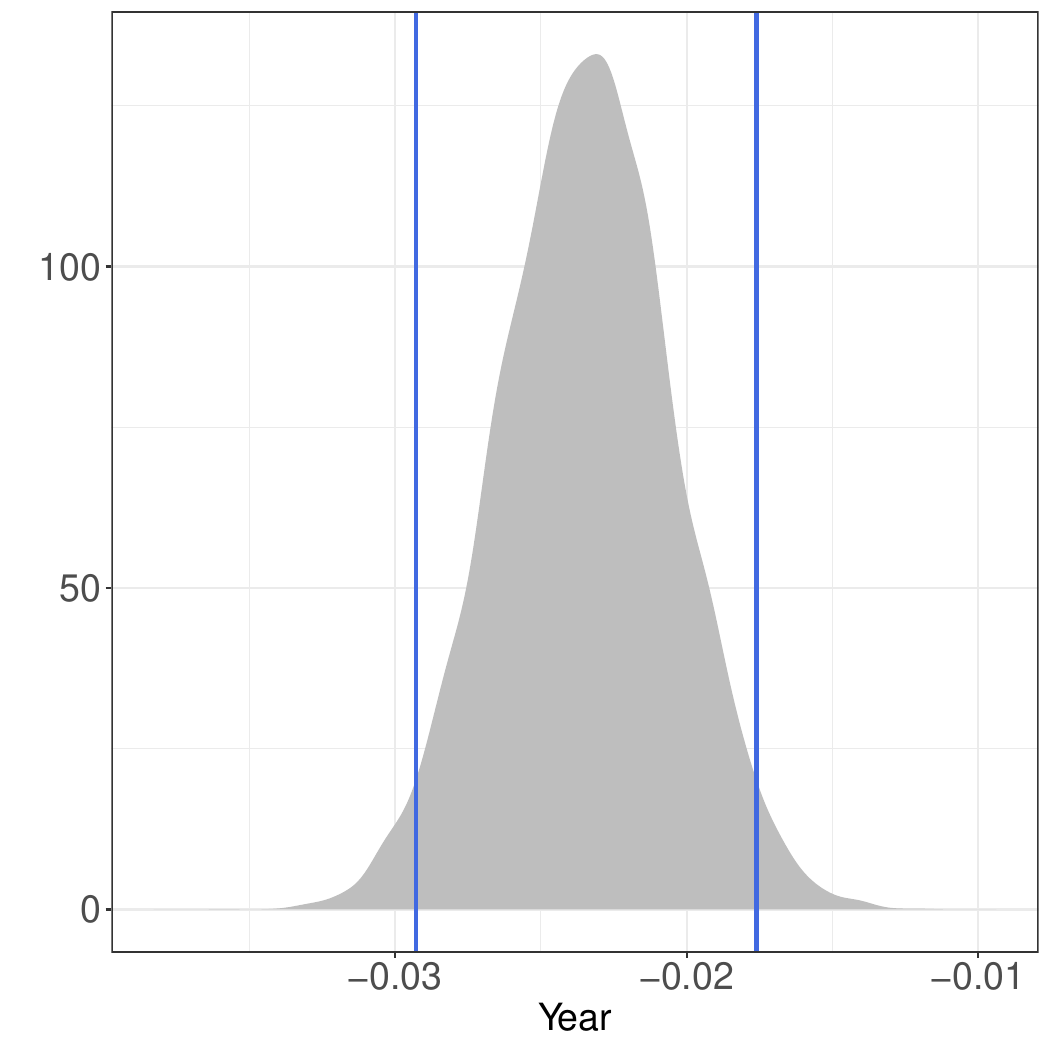}
\end{tabular}
\end{center}
\vspace{-0.5cm}
\caption{
Densities with credible intervals (blue lines).  The  models are based on an  dataset updated to 2019, which uses GBD death estimates based on the GBD 2019 and  comprises 120 countries and 2,748 country-years from 1970-2019  \citep{collaborators2020global}, females and model 1 using a common Gamma prior
for the scale of the errors.}
\label{fig:fig_both5}
\end{figure}

\begin{figure}[ht]
\begin{center}
\begin{tabular}{ccc}
\includegraphics[width=0.4\textwidth]{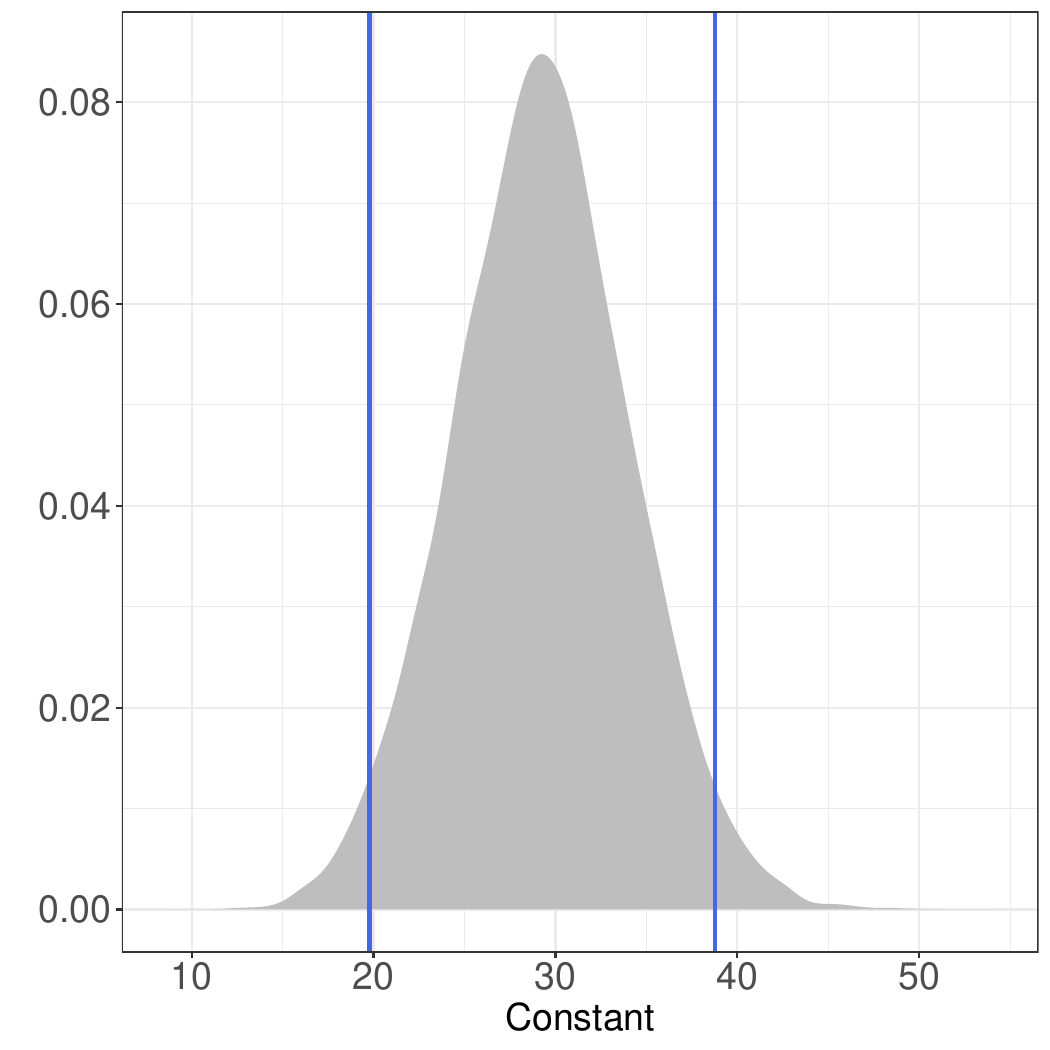} &
\includegraphics[width=0.4\textwidth]{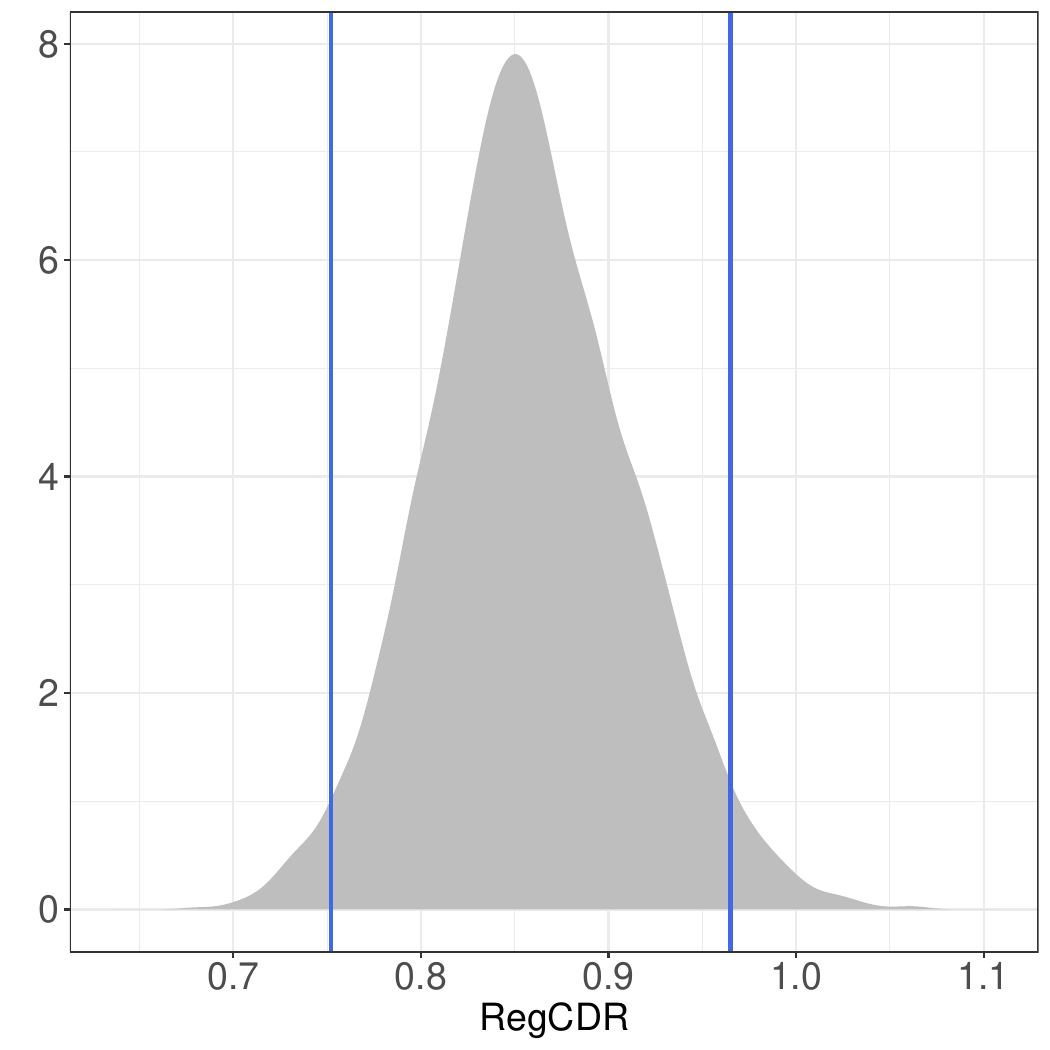}\\
\includegraphics[width=0.4\textwidth]{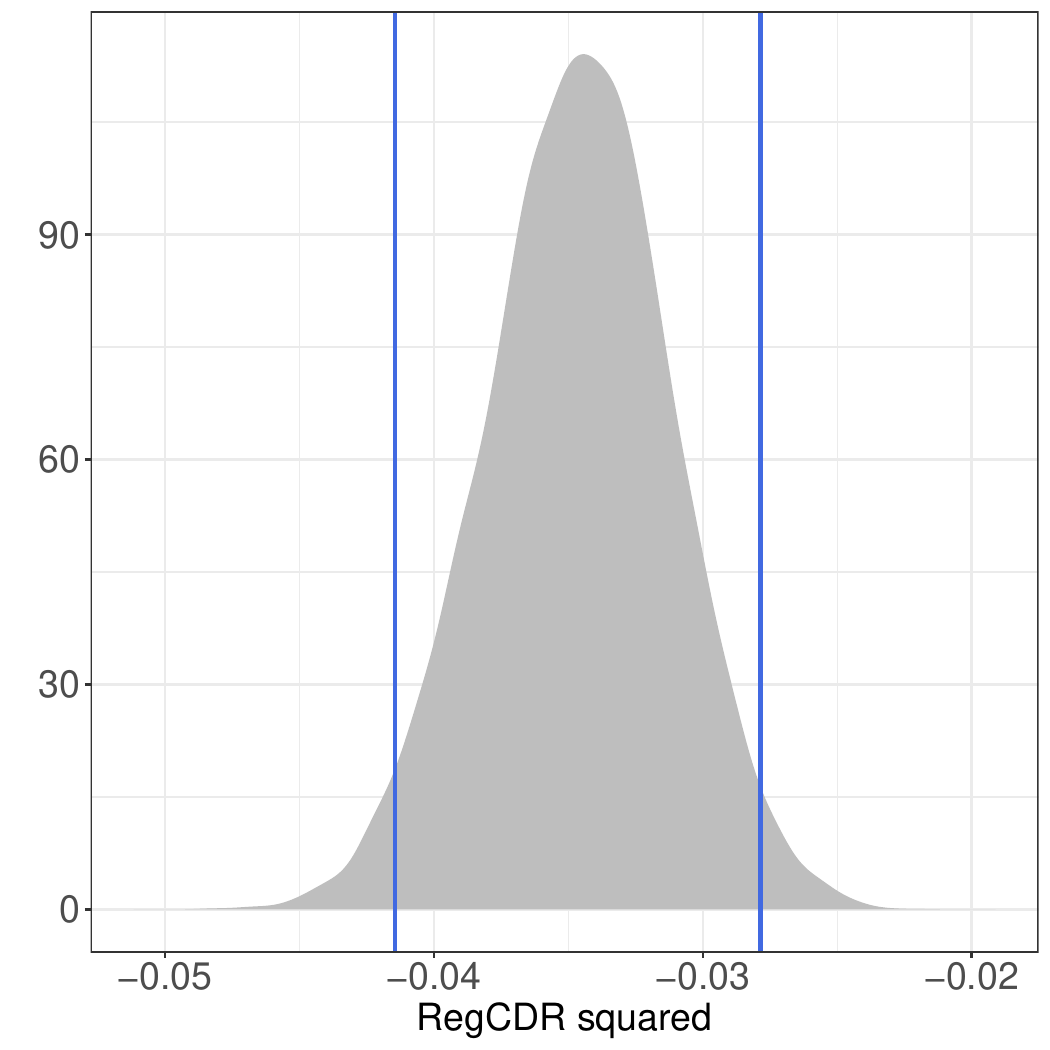} &
\includegraphics[width=0.4\textwidth]{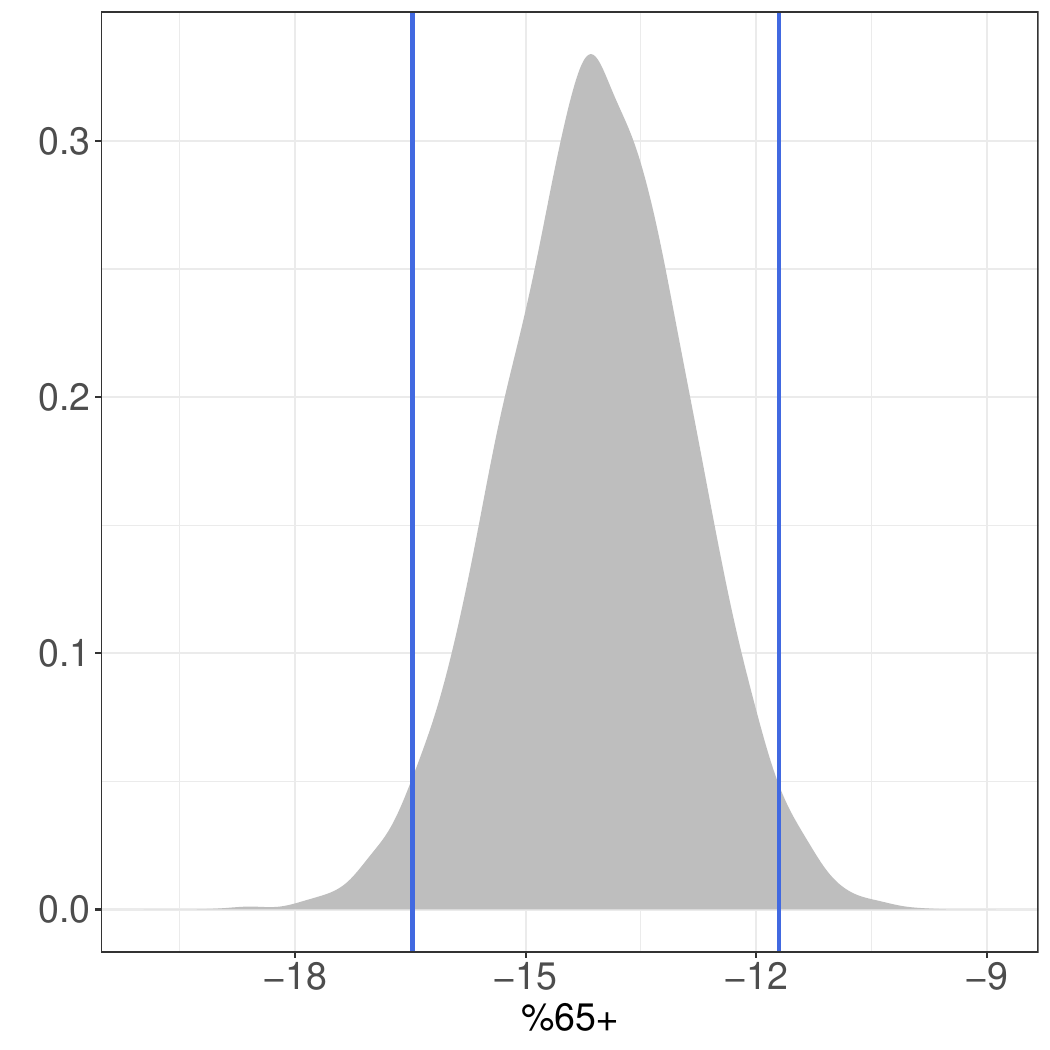}\\
\includegraphics[width=0.4\textwidth]{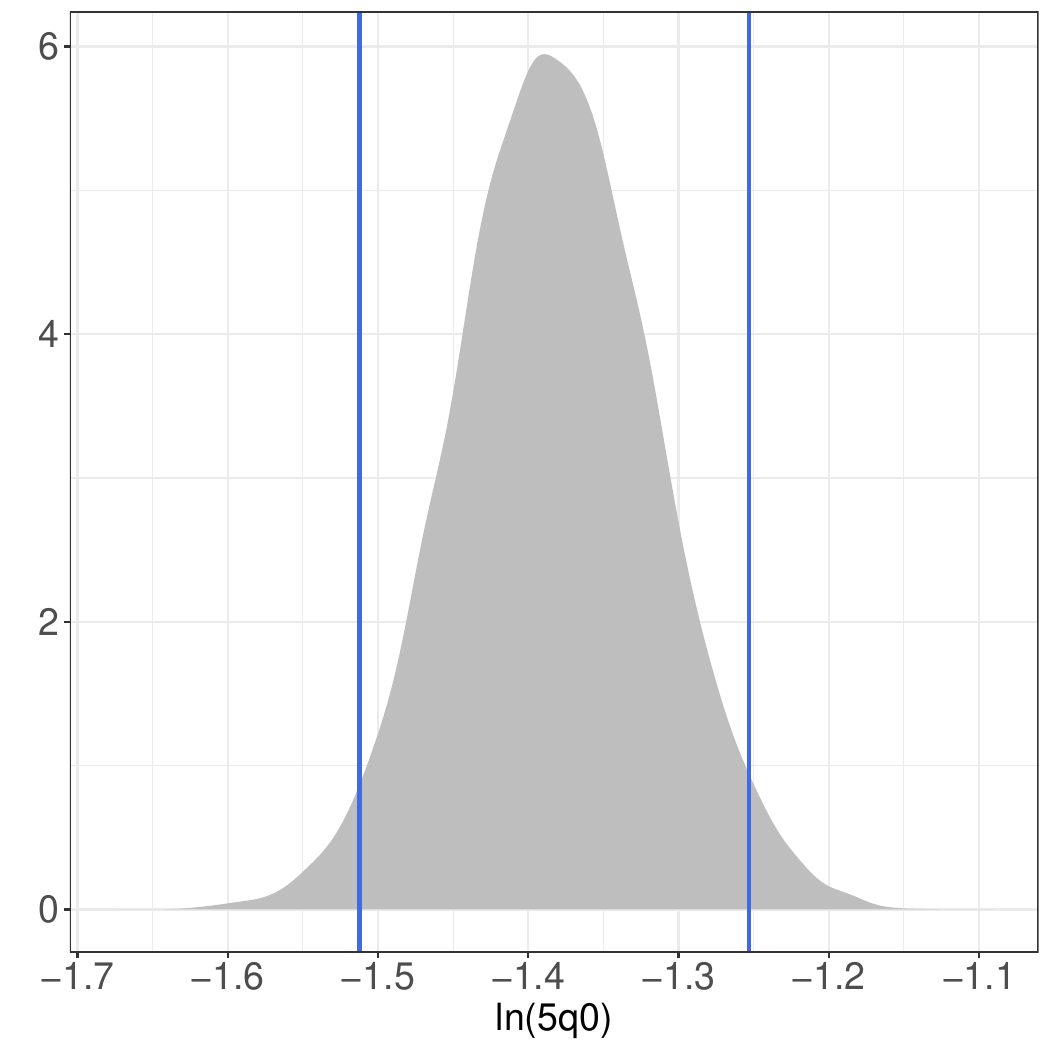} &
\includegraphics[width=0.4\textwidth]{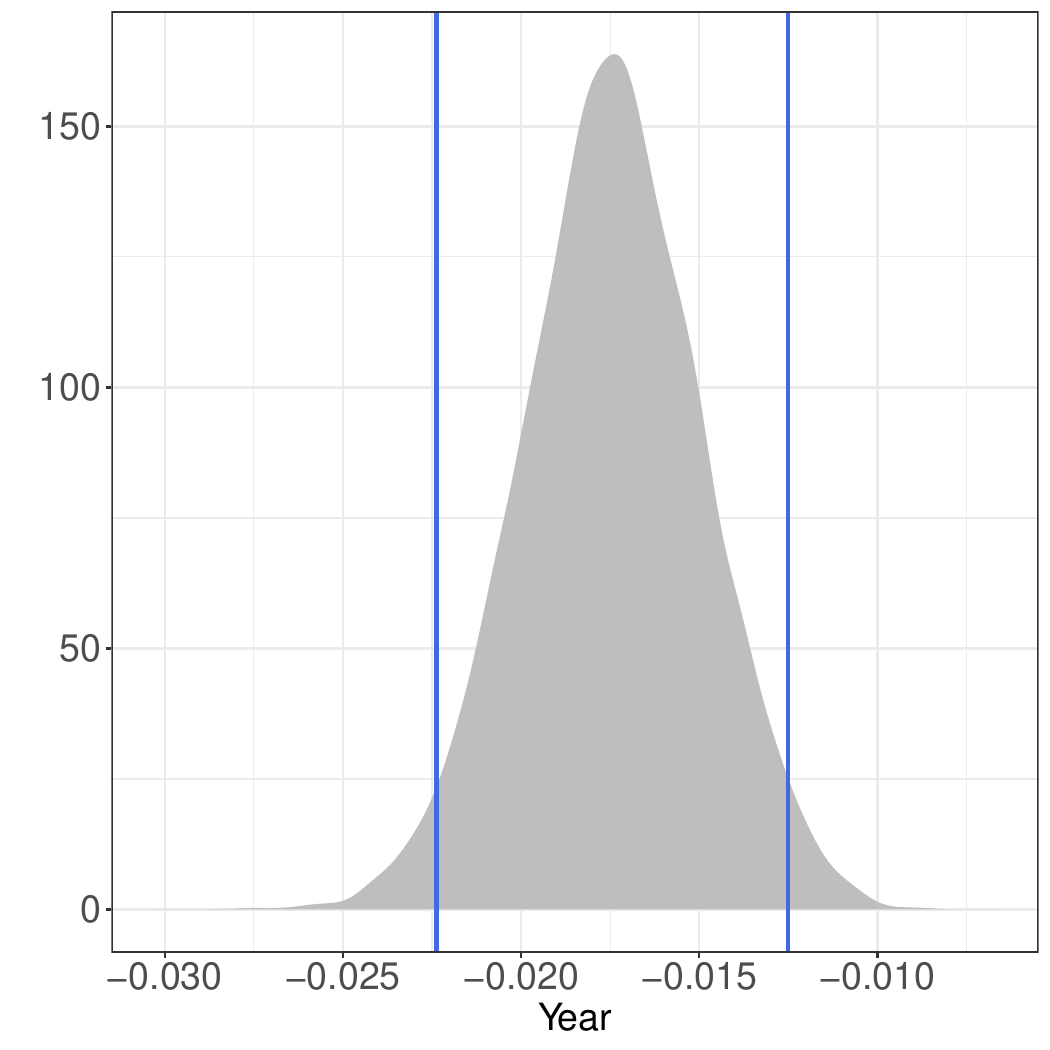}
\end{tabular}
\end{center}
\vspace{-0.5cm}
\caption{
Densities with credible intervals (blue lines). The  models are based on an  dataset updated to 2019, which uses GBD death estimates based on the GBD 2019 and  comprises 120 countries and 2,748 country-years from 1970-2019  \citep{collaborators2020global}, females and model 1 using a Half-Cauchy prior
for the local scale of the errors.}
\label{fig:fig_both6}
\end{figure}

\clearpage

\begin{figure}[ht]

\begin{center}
\begin{tabular}{ccc}
\includegraphics[width=0.4\textwidth]{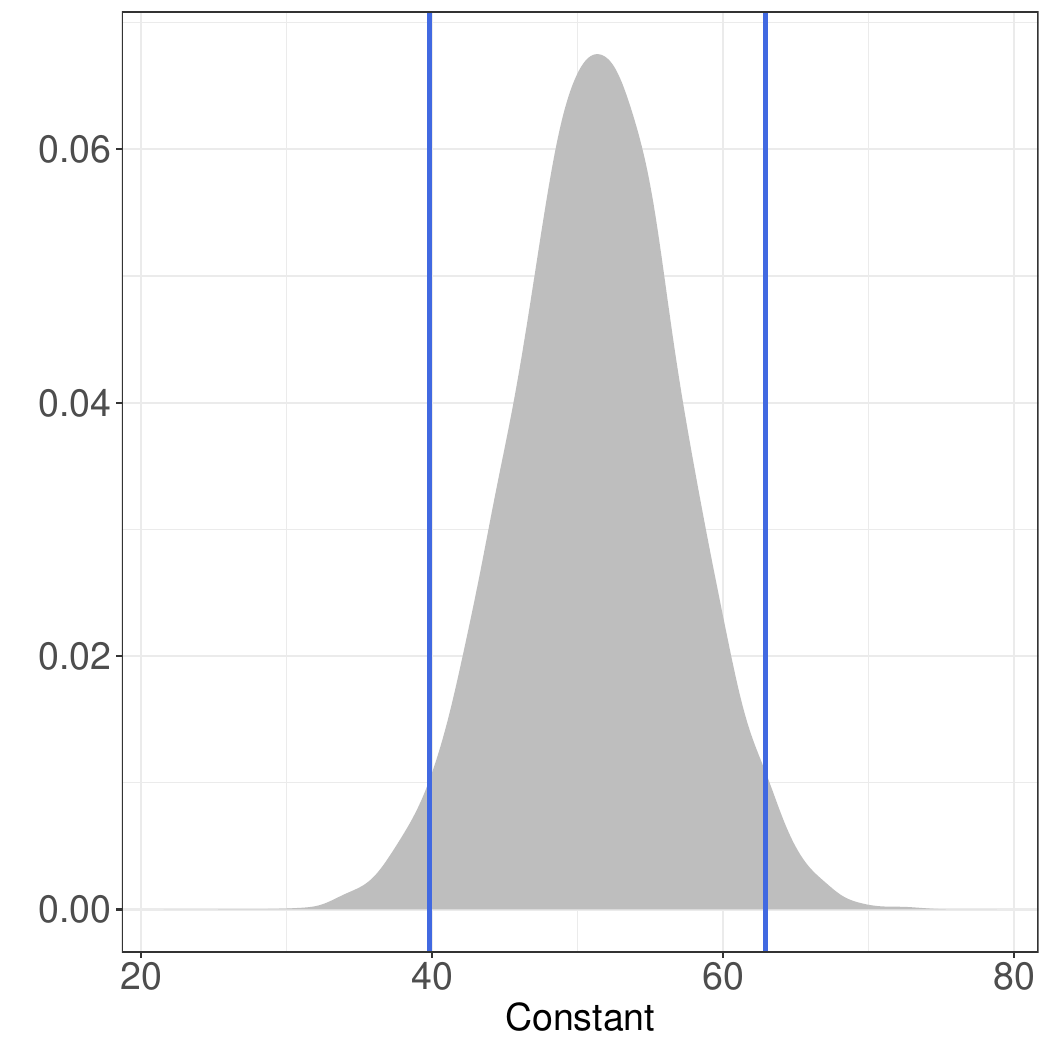} &
\includegraphics[width=0.4\textwidth]{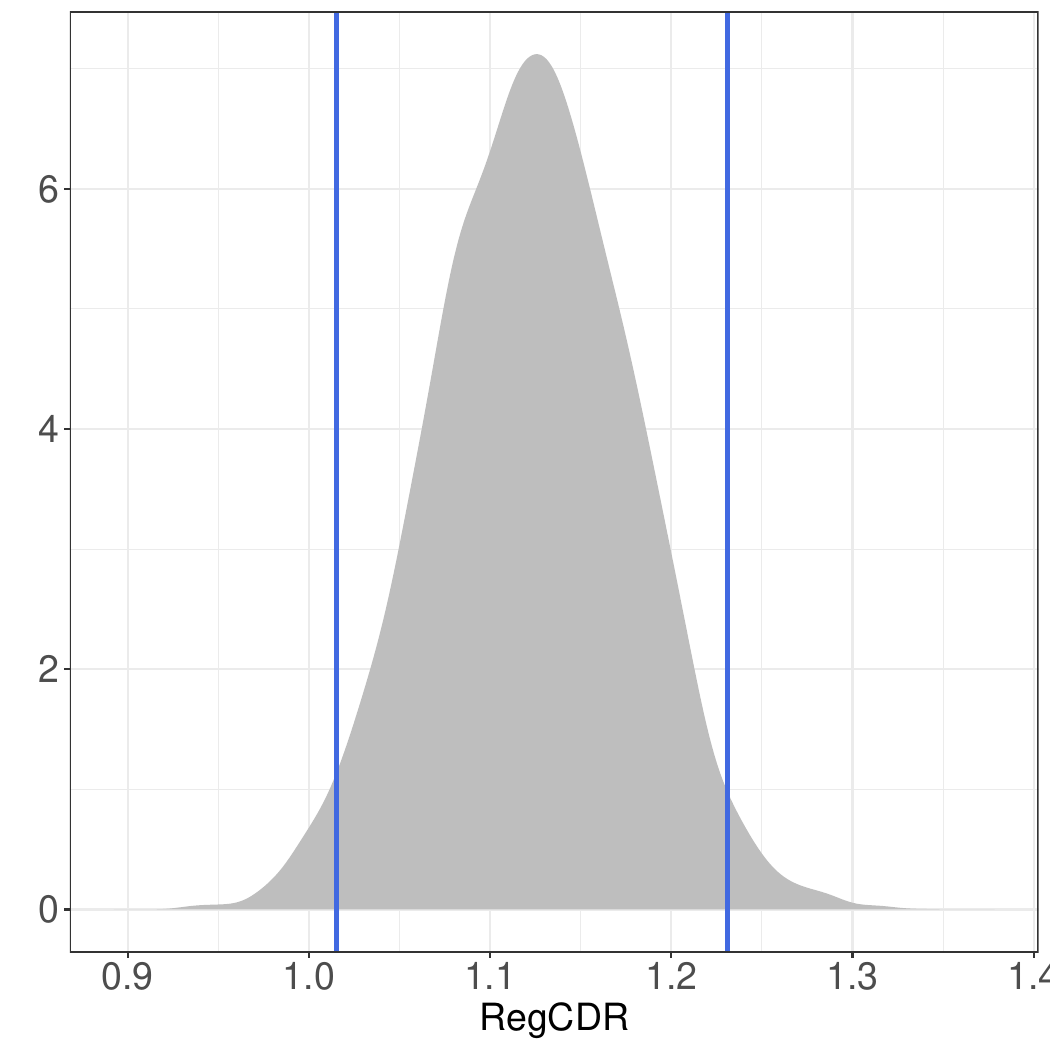}\\
\includegraphics[width=0.4\textwidth]{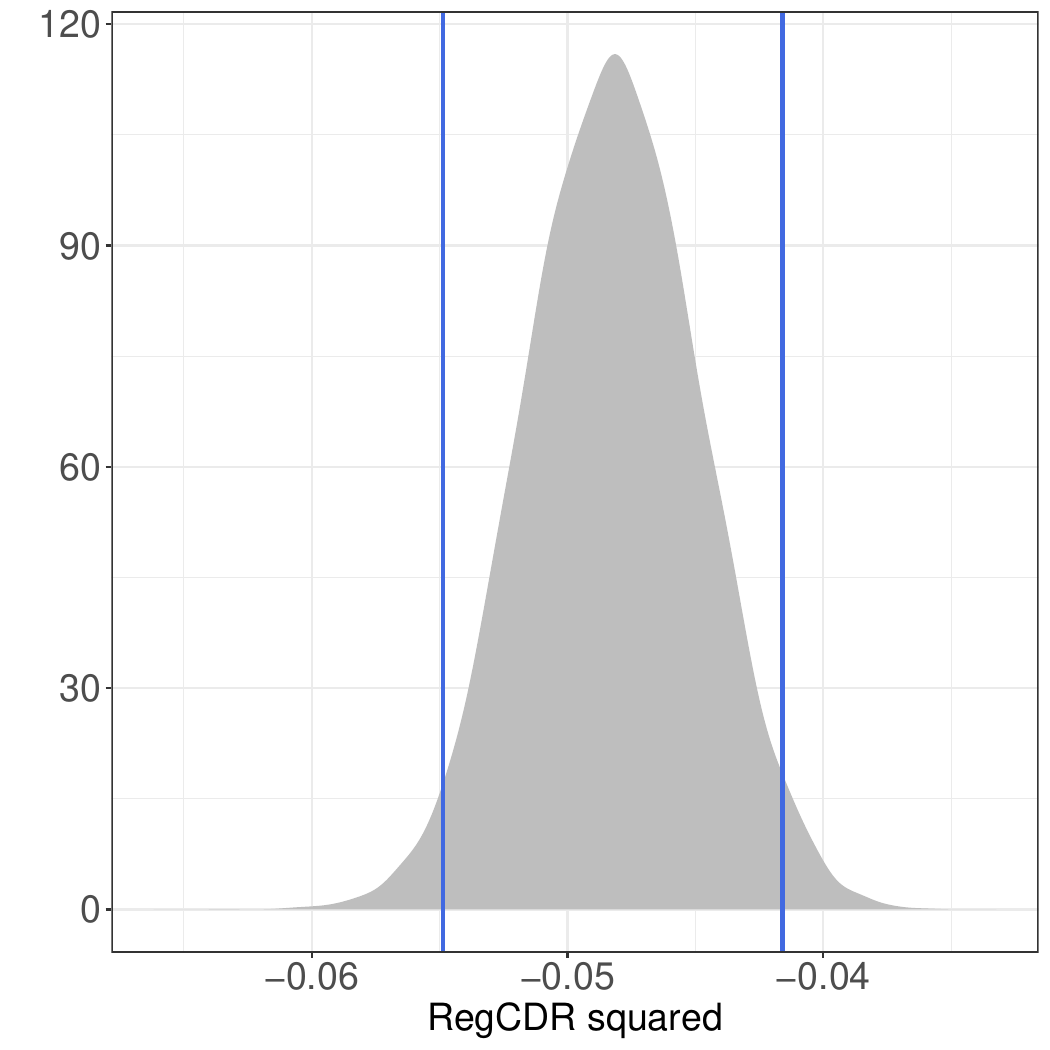} &
\includegraphics[width=0.4\textwidth]{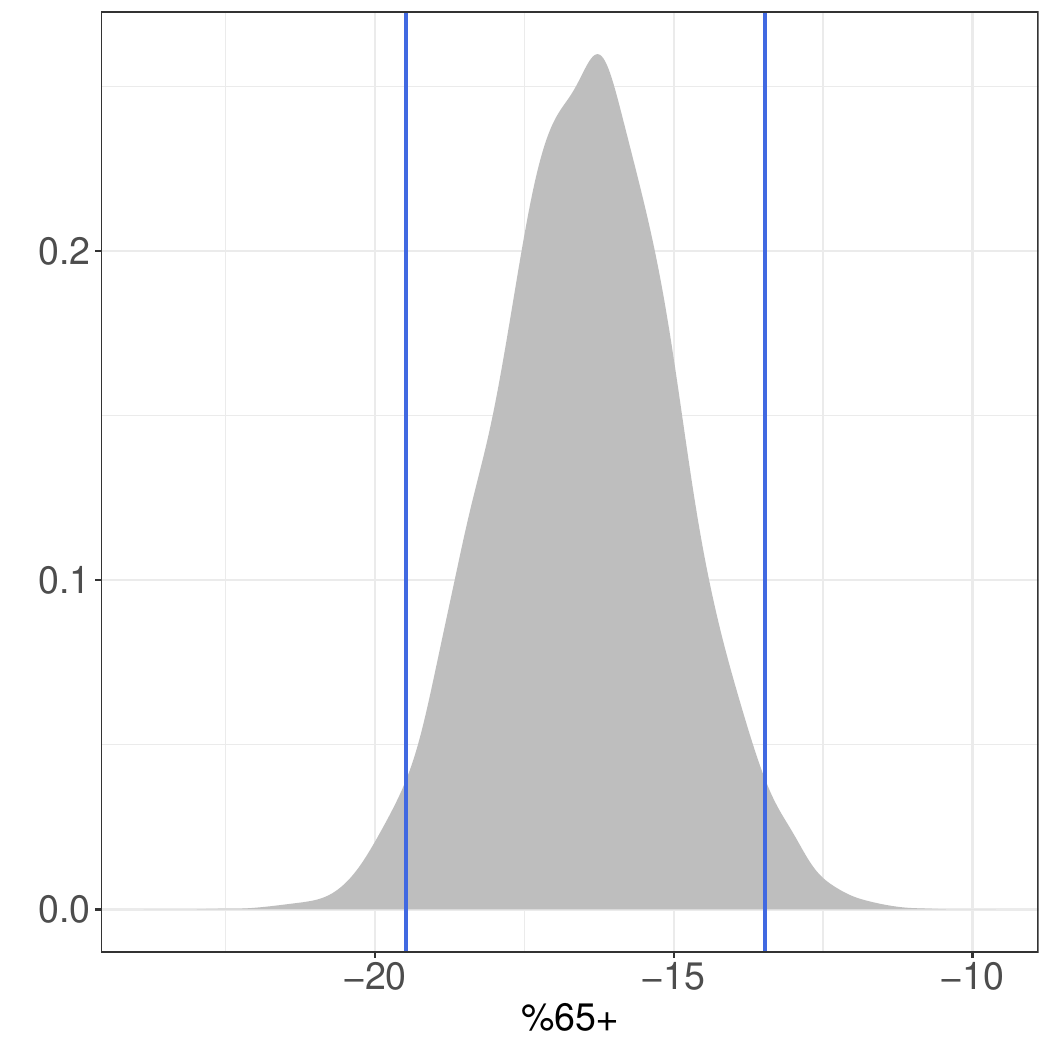}\\
\includegraphics[width=0.4\textwidth]{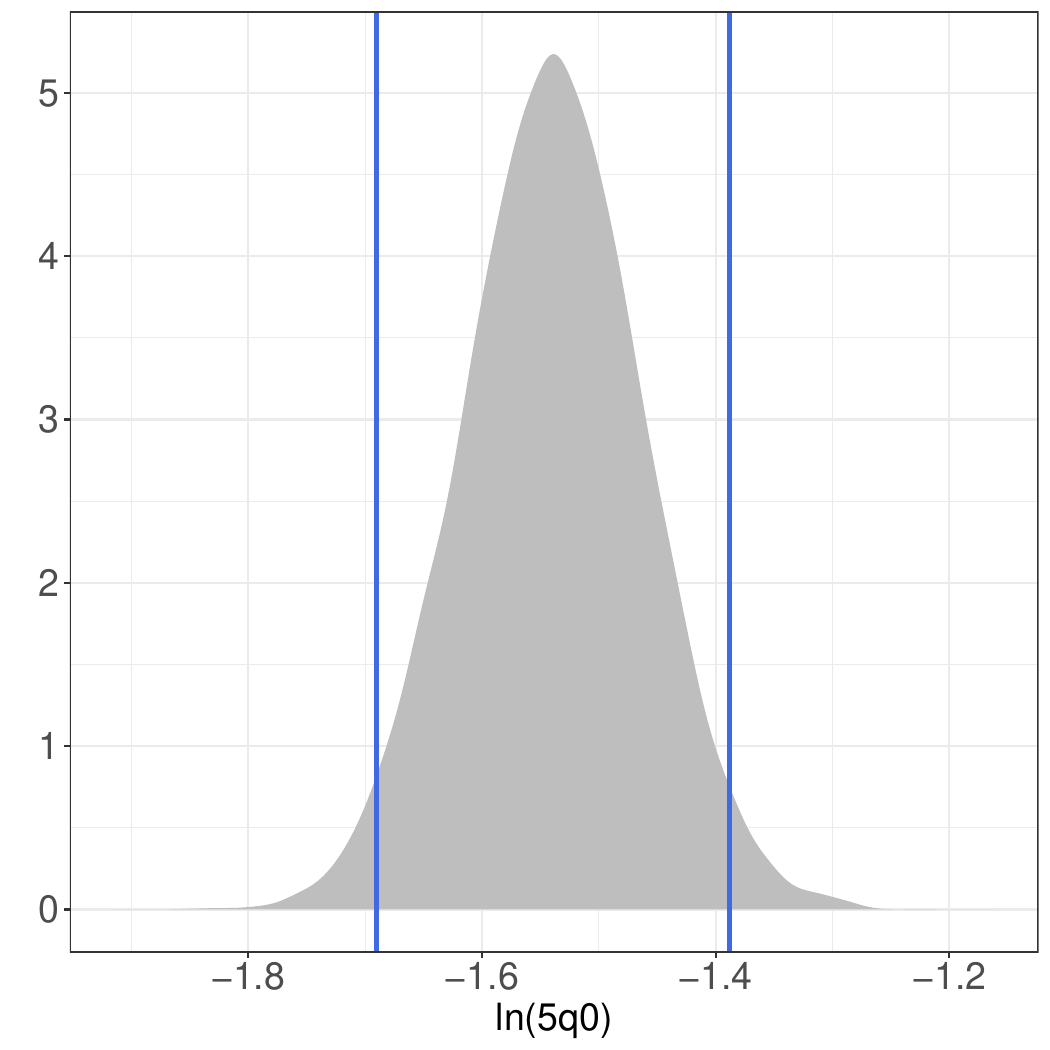} &
\includegraphics[width=0.4\textwidth]{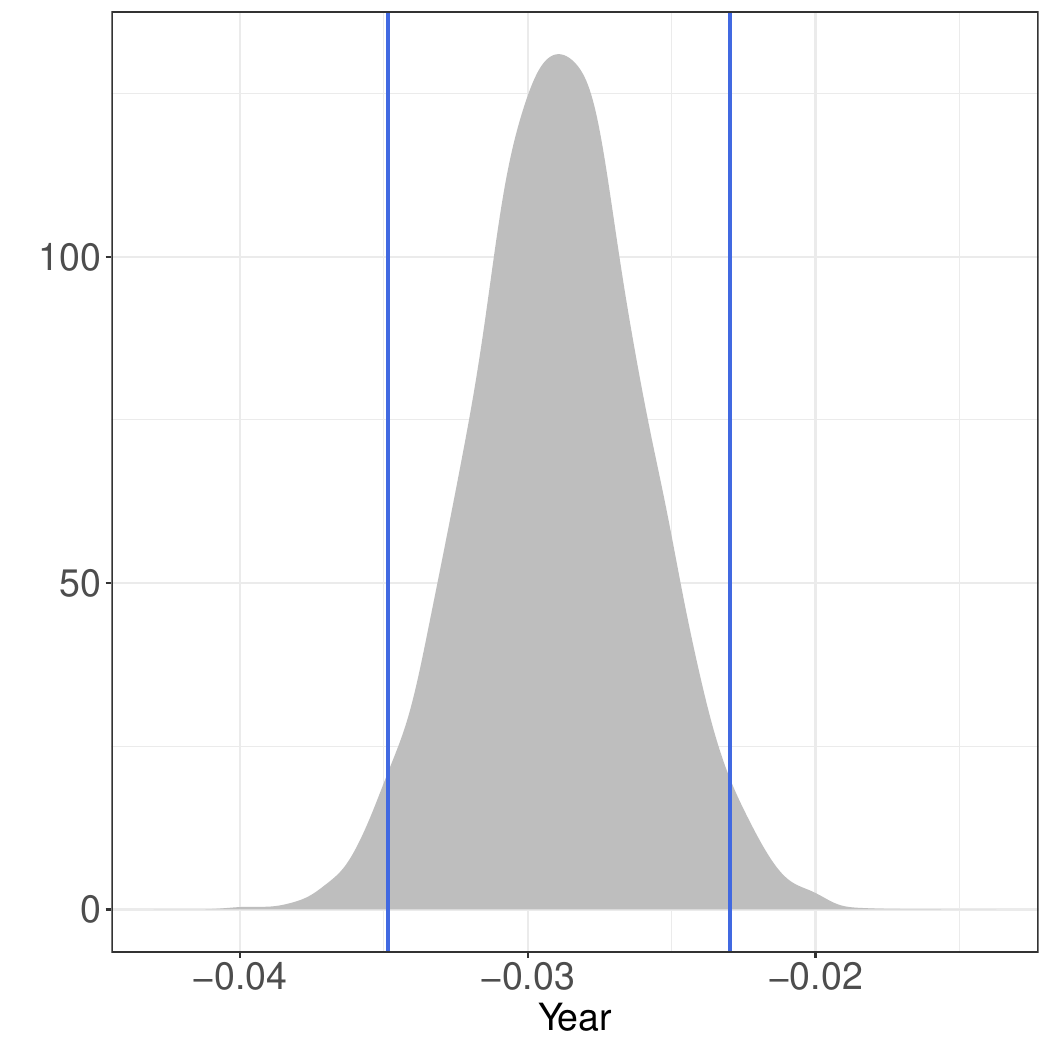}
\end{tabular}
\end{center}
\vspace{-0.5cm}
\caption{
Densities with credible intervals (blue lines).  The  models are based on an  dataset updated to 2019, which uses GBD death estimates based on the GBD 2019 and  comprises 120 countries and 2,748 country-years from 1970-2019  \citep{collaborators2020global}, females and model 2 using a common Gamma prior
for the scale of the errors.}
\label{fig:fig_both7}
\end{figure}

\begin{figure}[ht]
\begin{center}
\begin{tabular}{ccc}
\includegraphics[width=0.4\textwidth]{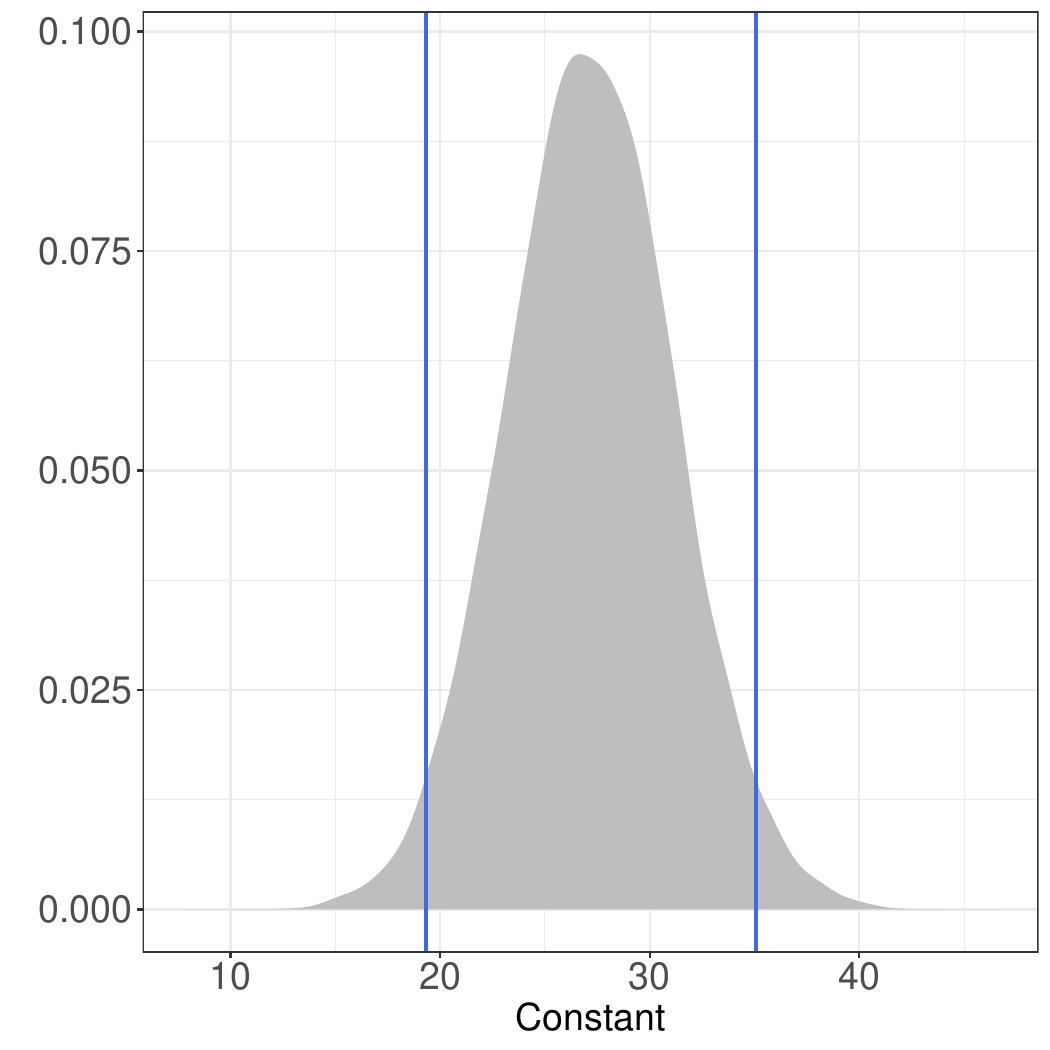} &
\includegraphics[width=0.4\textwidth]{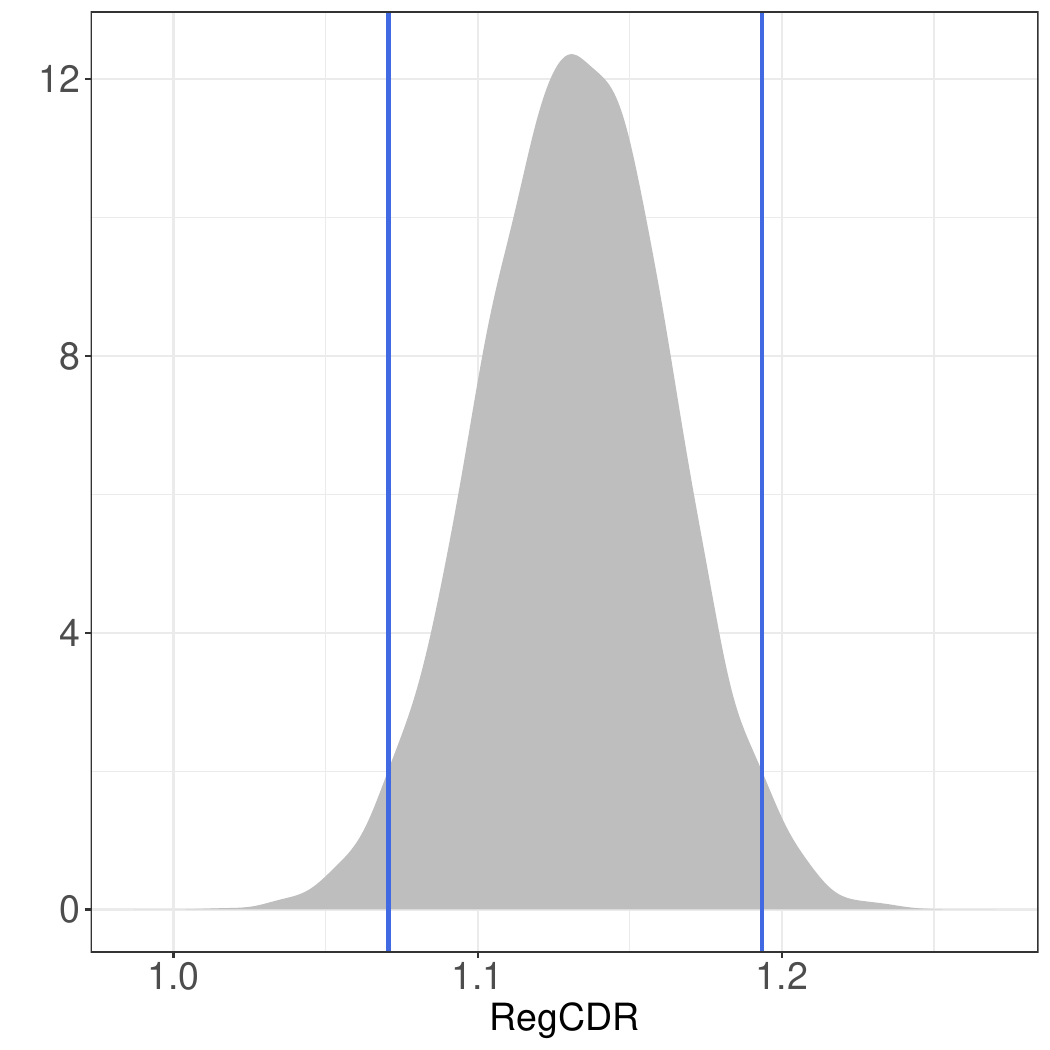}\\
\includegraphics[width=0.4\textwidth]{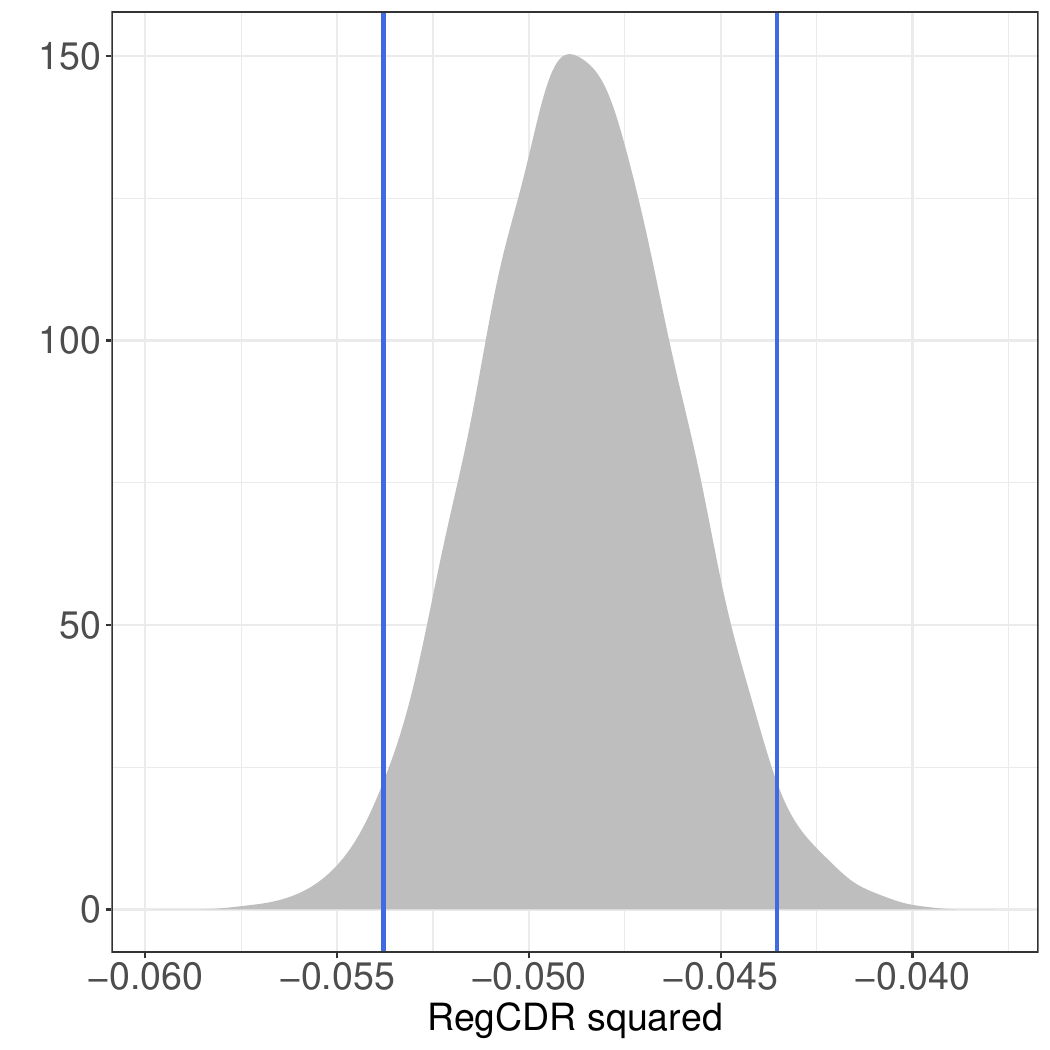} &
\includegraphics[width=0.4\textwidth]{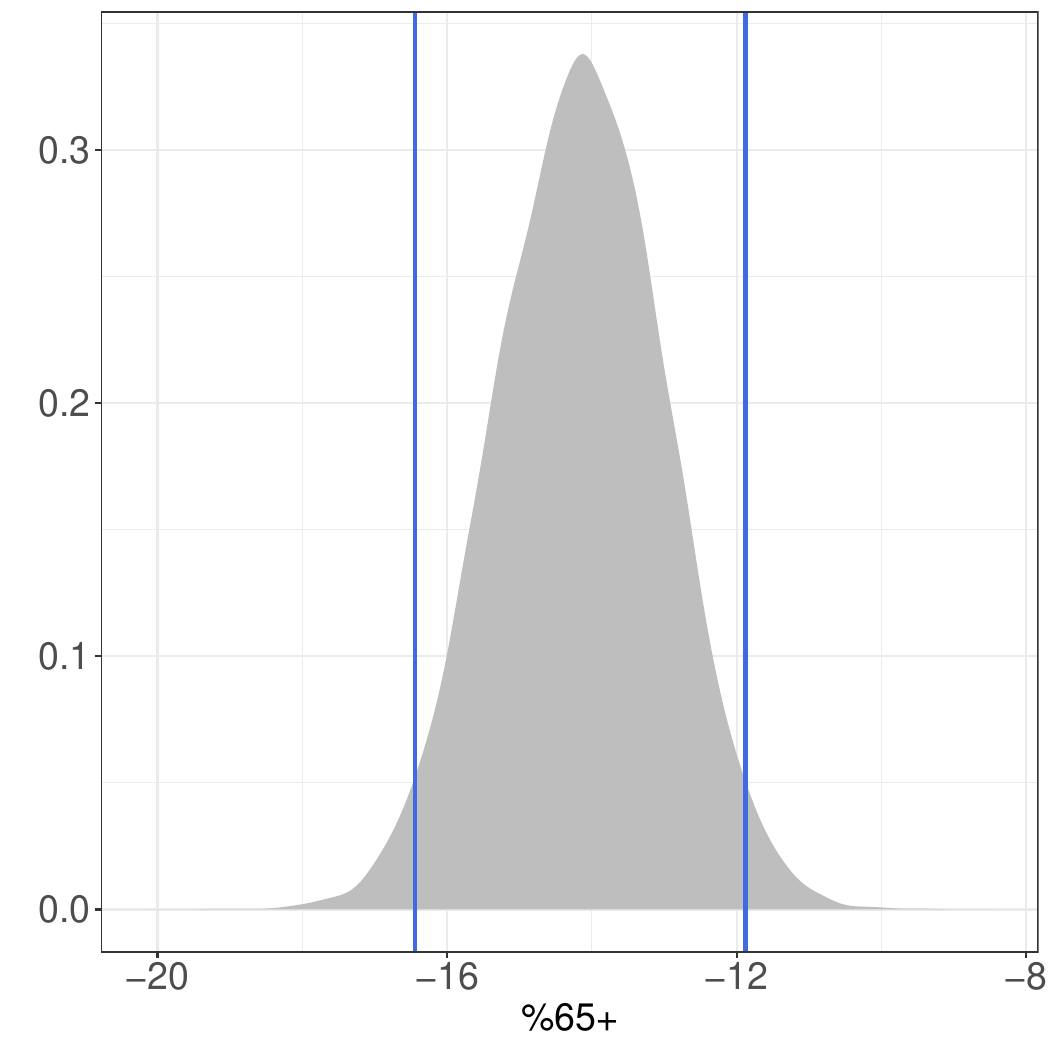}\\
\includegraphics[width=0.4\textwidth]{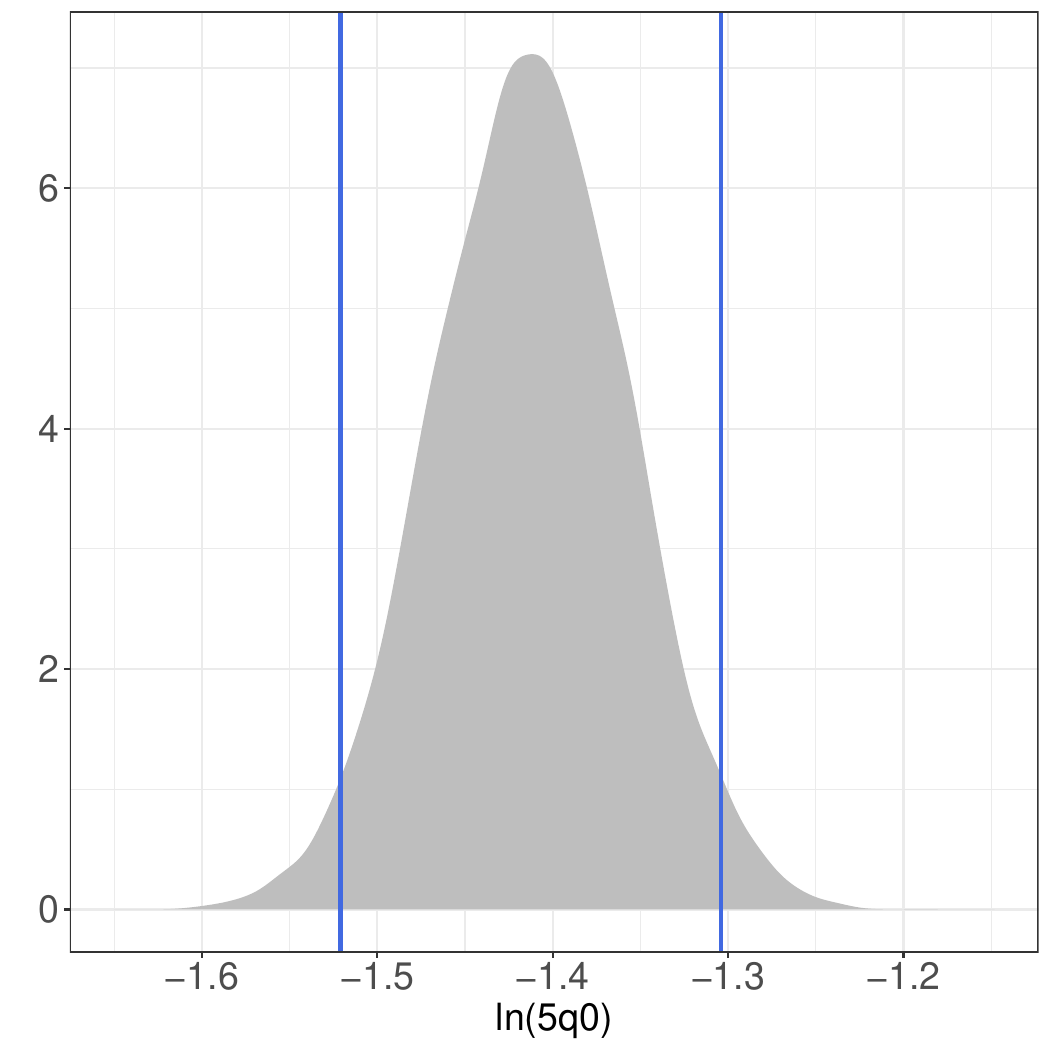} &
\includegraphics[width=0.4\textwidth]{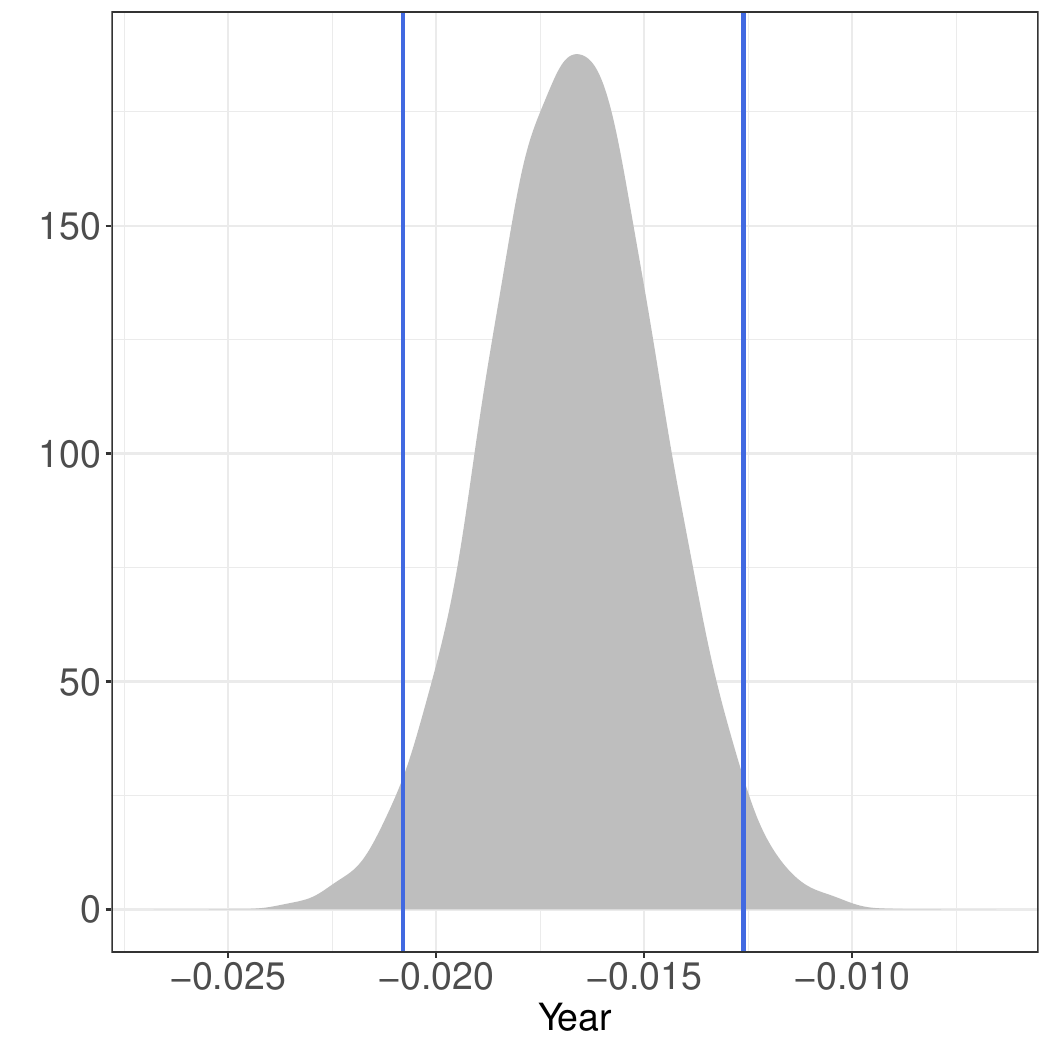}
\end{tabular}
\end{center}
\vspace{-0.5cm}
\caption{
Densities with credible intervals (blue lines). The  models are based on an  dataset updated to 2019, which uses GBD death estimates based on the GBD 2019 and  comprises 120 countries and 2,748 country-years from 1970-2019  \citep{collaborators2020global}, females and model 2 using a Half-Cauchy prior
for the local scale of the errors}
\label{fig:fig_both8}
\end{figure}

\clearpage

\begin{figure}[ht]

\begin{center}
\begin{tabular}{ccc}
\includegraphics[width=0.4\textwidth]{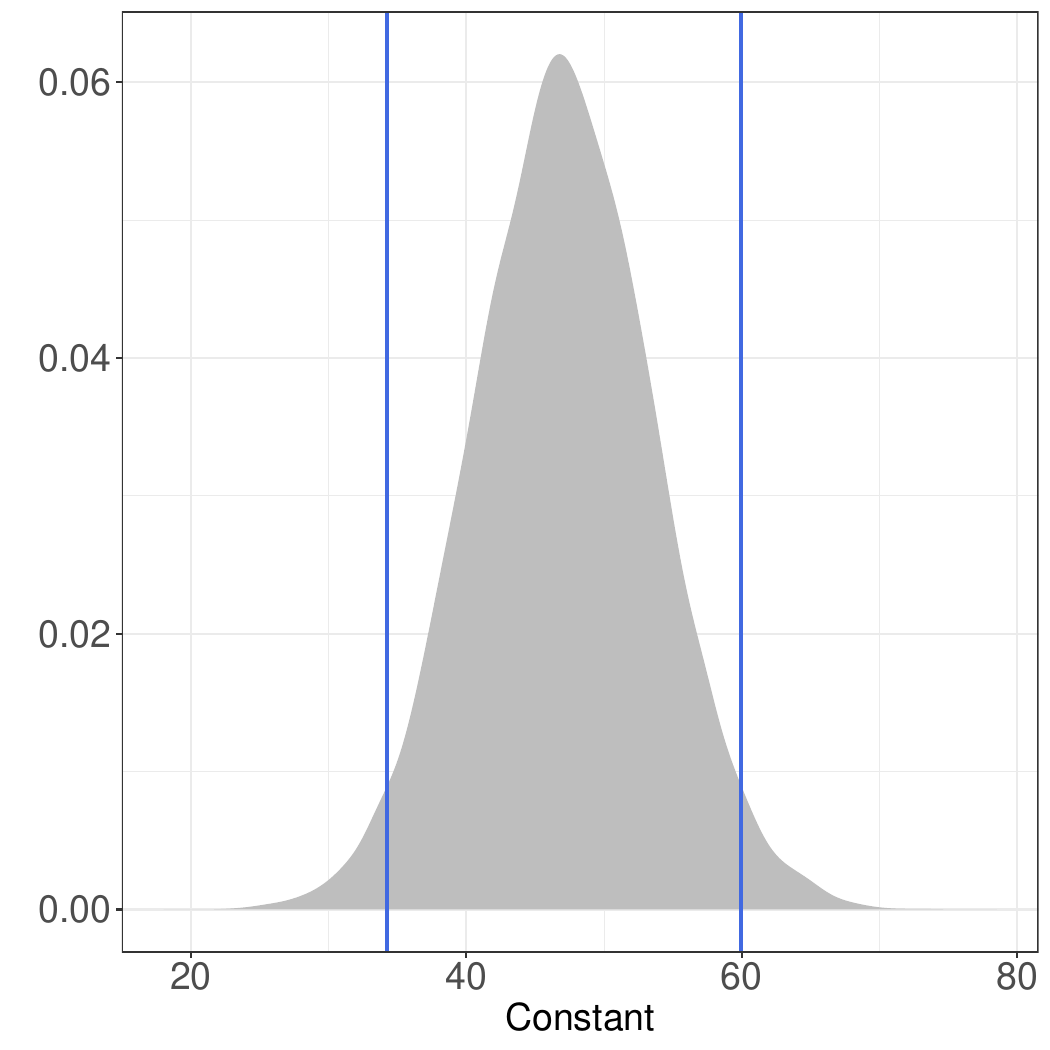} &
\includegraphics[width=0.4\textwidth]{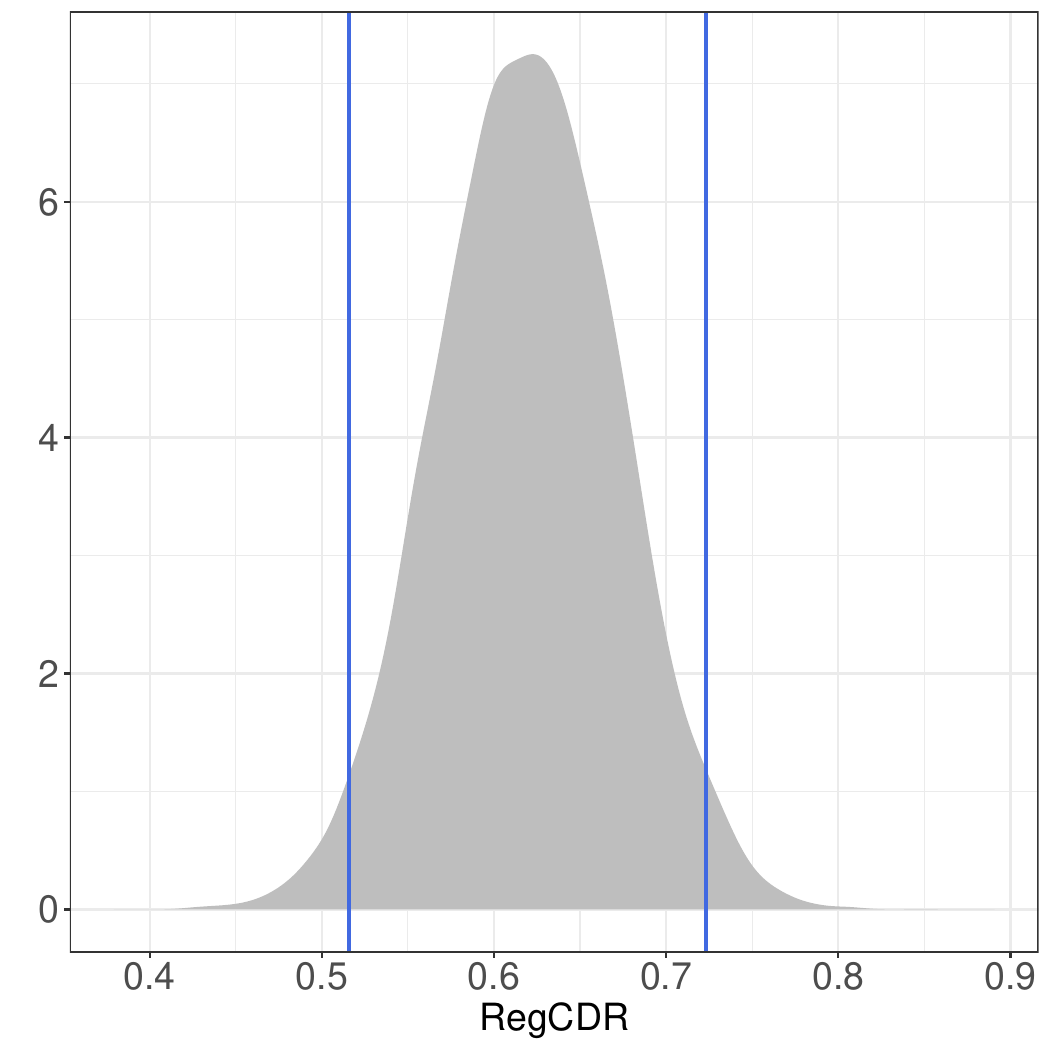}\\
\includegraphics[width=0.4\textwidth]{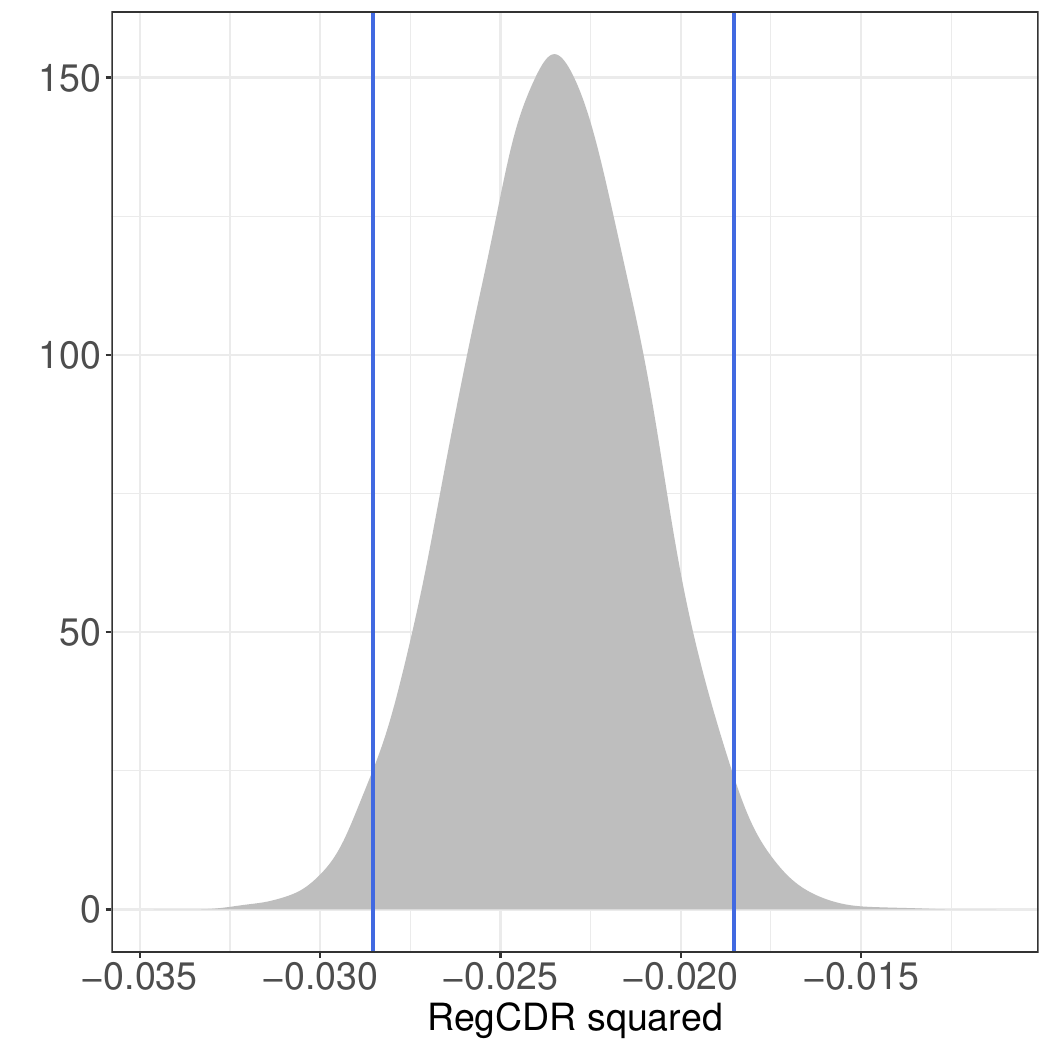} &
\includegraphics[width=0.4\textwidth]{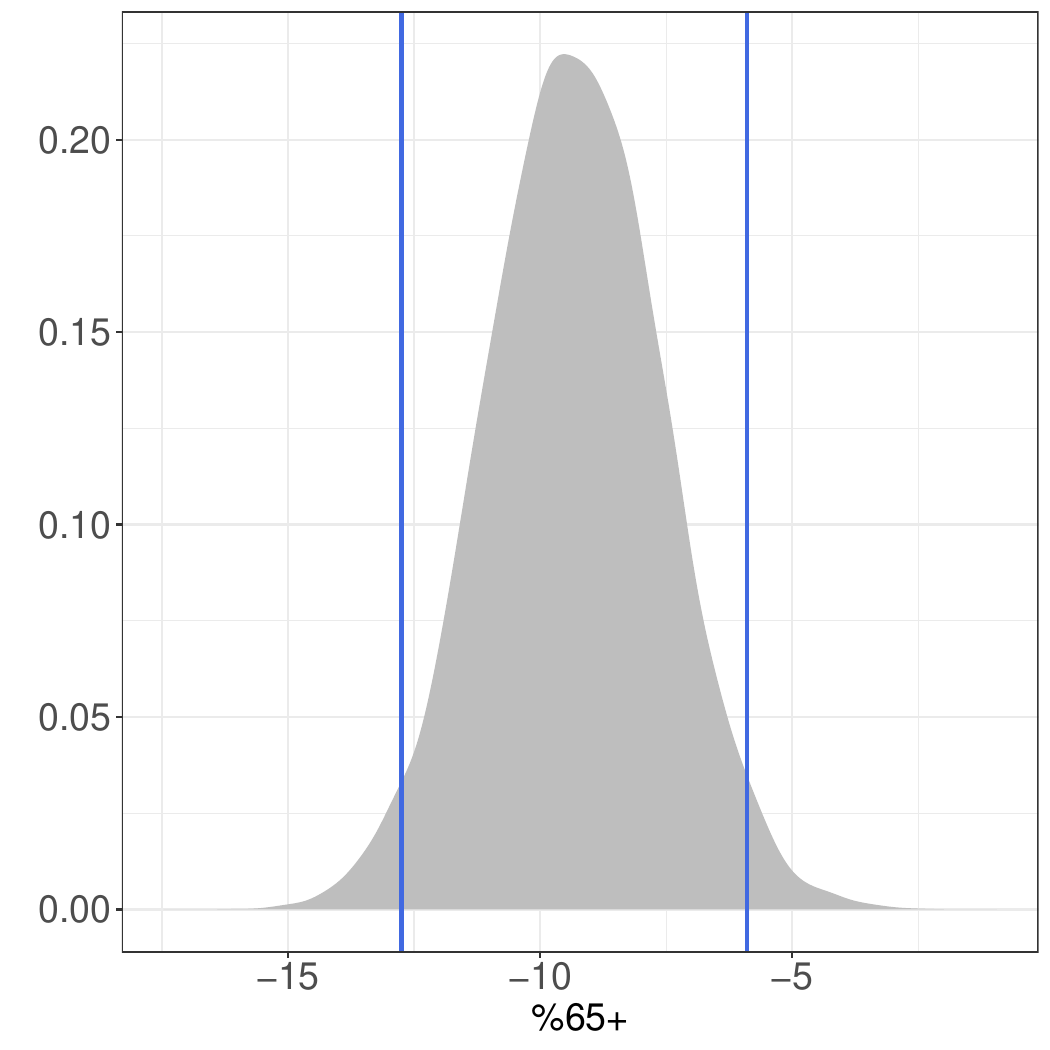}\\
\includegraphics[width=0.4\textwidth]{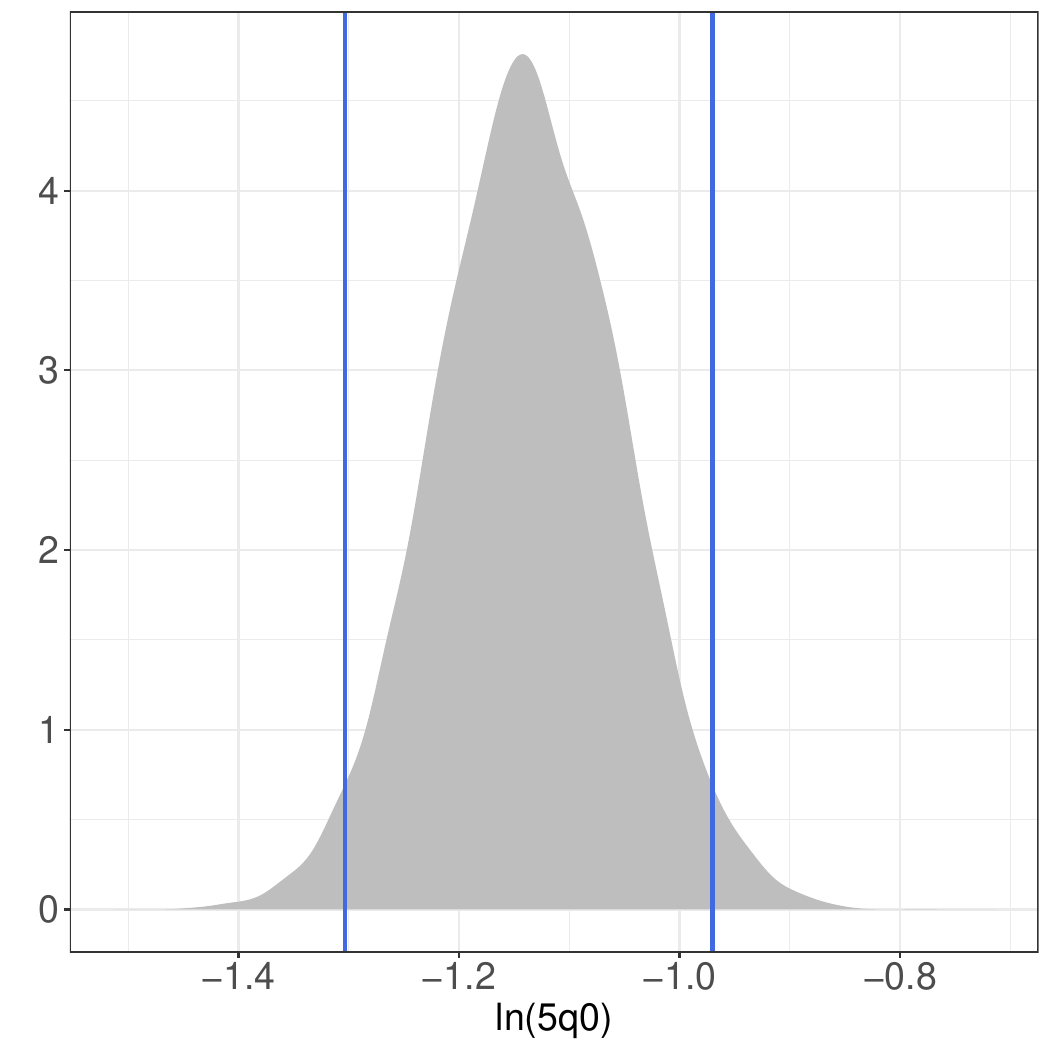} &
\includegraphics[width=0.4\textwidth]{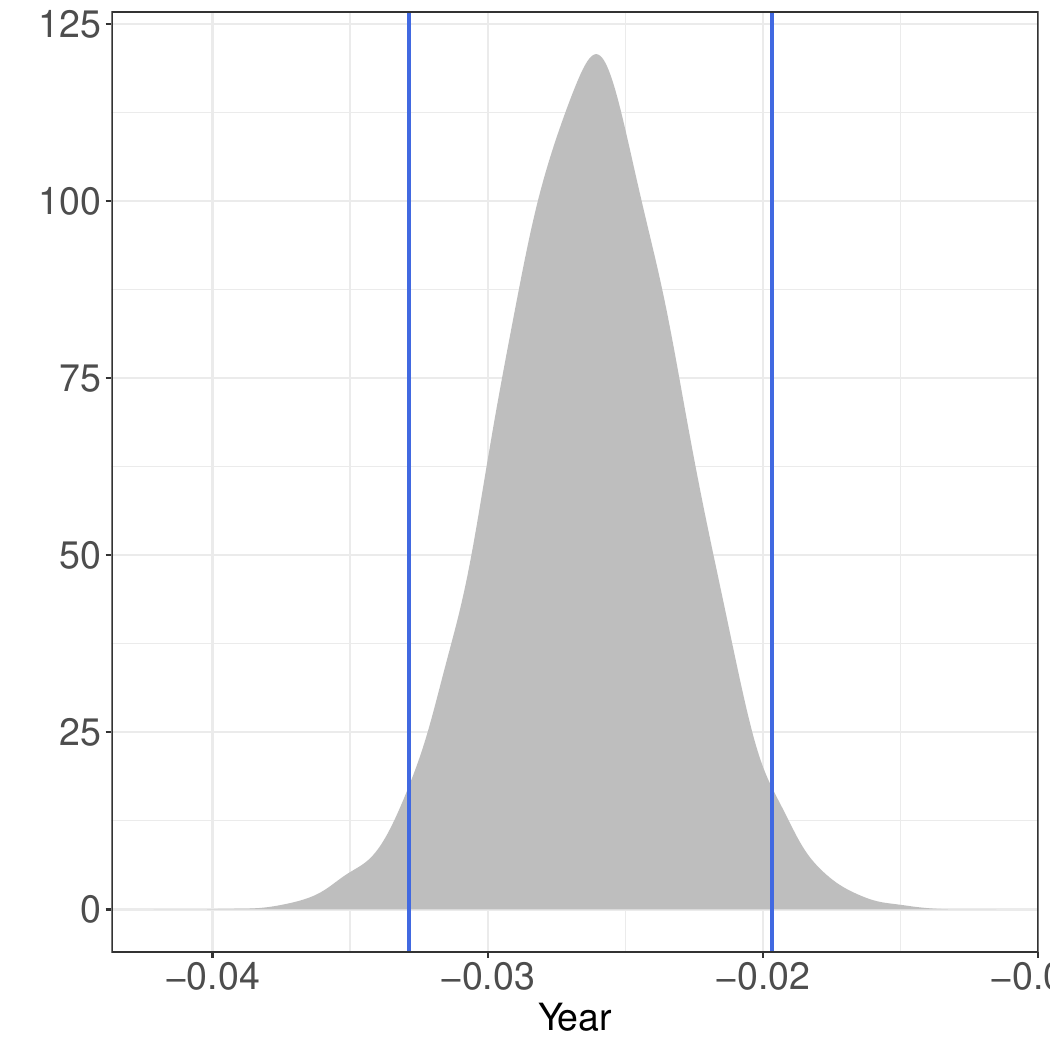}
\end{tabular}
\end{center}
\vspace{-0.5cm}
\caption{
Densities with credible intervals (blue lines). The  models are based on an  dataset updated to 2019, which uses GBD death estimates based on the GBD 2019 and  comprises 120 countries and 2,748 country-years from 1970-2019  \citep{collaborators2020global}, males and model 1 using a Half-Cauchy prior
for the local scale of the errors.}
\label{fig:fig_both9}
\end{figure}

\begin{figure}[ht]
\begin{center}
\begin{tabular}{ccc}
\includegraphics[width=0.45\textwidth]{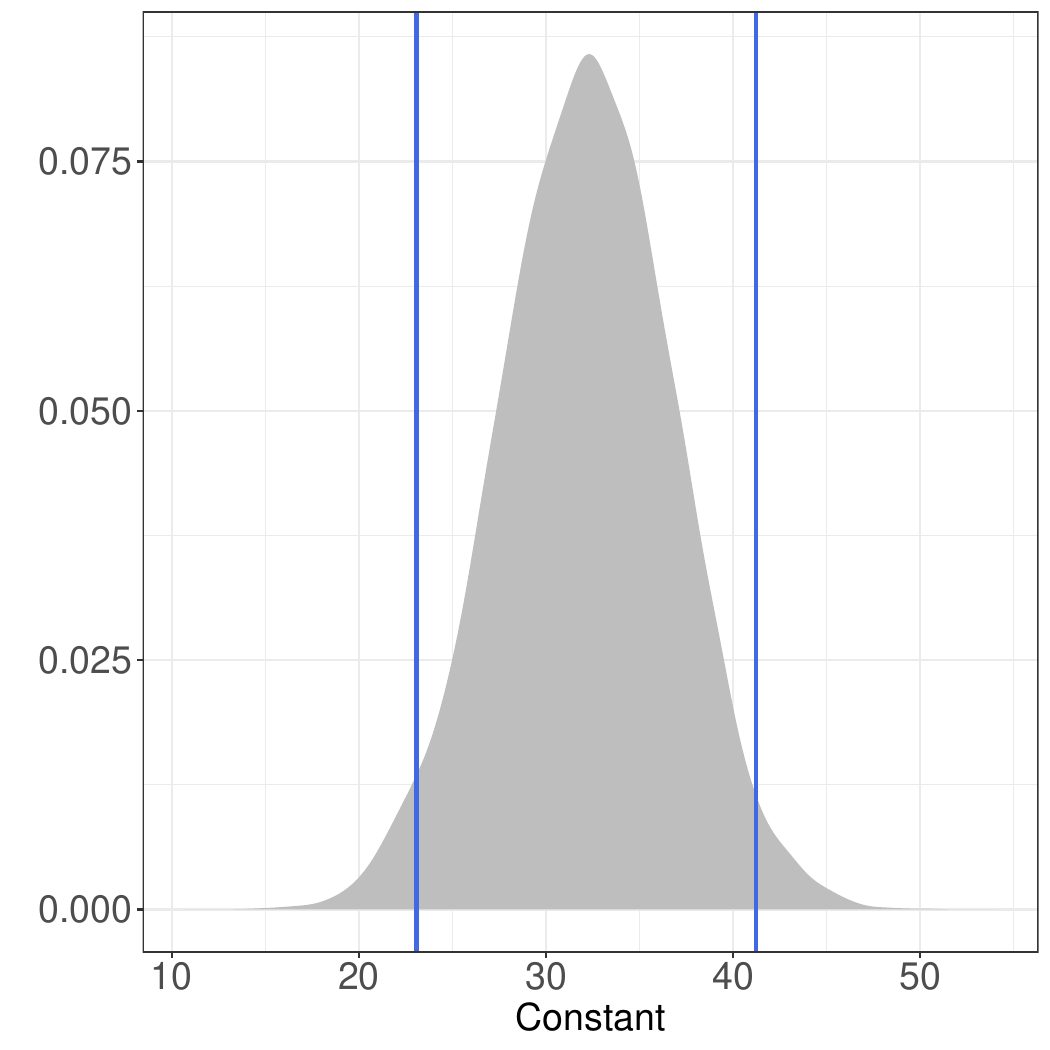} &
\includegraphics[width=0.45\textwidth]{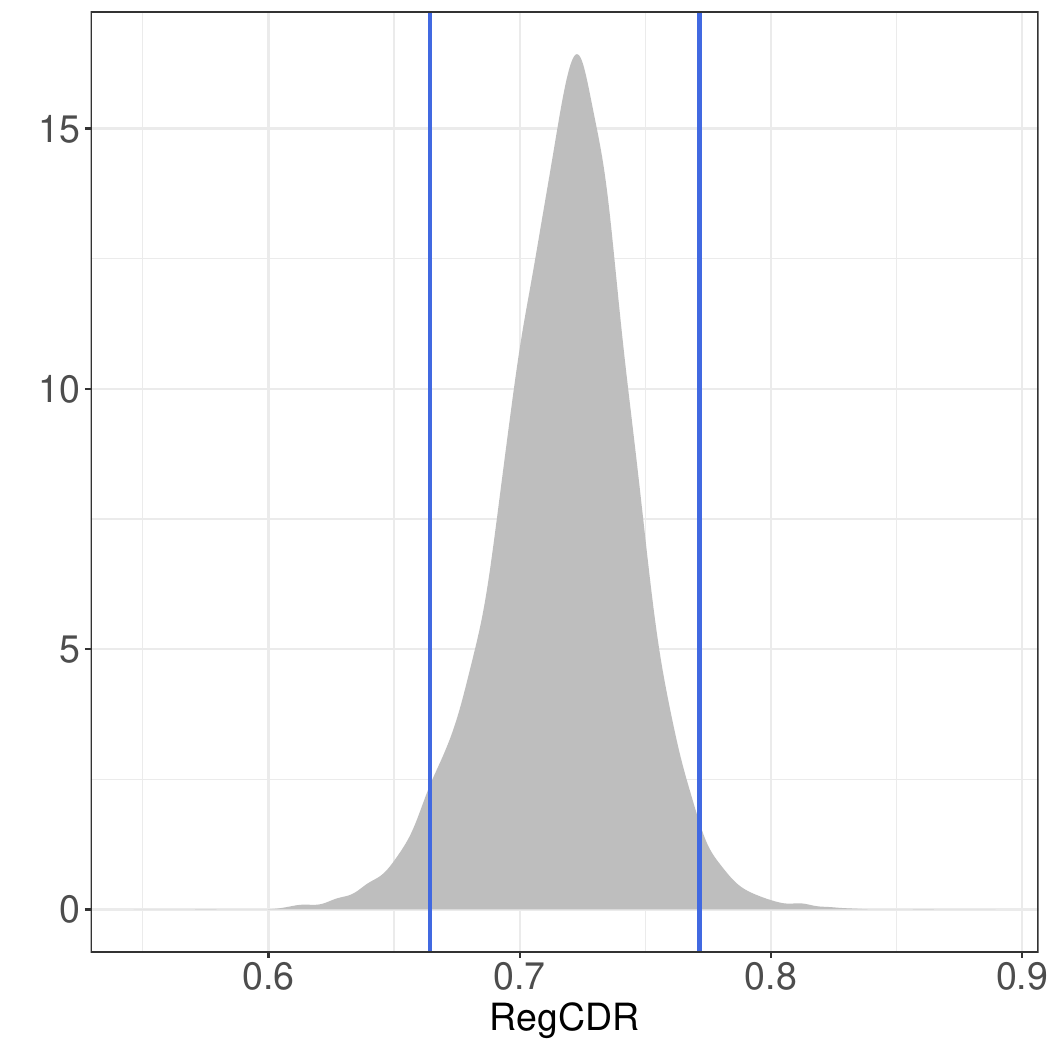}\\
\includegraphics[width=0.45\textwidth]{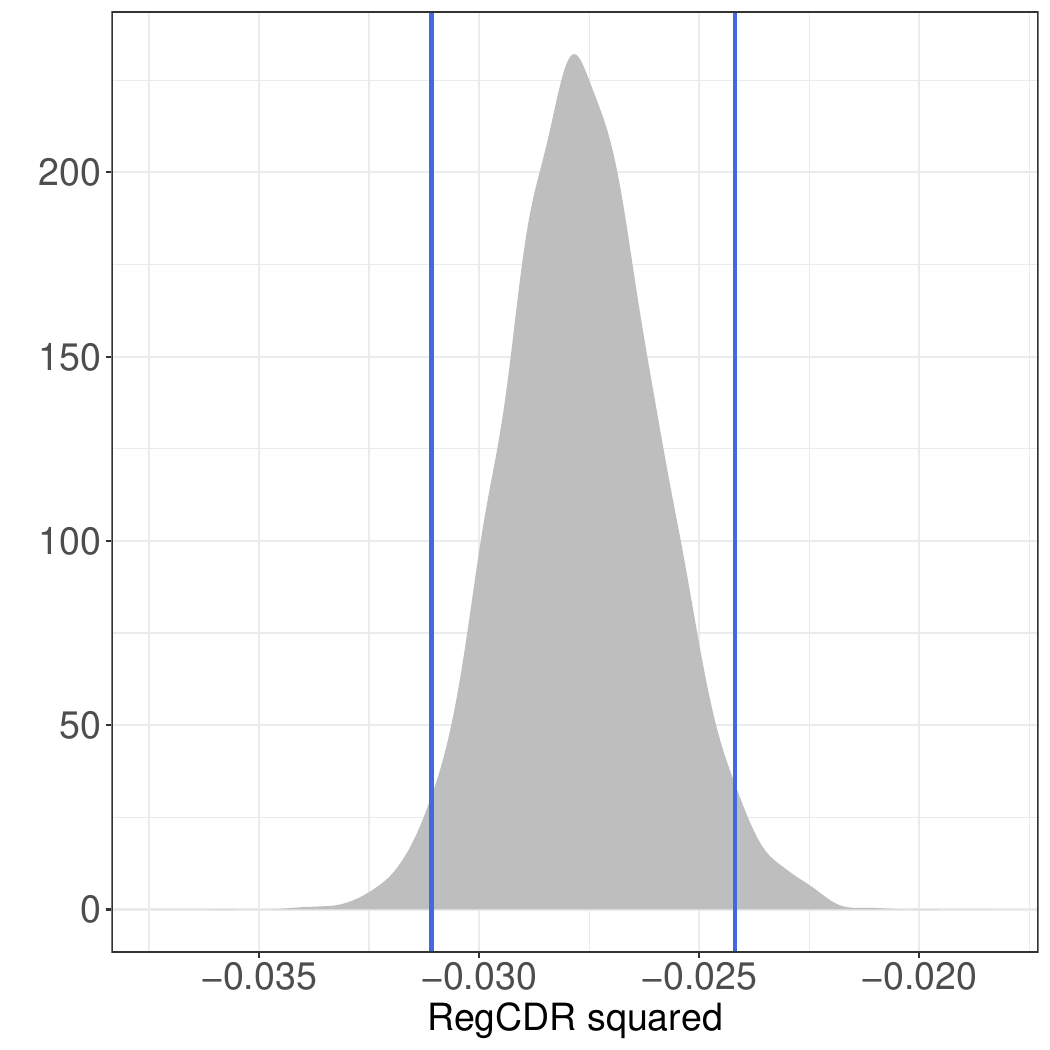} &
\includegraphics[width=0.45\textwidth]{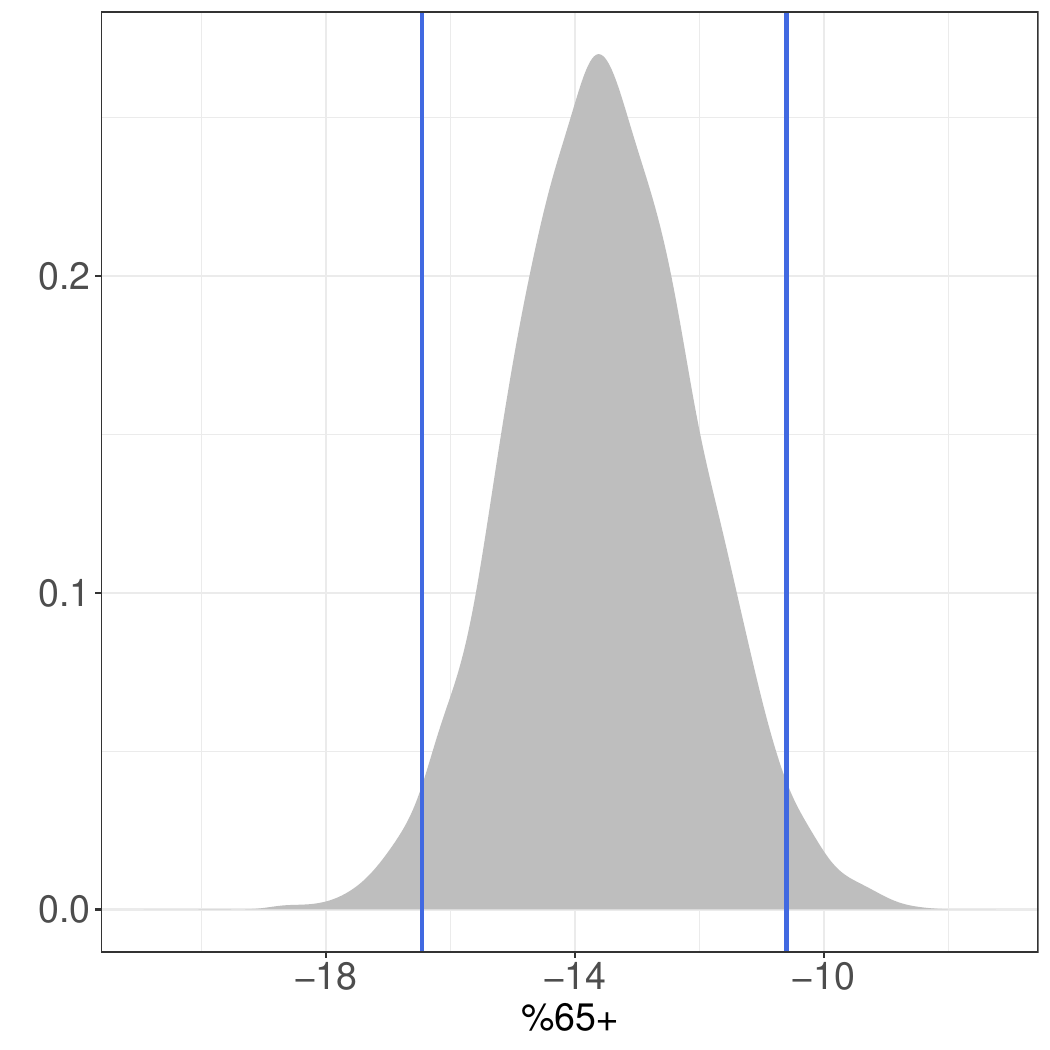}\\
\includegraphics[width=0.45\textwidth]{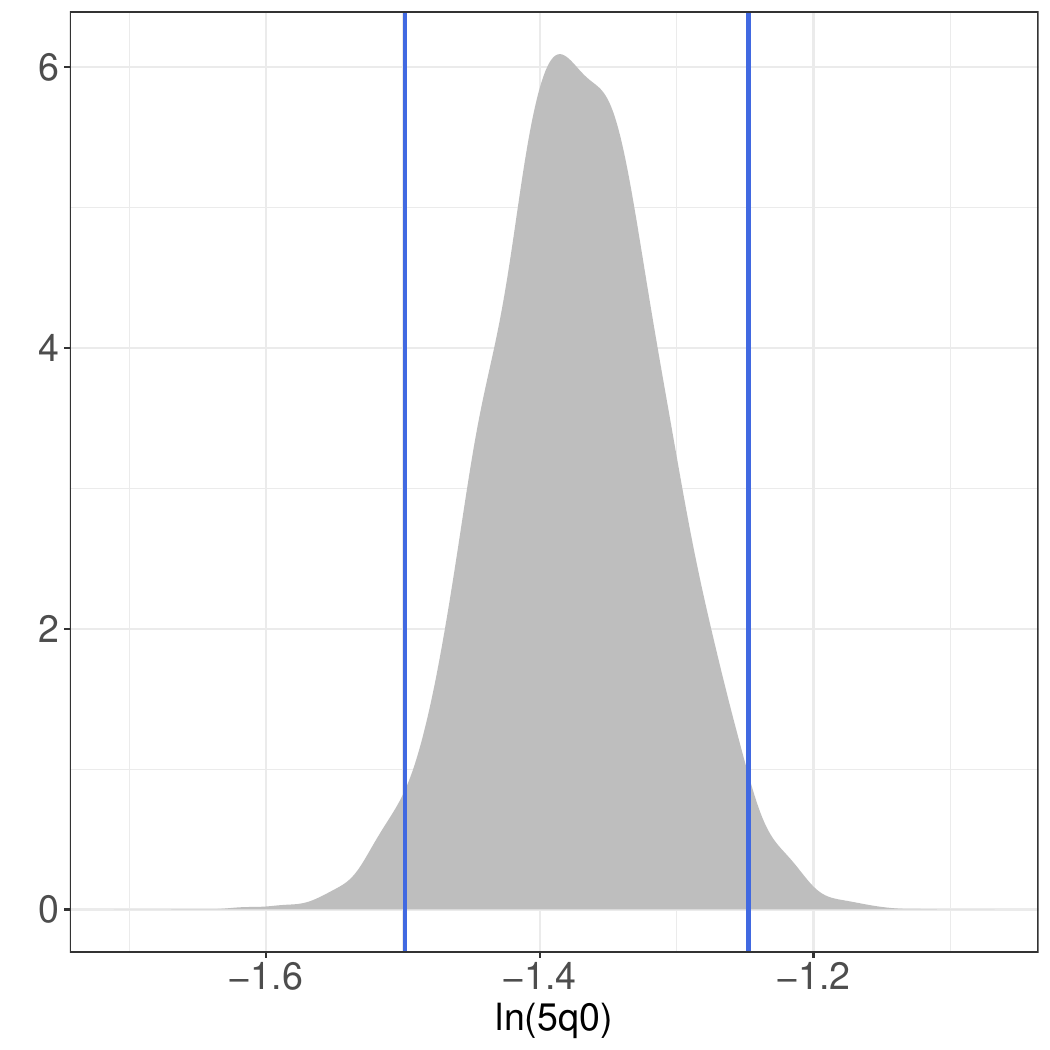} &
\includegraphics[width=0.45\textwidth]{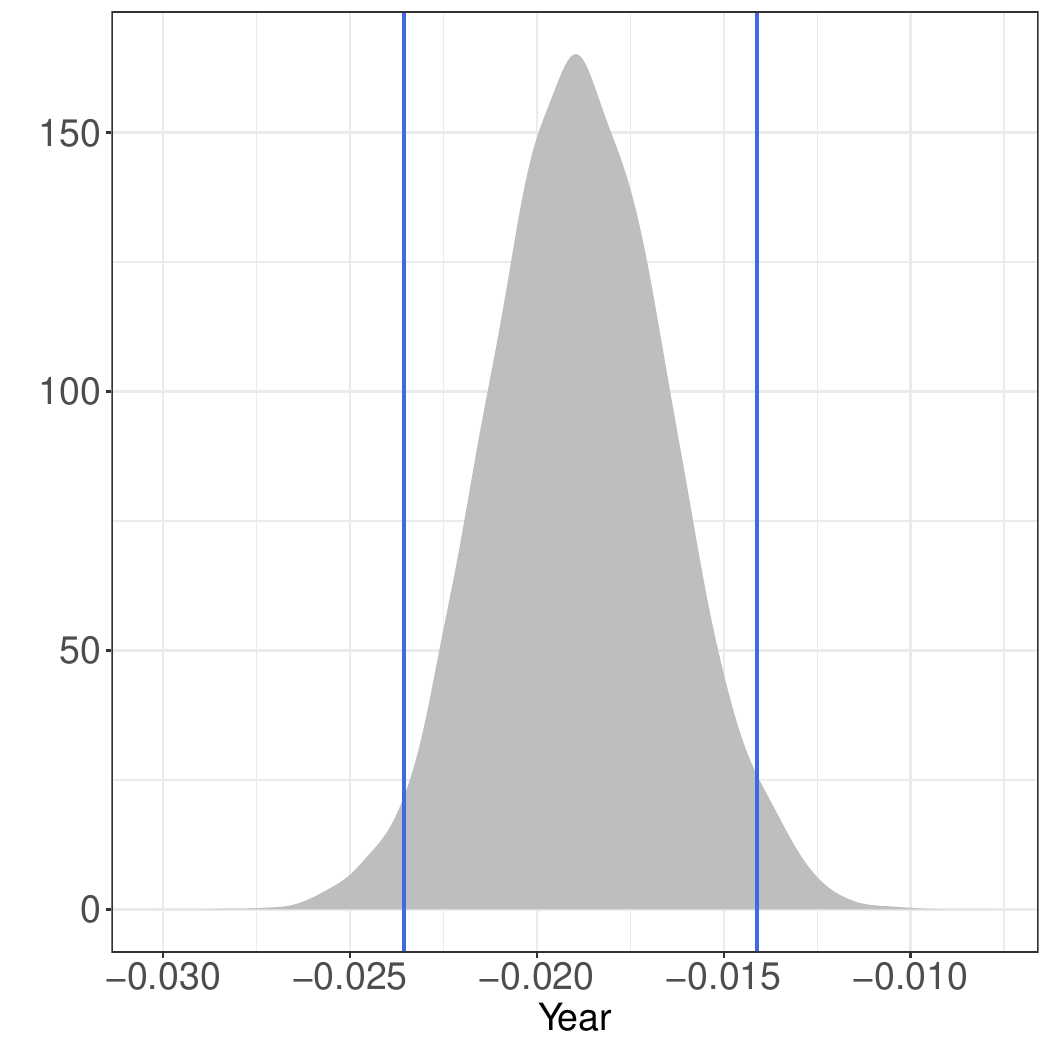}
\end{tabular}
\end{center}
\vspace{-0.5cm}
\caption{
Densities with credible intervals (blue lines). The  models are based on an  dataset updated to 2019, which uses GBD death estimates based on the GBD 2019 and  comprises 120 countries and 2,748 country-years from 1970-2019  \citep{collaborators2020global}, males and model 1 using a Half-Cauchy prior
for the local scale of the errors.}
\label{fig:fig_both10}
\end{figure}

\clearpage

\begin{figure}[ht]

\begin{center}
\begin{tabular}{ccc}
\includegraphics[width=0.4\textwidth]{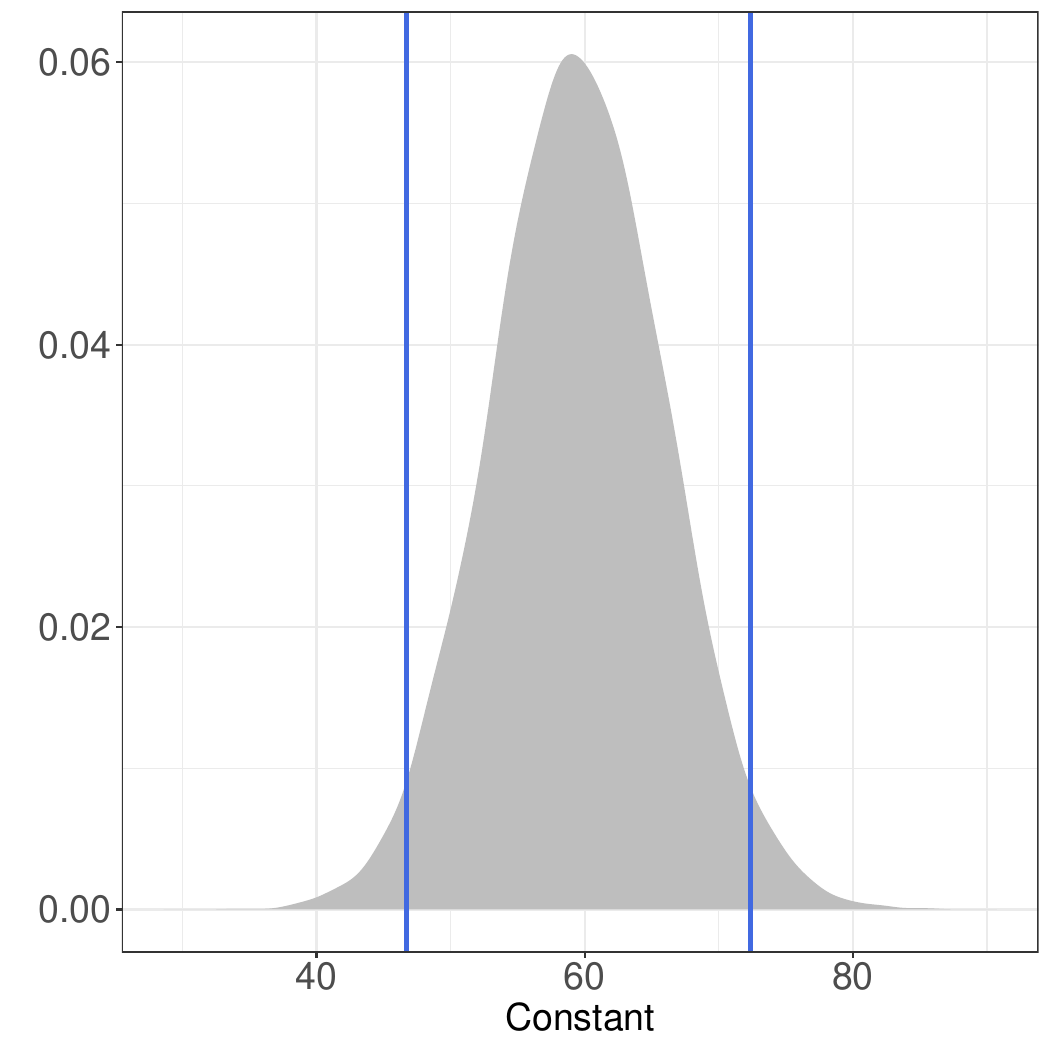} &
\includegraphics[width=0.4\textwidth]{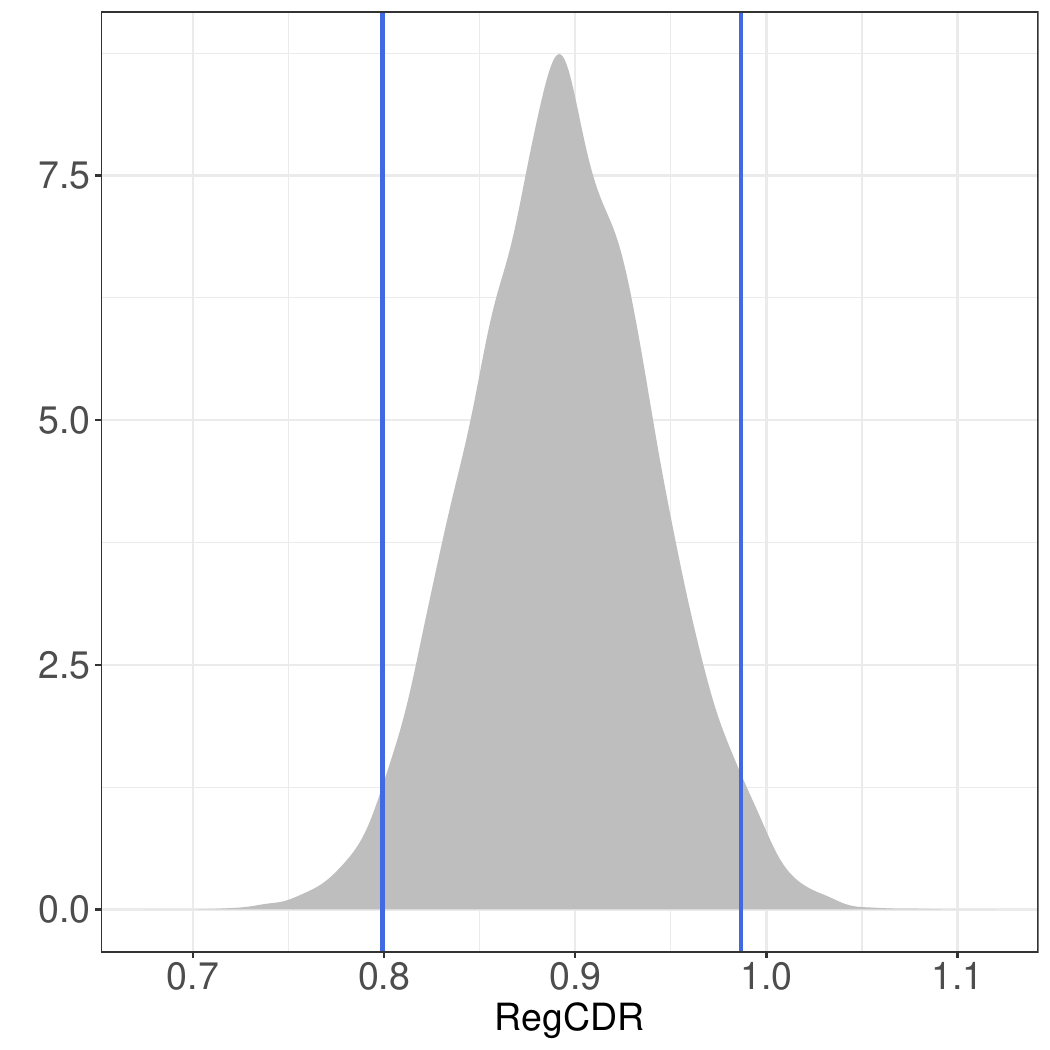}\\
\includegraphics[width=0.4\textwidth]{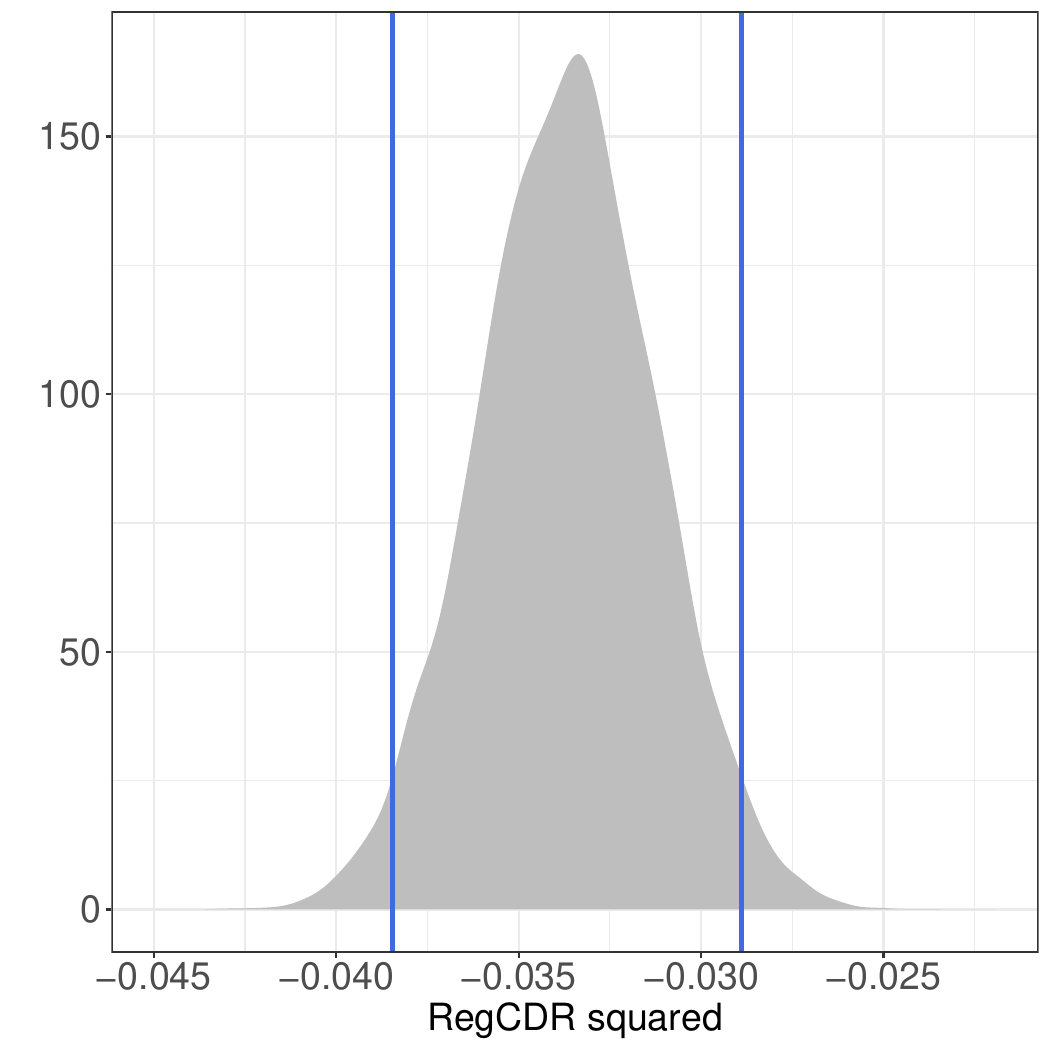} &
\includegraphics[width=0.4\textwidth]{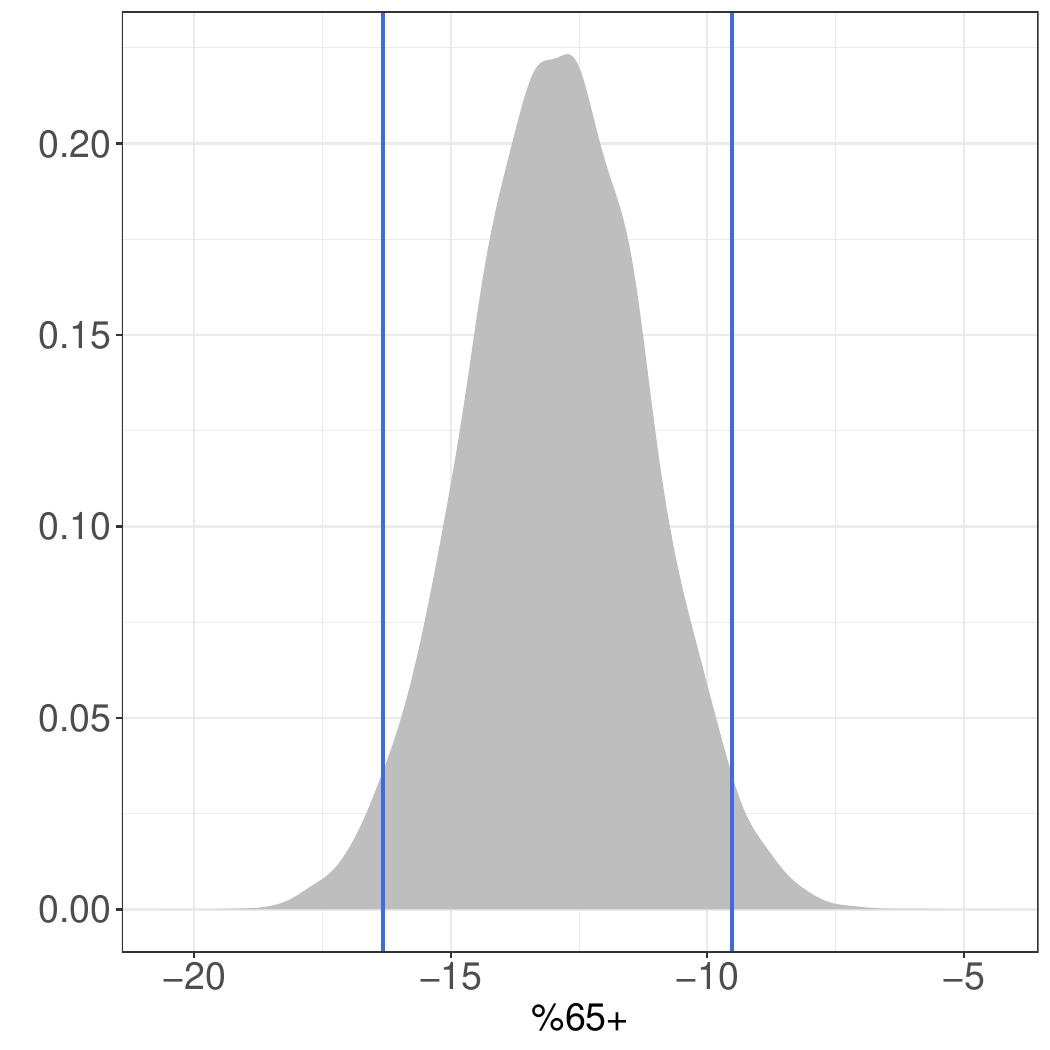}\\
\includegraphics[width=0.4\textwidth]{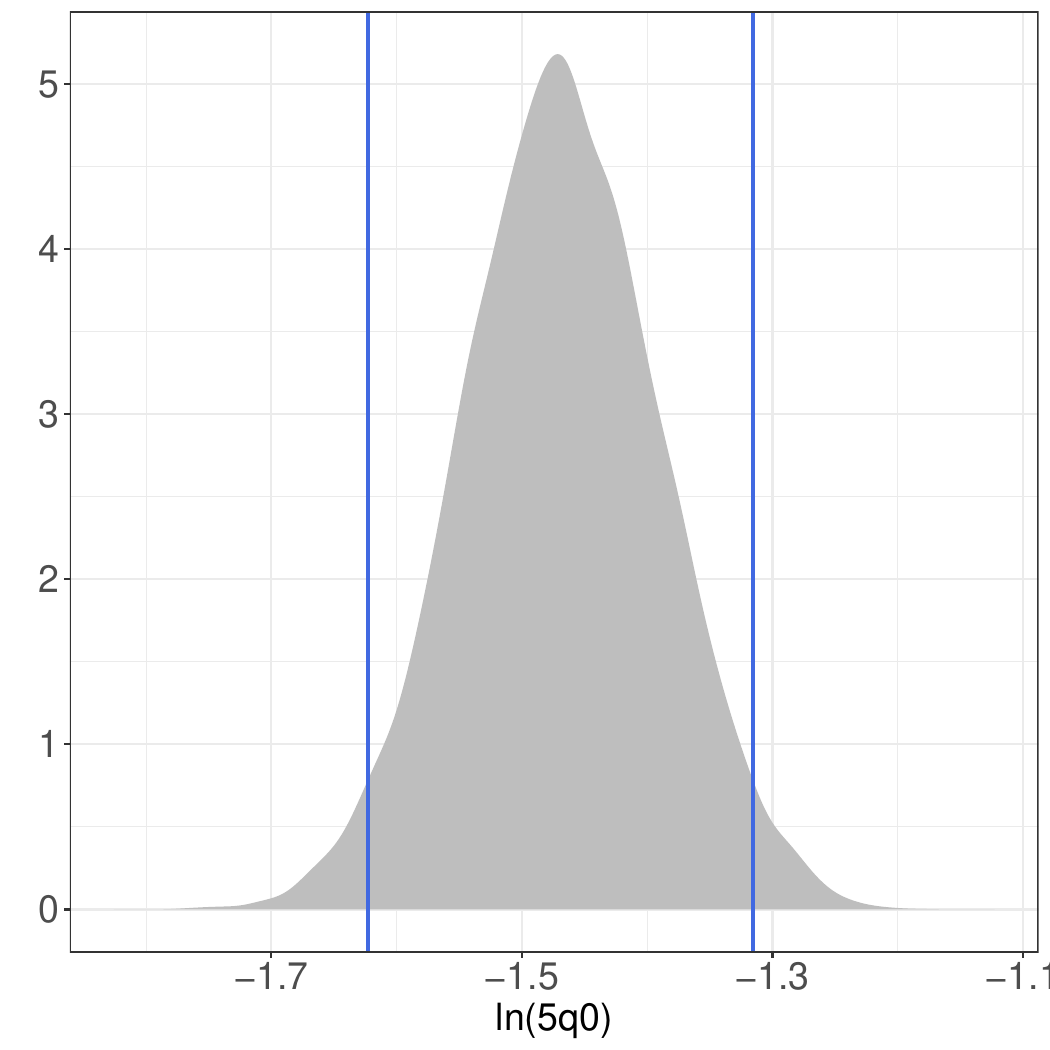} &
\includegraphics[width=0.4\textwidth]{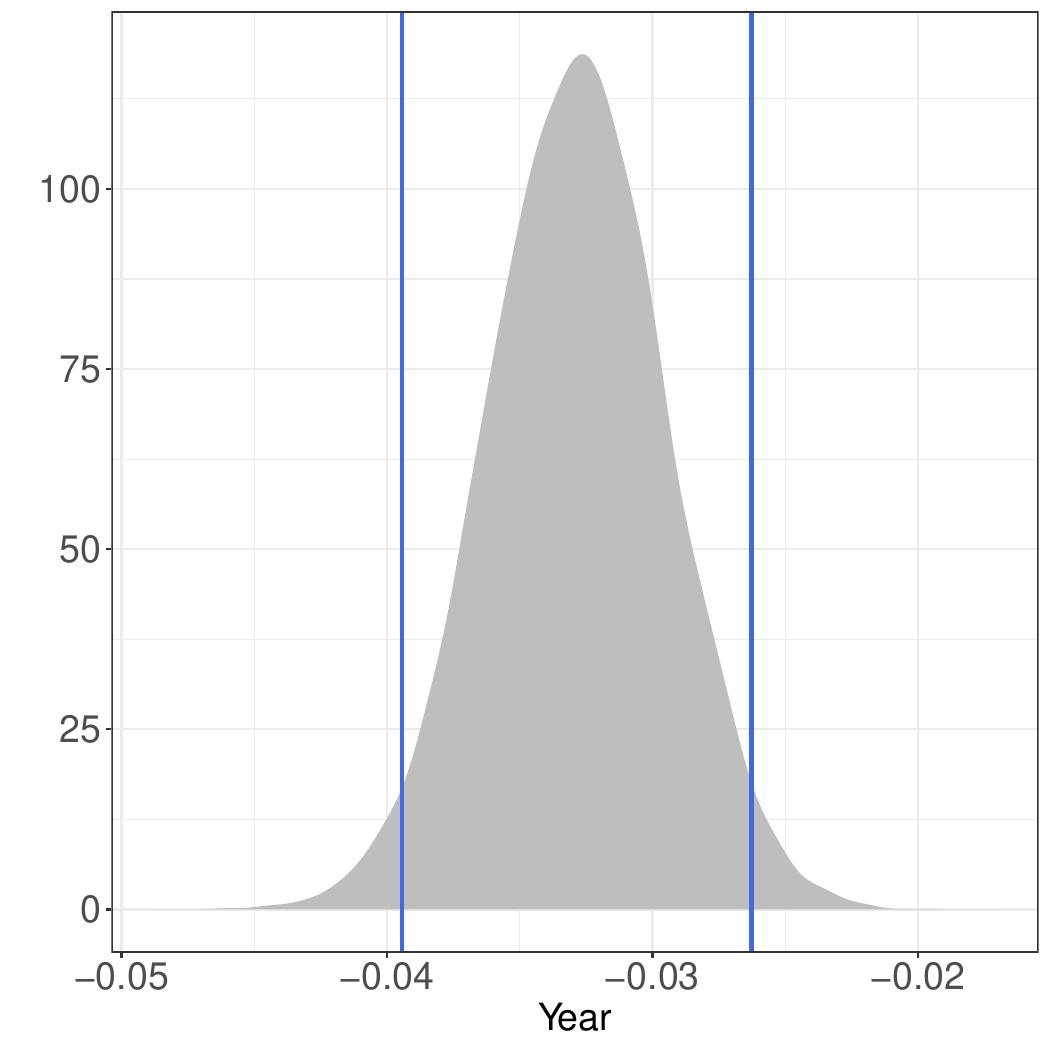}
\end{tabular}
\end{center}
\vspace{-0.5cm}
\caption{
Densities with credible intervals (blue lines). The  models are based on an  dataset updated to 2019, which uses GBD death estimates based on the GBD 2019 and  comprises 120 countries and 2,748 country-years from 1970-2019  \citep{collaborators2020global}, males and model 2 using a Half-Cauchy prior
for the local scale of the errors.}
\label{fig:fig_both11}
\end{figure}

\begin{figure}[ht]
\begin{center}
\begin{tabular}{ccc}
\includegraphics[width=0.45\textwidth]{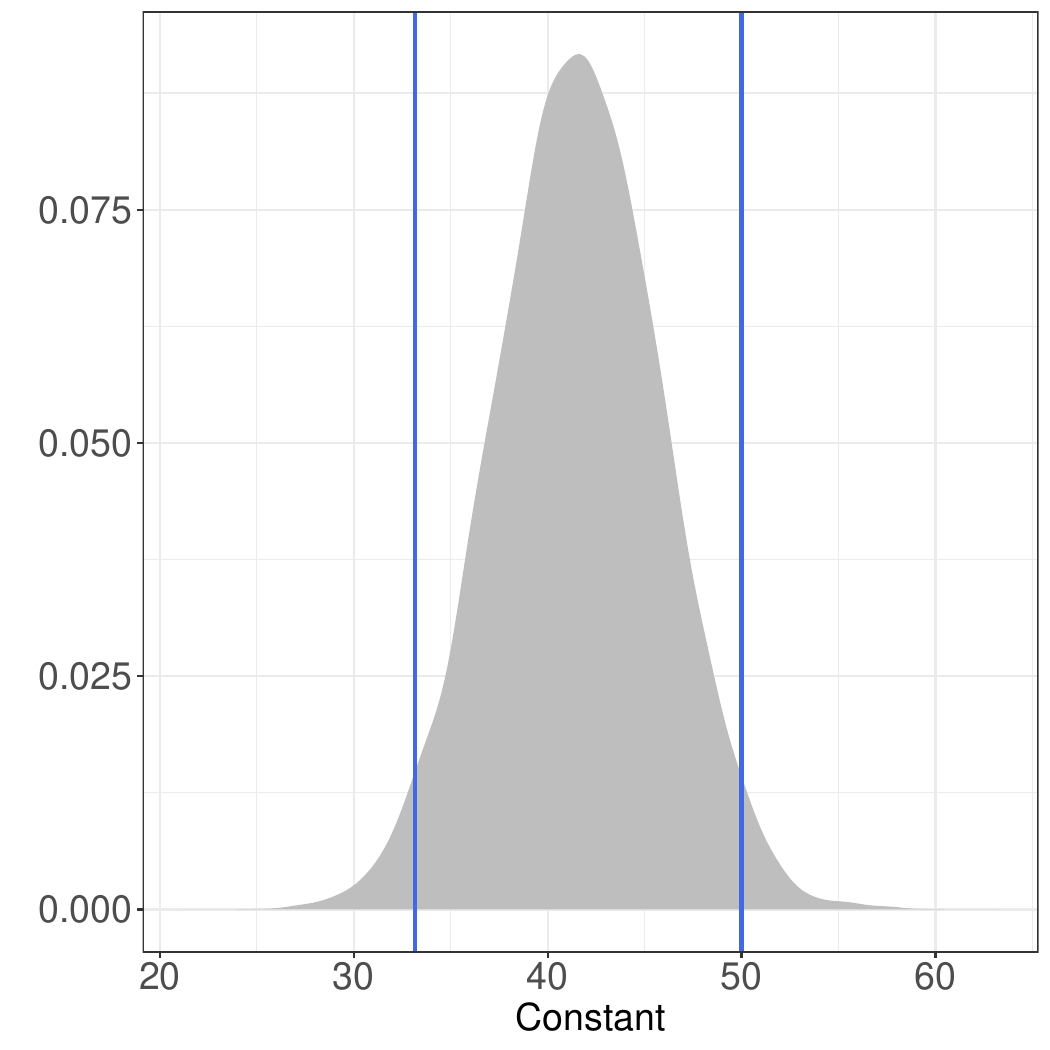} &
\includegraphics[width=0.45\textwidth]{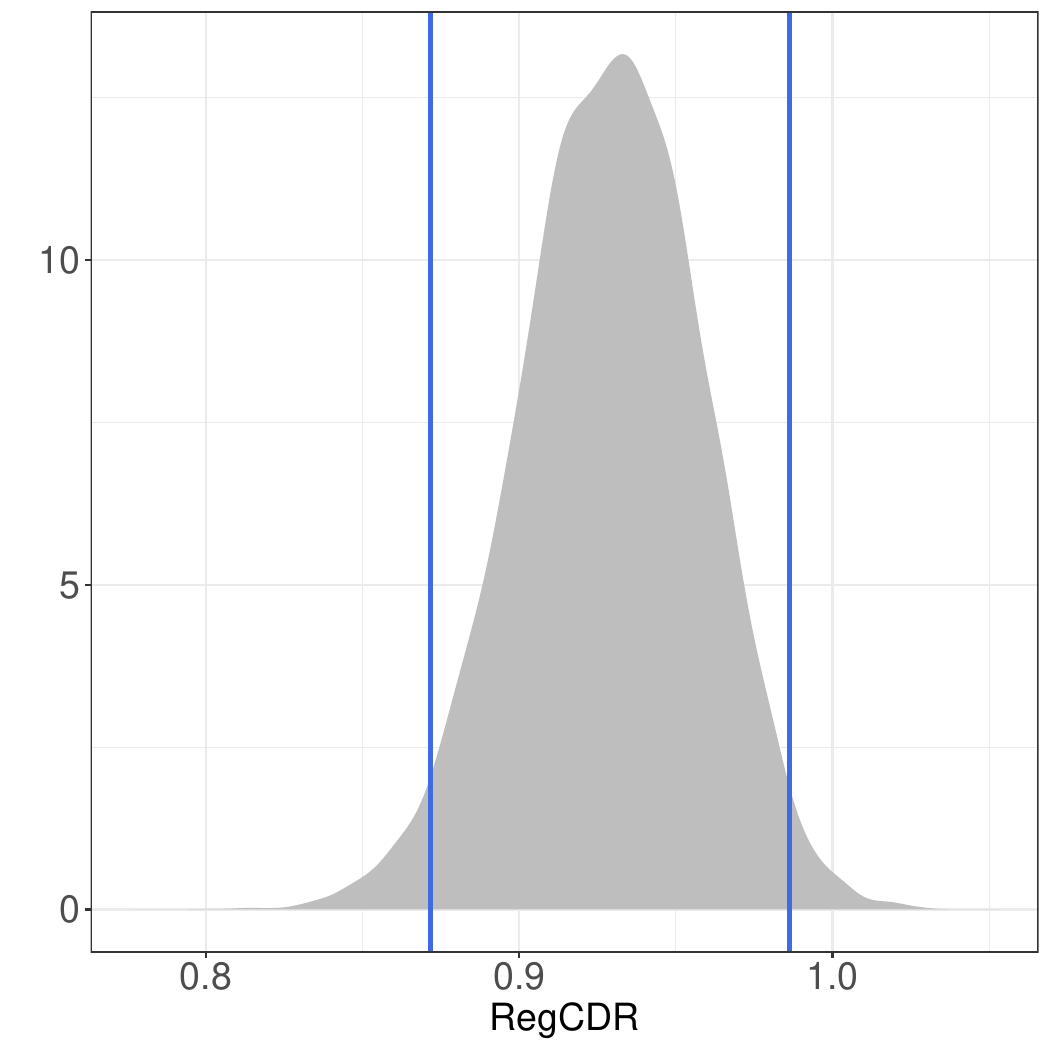}\\
\includegraphics[width=0.45\textwidth]{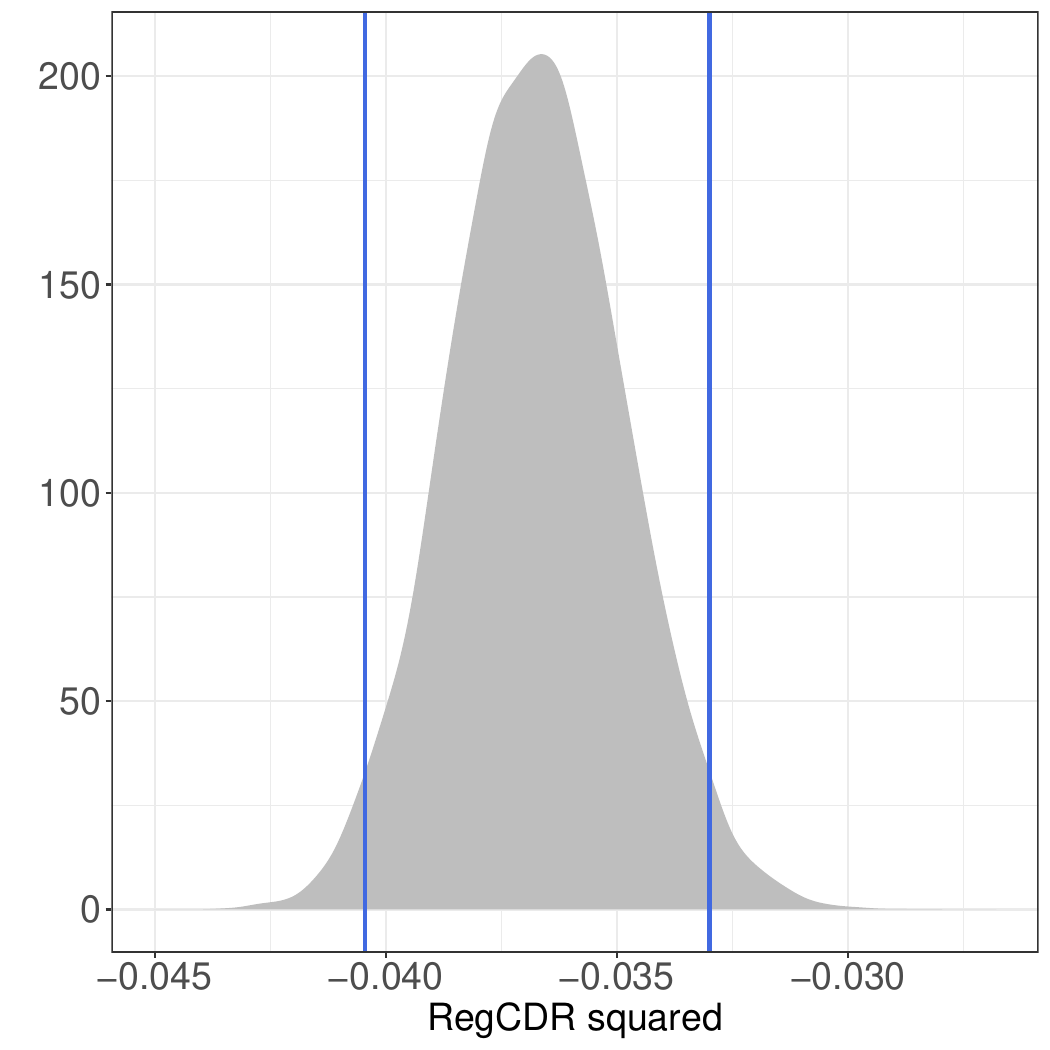} &
\includegraphics[width=0.45\textwidth]{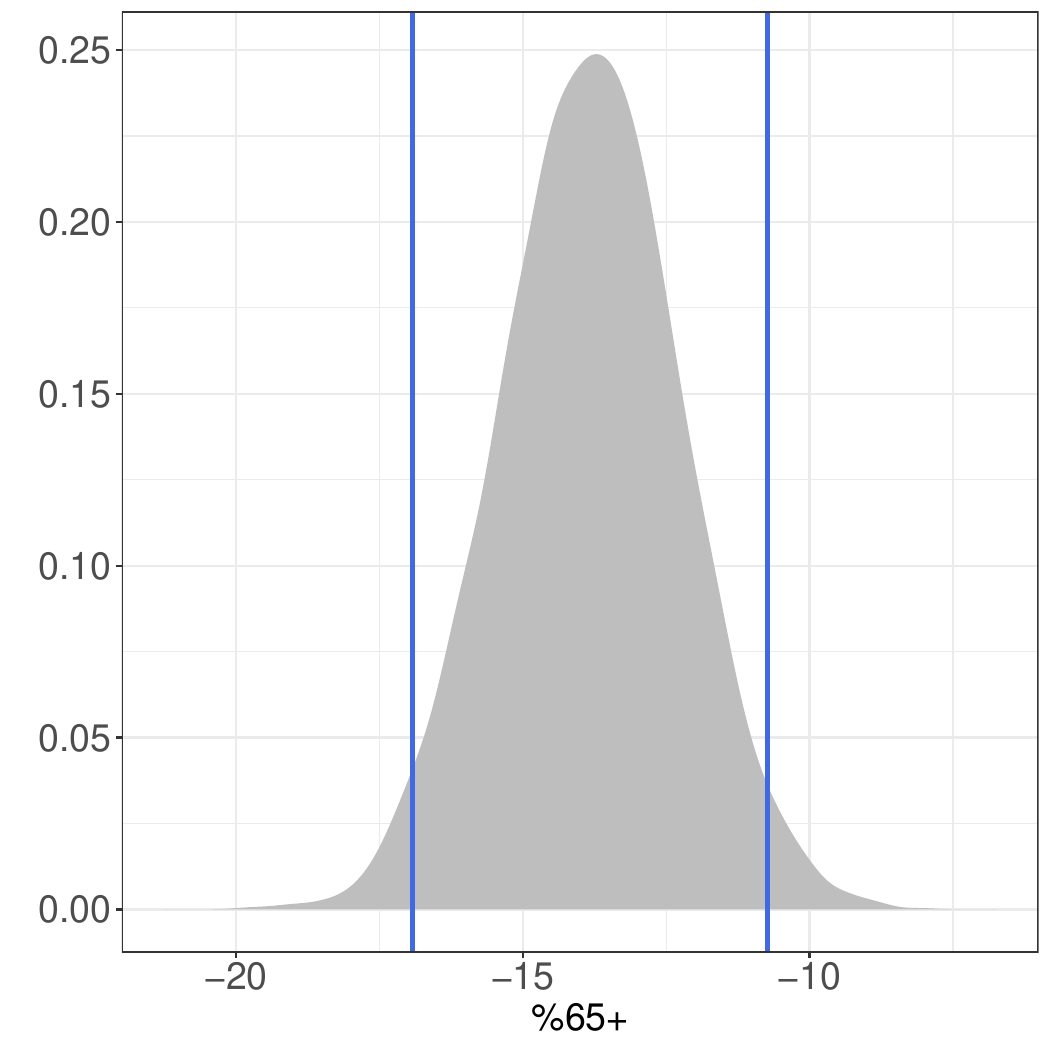}\\
\includegraphics[width=0.45\textwidth]{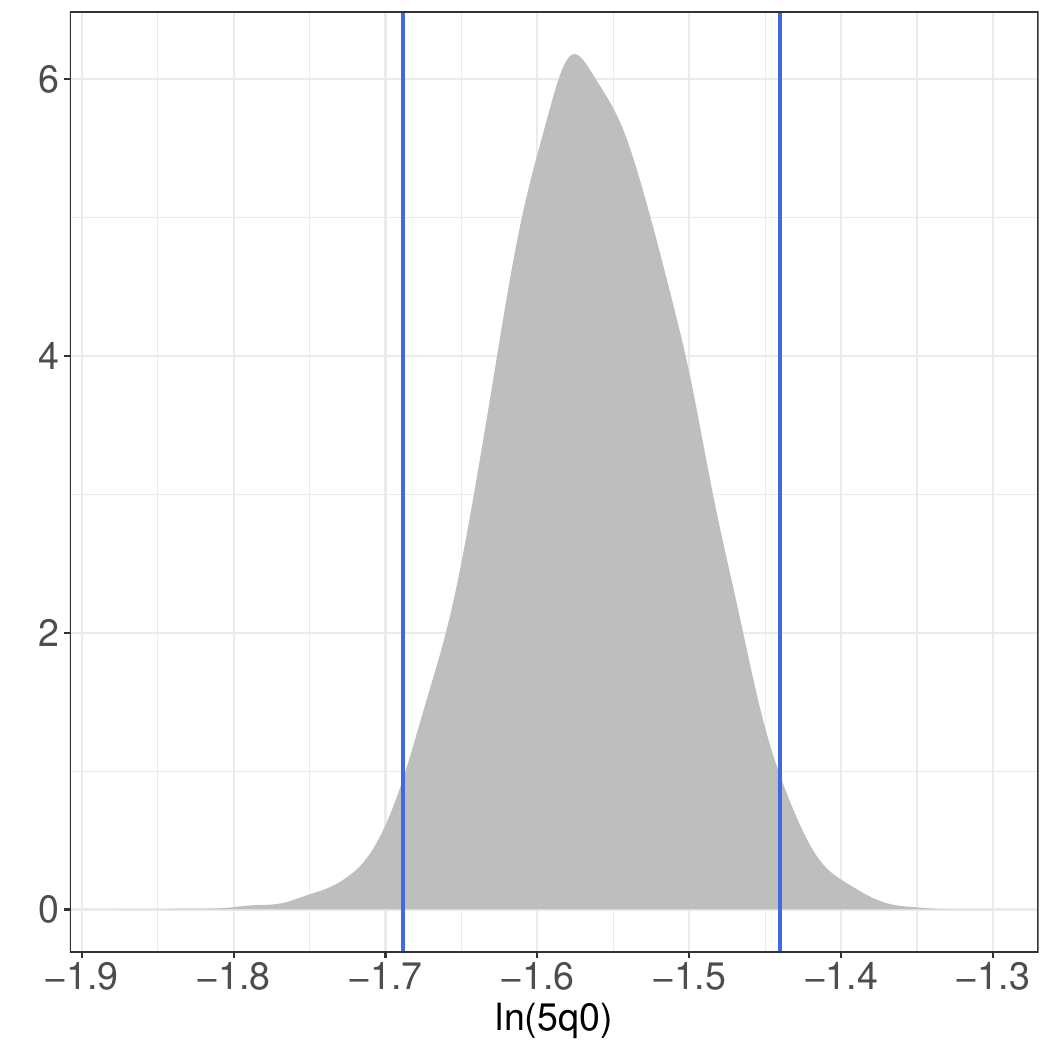} &
\includegraphics[width=0.45\textwidth]{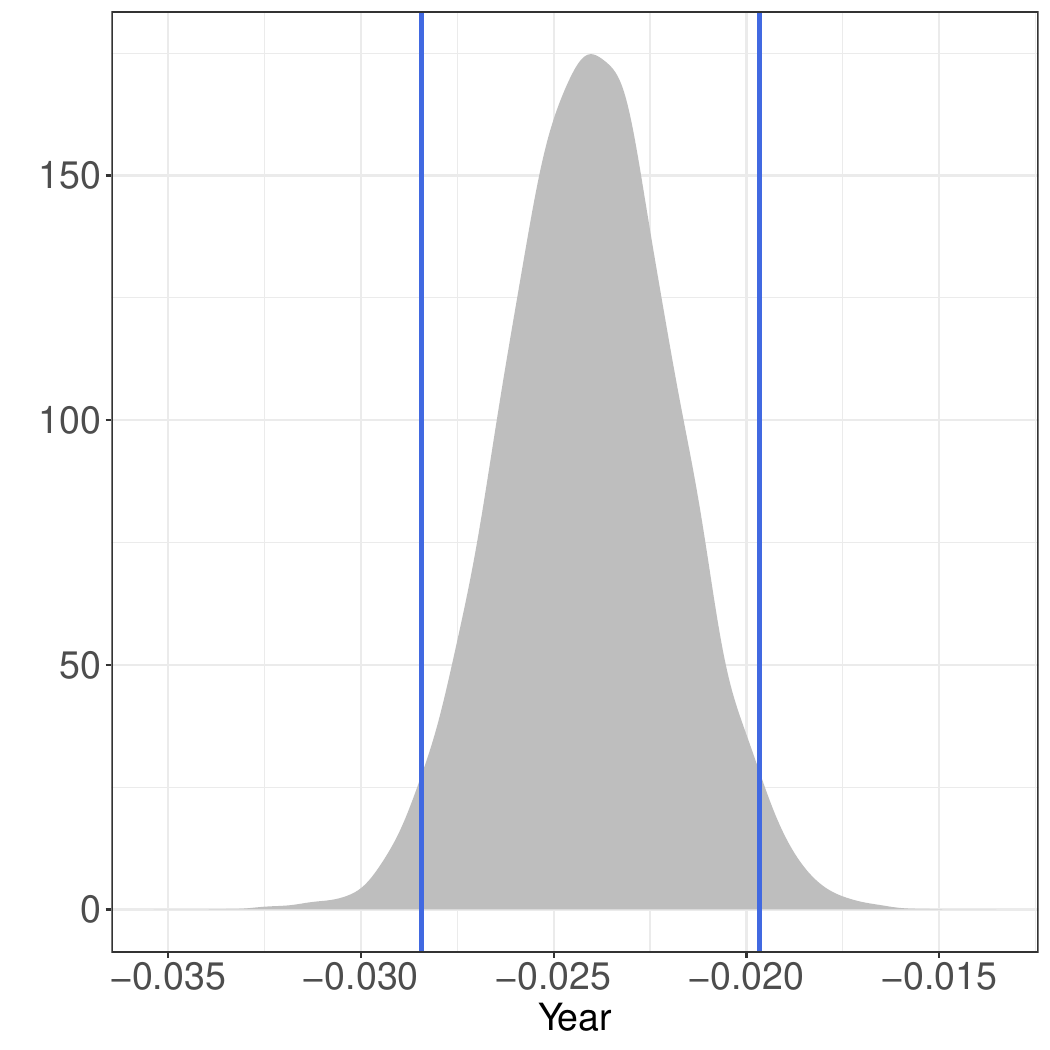}
\end{tabular}
\end{center}
\vspace{-0.5cm}
\caption{
Densities with credible intervals (blue lines). The  models are based on an  dataset updated to 2019, which uses GBD death estimates based on the GBD 2019 and  comprises 120 countries and 2,748 country-years from 1970-2019  \citep{collaborators2020global}, males and model 2 using a Half-Cauchy prior
for the local scale of the errors}
\label{fig:fig_both12}
\end{figure}

\clearpage

\section{Results subnational level}
\enlargethispage{5cm}
\label{sup_post2}

This section contains the posterior predictive  estimates at subnational levels (Departments) and national level and the corresponding 95\% credible intervals under Gamma, HS, LA and Student-t local scale priors fro the random effects and  Half-Cauchy and Gamma global scale priors for the errors. The results for model 1 and 2 for females are presented
in Tables \ref{tab:tab_measures3} and \ref{tab:tab_measures4} and for males in Tables \ref{tab:tab_measures5} and \ref{tab:tab_measures6}.

\begin{table}[ht]
\tiny
\renewcommand{\arraystretch}{1.1}
\begin{tabular}{r>{\columncolor[gray]{0.85}}rrrrrrrrrrrrrr}
\hline
  Department & Census  & Gamma &  $2.5\%$ &  $97.5\%$ & Student-t &  $2.5\%$ &  $97.5\%$ & LA &  $2.5\%$ &  $97.5\%$ & HS &  $2.5\%$ &  $97.5\%$  \\
 \hline
 Antioquia & 94 & 95 & 95 & 96 & 95 & 95 & 96 & 95 & 95 & 96 & 95 & 95 & 96 \\
Atl\'antico & 97 & 94 & 93 & 95 & 94 & 93 & 95 & 94 & 93 & 95 & 94 & 93 & 95 \\
Bogot\'a, D.C. & 97 & 93 & 92 & 94 & 93 & 92 & 94 & 93 & 92 & 94 & 93 & 92 & 94 \\
 Bol\'ivar & 90 & 92 & 91 & 93 & 92 & 91 & 93 & 92 & 91 & 93 & 92 & 90 & 93 \\
Boyac\'a  & 95 & 94 & 92 & 95 & 94 & 93 & 95 & 94 & 93 & 95 & 94 & 93 & 95 \\
 Caldas & 94 & 98 & 97 & 98 & 98 & 97 & 98 & 98 & 97 & 98 & 98 & 97 & 98 \\
 Caquet\'a & 89 & 93 & 92 & 94 & 93 & 92 & 94 & 93 & 92 & 94 & 93 & 92 & 94 \\
 Cauca & 85 & 87 & 85 & 89 & 87 & 85 & 89 & 87 & 85 & 89 & 86 & 84 & 89 \\
 Cesar & 89 & 86 & 84 & 88 & 86 & 83 & 88 & 86 & 83 & 88 & 85 & 83 & 88 \\
C\'ordoba & 89 & 89 & 86 & 90 & 89 & 86 & 90 & 88 & 86 & 90 & 88 & 86 & 90 \\
 Cundinamarca & 94 & 96 & 95 & 97 & 96 & 95 & 97 & 96 & 95 & 97 & 96 & 95 & 97 \\
  \rowcolor[gray]{0.8}
 Choc\'o & 67 & 67 & 62 & 72 & 67 & 62 & 71 & 67 & 62 & 71 & 65 & 61 & 70 \\
 Huila & 93 & 92 & 91 & 94 & 92 & 91 & 94 & 92 & 91 & 94 & 92 & 91 & 94 \\
  \rowcolor[gray]{0.8}
 La Guajira & 38 & 57 & 51 & 62 & 56 & 51 & 62 & 56 & 51 & 61 & 55 & 50 & 60 \\
 Magdalena & 90 & 91 & 89 & 92 & 91 & 89 & 92 & 91 & 89 & 92 & 91 & 89 & 92 \\
 Meta & 92 & 96 & 96 & 97 & 96 & 96 & 97 & 96 & 96 & 97 & 96 & 95 & 97 \\
 Nari\~no & 85 & 89 & 86 & 90 & 89 & 86 & 90 & 89 & 86 & 90 & 88 & 86 & 90 \\
 Norte de Santander & 92 & 93 & 92 & 94 & 93 & 92 & 94 & 93 & 92 & 94 & 93 & 92 & 94 \\
 Quindio & 95 & 96 & 95 & 97 & 96 & 95 & 97 & 96 & 95 & 97 & 96 & 95 & 97 \\
 Risaralda & 95 & 97 & 96 & 97 & 97 & 96 & 97 & 97 & 96 & 97 & 97 & 96 & 97 \\
 Santander & 95 & 93 & 92 & 94 & 93 & 92 & 94 & 93 & 92 & 94 & 93 & 92 & 94 \\
 Sucre & 88 & 95 & 94 & 96 & 95 & 94 & 96 & 95 & 94 & 96 & 94 & 93 & 95 \\
 Tolima & 93 & 93 & 91 & 94 & 93 & 91 & 94 & 93 & 91 & 94 & 93 & 91 & 94 \\
 Valle del Cauca & 96 & 96 & 95 & 96 & 96 & 95 & 97 & 96 & 95 & 97 & 96 & 95 & 97 \\
 Arauca & 86 & 95 & 94 & 96 & 95 & 94 & 96 & 95 & 94 & 96 & 95 & 94 & 96 \\
 Casanare & 86 & 95 & 94 & 96 & 95 & 94 & 96 & 95 & 94 & 96 & 95 & 93 & 96 \\
 Putumayo & 82 & 87 & 84 & 89 & 87 & 84 & 89 & 87 & 84 & 89 & 86 & 84 & 88 \\
San Andr\'es & 95 & 93 & 92 & 94 & 93 & 92 & 94 & 93 & 91 & 94 & 93 & 91 & 94 \\
  \rowcolor[gray]{0.8}
 Amazonas & 76 & 68 & 64 & 72 & 68 & 64 & 72 & 68 & 63 & 72 & 67 & 63 & 71 \\
 Guain\'ia & 56 & 89 & 85 & 91 & 88 & 85 & 91 & 88 & 85 & 91 & 87 & 84 & 90 \\
 Guaviare & 79 & 93 & 91 & 94 & 93 & 91 & 94 & 93 & 91 & 94 & 92 & 90 & 94 \\
  \rowcolor[gray]{0.8}
 Vaup\'es & 62 & 63 & 58 & 68 & 63 & 58 & 68 & 63 & 58 & 67 & 61 & 57 & 67 \\
 Vichada & 62 & 88 & 84 & 90 & 87 & 84 & 90 & 87 & 84 & 90 & 86 & 82 & 89 \\
 \rowcolor[gray]{0.8}
 National & 92 & 93 & 91 & 94 & 93 & 91 & 94 & 93 & 91 & 94 & 93 & 91 & 94 \\
   \hline
\end{tabular}
\caption{\small
Posterior predictive estimates and 95\% credible intervals under Gamma, HS, LA and Student-t local priors for the random effects and  Half-Cauchy local prior for the errors and
census based completeness values. Inf
and Sup represent the 2.5\%  and 97.5\% quantiles from the posterior marginal distribution
of the completeness in each Department and at the national level. Females and model 1. }
\label{tab:tab_measures3}
\end{table}

\clearpage

\begin{table}[ht]
\tiny
\renewcommand{\arraystretch}{1.5}
\begin{tabular}{r>{\columncolor[gray]{0.85}}rrrrrrrrrrrrrr}
\hline
  Department & Census & Gamma &  $2.5\%$ &  $97.5\%$ & Student-t &  $2.5\%$ &  $97.5\%$ & LA &  $2.5\%$ &  $97.5\%$ & HS &  $2.5\%$ &  $97.5\%$  \\
 \hline
Antioquia & 94 & 95 & 94 & 96 & 95 & 94 & 96 & 95 & 94 & 96 & 95 & 94 & 96 \\
Atl\'antico & 97 & 94 & 93 & 95 & 94 & 93 & 95 & 94 & 93 & 95 & 94 & 92 & 95 \\
Bogot\'a, D.C. & 97 & 93 & 92 & 94 & 93 & 92 & 94 & 93 & 92 & 94 & 93 & 91 & 94 \\
 Bol\'ivar & 90 & 91 & 89 & 92 & 91 & 89 & 92 & 91 & 89 & 92 & 91 & 89 & 92 \\
Boyac\'a & 95 & 94 & 93 & 95 & 94 & 93 & 95 & 94 & 93 & 95 & 94 & 93 & 95 \\
 Caldas & 94 & 98 & 97 & 98 & 98 & 97 & 98 & 98 & 97 & 98 & 98 & 97 & 98 \\
 Caquet\'a & 89 & 92 & 91 & 93 & 92 & 91 & 93 & 92 & 91 & 93 & 92 & 90 & 93 \\
Cauca & 85 & 85 & 83 & 88 & 85 & 83 & 88 & 85 & 83 & 88 & 85 & 82 & 88 \\
 Cesar & 89 & 83 & 81 & 86 & 83 & 81 & 86 & 83 & 81 & 86 & 83 & 80 & 86 \\
C\'ordoba & 89 & 87 & 84 & 89 & 87 & 85 & 89 & 87 & 84 & 89 & 87 & 84 & 89 \\
 Cundinamarca & 94 & 95 & 94 & 96 & 95 & 95 & 96 & 95 & 94 & 96 & 95 & 94 & 96 \\
  \rowcolor[gray]{0.8}
 Choc\'o  & 67 & 60 & 56 & 65 & 60 & 56 & 65 & 60 & 55 & 65 & 60 & 55 & 65 \\
Huila & 93 & 93 & 92 & 94 & 93 & 92 & 94 & 93 & 92 & 94 & 93 & 92 & 94 \\
 \rowcolor[gray]{0.8}
 La Guajira & 38 & 53 & 48 & 58 & 53 & 47 & 58 & 52 & 47 & 57 & 52 & 47 & 57 \\
 Magdalena & 90 & 89 & 87 & 91 & 89 & 87 & 91 & 89 & 87 & 91 & 89 & 87 & 91 \\
 Meta & 92 & 95 & 94 & 96 & 95 & 94 & 96 & 95 & 94 & 96 & 95 & 94 & 96 \\
 Nari\~no  & 85 & 89 & 87 & 91 & 89 & 87 & 91 & 89 & 87 & 91 & 89 & 86 & 91 \\
 Norte de Santander & 92 & 93 & 92 & 94 & 94 & 92 & 95 & 93 & 92 & 94 & 93 & 92 & 94 \\
 Quindio & 95 & 96 & 95 & 97 & 96 & 95 & 97 & 96 & 95 & 97 & 96 & 95 & 97 \\
 Risaralda & 95 & 97 & 96 & 97 & 97 & 96 & 97 & 97 & 96 & 97 & 97 & 96 & 97 \\
 Santander & 95 & 94 & 93 & 95 & 94 & 93 & 95 & 94 & 93 & 95 & 94 & 92 & 95 \\
 Sucre & 88 & 93 & 92 & 94 & 93 & 92 & 94 & 93 & 92 & 94 & 93 & 92 & 94 \\
 Tolima & 93 & 94 & 93 & 95 & 94 & 93 & 95 & 94 & 93 & 95 & 94 & 93 & 95 \\
 Valle del Cauca & 96 & 96 & 95 & 97 & 96 & 95 & 97 & 96 & 95 & 97 & 96 & 95 & 97 \\
 Arauca & 86 & 94 & 93 & 95 & 94 & 93 & 95 & 94 & 93 & 95 & 94 & 93 & 95 \\
 Casanare & 86 & 92 & 91 & 94 & 92 & 91 & 94 & 92 & 91 & 94 & 92 & 91 & 93 \\
 Putumayo & 82 & 87 & 85 & 89 & 87 & 85 & 89 & 87 & 85 & 89 & 87 & 85 & 89 \\
San Andr\'es& 95 & 90 & 88 & 91 & 90 & 88 & 91 & 90 & 88 & 91 & 90 & 88 & 91 \\
  \rowcolor[gray]{0.8}
  Amazonas & 76 & 66 & 62 & 70 & 66 & 62 & 70 & 66 & 62 & 70 & 66 & 61 & 70 \\
 Guain\'ia & 56 & 80 & 77 & 83 & 80 & 77 & 83 & 80 & 77 & 83 & 80 & 77 & 83 \\
  Guaviare & 79 & 89 & 87 & 90 & 89 & 87 & 90 & 89 & 87 & 90 & 89 & 86 & 90 \\
   \rowcolor[gray]{0.8}
 Vaup\'es & 62 & 58 & 53 & 62 & 58 & 53 & 62 & 57 & 53 & 62 & 57 & 52 & 62 \\
  Vichada & 62 & 78 & 75 & 81 & 78 & 74 & 81 & 78 & 74 & 81 & 78 & 74 & 81 \\
  \rowcolor[gray]{0.8}
  National & 92 & 92 & 91 & 94 & 93 & 91 & 94 & 93 & 91 & 94 & 93 & 91 & 94 \\
   \hline
\end{tabular}
\caption{\small
Posterior predictive estimates and 95\% credible intervals under Gamma, HS, LA and Student-t local priors for the random effects and  a Half-Cauchy local prior for the errors and
census based completeness values. Inf
and Sup represent the 2.5\%  and 97.5\% quantiles from the posterior marginal distribution
of the completeness in each Department and at the national level. Females and model 2. }
\label{tab:tab_measures4}
\end{table}

\clearpage

\begin{table}[ht]
\tiny
\renewcommand{\arraystretch}{1.5}
\begin{tabular}{r>{\columncolor[gray]{0.85}}rrrrrrrrrrrrrr}
\hline
  Department & Census  & Gamma &  $2.5\%$ &  $97.5\%$ & Student-t &  $2.5\%$ &  $97.5\%$ & LA &  $2.5\%$ &  $97.5\%$ & HS &  $2.5\%$ &  $97.5\%$  \\
 \hline
Antioquia & 92 & 94 & 94 & 95 & 94 & 94 & 95 & 94 & 94 & 95 & 95 & 94 & 95 \\
Atl\'antico & 95 & 93 & 92 & 94 & 93 & 92 & 94 & 93 & 92 & 94 & 93 & 92 & 94 \\
Bogot\'a, D.C. & 95 & 91 & 90 & 93 & 91 & 90 & 93 & 91 & 90 & 93 & 92 & 90 & 93 \\
 Bol\'ivar & 87 & 89 & 87 & 90 & 89 & 87 & 90 & 89 & 87 & 90 & 89 & 87 & 90 \\
Boyac\'a & 92 & 92 & 90 & 93 & 92 & 90 & 93 & 92 & 90 & 93 & 92 & 91 & 93 \\
 Caldas & 92 & 97 & 97 & 98 & 97 & 97 & 98 & 97 & 97 & 98 & 98 & 97 & 98 \\
 Caquet\'a & 87 & 93 & 92 & 94 & 93 & 92 & 94 & 93 & 92 & 94 & 93 & 92 & 94 \\
  Cauca & 81 & 80 & 76 & 82 & 79 & 76 & 82 & 79 & 76 & 82 & 79 & 76 & 82 \\
 Cesar & 87 & 82 & 79 & 85 & 82 & 79 & 84 & 82 & 79 & 84 & 81 & 79 & 83 \\
C\'ordoba & 87 & 85 & 82 & 87 & 85 & 82 & 87 & 85 & 82 & 87 & 85 & 82 & 87 \\
 Cundinamarca & 93 & 95 & 94 & 96 & 95 & 94 & 96 & 95 & 94 & 96 & 96 & 95 & 96 \\
  \rowcolor[gray]{0.8}
 Choc\'o & 64 & 58 & 53 & 63 & 57 & 53 & 62 & 57 & 52 & 62 & 55 & 51 & 59 \\
  Huila & 92 & 89 & 88 & 91 & 89 & 88 & 91 & 89 & 88 & 91 & 89 & 88 & 91 \\
    \rowcolor[gray]{0.8}
  La Guajira & 40 & 50 & 45 & 55 & 49 & 44 & 54 & 48 & 43 & 53 & 46 & 41 & 50 \\
  Magdalena & 86 & 89 & 87 & 90 & 89 & 87 & 90 & 88 & 87 & 90 & 89 & 87 & 90 \\
  Meta & 90 & 95 & 95 & 96 & 95 & 95 & 96 & 95 & 95 & 96 & 96 & 95 & 96 \\
 Nari\~no & 79 & 82 & 79 & 84 & 82 & 79 & 84 & 82 & 79 & 84 & 82 & 79 & 84 \\
  Norte de Santander & 90 & 92 & 91 & 93 & 92 & 91 & 93 & 92 & 91 & 93 & 92 & 91 & 93 \\
  Quindio & 94 & 95 & 94 & 96 & 95 & 94 & 96 & 95 & 94 & 96 & 96 & 95 & 97 \\
  Risaralda & 92 & 96 & 96 & 97 & 96 & 96 & 97 & 96 & 96 & 97 & 97 & 96 & 97 \\
  Santander & 93 & 91 & 90 & 92 & 91 & 90 & 92 & 91 & 90 & 92 & 91 & 90 & 92 \\
  Sucre & 85 & 92 & 91 & 93 & 92 & 91 & 93 & 92 & 91 & 93 & 92 & 91 & 94 \\
  Tolima & 90 & 89 & 87 & 91 & 89 & 87 & 91 & 89 & 87 & 91 & 90 & 88 & 91 \\
  Valle del Cauca & 94 & 96 & 95 & 97 & 96 & 95 & 97 & 96 & 95 & 97 & 96 & 96 & 97 \\
  Arauca & 85 & 96 & 96 & 97 & 96 & 96 & 97 & 96 & 96 & 97 & 96 & 96 & 97 \\
  Casanare & 87 & 93 & 91 & 94 & 93 & 91 & 94 & 93 & 91 & 94 & 93 & 92 & 94 \\
  Putumayo & 76 & 81 & 79 & 84 & 81 & 78 & 84 & 81 & 78 & 83 & 80 & 77 & 82 \\
San Andr\'es & 92 & 96 & 96 & 97 & 96 & 96 & 97 & 96 & 96 & 97 & 96 & 96 & 97 \\
    \rowcolor[gray]{0.8}
  Amazonas & 62 & 72 & 68 & 76 & 72 & 68 & 75 & 71 & 67 & 75 & 70 & 66 & 73 \\
 Guain\'ia  & 49 & 84 & 82 & 87 & 84 & 81 & 86 & 84 & 81 & 86 & 84 & 81 & 86 \\
  Guaviare & 82 & 94 & 93 & 95 & 94 & 93 & 95 & 94 & 93 & 95 & 94 & 93 & 95 \\
    \rowcolor[gray]{0.8}
 Vaup\'es & 54 & 57 & 52 & 62 & 56 & 51 & 61 & 55 & 51 & 60 & 53 & 49 & 57 \\
  Vichada & 59 & 80 & 77 & 83 & 79 & 76 & 82 & 79 & 76 & 82 & 79 & 76 & 82 \\
  \rowcolor[gray]{0.8}
  National & 89 & 91 & 90 & 92 & 91 & 90 & 92 & 91 & 90 & 92 & 91 & 90 & 92 \\ \hline
  \end{tabular}
\caption{\small
Posterior predictive estimates and 95\% credible intervals under Gamma, HS, LA and Student-t local priors for the random effects and  a Half-Cauchy local prior for the scale of the errors and  the observed estimates from
the census. Inf
and Sup represent the 2.5\%  and 97.5\% quantiles from the  marginal posterior distribution
of the completeness in each Department and at the national level. Males and model 1. }
\label{tab:tab_measures5}
\end{table}

\clearpage

\begin{table}[ht]
\tiny
\renewcommand{\arraystretch}{1.5}
\begin{tabular}{r>{\columncolor[gray]{0.85}}rrrrrrrrrrrrrr}
\hline
  Department & Census  & Gamma &  $2.5\%$ &  $97.5\%$ & Student-t &  $2.5\%$ &  $97.5\%$ & LA &  $2.5\%$ &  $97.5\%$ & HS &  $2.5\%$ &  $97.5\%$  \\
 \hline
 Antioquia & 92 & 94 & 93 & 95 & 94 & 93 & 95 & 94 & 93 & 95 & 94 & 93 & 95 \\
Atl\'antico & 95 & 90 & 89 & 92 & 90 & 89 & 92 & 90 & 89 & 92 & 91 & 89 & 92 \\
Bogot\'a, D.C. & 95 & 89 & 87 & 90 & 89 & 87 & 90 & 89 & 87 & 90 & 89 & 88 & 90 \\
 Bol\'ivar & 87 & 85 & 83 & 87 & 85 & 83 & 87 & 85 & 83 & 87 & 85 & 83 & 87 \\
Boyac\'a & 92 & 91 & 90 & 93 & 91 & 90 & 93 & 91 & 90 & 93 & 92 & 90 & 93 \\
 Caldas & 92 & 97 & 96 & 97 & 97 & 96 & 97 & 97 & 96 & 98 & 97 & 97 & 98 \\
 Caquet\'a & 87 & 92 & 91 & 93 & 92 & 91 & 93 & 92 & 91 & 93 & 92 & 91 & 93 \\
 Cauca & 81 & 77 & 73 & 80 & 77 & 73 & 80 & 77 & 73 & 80 & 77 & 73 & 80 \\
 Cesar & 87 & 77 & 74 & 80 & 77 & 74 & 79 & 77 & 74 & 79 & 77 & 74 & 79 \\
C\'ordoba & 87 & 78 & 75 & 81 & 78 & 75 & 81 & 78 & 75 & 81 & 79 & 76 & 82 \\
  Cundinamarca & 93 & 93 & 92 & 94 & 93 & 92 & 94 & 93 & 93 & 94 & 94 & 93 & 95 \\
   \rowcolor[gray]{0.8}
 Choc\'o & 64 & 47 & 42 & 52 & 47 & 42 & 51 & 46 & 42 & 51 & 46 & 41 & 51 \\
 Huila & 92 & 90 & 88 & 91 & 90 & 88 & 91 & 90 & 88 & 91 & 90 & 88 & 91 \\
  \rowcolor[gray]{0.8}
 La Guajira & 40 & 41 & 36 & 45 & 40 & 36 & 45 & 40 & 35 & 44 & 39 & 35 & 44 \\
  Magdalena & 86 & 84 & 82 & 86 & 84 & 82 & 86 & 84 & 82 & 86 & 84 & 82 & 86 \\
 Meta & 90 & 94 & 93 & 95 & 94 & 93 & 95 & 94 & 93 & 95 & 94 & 94 & 95 \\
 Nari\~no  & 79 & 82 & 79 & 84 & 82 & 79 & 84 & 82 & 79 & 84 & 82 & 79 & 85 \\
 Norte de Santander & 90 & 92 & 91 & 93 & 92 & 91 & 93 & 92 & 91 & 93 & 92 & 91 & 93 \\
  Quindio & 94 & 95 & 94 & 96 & 95 & 94 & 96 & 95 & 94 & 96 & 96 & 95 & 97 \\
   Risaralda & 92 & 96 & 95 & 96 & 96 & 95 & 96 & 96 & 95 & 96 & 96 & 95 & 97 \\
   Santander & 93 & 92 & 90 & 93 & 92 & 91 & 93 & 92 & 91 & 93 & 92 & 91 & 93 \\
   Sucre & 85 & 88 & 86 & 90 & 88 & 87 & 90 & 88 & 87 & 90 & 89 & 87 & 90 \\
   Tolima & 90 & 90 & 88 & 92 & 91 & 89 & 92 & 91 & 89 & 92 & 91 & 89 & 93 \\
   Valle del Cauca & 94 & 96 & 95 & 96 & 96 & 95 & 96 & 96 & 95 & 96 & 96 & 95 & 97 \\
   Arauca & 85 & 95 & 94 & 95 & 95 & 94 & 95 & 95 & 94 & 95 & 95 & 94 & 96 \\
   Casanare & 87 & 90 & 89 & 92 & 90 & 89 & 92 & 90 & 89 & 92 & 91 & 89 & 92 \\
   Putumayo & 76 & 77 & 74 & 80 & 77 & 74 & 79 & 77 & 74 & 79 & 77 & 74 & 79 \\
San Andr\'es & 92 & 95 & 95 & 96 & 95 & 95 & 96 & 95 & 95 & 96 & 95 & 95 & 96 \\
  \rowcolor[gray]{0.8}
   Amazonas & 62 & 60 & 56 & 64 & 60 & 56 & 64 & 60 & 56 & 64 & 59 & 56 & 63 \\
 Guain\'ia & 49 & 72 & 68 & 75 & 72 & 68 & 75 & 72 & 68 & 75 & 71 & 68 & 75 \\
   Guaviare & 82 & 93 & 92 & 94 & 93 & 92 & 94 & 93 & 92 & 94 & 93 & 92 & 94 \\
   \rowcolor[gray]{0.8}
 Vaup\'es & 54 & 53 & 49 & 58 & 53 & 48 & 57 & 52 & 48 & 57 & 52 & 47 & 56 \\
   Vichada & 59 & 69 & 66 & 73 & 69 & 65 & 73 & 69 & 65 & 73 & 69 & 65 & 73 \\
   \rowcolor[gray]{0.8}
   National & 89 & 89 & 88 & 91 & 89 & 88 & 91 & 89 & 88 & 91 & 90 & 88 & 91 \\    \hline
\end{tabular}
\caption{\small
Posterior predictive estimates and 95\% credible intervals under Gamma, HS, LA and Student-t local priors for the random effects and  a Half-Cauchy local prior for the scale of the errors and  the observed estimates from
the census. Inf
and Sup represent the 2.5\%  and 97.5\% quantiles from the  marginal posterior distribution
of the completeness in each Department and at the national level. Males and model 2. }
\label{tab:tab_measures6}
\end{table}



\end{document}